\providecommand{\U}[1]{\protect\rule{.1in}{.1in}}
\newtheorem{theorem}{Theorem}
\newtheorem{definition}[theorem]{Definition}
\newtheorem{lemma}[theorem]{Lemma}
\newtheorem{proposition}[theorem]{Proposition}
\newtheorem{remark}[theorem]{Remark}
\newenvironment{proof}[1][Proof]{\noindent\textbf{#1.} }{\ \rule{0.5em}{0.5em}}
\newcommand\footnoteref[1]{\protected@xdef\@thefnmark{\ref{#1}}\@footnotemark}
\let\cite\citep
\tikzstyle{A}=[circle,
\tikzstyle{B}=[circle,
\DeclareMathOperator{\CRM}{CRM}
\DeclareMathOperator*{\Gam}{Gamma}
\DeclareMathOperator{\Poi}{Poisson}
\DeclareMathOperator{\Ber}{Ber}
\DeclareMathOperator{\Unif}{Unif}
\DeclareMathOperator{\PK}{PK}
\begin{document}
	
\begin{frontmatter}

\title{Sparse graphs using exchangeable random measures}
\runtitle{Bayesian nonparametric random graphs}

\begin{aug}
\author{\fnms{Fran\c{c}ois} \snm{Caron}\thanksref{t1}\ead[label=e1]{caron@stats.ox.ac.uk}}
\and
\author{\fnms{Emily B.} \snm{Fox}\thanksref{t2}\ead[label=e2]{ebfox@stat.washington.edu}}

\thankstext{t1}{FC acknowledges the support of the European Commission under the Marie Curie Intra-European
Fellowship Programme.}
\thankstext{t2}{EBF was supported in part by DARPA Grant FA9550-12-1-0406 negotiated by AFOSR and AFOSR Grant FA9550-12-1-0453.}
\runauthor{F. Caron and E. Fox}

\affiliation{University of Oxford and University of Washington}

\address{Department of Statistics\\
University of Oxford\\
1 South Parks Road\\
Oxford, OX1 3TG United Kingdom\\
\printead{e1}}

\address{Department of Statistics\\
University of Washington\\
Box 354322\\
Seattle, WA 98195-4322\\
\printead{e2}}

\end{aug}

\begin{abstract}
Statistical network modeling has focused on representing the graph as a discrete structure, namely the adjacency matrix, and considering the exchangeability of this array.  In such cases, the Aldous-Hoover representation theorem~\cite{Aldous1981,Hoover1979} applies and informs us that the graph is necessarily either dense or empty.  In this paper, we instead consider representing the graph as a measure on $\mathbb{R}_+^2$. For the associated definition of exchangeability in this continuous space, we rely on the Kallenberg representation theorem~\cite{Kallenberg2005}. We show that for certain choices of such exchangeable random measures underlying our graph construction, our network process is sparse with power-law degree distribution.  In particular, we build on the framework of completely random measures (CRMs) and use the theory associated with such processes to derive important network properties, such as an urn representation for our analysis and network simulation.
Our theoretical results are explored empirically and compared to common network models. We then present a Hamiltonian Monte Carlo algorithm for efficient exploration of the posterior distribution and demonstrate that we are able to recover graphs ranging from dense to sparse---and perform associated tests---based on our flexible CRM-based formulation. We explore network properties in a range of real datasets, including Facebook social circles, a political blogosphere, protein networks, citation networks, and world wide web networks, including networks with hundreds of thousands of nodes and millions of edges.
\end{abstract}

\begin{keyword}[class=MSC]
\kwd[Primary ]{62F15}
\kwd{05C80}
\kwd[; secondary ]{60G09}
\kwd{60G51}
\kwd{60G55}
\end{keyword}

\begin{keyword}
\kwd{random graphs}
\kwd{L\'evy measure}
\kwd{point process}
\kwd{exchangeability}
\kwd{generalized gamma process}
\end{keyword}

\end{frontmatter}
\newpage
\tableofcontents
\newpage
%\clearpage
\section{Introduction}

\label{sec:intro}

The rapid increase in the availability and importance of network data has been
a driving force behind the significant recent attention on random graph
models. This effort builds on a long history, with a popular early model being
the Erd\" os R\'{e}nyi random graph~\citep{Erdos1959}. However, the Erd\" os
R\'{e}nyi formulation has since been dismissed as overly simplistic since it
fails to capture important real-world network properties.
% \fc{CUT that: such as \emph{sparsity} with \emph{power-law} degree distribution, as in preferential attachment~\cite{Price1976,Barabasi1999}.}
%or community structure, as in stochastic block models~\cite{Nowicki2001} and their mixed membership variants~\cite{Airoldi2008}.
A plethora of other network models have been proposed in recent
years, with some overviews of such models provided in % as outlined in a number of . There are many nice overviews of network models, including~
\cite{Newman2003,Newman2009,Bollobas2001,Durrett2007,Goldenberg2010, Fienberg2012}.

In many scenarios, it is appealing conceptually to assume that the order in which nodes are observed is of no importance~\cite{BickelChen2009,Hoff2009}. In statistical network models, this equates with the notion of exchangeability.  Classically, the graph has been represented by a
discrete structure, or \emph{adjacency matrix}, $Z$ where $Z_{ij}$ is a binary
variable with $Z_{ij}=1$ indicating an edge from node $i$ to node $j$. In the
case of undirected graphs, we furthermore restrict $Z_{ij}=Z_{ji}$. For generic matrices $Z$ in some space $\mathbf{Z}$, an (infinite)
\emph{exchangeable random array}~\cite{Diaconis2008,Lauritzen2008} is one such that
\begin{equation}
(Z_{ij})\overset{d}{=}(Z_{\pi(i)\sigma(j)})\text{ for }(i,j)\in\mathbb{N}^{2}%
\label{eq:sequenceexchangeability}
\end{equation}
for any permutation $\pi,\sigma$ of $\mathbb{N}$, with $\pi=\sigma$ in the
jointly exchangeable case.

The celebrated Aldous-Hoover theorem~\cite{Aldous1981,Hoover1979} states that infinite exchangeability
implies a mixture model representation for the matrix involving transformations of uniform random variables (see Theorem~\ref{thm:AldousHoover}).
For undirected graphs, this transformation is specified by the \emph{graphon}.

The Aldous-Hoover constructive definition has motivated the development of Bayesian statistical models for arrays \cite{Lloyd2012} and many popular network models can be recast in this
framework~\cite{Hoff2002,Nowicki2001,Airoldi2008,KimLescovec2012,Miller2009}. Estimators
of models in this class and their associated properties have been studied  extensively in recent years~\cite{BickelChen2009,BickelChenLevina2011,Rohe2011,ZhaoLevinaZhu2012,Airoldi2013,ChoiWolfe2013}.

However, one unpleasing consequence of the Aldous-Hoover theorem is that graphs represented by an exchangeable
random array are either trivially empty or dense\footnote{Note that we refer to graphs with $\Theta(n^2)$ edges as dense graphs and to graphs  with $o(n^2)$ edges as \emph{sparse graphs}, following the terminology of \citet{BollobasRiordan2009}.}, i.e. the number of edges grows
quadratically with the number of nodes $n$ (see Theorem \ref{theorem:excharraydense}). % and its proof in Appendix \ref{sec:techlemmas}).
To quote the survey of \citet{Orbanz2015}
\textit{\textquotedblleft the theory also clarifies the limitations of exchangeable
models. It shows, for example, that most Bayesian models of network data are
inherently misspecified."}
 The conclusion is that we cannot have
both exchangeability of the nodes (in the sense of \eqref{eq:sequenceexchangeability}), a cornerstone of Bayesian modeling, and sparse graphs, which is what we observe in the real world~\cite{Newman2009}, especially for large networks.
Several models have been developed which give up exchangeability in order to obtain sparse graphs~\cite{Barabasi1999}.
Alternatively, there is a body of literature that examines rescaling graph properties with network size $n$, leading to sparse graph sequences where each graph is finitely exchangeable \citep{Bollobas2007,BollobasRiordan2009,WolfeOlhede2013,BorgsChayesCohnZhao2014}.  However, any method building on a rescaling-based approach provides a graph distribution, $\pi_n$, that lacks projectivity: marginalizing node $n$ does not yield $\pi_{n-1}$, the distribution on graphs of size $n-1$.

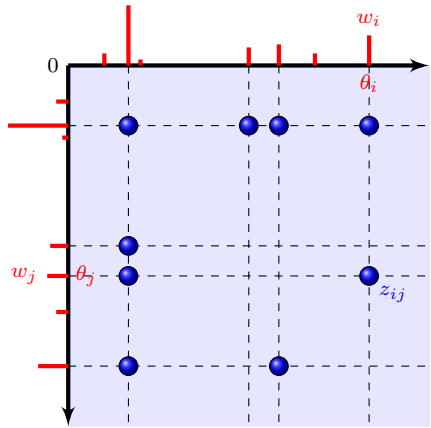
\begin{figure}[t]
\begin{tikzpicture}[node distance=1.4cm, auto,>=latex',scale=.8]
\draw[->,ultra thick] node[left] {0} (0,0) -- (6,0) node[right] {};
\draw[->,ultra thick] (0,0) -- (0,-6) node[right] {};
\draw[fill=blue,opacity=0.1] (0,0) -- (0,-6) -- (6,-6) -- (6,0)  ;
% Draw the grid
\draw[dashed] (1,0) -- (1,-6) node[right] {};
\draw[dashed] (3,0) -- (3,-6) node[right] {};
\draw[dashed] (3.5,0) -- (3.5,-6) node[right] {};
\draw[dashed] (5,0) -- (5,-6) node[right] {};
\draw[dashed] (0,-1) -- (6,-1) node[right] {};
\draw[dashed] (0,-3) -- (6,-3) node[right] {};
\draw[dashed] (0,-3.5) -- (6,-3.5) node[right] {};
\draw[dashed] (0,-5) -- (6,-5) node[right] {};
% draw the point process
\node[draw,circle,inner sep=2.5pt,fill=blue,shading=ball] at (1,-1) {};
\node[draw,circle,inner sep=2.5pt,fill=blue,shading=ball] at (1,-5) {};
\node[draw,circle,inner sep=2.5pt,fill=blue,shading=ball] at (3,-1) {};
\node[draw,circle,inner sep=2.5pt,fill=blue,shading=ball] at (3.5,-1) {};
\node[draw,circle,inner sep=2.5pt,fill=blue,shading=ball] at (5,-1) {};
\node[draw,circle,inner sep=2.5pt,fill=blue,shading=ball] at (1,-3) {};
\node[draw,circle,inner sep=2.5pt,fill=blue,shading=ball] at (1,-3.5) {};
\node[draw,circle,inner sep=2.5pt,fill=blue,shading=ball] at (5,-3.5) {};
\node[draw,circle,inner sep=2.5pt,fill=blue,shading=ball] at (3.5,-5) {};
\node[] at (5.4,-3.8) {$\color{blue}z_{ij}$};
\node[red] at (5,-.3) {$\color{red}\theta_{i}$};
\node[red] at (.3,-3.5) {$\color{red}\theta_{j}$};
% Draw the increments of the Lévy process
\draw[-,ultra thick,red] (1,0) -- (1,1) node[right] {};
\draw[-,ultra thick,red] (3,0) -- (3,.3) node[right] {};
\draw[-,ultra thick,red] (3.5,0) -- (3.5,.35) node[right] {};
\draw[-,ultra thick,red] (5,0) -- (5,.5) node[above] {$\color{red}w_i$};
\draw[-,ultra thick,red] (.6,0) -- (.6,.2) node[right] {};
\draw[-,ultra thick,red] (1.2,0) -- (1.2,.1) node[right] {};
\draw[-,ultra thick,red] (4.1,0) -- (4.1,.2) node[above] {};
\draw[-,ultra thick,red] (0, -1) -- (-1,-1) node[right] {};
\draw[-,ultra thick,red] (0,-3) -- (-.3,-3) node[right] {};
\draw[-,ultra thick,red] (0,-3.5) -- (-.35,-3.5) node[left] {$\color{red}w_j$};
\draw[-,ultra thick,red] (0,-5) -- (-.5,-5) node[right] {};
\draw[-,ultra thick,red] (0,-.6) -- (-.2,-.6) node[right] {};
\draw[-,ultra thick,red] (0,-1.2) -- (-.1,-1.2) node[right] {};
\draw[-,ultra thick,red] (0,-4.1) -- (-.2,-4.1) node[right] {};
\end{tikzpicture}
\caption{Point process representation of a random graph. Each node $i$ is
embedded in $\mathbb{R}_{+}$ at some location $\theta_{i}$ and is associated with a
sociability parameter $w_{i}$. An edge between nodes $\theta_{i}$ and
$\theta_{j}$ is represented by a point at locations $(\theta_{i},\theta_{j})$
and $(\theta_{j},\theta_{i})$ in $\mathbb{R}_{+}^{2}$.}%
\label{fig:pointprocess}%
\end{figure}
To leverage some of the benefits of \emph{generative} exchangeable modeling while producing sparse graphs with power-law behavior, we set aside the discrete array structure of the adjacency matrix and instead consider a different notion of exchangeability of a continuous-space representation of networks based on a \emph{point process} on $\mathbb{R}_{+}^{2}$ (see
Figure~\ref{fig:pointprocess})
\begin{equation}
Z=\sum_{i,j}z_{ij}\delta_{(\theta_{i},\theta_{j})},
\label{eq:pointprocessrep}
\end{equation}
where $z_{ij}=1$ if there is a link between nodes $\theta_{i}$ and $\theta
_{j}$ in $\mathbb{R}_{+}$, and is 0 otherwise. Our notion of exchangeability in this framework is as follows.  Paralleling \eqref{eq:sequenceexchangeability}, the point
process $Z$ on $\mathbb{R}_{+}^{2}$ is exchangeable if and only if, for any $h>0$ and for any permutations $\pi,\sigma$ of $\mathbb{N}$, %
\begin{equation}
(Z(A_{i}\times A_{j}))\overset{d}{=}(Z(A_{\pi(i)}\times A_{\sigma(j)}))\text{ for
}(i,j)\in\mathbb{N}^{2},%
\label{eq:pointprocessexchangeability}
\end{equation}
where here we consider \emph{intervals} $A_{i}=[h(i-1),hi]$ with $i\in\mathbb{N}$.  Considering arbitrarily small intervals $A_i$, such that two nodes $\theta_j$ and $\theta_k$ are unlikely to fall into the same interval, leads to a similar intuition and statistical implication of exchangeability as in the Aldous-Hoover framework. Note, however, that if we order nodes in \eqref{eq:pointprocessrep} by the first time an edge appears for that node, and look at the associated adjacency matrix, then this array is not exchangeable in the sense of \eqref{eq:sequenceexchangeability}. Importantly, though, our notion of exchangeability allows us to define a practical and efficient inference algorithm (described in Section~\ref{sec:MCMC}) due to the invariance property in the continuous space specified in \eqref{eq:pointprocessexchangeability}.

In place of the Aldous-Hoover theorem, we now appeal to the continuous-space counterpart~\citep[Chapter 9]{Kallenberg2005} which provides a
representation theorem for exchangeable point processes
on $\mathbb{R}_{+}^{2}$:
a point process is exchangeable if and only if it can be
represented as a transformation of unit-rate Poisson processes and uniform
random variables (see Theorem~\ref{thm:kallenberg}); this is in direct analogy to the graphon transformation of uniform random variables in the Aldous-Hoover representation. More precisely, within the Kallenberg framework, we consider that two nodes $i\neq j$ connect with probability
\begin{equation}
\Pr(z_{ij}=1|w_i,w_j)=1-e^{-2w_i w_j}
\end{equation}
where the positive \emph{sociability} parameters $(w_i)_{i=1,2,\ldots}$ are the points of a Poisson point process, or equivalently the jumps of a \emph{completely random measure} (CRM) \cite{Kingman1967,Kingman1993,Lijoi2010}. We show that by carefully choosing the L\'evy measure characterizing this CRM, we are able to construct graphs ranging from \emph{sparse} to \emph{dense}. In particular, any L\'{e}vy measure yielding an infinite activity CRM leads to sparse graphs; alternatively, finite activity CRMs, whose associated point processes are in the compound Poisson process family, yield dense graphs. When building on a specific class of infinite activity \emph{regularly varying} CRMs, we can obtain graphs where the number of edges increases at a rate below $n^{a}$ for some constant $1<a<2$ that depends on the L\'evy measure. The associated degree distribution has a \emph{power-law} form.

By building on the framework of CRMs, we are able to harness the considerable theory and practicality of such processes to (1) derive important properties of our proposed model and (2) develop an efficient statistical estimation procedure.  The CRM construction enables us to relate the sparsity properties of the graph to the properties of the L\'evy measure.  We also utilize the CRM-based formulation to develop a scalable Hamiltonian Monte Carlo sampler that can automatically handle a range of graphs from dense to sparse based on inferring a graph sparsity parameter. We show in Section~\ref{sec:experiments} that our methods scale to graphs with hundreds of thousands of nodes and millions of edges. Thus, our generative specification enjoys both an analytic representation in the Kallenberg framework and a formulation in terms of CRMs.  The former allows us to nicely connect with existing random graph models whereas the latter provides (1) connections to the Bayesian nonparametric modeling and inference literature and (2) interpretability and theoretical analysis of the formulation.

In summary, our
proposed framework captures a number of desirable properties:

\begin{itemize}
\item \textbf{Sparsity.} We can obtain graphs where the number of edges
increases sub-quadratically with the number of nodes.

\item \textbf{Power Law.} Our formulation yields a power-law form, which is useful in modeling many real-world graphs~\cite{Newman2009}.

\item \textbf{Exchangeability} in the sense of \eqref{eq:pointprocessexchangeability}.

\item \textbf{Simplicity.} Three hyperparameters tune the expected number of nodes, power-law
properties, etc.

\item \textbf{Interpretability.} The node-specific sociability parameters, $w_i$, lead to straightforward interpretability of the model.

\item \textbf{Scalable inference.} Our CRM-based Hamiltonian Monte Carlo sampler efficiently scales to large, real-world graphs, allowing for rapid analysis of graph properties such as sparsity, power-law, etc.

\end{itemize}

A bipartite random graph formulation with power-law behavior building on CRMs was first proposed by~\citet{Caron2012}.  In this paper, we consider a more general CRM-based framework for bipartite graphs, directed multigraphs, and undirected graphs.  More importantly, we prove that the resulting formulation yields sparse graphs under certain conditions---a notion not explored in \cite{Caron2012}---and cast exchangeability within the Kallenberg representation theorem.  Both of these represent important and non-trivial extensions of this work.  A number of other theoretical results are explored in Section~\ref{sec:simprop} as well.  Finally, we note that the sampler of \citet{Caron2012} simply does not apply to our undirected graphs.  Instead, we present new and efficient posterior computations with demonstrated scalability on a range of large, real-world networks.

Our paper is organized as follows.  In Section~\ref{sec:background}, we provide background on exchangeability for sequences, arrays, and random measures on $\mathbb{R}_+^2$.  The latter provides an important theoretical foundation for the graph structures we propose.  We also present background on CRMs, which form the key building block of our graph construction.  The generic formulation for directed multigraphs, undirected graphs, and bipartite graphs is presented in Section~\ref{sec:model}.  Properties, such as exchangeability and sparsity, and methods for simulation are presented in Section~\ref{sec:simprop}.  Specific cases of our formulation leading to dense and sparse graphs are considered in Section~\ref{sec:specialcases}, including an empirical analysis of network properties of our proposed formulation relative to common network models.  Our Markov chain Monte Carlo (MCMC) based posterior computations are in Section~\ref{sec:MCMC}.  Finally, Section~\ref{sec:experiments} provides a simulated study and an extensive analysis of a variety of large, real-world graphs.

\section{Background}
\label{sec:background}

\subsection{Exchangeability and de Finetti-type representation theorems}

Our focus is on exchangeable random structures that can represent networks. To
build to such constructs, we first present a brief review of exchangeability
for random sequences, continuous-time processes, and discrete network arrays.
Thorough and accessible overviews of exchangeability of random structures are
presented in the surveys of~\citet{Aldous1985} and~\citet{Orbanz2015}. Here, we simply abstract
away the notions relevant to placing our network formulation in context,
as summarized in Table~\ref{table:exchangeability}.

\begin{table}[h]
\caption{Overview of representation theorems}%
\begin{tabular}
[c]{l|l|l}
& Discrete structure & Continuous time/space\\\hline
Exchangeability & de Finetti (1931) & B\"uhlmann (1960)\\
Joint/separate exchangeability & Aldous-Hoover (1979-1981) & Kallenberg (1990)
\end{tabular}
\label{table:exchangeability}
\end{table}

The classical representation theorem arising from a notion of exchangeability
for discrete \emph{sequences} of random variables is due to~\citet{DeFinetti1931}. The theorem states that a sequence $Z_{1}%
,Z_{2},\dots$ with $Z_i \in \mathbf{Z}$ is exchangeable if and only if there exists a random probability
measure $\Theta$ on $\mathbf{Z}$ with law $\nu$ such that the $Z_{i}$ are
conditionally i.i.d. given $\Theta$. That is, all exchangeable infinite
sequences can be represented as a mixture with directing measure $\Theta$ and
mixing measure $\nu$. If examining continuous-time \emph{processes} instead of
sequences, the representation associated with exchangeable \emph{increments}
is given by~\citet{Buhlmann1960} (see also~\citet{Freedman1996}) in terms of mixing L\'{e}vy
processes.

The focus of our work, however, is on graph structures.  Recall the definition of exchangeability of arrays in \eqref{eq:sequenceexchangeability}.
A representation theorem for exchangeability of the classical discrete adjacency \emph{matrix}, $Z$, follows in Theorem~\ref{thm:AldousHoover} by considering a special case of the Aldous-Hoover theorem to \emph{2-arrays}.  We additionally focus here on \emph{joint exchangeability}---that is, symmetric permutations of rows and columns---which is applicable to matrices $Z$ where both rows and columns index the same set of nodes. \emph{Separate exchangeability} allows for different row and column permutations,
making it applicable to scenarios where one has distinct node identities on
rows and columns, such as in the bipartite graphs we consider in
Section~\ref{sec:bipartite}. Extensions of Theorem~\ref{thm:AldousHoover} to higher dimensional arrays are likewise
straightforward~\cite{Orbanz2015}.

\begin{theorem}
{\normalfont\textbf{(Aldous-Hoover representation of jointly exchangeable
matrices~\cite{Aldous1981,Hoover1979})}}. A random 2-array $(Z_{ij})_{i,j
\in\mathbb{N}}$ is jointly exchangeable if and only if there exists a random
measurable function $f: [0,1]^{3} \rightarrow\mathbf{Z}$ such that
\begin{align}
(Z_{ij}) \overset{d}{=} (f(U_{i},U_{j},U_{ij})),
\end{align}
where $(U_{i})_{i\in\mathbb{N}}$ and $(U_{ij})_{i,j>i \in\mathbb{N}}$ with
$U_{ij}=U_{ji}$ are a sequence and matrix, respectively, of i.i.d.
$\mbox{Uniform}[0,1]$ random variables.
\label{thm:AldousHoover}
\end{theorem}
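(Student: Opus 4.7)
The $(\Leftarrow)$ direction is the easy one. Suppose $(Z_{ij}) \stackrel{d}{=} (f(U_i, U_j, U_{ij}))$ with $(U_i)$ i.i.d.\ Uniform and $(U_{ij})_{i\le j}$ i.i.d.\ Uniform (symmetrized by $U_{ij}=U_{ji}$), independent of $(U_i)$ and of the random function $f$. For any permutation $\pi$ of $\mathbb{N}$, the sequence $(U_{\pi(i)})$ has the same distribution as $(U_i)$, and $(U_{\pi(i)\pi(j)})$ has the same distribution as $(U_{ij})$ as a symmetric array, jointly. Applying the (conditionally) deterministic $f$ componentwise preserves equality in distribution, which yields $(f(U_i,U_j,U_{ij})) \stackrel{d}{=} (f(U_{\pi(i)},U_{\pi(j)},U_{\pi(i)\pi(j)}))$, i.e.\ joint exchangeability.

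For $(\Rightarrow)$, the plan is to apply de Finetti's theorem in two stages, following the classical iterated-de-Finetti route. First I view the array as a sequence of rows $R_i=(Z_{ij})_{j\in\mathbb{N}}$ taking values in the Polish space $\mathbf{Z}^{\mathbb{N}}$. Joint exchangeability with the diagonal permutation $\pi=\sigma$ makes $(R_i)$ an exchangeable sequence, so de Finetti's theorem provides a directing random probability measure $\mu$ on $\mathbf{Z}^{\mathbb{N}}$ conditional on which the $R_i$ are i.i.d. Using the Borel-space isomorphism $\mathbf{Z}^{\mathbb{N}}\cong [0,1]$, there is a measurable $g$ and an i.i.d.\ Uniform sequence $(U_i)$, independent of $\mu$, with $R_i = g(\mu, U_i)$ almost surely.

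Second, I exchange the roles of rows and columns. Conditionally on $(U_i)$ and $\mu$, the columns of the array are likewise exchangeable, so another application of de Finetti inside each column, together with the conditional independence across rows from step one, produces a further directing random measure indexed by the column coordinate. Symmetrizing across the two coordinates (using joint exchangeability, not merely separate) and applying the Borel isomorphism once more, I obtain a representation
\[
Z_{ij} = h(\mu, U_i, U_j, U_{ij}),
\]
where the $U_{ij}=U_{ji}$ are i.i.d.\ Uniform idiosyncratic noise, independent of everything else. Absorbing the global parameter $\mu$ into the \emph{random} function $f(\cdot,\cdot,\cdot) := h(\mu,\cdot,\cdot,\cdot)$ (which is allowed since the theorem permits $f$ to be random) gives the claimed representation $Z_{ij} = f(U_i,U_j,U_{ij})$.

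The main obstacle is the second stage: converting the conditionally i.i.d.\ row representation into the \emph{jointly} symmetric form $f(U_i,U_j,U_{ij})$ rather than the separately exchangeable form $f(U_i,V_j,U_{ij})$ with two independent sequences. The symmetry must be enforced by exploiting joint exchangeability at the array level, equating the directing ingredients across the row and column coordinates; this is typically handled by taking $f$ symmetric in its first two arguments, and is the step where the proof of Aldous-Hoover for jointly exchangeable 2-arrays diverges from the (somewhat cleaner) separately exchangeable case. All remaining ingredients — Kolmogorov extension, the standard-Borel representation of conditional laws by uniforms, and passage from finite to countable index sets — are routine.
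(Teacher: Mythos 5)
This theorem is not proved in the paper at all: it is quoted as a classical result and attributed to Aldous (1981) and Hoover (1979), so there is no in-paper argument to compare against, and your proposal must stand on its own. It does not. The first stage already fails for \emph{joint} exchangeability: joint exchangeability only gives $(Z_{\pi(i)\pi(j)})\overset{d}{=}(Z_{ij})$, i.e.\ invariance under permuting rows \emph{and simultaneously} permuting the entries within each row. It does not give $(R_{\pi(i)})\overset{d}{=}(R_i)$ for the rows $R_i=(Z_{ij})_{j}$, which is what de Finetti needs. A concrete counterexample is the deterministic array $Z_{ij}=\mathbf{1}\{i=j\}$: it is jointly exchangeable, yet its row sequence $(e_1,e_2,\dots)$ is a deterministic sequence of distinct elements and hence not exchangeable. (Row exchangeability does hold for \emph{separately} exchangeable arrays, by taking $\sigma=\mathrm{id}$, but that is exactly the hypothesis you do not have here.) So the directing measure $\mu$ and the representation $R_i=g(\mu,U_i)$ are not available from de Finetti as you invoke it.

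The second, deeper problem is that even in the separately exchangeable case the ``apply de Finetti twice and then symmetrize'' plan does not yield the theorem: conditioning gives you row-wise and column-wise conditional i.i.d.\ structures separately, but producing a single measurable $f$ with one common sequence $(U_i)$ playing both the row and column roles, plus independent pair-level noise $U_{\{i,j\}}$, is precisely the content of Aldous--Hoover. This requires the coding/transfer machinery of Aldous' proof (working with the shell $\sigma$-fields and randomization lemmas to realize conditional distributions by uniform variables consistently across all finite subarrays), or Hoover's nonstandard-analysis argument, or Kallenberg's later streamlining. Your sketch names this step as ``the main obstacle'' and then asserts it is ``typically handled by taking $f$ symmetric in its first two arguments,'' which is a description of the desired conclusion rather than an argument. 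Diagonal terms $Z_{ii}$ and the claim that the columns are conditionally exchangeable given $(U_i)$ and $\mu$ are also left unjustified. The $(\Leftarrow)$ direction is fine, but the forward direction, which is the substance of the theorem, is missing.
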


For undirected graphs where $Z$ is a binary, symmetric adjacency matrix, the Aldous-Hoover representation
can be expressed as the existence of a \emph{graphon} $\omega: [0,1]^2 \rightarrow [0,1]$,
symmetric in its arguments, where
\begin{equation}
f(U_{i},U_{j},U_{ij}) = \left\{\begin{array}{ll} 1 & U_{ij} < \omega(U_i,U_j)\\
												0 & \mbox{otherwise}.
											\end{array}\right.
\label{eq:graphon}%
\end{equation}

Exchangeability is a fundamentally important concept in modeling. For example,
an assumption of joint exchangeability in network models implies that the
probability of a given graph depends on certain structural features, such as
number of edges, triangles, and five-stars, but not on where these features
occur in the network. Likewise, for separate exchangeability, the probability
of the matrix is invariant to reordering of the rows and columns, e.g., users
and items in a recommender system application. However, based on the Aldous-Hoover representation theorem, one can
derive the important consequence that \textit{if a random graph is
exchangeable, it is either dense or empty.} Note, crucially, that this result
assumes the graph is modeled via a \emph{discrete} adjacency matrix structure and exchangeability is considered in this framework.

Throughout this paper, we instead consider representing a graph as a point process
$Z=\sum_{i,j}z_{ij}\delta_{(\theta_{i},\theta_{j})}$ with nodes $\theta_{i}$ embedded in $\mathbb{R}_{+}$, as in \eqref{eq:pointprocessrep}, and then examine notions of exchangeability in this context. \citet{Kallenberg1990}
derived de-Finetti-style representation theorems for separately and jointly
exchangeable random measures on $\mathbb{R}_{+}^{2}$, which we present for the jointly exchangeable case in Theorem~\ref{thm:kallenberg}. Recall the definition of
joint exchangeability of a random measure on $\mathbb{R}_+$ in \eqref{eq:pointprocessexchangeability}. In the following,
$\lambda$ denotes the Lebesgue measure on $\mathbb{R}_{+}$, $\lambda_{D}$ the Lebesgue measure on
the diagonal $D=\{(s,t)\in\mathbb{R}_{+}^{2}|s=t\}$, and
$\widetilde{\mathbb{N}}_{2}=\{\{i,j\}|(i,j)\in\mathbb{N}^{2}\}$. We also
define a \emph{U-array} to be an array of independent uniform random variables.

\begin{theorem}
{\normalfont\textbf{(Representation theorem for jointly exchangeable random
measures on $\mathbb{R}_{+}^{2}$ \cite[Theorem 9.24]%
{Kallenberg1990,Kallenberg2005}).}} \newline\ A\ random measure $\xi$ on
$\mathbb{R}_{+}^{2}$ is jointly exchangeable if and only if almost surely
\begin{align}
\begin{aligned} \xi & =\sum_{i,j}f(\alpha_0,\vartheta_{i},\vartheta_{j},\zeta_{\{i,j\}})\delta_{\theta_{i},\theta_{j}}+\beta_0\lambda_{D}+\gamma_0(\lambda\times\lambda)\\ & +\sum_{j,k}\left( g(\alpha_0,\vartheta_{j},\chi_{jk})\delta_{\theta _{j},\sigma_{jk}}+g^{\prime}(\alpha_0,\vartheta_{j},\chi_{jk})\delta _{\sigma_{jk},\theta_{j}}\right) \\ & +\sum_{j}\left( h(\alpha_0,\vartheta_{j})(\delta_{\theta_{j}}\times \lambda)+h^{\prime}(\alpha_0,\vartheta_{j})(\lambda\times\delta_{\theta_{j}})\right) \\ & +\sum_{k}\left( l(\alpha_0,\eta_{k})\delta_{\rho_{k},\rho_{k}^{\prime}}+l^{\prime}(\alpha_0,\eta_{k})\delta_{\rho_{k}^{\prime},\rho_{k}}\right) \end{aligned} \label{theorem:kallenbergjoint}%
\end{align}
for some measurable functions $f:\mathbb{R}_{+}^{4}\rightarrow\mathbb{R}_{+}$, $g,g':\mathbb{R}_{+}^{3}\rightarrow\mathbb{R}_{+}$ and $h,h^{\prime},l,l^{\prime}$: $\mathbb{R}%
_{+}^{2}\rightarrow \mathbb{R}_{+}$. Here, $(\zeta_{\{i,j\}})$ with $\{i,j\}\in\widetilde{\mathbb{N}%
}_{2}$ is a U-array. $\{(\theta_{j},\vartheta_{j})\}$ and $\{(\sigma_{ij}%
,\chi_{ij})\}$ on $\mathbb{R}_{+}^{2}$ and $\{(\rho_{j},\rho_{j}^{\prime}%
,\eta_{j})\}$ on $\mathbb{R}_{+}^{3}$ are independent, unit-rate Poisson
processes. Furthermore, $\alpha_0,\beta_0,\gamma_0\geq0$ are an independent set of
random variables. \label{thm:kallenberg}
\end{theorem}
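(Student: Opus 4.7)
My plan is to reduce the statement to the Aldous--Hoover theorem (Theorem~\ref{thm:AldousHoover}) via a discretization of $\mathbb{R}_+^2$ at a scale $h$, and then to take $h\downarrow 0$ while tracking how the atoms of $\xi$ organize themselves.

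First, fix $h>0$ and form the discrete array $Y_{ij}^h := \xi(A_i\times A_j)$ with $A_i=[(i-1)h,ih]$. Joint exchangeability of $\xi$ in the sense of \eqref{eq:pointprocessexchangeability} is exactly joint exchangeability of the 2-array $(Y_{ij}^h)_{i,j\in\mathbb{N}}$, so the Aldous--Hoover theorem yields a measurable $F^h:[0,1]^4\to\mathbb{R}_+$ with
\[
Y_{ij}^h \stackrel{d}{=} F^h(\alpha^h,U_i^h,U_j^h,U_{ij}^h),
\]
the extra coordinate $\alpha^h$ playing the role of the directing parameter. By insisting on a consistent coupling of these representations across dyadic refinements of $h$, the directing parameters stabilize to a single random variable $\alpha_0$, and the row/column labels $U_i^h$ become consistent in a projective sense.

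Next, I would identify the Poisson-process structure. Conditionally on $\alpha_0$, the set of indices $i$ for which the $i$th row carries positive mass is an exchangeable subset of $\mathbb{N}$, and the corresponding left-endpoints of the intervals $A_i\subset\mathbb{R}_+$, as $h\downarrow 0$, form a random point set on $\mathbb{R}_+$ whose distribution is invariant under all Lebesgue-preserving transformations of $\mathbb{R}_+$; by a B\"uhlmann-type characterization, such a point set is a Cox process, and the extra randomness can be absorbed into $\alpha_0$ so that the points themselves form a unit-rate Poisson process. The attached marks are then conditionally i.i.d.\ uniform given those points, producing the three independent unit-rate Poisson processes $\{(\theta_j,\vartheta_j)\}$, $\{(\sigma_{jk},\chi_{jk})\}$, and $\{(\rho_k,\rho_k',\eta_k)\}$ in the statement.

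I would then classify the atoms of $\xi$ by the geometric pattern of cells in which $Y_{ij}^h$ is positive as $h\downarrow 0$. Single off-diagonal cells stabilizing to a point $(\theta_i,\theta_j)$ yield the first sum, with the symmetric U-array $(\zeta_{\{i,j\}})$ encoding the extra randomness shared with the transposed entry. Single diagonal cells at locations not aligned with any existing row/column atom produce the $l,l'$ pair. Mass that, for every $h$, spans an entire row through some $\theta_j$ must, by re-applying de Finetti / B\"uhlmann to the conditional law of $\xi$ on $\{\theta_j\}\times\mathbb{R}_+$ given $\theta_j$ and $\alpha_0$, decompose into a discrete part (atoms at $(\theta_j,\sigma_{jk})$ giving $g,g'$) and a continuous part exactly proportional to Lebesgue (giving the $h,h'$ terms). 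Finally, the purely diffuse residual on $\mathbb{R}_+^2$ is characterized by the two scalars $\gamma_0,\beta_0$, since the only jointly-exchangeable diffuse random measures on $\mathbb{R}_+^2$ and on $D$, conditional on $\alpha_0$, are multiples of $\lambda\times\lambda$ and $\lambda_D$, respectively.

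The main obstacle will be this row/column-strip step. One must show (i) that a row-supported contribution cannot contain a singular continuous component in its column coordinate, so only the two alternatives $g$ and $h$ appear; (ii) that the row-strip at $\theta_j$ and the column-strip at the same $\theta_j$ share a single mark $\vartheta_j$, a coupling forced by joint (rather than separate) exchangeability; and (iii) that the resulting kernels $f,g,g',h,h',l,l'$ can be chosen jointly measurable in $\alpha_0$, which requires a standard but delicate coding / measurable-selection argument analogous to the one used in the proof of Aldous--Hoover. The converse direction, namely that any $\xi$ of the stated form is jointly exchangeable, is a direct check using the invariance in distribution of Poisson processes and Lebesgue measure under measure-preserving transformations.
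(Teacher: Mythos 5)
First, note that the paper does not prove this statement at all: Theorem~\ref{thm:kallenberg} is quoted directly from Kallenberg (Theorem 9.24 of the 2005 monograph), so there is no in-paper proof to compare your argument against. Your proposal must therefore stand on its own as a proof of Kallenberg's representation theorem, and as written it does not: it is a plausible heuristic outline whose acknowledged ``obstacles'' are precisely the mathematical content of the theorem, not finishing touches.

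The most serious gap is the passage from the fixed-$h$ discretizations to the measure itself. For each $h$, Theorem~\ref{thm:AldousHoover} only gives equality \emph{in distribution} with some highly non-unique $F^h$; the statement you must prove asserts an \emph{almost sure} identity \eqref{theorem:kallenbergjoint} for a single coupled family of Poisson processes, U-arrays and a directing variable $\alpha_0$. ``Insisting on a consistent coupling across dyadic refinements'' is exactly the hard part: you would need a compatibility (projectivity) argument for the representing functions across scales, a transfer/coupling lemma to upgrade distributional to a.s.\ equality, and a measurable-selection argument making everything jointly measurable in $\alpha_0$ --- none of which is supplied or reduced to a cited result. Second, the identification of the limiting row/column location sets as \emph{unit-rate Poisson processes} independent of each other is asserted via ``a B\"uhlmann-type characterization,'' but invariance under interval permutations only gives mixed (Cox-type) structure; showing the mixing can be absorbed into $\alpha_0$ while keeping the three processes $\{(\theta_j,\vartheta_j)\}$, $\{(\sigma_{jk},\chi_{jk})\}$, $\{(\rho_k,\rho_k',\eta_k)\}$ independent unit-rate Poisson, and that the marks $\vartheta_j$ are shared between the row strip and the column strip at the same $\theta_j$ (your point (ii)), is a genuine argument you have not given. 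Third, the exhaustiveness of the decomposition is not established: you must prove that the diffuse residual is a.s.\ of the form $\beta_0\lambda_D+\gamma_0(\lambda\times\lambda)$, that strip-supported mass has no singular-continuous component in the transverse coordinate, and that no other singular configurations (e.g.\ mass concentrated on other lines or graphs) can occur under joint exchangeability. Each of these claims is plausible and true, but proving them is where Kallenberg's argument lives; in your write-up they are named rather than proved, so the proposal is an outline with genuine gaps rather than a proof.
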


We place our proposed network model of Section~\ref{sec:model} within this Kallenberg representation in Section~\ref{sec:kallengbergmapping}, yielding direct analogs to the classical graphon representation of graphs based on exchangeability of the adjacency matrix.

\subsection{Completely Random Measures}

Our models for graphs build on the completely random measure
(CRM)~\cite{Kingman1967} framework. CRMs have been used extensively in the
Bayesian nonparametric literature for proposing flexible classes of priors
over functional spaces, \citep[cf.][]{Regazzini2003,Lijoi2010}. We recall in this section basic properties of CRMs; the reader can refer to the monograph of \citet{Kingman1993} for an exhaustive coverage.

A CRM $W$ on\ $\mathbb{R}_{+}$ is a random measure such that for any countable
number of disjoint measurable sets $A_{1}, A_{2}, \ldots$ of $\mathbb{R}_{+}$, the
random variables $W(A_{1}), W(A_{2}), \ldots$ are independent and%
\begin{equation}
W(\cup_{j}A_{j})=\sum_{j}W(A_{j}).
\end{equation}
If one additionally assumes that the distribution of $W([t,s])$ only depends
on $t-s$, (i.e. we have i.i.d. increments of fixed size) then the CRM takes
the following form%
\begin{equation}
W=\sum_{i=1}^{\infty}w_{i}\delta_{\theta_{i}},
\end{equation}
where $(w_{i},\theta_i)_{i\in\mathbb{N}}$ are the points of a Poisson point process on $\mathbb R_+^2$ with mean (or L\'evy) measure $\nu(dw,d\theta)=\rho(dw)\lambda(d\theta)$; moreover, the
Laplace transform of $W(A)$ for any measurable set $A$ admits the following
representation:
\begin{equation}
\mathbb{E}[\exp(-tW(A))]=\exp\left(  -\int_{\mathbb{R}_{+}\times A}\left[
1-\exp(-tw)\right]  \rho(dw)\lambda(d\theta)\right)  ,
\end{equation}
for any $t>0$ and $\rho$ a measure on $\mathbb{R}_{+}$ such that%
\begin{equation}
\int_{0}^{\infty}(1-e^{-w})\rho(dw)<\infty.
\end{equation}
The measure $\rho$ is referred to as the jump part of the L\'{e}vy measure. For a CRM $W$ with i.i.d. increments, which are intimately
connected to subordinators~\cite[Chapter 8]{Kingman1993}, $\rho$ characterizes
these increments. We denote this process as $W\sim \CRM(\rho,\lambda)$. Note
that $W([0,T])<\infty$ for any $T<\infty$, while $W(\mathbb{R}_{+})=\infty$ if
$\rho$ is not degenerate at 0.

The jump part $\rho$ of the L\'{e}vy measure is of particular interest for our
construction for graphs. If $\rho$ satisfies the condition
\begin{equation}
\int_{0}^{\infty}\rho(dw)=\infty,\label{eq:infiniteactivity}%
\end{equation}
then there will be an infinite number of jumps in any interval $[0,T]$, and we refer to the CRM as \emph{infinite activity}.
Otherwise, the number of jumps will be finite almost surely. In our models of Section~\ref{sec:model},
these jumps will map directly to the nodes in the graph.

Finally, throughout we let $\psi(t)$ be the Laplace exponent, defined as
\begin{equation}
\psi(t)=\int_0^\infty (1-e^{-wt})\rho(w)dw
\label{eq:laplaceexpo}
\end{equation}
and $\overline{\rho}(x)$ the tail L\'evy intensity
\begin{equation}
\overline{\rho}(x)=\int_x^\infty \rho(w)dw.
\label{eq:taillevy}
\end{equation}

In Section~\ref{sec:specialcases}, we consider special cases including the (compound) Poisson process and generalized gamma process~\cite{Brix1999,Lijoi2007}.

\section{Statistical network models}
\label{sec:model}

Our primary focus is on undirected network models, but implicit in our
construction is the definition of a directed integer-weighted, or
\emph{multigraph}, which in some applications might be the direct quantity of
interest. For example, in social networks, interactions are often not only
directed (\textquotedblleft person \emph{i} messages person \emph{j}%
\textquotedblright), but also have an associated count. Additionally,
interactions might be typed (\textquotedblleft message\textquotedblright,
\textquotedblleft SMS\textquotedblright,\textquotedblleft
like\textquotedblright,\textquotedblleft tag\textquotedblright). Our proposed
framework could be directly extended to model such data.

Our undirected graph simply transforms the directed multigraph by forming an
undirected edge if there is any directed edge between two nodes. Due to the
straightforward relationship between the two graphs, much of the intuition
gained from the directed case carries over to the undirected scenario.

\subsection{Directed multigraphs}

\label{sec:directed}

Let $V=(\theta_{1},\theta_{2},...)$ be a countably infinite set of nodes with
$\theta_{i}\in\mathbb{R}_{+}$. We represent the directed multigraph of interest using an atomic measure on
$\mathbb{R}_{+}^{2}$
\begin{equation}
D=\sum_{i=1}^{\infty}\sum_{j=1}^{\infty}n_{ij}\delta_{(\theta_{i},\theta
_{j})},
\end{equation}
where $n_{ij}$ counts the number of directed edges from node $\theta_{i}$ to
node $\theta_{j}$. See Figure~\ref{fig:simplegraph} for an illustration of the restriction of $D$ to $[0,1]^2$
and the corresponding directed graph. \begin{figure}[ptb]
\begin{center}%
\begin{tabular}
[c]{ccc}%
\includegraphics[width=.32\textwidth]{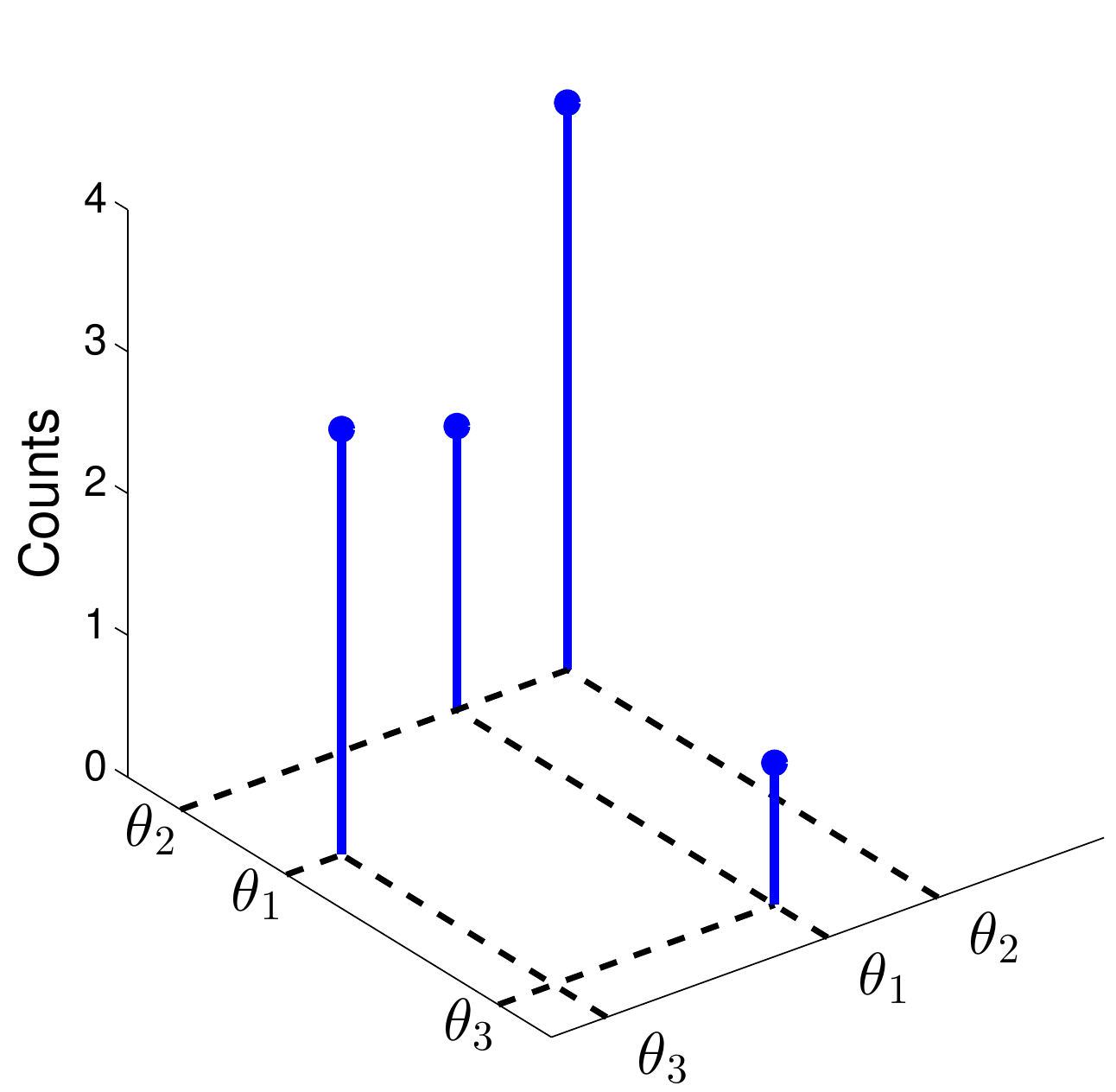} &
\resizebox{0.32\textwidth}{!}{\definecolor {processblue}{cmyk}{0.96,0,0,0}

\begin {tikzpicture}[-latex ,auto ,node distance =1.5 cm and 1.5cm ,on grid ,
semithick ,
state/.style ={ circle ,top color =white , bottom color = processblue!20 ,
draw,processblue , text=blue , minimum width =1 cm}]
\node[state] (C) {$\theta_1$};
\node[state] (A) [above left=of C] {$\theta_2$};
\node[state] (B) [above right =of C] {$\theta_3$};
\path (A) edge [loop left] node[left] {$4$} (A);
\path (C) edge [bend left =25] node[below =0.15 cm] {$2$} (A);
%\path (A) edge [bend right = -15] node[below =0.15 cm] {$1/2$} (C);
\path (C) edge [bend left =15] node[above =0.15 cm] {$1$} (B);
\path (B) edge [bend right = -25] node[below =0.15 cm] {$3$} (C);
\end{tikzpicture}} &
\resizebox{0.32\textwidth}{!}{\definecolor {processblue}{cmyk}{0.96,0,0,0}

\begin {tikzpicture}[-latex ,auto ,node distance =1.5 cm and 1.5cm ,on grid ,
semithick ,
state/.style ={ circle ,top color =white , bottom color = processblue!20 ,
draw,processblue , text=blue , minimum width =1 cm},every loop/.style={}]
\node[state] (C) {$\theta_1$};
\node[state] (A) [above left=of C] {$\theta_2$};
\node[state] (B) [above right =of C] {$\theta_3$};
\draw[-] (A) edge [loop left] (A);
\draw[-] (C) -- (A);
\draw[-] (C) -- (B);
\end{tikzpicture}}\\
(a) & (b) & (c)
\end{tabular}
\end{center}
\caption{An example of (a) the restriction on $[0,1]^2$ of an atomic measure $D$, (b) the corresponding
directed multigraph, and (c) corresponding undirected graph.}%
\label{fig:simplegraph}%
\end{figure}

Our generative approach for modeling $D$ associates with each node $\theta
_{i}$ a \emph{sociability} parameter $w_{i}>0$ defined via the atomic random measure
\begin{equation}
W=\sum_{i=1}^{\infty}w_{i}\delta_{\theta_{i}},
\end{equation}
which we take to be distributed according to a homogeneous CRM, $W\sim\mbox{CRM}(\rho
,\lambda)$. Given $W$, $D$ is simply generated from a Poisson process (PP) with intensity
given by the product measure $\widetilde{W}=W\times W$ on $\mathbb{R}_{+}^{2}%
$:
\begin{equation}
D\mid W\sim\mbox{PP}(W\times W).
\end{equation}
That is, informally, the individual counts $n_{ij}$ are generated as
$\mbox{Poisson}(w_{i}w_{j})$. By construction, for any
$A,B\subset\mathbb{R}$, we have $\widetilde{W}(A\times B)=W(A)W(B)$.\ On
any bounded interval $A$ of $\mathbb{R}_{+}$, $W(A)<\infty$ implying $\widetilde{W}(A\times A)$ has finite mass.

\subsection{Undirected graphs}

\label{sec:undirected} We now turn to the primary focus of modeling undirected
graphs. Similarly to the directed case of Section~\ref{sec:directed}, we
represent an undirected graph using an atomic measure
\[
Z=\sum_{i=1}^{\infty}\sum_{j=1}^{\infty}z_{ij}\delta_{(\theta_{i},\theta_{j}%
)},
\]
with the convention $z_{ij}=z_{ji}\in\{0,1\}$. Here, $z_{ij}=z_{ji}=1$
indicates an undirected edge between nodes $\theta_i$ and $\theta_j$. We arise at the
undirected graph via a simple transformation of the directed graph: set
$z_{ij}=z_{ji}=1$ if $n_{ij}+n_{ji}>0$ and $z_{ij}=z_{ji}=0$ otherwise. That
is, place an undirected edge between nodes $\theta_i$ and $\theta_j$ if and only if there is
at least one directed interaction between the nodes. Note that in this
definition of an undirected graph, we allow self-edges. This could represent,
for example, a person posting a message on his or her own profile page. The
resulting hierarchical model is as follows:
\begin{align}
\begin{aligned} \begin{array}{ll} W =\sum_{i=1}^{\infty}w_{i}\delta_{\theta_{i}} & W \sim \mbox{CRM}(\rho,\lambda)\\ D =\sum_{i=1}^{\infty}\sum_{j=1}^{\infty}n_{ij}\delta_{(\theta_{i},\theta_{j})} & D\mid W\sim\text{PP}\left(W \times W\right) \\ Z =\sum_{i=1}^{\infty}\sum_{j=1}^{\infty}\min(n_{ij}+n_{ji},1)\delta_{(\theta _{i},\theta_{j})}. & \end{array} \end{aligned} \label{eq:Zhierarchy}%
\end{align}
This process is depicted graphically in Figure~\ref{fig:undirectedHierarchy}.

\begin{figure}[ptb]
\begin{center}
\subfigure[$\widetilde W = W \times W$]{\includegraphics[width=.3\textwidth]{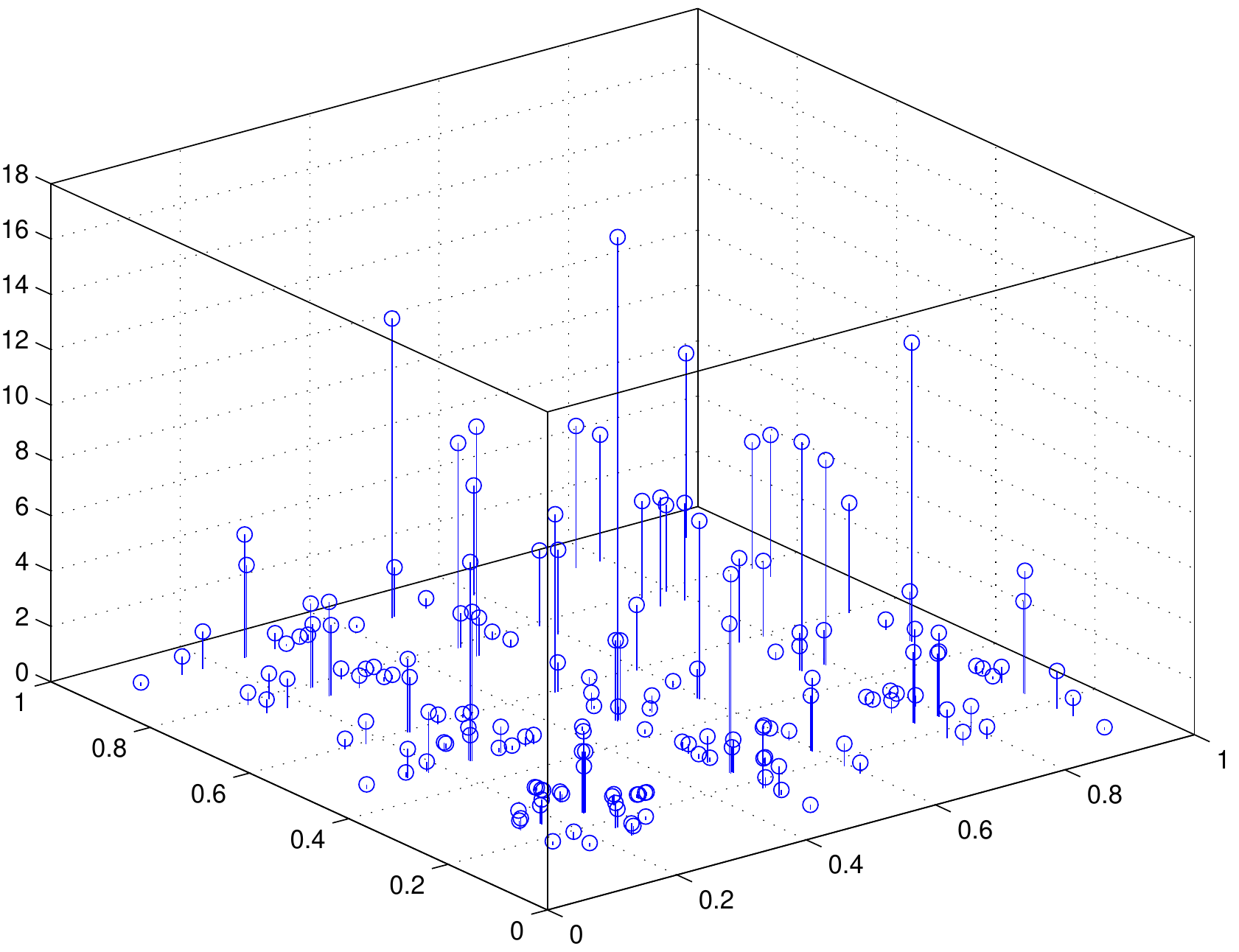}}
\subfigure[Integer point process $D$]{\includegraphics[width=.3\textwidth]{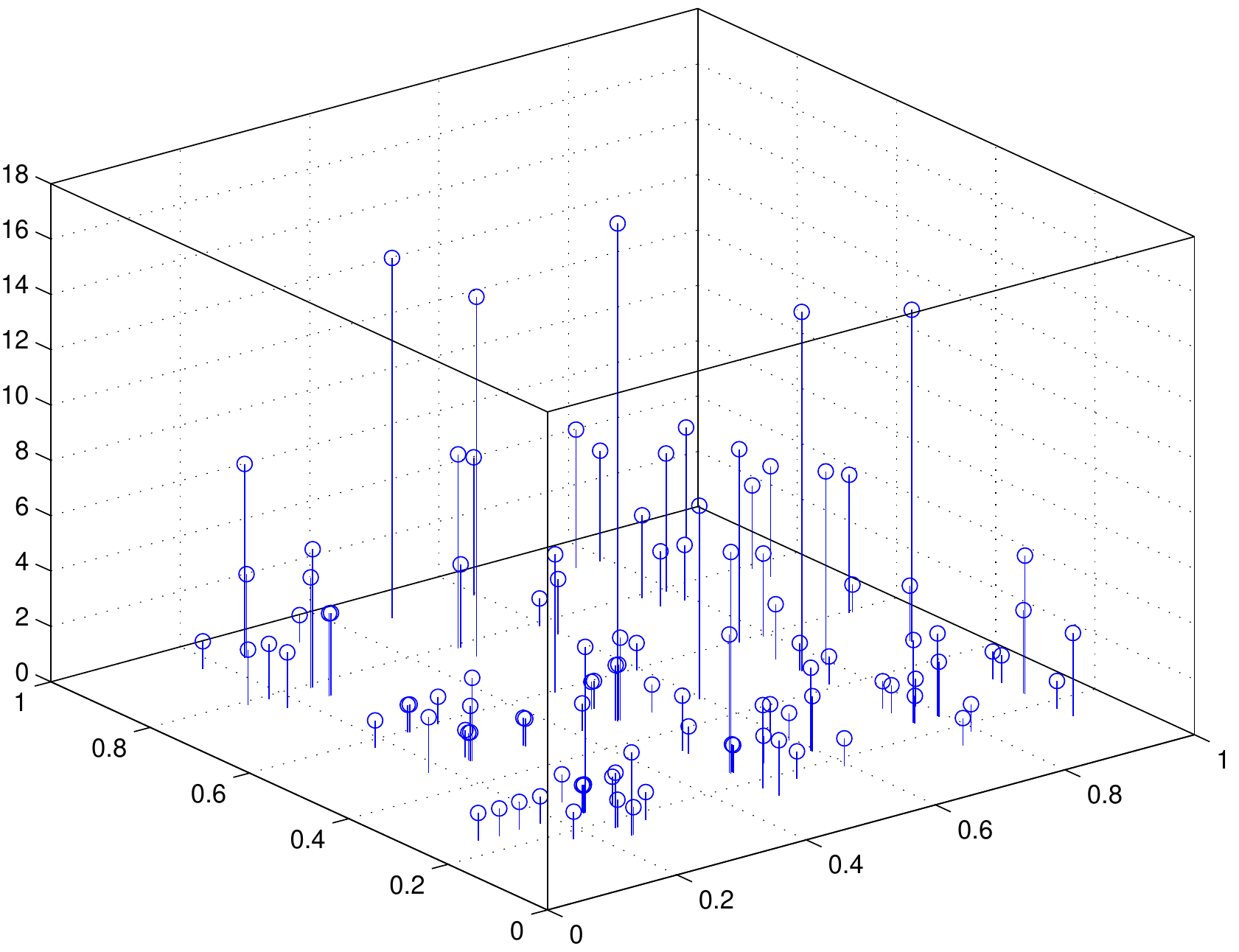}}
\subfigure[Point process $Z$]{\includegraphics[width=.3\textwidth]{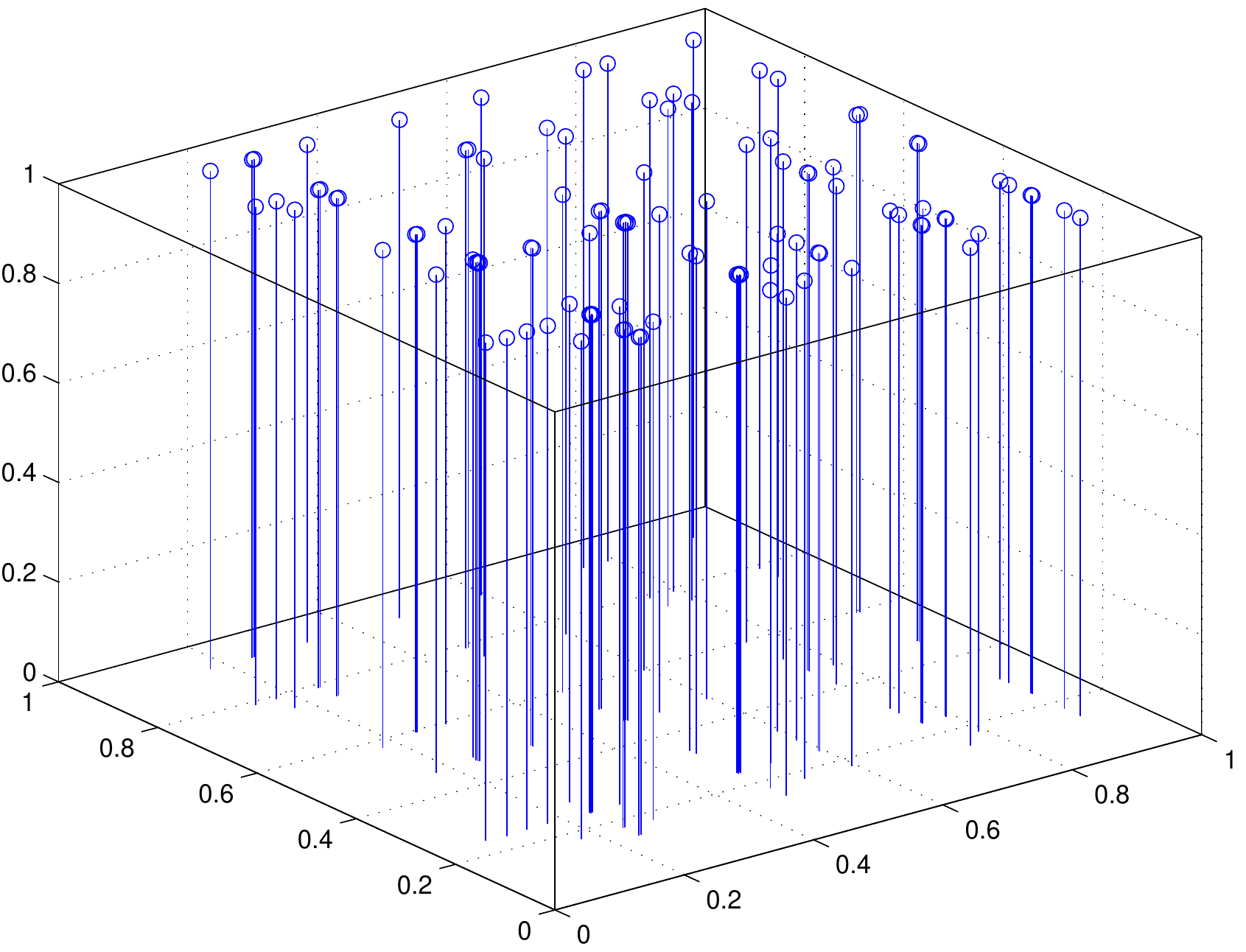}}
%\subfigure[CRM $W$]{\includegraphics[width=.24\textwidth]{}}
%\subfigure[Adjacency matrix of $G$]{\includegraphics[width=.25\textwidth]{cox5}}
\end{center}
\vspace{-0.15in}
\caption{An example of (a) the product measure $\protect\widetilde{W}=W\times
W$ for CRM $W$, (b) a draw of the directed multigraph measure $D\mid W\sim PP(W\times W)$,
(c) corresponding undirected measure $Z=\sum_{i=1}^{\infty}\sum_{j=1}^{\infty
}\min(n_{ij}+n_{ji},1)\delta_{(\theta_{i},\theta_{j})}$.} % and (d) the sociability measure $W$ used above.}
\label{fig:undirectedHierarchy}%
\end{figure}

Equivalently, given the sociability parameters $w = \{w_{i}\}$, we can
directly specify the undirected graph model as
\begin{align}
\Pr(z_{ij}=1\mid w)=\left\{
\begin{array}
[c]{ll}%
1-\exp(-2w_{i}w_{j}) & i\neq j\\
1-\exp(-w_{i}^{2}) & i = j.
\end{array}
\right.  \label{eq:zij}%
\end{align}
To see the equivalence between this formulation and the one obtained from
manipulating the directed multigraph, note that for $i\neq j$, $\Pr
(z_{ij}=1\mid w)=\Pr(n_{ij}+n_{ji}>0\mid w)$. By properties of the Poisson
process, $n_{ij}$ and $n_{ji}$ are independent random variables conditioned on
$W$. The sum of two Poisson random variables, each with rate $w_{i}w_{j}$, is
again Poisson with rate $2w_{i}w_{j}$. The result \eqref{eq:zij} arises from
the fact that $\Pr(n_{ij}+n_{ji}>0\mid w) =1-\Pr(n_{ij}+n_{ji}=0\mid w)$.
Likewise, the $i=j$ case arises using a similar reasoning for $\Pr
(z_{ii}=1\mid w)=\Pr(n_{ii}>0\mid w)$.  %\ebf{Did you want to move this: }

\paragraph{Graph restrictions} Our general network process is defined on $\mathbb R_+^2$ and, due to the fact that $W(\mathbb{R}_+)=\infty$, yields an infinite number of edges.  In applications, we are typically interested in considering graphs with a finite number of edges, but without a bound on or prespecification of this finite number. We therefore consider restrictions $D_\alpha$ and $Z_\alpha$ of $D$ and $Z$, respectively, to the box $[0,\alpha]^2$ and in Section~\ref{sec:MCMC} examine methods for inferring $\alpha$. We also denote by $W_\alpha$ and $\lambda_\alpha$ the corresponding CRM and Lebesgue measure on $[0,\alpha]$. We write $Z_\alpha^\ast=Z_\alpha([0,\alpha]^2)$, the total mass on $[0,\alpha]^2$, and similarly for $D_\alpha^\ast$ and $W_\alpha^\ast$. By definition, $D_\alpha$ is drawn from a Poisson process with finite mean measure $W_\alpha\times W_\alpha$, so we have the following generative model for directly simulating $D_\alpha$ and $Z_\alpha$:
\begin{align}
W_\alpha &\sim \mbox{CRM}(\rho,\lambda_\alpha)\nonumber\\
D_\alpha^{\ast}|W^{\ast}_\alpha&\sim \mbox{Poisson}(W_\alpha^{\ast\ 2}).\nonumber\\
\intertext{For $k=1,\ldots,D_\alpha^{\ast}$ and $j=1,2$}
U_{kj}|W_\alpha&\overset{iid}{\sim} \frac{W_\alpha}{W_\alpha^\ast}\nonumber\\
D_\alpha&=\sum_{k=1}^{D_\alpha^{\ast}}\delta_{(U_{k1},U_{k2})}.
\label{eq:condpoisson}
\end{align}
Here, the variables $U_{kj}\in\mathbb{R}_+$ correspond to nodes in the graph, and pairs of variables $(U_{k1},U_{k2})$ correspond to a directed edge from node $U_{k1}$ to node $U_{k2}$. The number of directed edges, $D_\alpha^\ast$, depends on the total mass of the CRM, $W_\alpha^\ast$. For each such directed edge, the defining nodes $U_{kj}$ are drawn from a normalized CRM, $\frac{W_\alpha}{W_\alpha^\ast}$; since $\frac{W_\alpha}{W_\alpha^\ast}$ is discrete with probability 1, the $U_{kj}$ take a number $N_\alpha\leq 2D_\alpha^\ast$ of distinct values. That is, $N_\alpha$ corresponds to the number of nodes with degree at least one in the network.
Recall that the undirected network construction simply forms an undirected edge between a set of nodes if there exists at least one directed edge between them. If we consider unordered pairs $\{U_{k1},U_{k2}\}$, the number of such unique pairs takes a number $N_\alpha^{(e)}\leq D_\alpha^\ast$ of distinct values, where $N_\alpha^{(e)}$ corresponds to the number of edges in the undirected network.

The construction \eqref{eq:condpoisson}, enables us to re-express our Cox process model in terms of normalized CRMs~\cite{Regazzini2003}. This is very attractive both practically and theoretically; as we show in Section~\ref{sec:specialcases}, one can use this framework to build on the various results on urn processes and power-law properties of normalized CRMs in order to get exact samplers for our graph models as well as to show its sparsity.
\paragraph{Finite-dimensional generative process}
We now describe the urn formulation that allows us to obtain a finite-dimensional generative process. Recall that in practice, we cannot sample $W_\alpha \sim \mbox{CRM}(\rho,\lambda_\alpha)$ if the CRM is infinite activity.

Let $(U^\prime_1,\ldots,U^\prime_{2D_\alpha^\ast})=(U_{11},U_{12},\ldots U_{D_\alpha^{\ast}1},U_{D_\alpha^{\ast}2})$.
For some classes of L\'evy measure $\rho$, it is possible to integrate out the normalized CRM $\mu_\alpha=\frac{W_\alpha}{W_\alpha^\ast}$ in \eqref{eq:condpoisson} and derive the conditional distribution of $U^\prime_{n+1}$ given $(W_\alpha^\ast,U^\prime_1,\ldots,U^\prime_{n})$.
We first recall some background on random partitions. As $\mu_\alpha$ is discrete with probability 1, variables $U^\prime_1,\ldots,U^\prime_n$ take $k\leq n$ distinct values $\widetilde U^\prime_j$, with multiplicities $1\leq m_j\leq n$. The distribution on the underlying partition is usually defined in terms of an exchangeable partition probability function (EPPF)~\cite{Pitman1995} $\Pi_n^{(k)}(m_1,\ldots,m_k|W_\alpha^\ast)$ which is symmetric in its arguments.
The predictive distribution of $U^\prime_{n+1}$ given $(W_\alpha^\ast,U^\prime_1,\ldots,U^\prime_{n})$ is then given in terms of the EPPF:
\begin{align}
U^\prime_{n+1}|(W_\alpha^\ast,U^\prime_1,\ldots,U^\prime_{n}) &\sim \frac{\Pi_{n+1}^{(k+1)}(m_1,\ldots,m_k,1|W_\alpha^\ast)}{\Pi_n^{(k)}(m_1,\ldots,m_k|W_\alpha^\ast)}\frac{1}{\alpha}\lambda_\alpha\nonumber\\
&+\sum_{j=1}^k \frac{\Pi_{n+1}^{(k)}(m_1,\ldots,m_j+1,\ldots,m_k|W_\alpha^\ast)}{\Pi_n^{(k)}(m_1,\ldots,m_k|W_\alpha^\ast)} \delta_{\widetilde U^\prime_j}.
\label{eq:generalurn}
\end{align}
Using this urn representation, we can rewrite our generative process as %We now summarize the whole generative process, where $P_{W_\alpha^\ast}$ is the distribution of $W_\alpha^\ast$:
\begin{align}
W_\alpha^\ast &\sim P_{W_\alpha^\ast}\nonumber\\
D_\alpha^{\ast}|W^{\ast}_\alpha&\sim \mbox{Poisson}(W_\alpha^{\ast\ 2}).\nonumber\\
(U_{kj})_{k=1,\ldots,D_\alpha^{\ast};j=1,2}|W_\alpha^\ast &\sim \text{Urn process \eqref{eq:generalurn}} \nonumber\\
D_\alpha&=\sum_{k=1}^{D_\alpha^{\ast}}\delta_{(U_{k1},U_{k2})},
\label{eq:gengeneralurn}
\end{align}
where $P_{W_\alpha^\ast}$ is the distribution of the CRM total mass, $W_\alpha^\ast$. The representation of \eqref{eq:gengeneralurn} can be used to sample exactly from our graph model, assuming we can sample from $P_{W_\alpha^\ast}$ and evaluate the EPPF. In Section~\ref{sec:specialcases}
we show that this is indeed possible for specific CRMs of interest.  If this is not possible, in Section~\ref{sec:simulation} we present alternative, though potentially more computationally complex, methods for simulation.

\subsection{Bipartite graphs}

\label{sec:bipartite} The above construction can also be extended to bipartite
graphs. Let $V=(\theta_{1},\theta_{2},...)$ and $V^\prime=(\theta^\prime_{1},\theta^\prime_{2},...)$ be two countably infinite set of nodes with
$\theta_{i},\theta^\prime_{i}\in\mathbb{R}_{+}$. We assume that only connections between nodes of different sets are allowed.

We represent the \emph{directed bipartite multigraph} of interest using an atomic measure on
$\mathbb{R}_{+}^{2}$
\begin{equation}
D=\sum_{i=1}^{\infty}\sum_{j=1}^{\infty}n_{ij}\delta_{(\theta_{i},\theta^\prime
_{j})},
\end{equation}
where $n_{ij}$ counts the number of directed edges from node $\theta_{i}$ to
node $\theta^\prime_{j}$. Similarly, the \emph{bipartite graph} is represented by an atomic measure
\[
Z=\sum_{i=1}^{\infty}\sum_{j=1}^{\infty}z_{ij}\delta_{(\theta_{i},\theta^\prime_{j}%
)}.
\]
Our bipartite graph formulation introduces two CRMs, $W\sim \mbox{CRM}(\rho,\lambda)$  and $W^\prime\sim \mbox{CRM}(\rho^\prime,\lambda)$, whose jumps correspond to sociability parameters for nodes in sets $V$ and $V^\prime$, respectively. The generative model for the bipartite graph mimics that of the non-bipartite one:
\begin{align}
\begin{aligned}
\begin{array}{ll}
W^{} =\sum_{i=1}^{\infty}w_{i}\delta_{\theta_{i}} & W \sim \mbox{CRM}(\rho,\lambda)\\
W^\prime =\sum_{j=1}^{\infty}w^\prime_{j}\delta_{\theta^\prime_{j}} & W^\prime \sim \mbox{CRM}(\rho^\prime,\lambda)\\
D =\sum_{i=1}^{\infty}\sum_{j=1}^{\infty}n_{ij}\delta_{(\theta_{i},\theta^\prime_{j})} & D\mid W,W^\prime\sim\text{PP}\left(W \times W^\prime\right) \\ Z =\sum_{i=1}^{\infty}\sum_{j=1}^{\infty}\min(n_{ij},1)\delta_{(\theta _{i},\theta^\prime_{j})}. &
\end{array}
\end{aligned} \label{eq:Zhierarchybipartite}%
\end{align}
The model~\eqref{eq:Zhierarchybipartite} has been proposed by~\citet{Caron2012} in a slightly different formulation.  Here, we recast this model within our general framework making connections with an urn representation and the Kallenberg theory of exchangeability, both of which enable new theoretical and practical insights.

\section{General properties and simulation}
\label{sec:simprop}

We provide here general properties of our network model depending on the properties of the L\'{e}vy intensity $\rho.$
In the next section, we provide more
refined properties, depending on specific choices of $\rho$.

\subsection{Exchangeability under the Kallenberg framework}
\label{sec:kallengbergmapping}

\begin{proposition}
{\normalfont\textbf{(Joint exchangeability of the undirected graph measure).}}
\newline\ For any CRM $W \sim\CRM(\rho,\lambda)$, the point process $Z$ defined by \eqref{eq:Zhierarchy}, or
equivalently by \eqref{eq:zij}, is jointly exchangeable.
\end{proposition}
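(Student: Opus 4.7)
The plan is to derive the exchangeability of $Z$ from the invariance of the underlying CRM $W$ under measure-preserving transformations of $\mathbb{R}_+$, and then propagate this invariance through the Poisson-process construction of $D$ and the deterministic symmetrization $D\mapsto Z$. The key observation is that since $W\sim\CRM(\rho,\lambda)$ has a \emph{homogeneous} L\'evy measure $\nu(dw,d\theta)=\rho(dw)\lambda(d\theta)$, the distribution of $W$ is invariant under any Borel bijection $T:\mathbb{R}_+\to\mathbb{R}_+$ that preserves Lebesgue measure; this is standard for Poisson point processes with a diffuse, $T$-invariant intensity.

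Fix $h>0$ and a permutation $\pi$ of $\mathbb{N}$, and let $A_i=[h(i-1),hi]$. I would construct the block-shuffling map $T_\pi:\mathbb{R}_+\to\mathbb{R}_+$ defined, for $x\in A_i$, by $T_\pi(x)=x-h(i-1)+h(\pi(i)-1)$. Then $T_\pi$ is a measurable bijection mapping $A_i$ onto $A_{\pi(i)}$ by translation, hence is Lebesgue-measure-preserving. By the invariance recalled above, $T_\pi W \overset{d}{=} W$ as random measures on $\mathbb{R}_+$.

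Next, conditional on $W$, the directed multigraph $D$ is a Poisson process with intensity $W\times W$, and the pushforward $(T_\pi\times T_\pi)D$ is a Poisson process with intensity $(T_\pi W)\times(T_\pi W)$. Combining with the marginal invariance $T_\pi W\overset{d}{=}W$ (which entails $(T_\pi W)\times(T_\pi W)\overset{d}{=}W\times W$), we obtain $(T_\pi\times T_\pi)D\overset{d}{=}D$. Because $Z$ is constructed from $D$ purely by symmetrization of the weights at each atom location (i.e.\ $z_{ij}=\min(n_{ij}+n_{ji},1)$), the map $D\mapsto Z$ commutes with $(T_\pi\times T_\pi)$, so $(T_\pi\times T_\pi)Z\overset{d}{=}Z$ as random measures on $\mathbb{R}_+^2$.

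Finally I translate this back into the interval-counting formulation of \eqref{eq:pointprocessexchangeability}. Since $T_\pi(A_i)=A_{\pi(i)}$, one has $((T_\pi\times T_\pi)Z)(A_{\pi(i)}\times A_{\pi(j)})=Z(A_i\times A_j)$ for all $i,j$, and hence $(Z(A_{\pi(i)}\times A_{\pi(j)}))_{i,j}\overset{d}{=}(Z(A_i\times A_j))_{i,j}$, which is joint exchangeability. The only delicate point — and where I would be most careful — is justifying the distributional invariance $T_\pi W\overset{d}{=}W$ for a general (possibly infinitely supported) permutation $\pi$; this reduces to checking equality of finite-dimensional Laplace functionals via the L\'evy--Khintchine formula for $W$ and the fact that $\lambda\circ T_\pi^{-1}=\lambda$, which makes the argument robust to whether $W$ is finite- or infinite-activity.
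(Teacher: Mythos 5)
Your proof is correct, and it reaches the same conclusion from the same underlying source of symmetry (homogeneity of the CRM), but the implementation is genuinely different from the paper's. The paper argues at the level of count arrays: since $W$ has i.i.d.\ increments, $(W(A_i))\overset{d}{=}(W(A_{\pi(i)}))$; since $D(A_i\times A_j)$ is conditionally Poisson with rate $W(A_i)W(A_j)$, the array $(D(A_i\times A_j))$ is jointly exchangeable; and joint exchangeability of $Z$ is then asserted to "follow directly." You instead work at the level of the random measures themselves: you build the explicit Lebesgue-measure-preserving block shuffle $T_\pi$, get $T_\pi W\overset{d}{=}W$ from the mapping theorem / Laplace functional for the underlying Poisson random measure, push this through the Cox construction to get $(T_\pi\times T_\pi)D\overset{d}{=}D$, and observe that the symmetrization $D\mapsto Z$ commutes with the simultaneous pushforward, so $(T_\pi\times T_\pi)Z\overset{d}{=}Z$, from which the array statement in \eqref{eq:pointprocessexchangeability} follows by evaluating on the boxes $A_i\times A_j$. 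What your route buys is precisely the last step the paper glosses over: $Z(A_i\times A_j)$ is not a measurable function of the count $D(A_i\times A_j)$ alone (it depends on the fine atom structure of $D$ on $A_i\times A_j$ and $A_j\times A_i$), so exchangeability of the $D$-count array does not by itself yield exchangeability of the $Z$-count array; your measure-level invariance does, and it does so simultaneously for every $h>0$. What the paper's route buys is brevity and the avoidance of any measure-pushforward machinery. The only cosmetic point to tidy in your write-up: with closed intervals $A_i=[h(i-1),hi]$ the blocks overlap at endpoints, so $T_\pi$ as written is not literally a bijection of $\mathbb{R}_+$; use half-open intervals (or note that the Poisson locations avoid the endpoint grid almost surely), which changes nothing in the argument.
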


\begin{proof}
The proof follows from the properties of $W\sim \CRM(\rho,\lambda)$. Let
$A_{i}=[h(i-1),hi]$ for $h>0$ and $i\in\mathbb{N}.$ We have%
\begin{equation}
(W(A_{i}))\overset{d}{=}(W(A_{\pi(i)}))
\end{equation}
for any permutation $\pi$ of $\mathbb{N}$. As $D(A_{i}\times A_{j})\sim
\mbox{Poisson}(W(A_{i})W(A_{j}))$, it follows that
\begin{equation}
(D(A_{i}\times A_{j}))\overset{d}{=}(D(A_{\pi(i)}\times A_{\pi(j)}))
\end{equation}
for any permutation $\pi$ of $\mathbb{N}$. Joint exchangeability of $Z$
follows directly.
\end{proof}\smallskip

We now reformulate our network process in the Kallenberg representation of
\eqref{theorem:kallenbergjoint}. Due to exchangeability, we know that such a
representation exits.  What we show here is that our CRM-based formulation
has an analytic and interpretable representation.  In particular, the
CRM $W$ can be constructed from a
two-dimensional unit-rate Poisson process on $\mathbb{R}_{+}^{2}$ using the
inverse L\'{e}vy method~\cite{Khintchine1937,Ferguson1972}. Let $(\theta
_{i},\vartheta_{i})$ be a unit-rate Poisson process on $\mathbb{R}_{+}^{2}$.
Let $\overline{\rho}(x)$ be the tail L\'evy intensity defined in \eqref{eq:taillevy}.\ Then the CRM $W=\sum w_{i}%
\delta_{\theta_{i}}$ with L\'{e}vy measure $\rho(dw)d\theta$ can be
constructed from the bi-dimensional point process by taking $w_{i}=\overline{\rho}^{-1}%
(\vartheta_{i})$. $\overline{\rho}^{-1}$ is a monotone function, known as the inverse
L\'{e}vy intensity.\ It follows that our undirected graph model can be
formulated under the representation of \eqref{theorem:kallenbergjoint} by
selecting any $\alpha_0$, $\beta_0=\gamma_0=0$, $g=g'=0$, $h=h'=l=l'=0$ and%
\begin{align}
f(\alpha_0,\vartheta_{i},\vartheta_{j},\zeta_{\{i,j\}})=\left\{
\begin{array}
[c]{ll}%
1 & \zeta_{\{i,j\}}\leq M(\vartheta_{i},\vartheta_{j})\\
0 & \text{otherwise}%
\end{array}
\right.
\label{eq:KallenbergGraphon}
\end{align}
where $M:\mathbb{R}_{+}^{2}\rightarrow\lbrack0,1]$ is defined by%
\[
M(\vartheta_{i},\vartheta_{j})=\left\{
\begin{array}
[c]{ll}%
1-\exp(-2\overline{\rho}^{-1}(\vartheta_{i})\overline{\rho}^{-1}(\vartheta_{j})) & \text{if }\vartheta
_{i}\neq\vartheta_{j}\\
1-\exp(-\overline{\rho}^{-1}(\vartheta_{i})^{2}) & \text{if }\vartheta_{i}=\vartheta_{j}.
\end{array}
\right.
\]

In Section~\ref{sec:specialcases}, we provide explicit forms for $\overline{\rho}$ depending on our choice of L\'{e}vy intensity $\rho$.  The expression \eqref{eq:KallenbergGraphon} represents a direct analog to that of \eqref{eq:graphon} arising from the Aldous-Hoover framework.  In particular, $M$ here is akin to the graphon $\omega$, and thus allows us to connect our CRM-based formulation with the extensive literature on graphons.  An illustration of the network construction from the Kallenberg representation, including the function $M$, is provided in Figure~\ref{fig:Kallenberg}. Note that had we started from the Kallenberg representation and selected an $f$ (or $M$) arbitrarily, we would likely not have yielded a network model with the normalized CRM interpretation that enables both interpretability and analysis of network properties, such as those presented in Section~\ref{sec:GGPspecialcase}.

\begin{figure}[t]
\begin{center}
\resizebox{0.45\textwidth}{!}{ % This file was created by matlab2tikz v0.4.3.
% Copyright (c) 2008--2013, Nico Schlömer <nico.schloemer@gmail.com>
% All rights reserved.
% 
% The latest updates can be retrieved from
%   http://www.mathworks.com/matlabcentral/fileexchange/22022-matlab2tikz
% where you can also make suggestions and rate matlab2tikz.
% 
\pgfplotsset{label style={font=\Huge}}
\begin{tikzpicture}

\begin{axis}[%
width=2.08333333333333in,
height=2.08333333333333in,
scale only axis,
xmin=0,
xmax=1.2,
xtick={0.0514672033008299,0.207242878138187,1},
xticklabels={$\theta_j$,$\theta_i$,$\alpha$},
ymin=0,
ymax=5,
ytick={1.39243641323988,3.38127450990066},
yticklabels={$\vartheta_i$,$\vartheta_j$},
axis x line*=bottom,
axis y line*=left
]
\addplot [
color=blue,
mark size=2.5pt,
only marks,
mark=*,
mark options={solid,fill=blue},
forget plot
]
table[row sep=crcr]{
0.207242878138187 1.39243641323988\\
0.0514672033008299 3.38127450990066\\
0.440809843650636 2.95431408708175\\
0.456833224394711 2.79427043995441\\
0.649144047614761 1.29626223453733\\
};
\addplot [
color=black,
dashed,
line width=2.0pt,
forget plot
]
table[row sep=crcr]{
1 0\\
1 5\\
};
\addplot [
color=black,
dashed,
forget plot
]
table[row sep=crcr]{
0 1.39243641323988\\
0.207242878138187 1.39243641323988\\
};
\addplot [
color=black,
dashed,
forget plot
]
table[row sep=crcr]{
0 3.38127450990066\\
0.0514672033008299 3.38127450990066\\
};
\addplot [
color=black,
dashed,
forget plot
]
table[row sep=crcr]{
0.207242878138187 0\\
0.207242878138187 1.39243641323988\\
};
\addplot [
color=black,
dashed,
forget plot
]
table[row sep=crcr]{
0.0514672033008299 0\\
0.0514672033008299 3.38127450990066\\
};
\end{axis}
\end{tikzpicture}%}\resizebox{0.45\textwidth}{!}{ \input{kallenberg2.tikz}}
\end{center}
\caption{Illustration of the model construction based on the
Kallenberg representation. (left) A unit-rate Poisson process
$(\theta_{i},\vartheta_{i})$, $i\in\mathbb{N}$ on $[0,\alpha]\times
\mathbb{R}_{+}$. (right) For each pair $\{i,j\}\in
\protect\widetilde{\mathbb{N}}^{2}$, set $z_{ij}=z_{ji}=1$ with probability
$M(\vartheta_{i},\vartheta_{j})$. Here, $M$ is indicated by the blue shading
(darker shading indicates higher value) for a stable process
(generalized gamma process with $\tau=0$). In this case there is an analytic expression for $\overline{\rho}^{-1}$
and therefore $M$.}%
\label{fig:Kallenberg}
\end{figure}
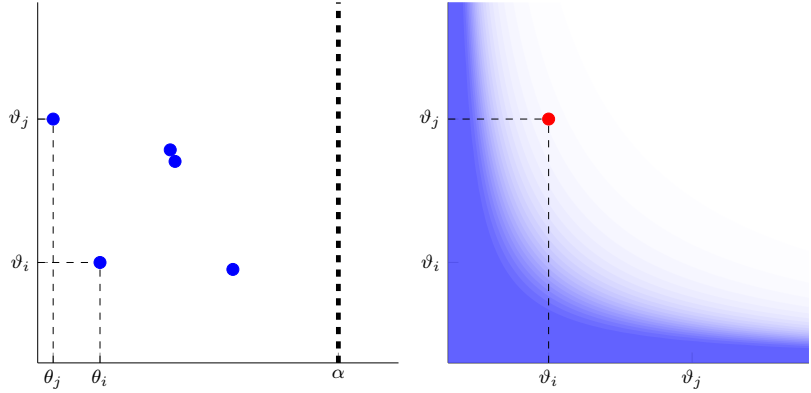

For the bipartite graph, an application of Kallenberg's representation theorem for \emph{separate} exchangeability can likewise be made.

\subsection{Sparsity}

In this section we state the sparsity properties of our graph model, which relate to the properties of the L\'evy intensity $\rho$. Of particular interest is the notion of a \textit{regularly varying} L\'evy intensity~\cite{Karlin1967,Gnedin2006,Gnedin2007}, defined as follows.
\begin{definition}{\normalfont\textbf{(Regular variation)}}
\label{def:regularvariation}
Let $W\sim \CRM(\rho,\lambda)$. The CRM is said to be \textit{regularly varying} if the tail L\'evy intensity verifies
\begin{equation}
\overline{\rho}(x)\overset{x\downarrow 0}{\sim} \ell(1/x) x^{-\sigma}
\end{equation}
for $\sigma\in (0,1)$ where $\ell$ is a slowly varying function satisfying $\lim_{t\rightarrow\infty}\ell(at)/\ell(t)=1$ for any $a>0$. For example, constant and logarithmic functions are slowly varying. The equivalence notation $f(x)\overset{x\downarrow 0}{\sim}g(x)$  is used for $\lim_{x\rightarrow 0} \frac{f(x)}{g(x)}=1$ (not to be confused with the notation $\sim$ alone for `distributed from').
\end{definition}

As a trivial (and degenerate) example of obtaining sparse graphs, we note that if $\rho(dw)=0$, then $W([0,\infty))=0$ almost surely and there are no edges, $N_{\alpha}^{(e)}=0$, and thus no nodes of degree at least one, $N_{\alpha}=0$, for all values of $\alpha$. We consider more general L\'evy intensities in Theorem~\ref{th:sparsity}. In this theorem, we follow the notation of~\citet{Janson2011} for probability asymptotics (see Appendix~\ref{sec:asymptnotation} for details).

\begin{theorem}
\label{th:sparsity} Consider the point process $Z$ with $\rho(w)\neq0$. Let $\psi(t)$, defined in \eqref{eq:laplaceexpo}, be the Laplace exponent and $\psi^{\prime}(t)$ its first derivative; here, $\lim_{t\rightarrow0}\psi^{\prime}(t)=\mathbb{E}[W_{1}^{\ast}]$, the expected total mass for $\alpha=1$. Let
$N_{\alpha}^{(e)}$ be the number of edges in the undirected graph restriction $Z_\alpha$, and
$N_{\alpha}$ be the number of nodes. \newline If the CRM $W$ is
finite-activity (i.e., is obtained from a compound Poisson process):
\[
\int_{0}^{\infty}\rho(w)dw<\infty,
\]
then the number of edges scales quadratically with the number of nodes
\begin{equation}
N_{\alpha}^{(e)}=\Theta(N_{\alpha}^{2})
\end{equation}
almost surely as $\alpha$ tends to infinity, and the graph is \textbf{dense}. \newline If the CRM is
infinite-activity, i.e.
\[
\int_{0}^{\infty}\rho(w)dw=\infty
\]
and
\begin{equation}
\lim_{t\rightarrow0}\psi^{\prime}(t)<\infty,
\end{equation}
then the number of edges scales sub-quadratically with the number of nodes
\begin{equation}
N_{\alpha}^{(e)}=o(N_{\alpha}^{2})
\end{equation}
almost surely as $\alpha$ tends to infinity, and the graph is \textbf{sparse}.

The sparsity regime is linked
to the property of regular variation of the L\'evy intensity (Definition~\ref{def:regularvariation}). If the L\'evy intensity $\rho$ is regularly varying, i.e. if  there exists a slowly varying function
$\ell$ such that $\overline{\rho}(x)\overset{x\downarrow
0}{\sim}\ell(1/x)x^{-\sigma}$ with $\sigma\in(0,1)$, and if additionally $\lim_{t\rightarrow\infty}\ell(t)>0$, then%
\begin{equation}
N_{\alpha}^{(e)}=O\left(  N_{\alpha}^{\frac{2}{1+\sigma}}\right)
\end{equation}
almost surely.
\end{theorem}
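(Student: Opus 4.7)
The plan is to reduce everything to almost-sure asymptotics for the three random quantities $W_\alpha^\ast$, $D_\alpha^\ast$, and $N_\alpha$, and then read off the three regimes from how $\psi(t)$ grows at infinity. Write $m=\psi'(0{+})=\int_0^\infty w\,\rho(dw)\in(0,\infty]$. First, $(W_\alpha^\ast)_{\alpha\ge 0}$ is a subordinator with Laplace exponent $\psi$, so the strong law for subordinators gives $W_\alpha^\ast/\alpha\to m$ a.s. Conditionally on $W_\alpha^\ast$, $D_\alpha^\ast$ is $\mathrm{Poisson}((W_\alpha^\ast)^2)$; a Chernoff bound combined with Borel--Cantelli along a geometric subsequence of $\alpha$ (plus the monotonicity $\alpha\mapsto D_\alpha^\ast$) yields $D_\alpha^\ast/(W_\alpha^\ast)^2\to 1$ a.s. Whenever $m<\infty$ this already gives the uniform upper bound $N_\alpha^{(e)}\le D_\alpha^\ast=O(\alpha^2)$ a.s.

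The key computation is for $N_\alpha$. Conditionally on $W$, the directed edges form a Poisson process on $[0,\alpha]^2$ with intensity $W\times W$, so the probability that the atom $\theta_i$ of mass $w_i$ receives at least one incident edge is $1-\exp\!\bigl(-w_i(2W_\alpha^\ast-w_i)\bigr)$. Applying the Slivnyak--Mecke formula to the Poisson process $\{(\theta_i,w_i)\}$ with intensity $d\theta\,\rho(dw)$, together with the Laplace identity $\mathbb{E}[e^{-tW_\alpha^\ast}]=e^{-\alpha\psi(t)}$, gives the exact formula
\begin{equation*}
\mathbb{E}[N_\alpha]=\alpha\int_0^\infty\!\Bigl(1-e^{-w^2-\alpha\psi(2w)}\Bigr)\rho(dw)\;\sim\;\alpha\,\psi(2\alpha m)\quad\text{as }\alpha\to\infty,
\end{equation*}
where the asymptotic follows by localising the integrand around $w\asymp 1/\alpha$ and using $\psi(2w)\sim 2mw$ for small $w$. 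Conditional on $W$ the hit indicators are independent, giving $\mathrm{Var}(N_\alpha\mid W)\le\mathbb{E}[N_\alpha\mid W]$; together with a Campbell variance bound on $\mathbb{E}[N_\alpha\mid W]$ itself, Chebyshev plus Borel--Cantelli along a dyadic sequence upgrades the mean estimate to $N_\alpha\sim\alpha\,\psi(2\alpha m)$ a.s.

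The three regimes now fall out of the behaviour of $\psi$ at infinity. \textbf{Finite activity:} $\psi(t)\uparrow\bar\rho(0)<\infty$, so $N_\alpha=\Theta(\alpha)$ a.s.; an analogous Slivnyak--Mecke computation gives $\mathbb{E}[N_\alpha^{(e)}]=\tfrac{\alpha^2}{2}\iint(1-e^{-2w_1w_2})\rho(dw_1)\rho(dw_2)+O(\alpha)=\Theta(\alpha^2)$, and matching variance control yields $N_\alpha^{(e)}=\Theta(\alpha^2)=\Theta(N_\alpha^2)$. \textbf{Infinite activity with $m<\infty$:} $\psi(t)\to\infty$ while $\psi(t)/t\to 0$, so $N_\alpha/\alpha\to\infty$ a.s. and $N_\alpha^{(e)}/N_\alpha^2\le(\alpha/N_\alpha)^2\cdot(D_\alpha^\ast/\alpha^2)\to 0$. \textbf{Regular variation:} Karamata's Tauberian theorem converts $\overline\rho(x)\sim x^{-\sigma}\ell(1/x)$ at $0$ into $\psi(t)\sim\Gamma(1-\sigma)\,t^\sigma\ell(t)$ at $\infty$, so $N_\alpha\sim c\,\alpha^{1+\sigma}\ell(\alpha)$ a.s., hence $N_\alpha^{2/(1+\sigma)}\sim c'\alpha^2\,\ell(\alpha)^{2/(1+\sigma)}$. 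The hypothesis $\liminf_{t\to\infty}\ell(t)>0$ keeps the slowly varying factor bounded below, and combined with $N_\alpha^{(e)}=O(\alpha^2)$ this gives the stated $N_\alpha^{(e)}=O\!\bigl(N_\alpha^{2/(1+\sigma)}\bigr)$.

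The main obstacle will be upgrading the expectation-level estimates to almost-sure asymptotics uniformly in the continuous parameter $\alpha$. The conditional-on-$W$ variance bound is immediate, but one must also control $\mathrm{Var}\!\bigl(\mathbb{E}[N_\alpha\mid W]\bigr)$ via a second Campbell computation and then combine a dyadic subsequence argument with the monotone coupling of the restricted graph in $\alpha$; dealing with the slowly varying factor $\ell$ in the regularly varying regime requires the uniform convergence theorem for slowly varying functions, which is where the technical hypothesis $\liminf_{t\to\infty}\ell(t)>0$ enters to prevent $\ell$ from collapsing in the $2/(1+\sigma)$ exponentiation.
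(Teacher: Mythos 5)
Your route is genuinely different from the paper's. The paper deduces the theorem from two intermediate results: for finite activity it rewrites $Z_\alpha$ as an Aldous--Hoover/graphon-type model on a Poisson number of atoms and applies the strong law for V-statistics together with a martingale ratio lemma; for infinite activity it sandwiches $N_\alpha^{(e)}$ between a jointly exchangeable binary array (which is dense) and $D_\alpha^\ast$, and, for the node count, uses an alternating-interval partition $\mathcal{S}^1,\mathcal{S}^2$ of $\mathbb{R}_+$ so that the number of nodes in $\mathcal{S}^1_\alpha$ linked to $\mathcal{S}^2_\alpha$ is, by complete randomness, \emph{exactly} conditionally Poisson with rate $\lambda(\mathcal{S}^1_\alpha)\,\psi(W(\mathcal{S}^2_\alpha))=\omega(\alpha)$ (and $\Omega(\alpha^{1+\sigma})$ via the same Tauberian lemma you invoke). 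You instead compute exact first moments by the Palm/Mecke formula -- your identity $\mathbb{E}[N_\alpha]=\alpha\int_0^\infty(1-e^{-w^2-\alpha\psi(2w)})\rho(dw)$ is correct, since under the Palm measure the added atom $w$ connects with probability $1-e^{-w(2W_\alpha^\ast+w)}$ and $\mathbb{E}[e^{-2wW_\alpha^\ast}]=e^{-\alpha\psi(2w)}$ -- and then upgrade to almost-sure statements by variance bounds, Borel--Cantelli and monotone interpolation. If carried through, this gives sharper asymptotics ($N_\alpha\sim\alpha\psi(2\alpha m)$) than the paper needs; the price is second-moment bookkeeping that the paper's partition trick and conditionally-Poisson representation are designed to avoid.

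The one step that is wrong as stated is ``conditional on $W$ the hit indicators are independent, giving $\mathrm{Var}(N_\alpha\mid W)\le\mathbb{E}[N_\alpha\mid W]$.'' They are not independent: the indicators that $\theta_i$ and $\theta_j$ each have degree at least one both depend on the shared counts $n_{ij},n_{ji}$ (this conditional independence does hold for the edge indicators $z_{ij}$ over distinct pairs, but not for node degrees). The conclusion is rescuable: conditionally on $W$ one computes $\mathrm{Cov}(\mathbf{1}_{A_i},\mathbf{1}_{A_j})=e_{ij}(1-e_{ij})q_iq_j$ with $e_{ij}=1-e^{-2w_iw_j}$ and $q_i$ the probability that $i$ is isolated apart from the pair $\{i,j\}$, so the covariance sum is at most $\sum_{i\neq j}(1-e^{-2w_iw_j})\le 2W_\alpha^{\ast 2}=O(\alpha^2)$, which is $o(\mathbb{E}[N_\alpha\mid W]^2)$ exactly in the infinite-activity regime where the conditional mean is $\omega(\alpha)$, and the factors $q_iq_j$ make it negligible in the finite-activity regime as well; this computation must be made explicit. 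You also still owe the control of $\mathrm{Var}(\mathbb{E}[N_\alpha\mid W])$ that you only gesture at: since each term of $\mathbb{E}[N_\alpha\mid W]$ involves the total mass $W_\alpha^\ast$, it is not a linear Poisson functional, so a two-fold Mecke (or Poincar\'e-inequality) argument is needed rather than plain Campbell. Finally, note that, exactly as in the paper, your regular-variation conclusion uses $N_\alpha^{(e)}\le D_\alpha^\ast=O(\alpha^2)$ and hence implicitly the sparse-regime hypothesis $\lim_{t\to 0}\psi'(t)<\infty$; with these repairs your argument goes through.
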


Theorem~\ref{th:sparsity} is a direct consequence of two theorems that we
state now and prove in Appendix \ref{sec:proofsparsity}. The first theorem states that the number of
edges grows quadratically with $\alpha$, while the second states that the
number of nodes scales superlinearly with $\alpha$ for infinite-activity CRMs,
and linearly otherwise.

\begin{theorem}
\label{th:edgesquad}%
Consider the point process $Z$ with $\rho(w)\neq0$. If $\lim_{t\rightarrow0}\psi^{\prime}(t)=\mathbb{E}%
[W_{1}^{\ast}]<\infty$, then the number of edges in $Z_\alpha$ grows quadratically with
$\alpha$:
\begin{equation}
N_{\alpha}^{(e)}=\Theta(\alpha^{2})
\end{equation}
almost surely. Otherwise, $N_{\alpha}^{(e)}=\Omega(\alpha^{2})$.
\end{theorem}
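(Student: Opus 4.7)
The plan is to prove separately that (i) $N_\alpha^{(e)} = \Omega(\alpha^2)$ a.s.\ under the sole assumption $\rho \neq 0$, and that (ii) $N_\alpha^{(e)} = O(\alpha^2)$ a.s.\ when $\psi'(0^+) < \infty$. Combining (i) and (ii) yields the $\Theta(\alpha^2)$ conclusion in the finite-mean regime, while (i) alone gives the $\Omega(\alpha^2)$ bound otherwise.

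For the upper bound, observe that the total mass $W_\alpha^\ast = W_\alpha([0,\alpha])$ is a subordinator with Laplace exponent $\psi$, so $\mathbb{E}[W_\alpha^\ast] = \alpha\,\psi'(0^+)$, and the strong law of large numbers for subordinators gives
\[
\frac{W_\alpha^\ast}{\alpha} \xrightarrow{\text{a.s.}} \psi'(0^+).
\]
Since $D_\alpha^\ast \mid W_\alpha \sim \mbox{Poisson}(W_\alpha^{\ast\,2})$, a Chernoff bound combined with Borel--Cantelli applied along integer values of $\alpha$, together with the monotonicity of $\alpha \mapsto D_\alpha^\ast$, yields $D_\alpha^\ast / \alpha^2 \to \psi'(0^+)^2$ a.s. Since every undirected edge is created by at least one directed edge, $N_\alpha^{(e)} \leq D_\alpha^\ast$, hence $N_\alpha^{(e)} = O(\alpha^2)$ a.s.

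For the lower bound, since $\rho \neq 0$ pick $\epsilon > 0$ with $\bar{\rho}(\epsilon) = \int_\epsilon^\infty \rho(dw) > 0$. The atoms of $W_\alpha$ with weights $\geq \epsilon$ form a Poisson point process on $[0,\alpha]$ of rate $\bar\rho(\epsilon)$ with conditionally i.i.d.\ weights in $[\epsilon,\infty)$; let $M_\alpha^\epsilon$ be its count, so $M_\alpha^\epsilon / \alpha \to \bar\rho(\epsilon)$ a.s. Conditional on $W_\alpha$ the indicators $(z_{ij})$ are mutually independent across pairs, and for any two such large atoms $\Pr(z_{ij} = 1 \mid W_\alpha) \geq p := 1 - e^{-2\epsilon^2} > 0$. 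Hence the contribution of these atoms to $N_\alpha^{(e)}$ stochastically dominates $\mbox{Bin}\bigl(\binom{M_\alpha^\epsilon}{2}, p\bigr)$, and a second Chernoff--Borel--Cantelli argument (again exploiting the monotonicity of $\alpha\mapsto N_\alpha^{(e)}$) yields $N_\alpha^{(e)} \geq \tfrac{p}{2}\bar\rho(\epsilon)^2 \alpha^2\,(1+o(1))$ a.s.

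The main obstacle is the passage from conditional-expectation bounds to almost-sure bounds in $\alpha$, since the edge Bernoullis share the random CRM weights. This is handled by a two-stage conditioning---first on the Poisson atom configuration (whose rate is determined by $\rho$), then on the conditionally independent edge indicators---together with Borel--Cantelli along integer $\alpha$ and monotonicity of $N_\alpha^{(e)}$ and $D_\alpha^\ast$ to interpolate to the continuous scale.
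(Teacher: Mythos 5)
Your proposal is correct, and it reaches the theorem by a partly different route than the paper. For the upper bound you do essentially what the paper does: bound $N_\alpha^{(e)}\leq D_\alpha^\ast$ with $D_\alpha^\ast\mid W_\alpha\sim\mbox{Poisson}(W_\alpha^{\ast\,2})$, use the law of large numbers for the increments of the subordinator to get $W_\alpha^\ast/\alpha\to\psi'(0^+)$, and then transfer the concentration of the conditional Poisson to an almost-sure statement; the paper does this last step through its martingale ratio lemma ($\sum_i X_i/\sum_i Y_i\to1$ for conditionally unbiased increments, Theorem \ref{theorem:limitratecondiid}) evaluated along integers, while you use a conditional Chernoff bound plus Borel--Cantelli, and both interpolate via monotonicity of $D_\alpha^\ast$. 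The lower bound is where you genuinely diverge: the paper discretizes $[0,\alpha]^2$ into unit blocks, notes that the block-occupancy indicators $\widetilde Z_{ij}=\mathbf{1}\{Z([i-1,i]\times[j-1,j])>0\}$ form a nontrivial jointly exchangeable array, and invokes its lemma that such exchangeable graphs are dense (Aldous--Hoover plus U/V-statistics, Theorem \ref{theorem:excharraydense}), with the finite-activity case handled separately by a compound-Poisson/graphon argument; you instead threshold the CRM at a level $\epsilon$ with $\overline{\rho}(\epsilon)\in(0,\infty)$, note that the heavy atoms arrive as a rate-$\overline{\rho}(\epsilon)$ Poisson process on $[0,\alpha]$, and lower-bound the edges among them by a conditionally dominated $\mbox{Bin}\bigl(\binom{M_\alpha^\epsilon}{2},1-e^{-2\epsilon^2}\bigr)$, again with Chernoff--Borel--Cantelli and monotonicity. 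Your truncation argument is more elementary and self-contained (no appeal to the Aldous--Hoover representation), treats finite- and infinite-activity CRMs uniformly, and yields an explicit asymptotic constant $\tfrac{1}{2}(1-e^{-2\epsilon^2})\,\overline{\rho}(\epsilon)^2$; the paper's route, by contrast, reuses machinery it needs elsewhere anyway and ties the density lower bound directly to the exchangeable-array framework that motivates the whole paper. The only points to make fully explicit in a write-up are the conditional independence of the edge indicators given $W_\alpha$ (which licenses the binomial domination) and the conditional form of Borel--Cantelli you apply given the weight process, both of which are straightforward.
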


\begin{theorem}
\label{th:nodesgrowth}%
Consider the point process $Z$ with $\rho(w)\neq0$. Then
\begin{equation}
N_{\alpha}=\left\{
\begin{array}
[c]{ll}%
\Theta(\alpha) & \text{if }W\text{ is a finite-activity CRM}\\
\omega(\alpha) & \text{if }W\text{ is an infinite-activity CRM}%
\end{array}
\right.
\end{equation}
almost surely as $\alpha\rightarrow\infty$. In words, the number of nodes in $Z_\alpha$
scales linearly with $\alpha$ for finite-activity CRMs and superlinearly with
$\alpha$ for infinite-activity CRMs. In particular, for a regularly varying L\'{e}vy intensity with $\lim_{t\rightarrow\infty}\ell(t)>0$, we have
\begin{equation}
N_{\alpha}=\Omega(\alpha^{\sigma+1})
\end{equation}
almost surely as $\alpha\rightarrow\infty$.
\end{theorem}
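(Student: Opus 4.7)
My plan is to write $N_\alpha$ as a sum of indicators over the atoms of $W_\alpha$, compute their conditional probabilities from the Cox-process construction, and then combine a threshold argument with Poisson concentration and Borel--Cantelli to pass from expectations to almost sure statements.

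Let $(\theta_i,w_i)$ denote the atoms of $W_\alpha$ and $Y_i=\mathbb{I}[\theta_i \text{ is incident to at least one edge in }D_\alpha]$, so that $N_\alpha=\sum_i Y_i$. Conditional on $W_\alpha$, the count of edges at $\theta_i$ is $\mathrm{Poisson}(2w_iW_\alpha^{\ast}-w_i^2)$, hence
\[
\Pr(Y_i=0\mid W_\alpha)=\exp(-2w_iW_\alpha^{\ast}+w_i^2)\leq \exp(-w_iW_\alpha^{\ast}),
\]
and $W_\alpha^{\ast}\to\infty$ almost surely as $\alpha\to\infty$ since $\rho\neq 0$. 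In the finite-activity case, $K_\alpha:=W_\alpha([0,\alpha])$ is $\mathrm{Poisson}(\alpha c)$ with $c=\int\rho(dw)<\infty$; the Poisson law of large numbers together with $N_\alpha\leq K_\alpha$ already gives $N_\alpha=O(\alpha)$ almost surely, which is the upper bound.

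For the lower bound I fix a threshold $w_0>0$, set $K_\alpha(w_0)=\#\{i:w_i\geq w_0\}$ and $N_\alpha^{\geq w_0}=\sum_{i:w_i\geq w_0}Y_i$. Then $K_\alpha(w_0)$ is a Poisson process in $\alpha$ with rate $\overline{\rho}(w_0)$, so $K_\alpha(w_0)/\alpha\to\overline{\rho}(w_0)$ a.s.\ by the strong law. Using the conditional bound on $\Pr(Y_i=0)$ and $\mathbb{E}[\exp(-tW_\alpha^{\ast})]=\exp(-\alpha\psi(t))$, Markov's inequality gives
\[
\Pr\bigl(N_n^{\geq w_0}\leq(1-\epsilon)K_n(w_0)\bigr)\leq \frac{\exp(-n\psi(w_0))}{\epsilon},
\]
which is summable in $n\in\mathbb{N}$ since $\psi(w_0)>0$. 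Borel--Cantelli together with the monotonicity of $N_\alpha$ in $\alpha$ yields $\liminf_\alpha N_\alpha/\alpha\geq(1-\epsilon)\overline{\rho}(w_0)$ almost surely; sending $\epsilon\downarrow 0$ and $w_0\downarrow 0$ gives $\liminf_\alpha N_\alpha/\alpha\geq\int\rho(dw)$, which is $+\infty$ in the infinite-activity case (so $N_\alpha=\omega(\alpha)$) and equals $c$ in the finite-activity case (so $N_\alpha=\Theta(\alpha)$ when paired with the upper bound).

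For the regularly varying refinement I let the threshold depend on $\alpha$, taking $w_0(\alpha)=c_0/\alpha$ for a constant $c_0>0$. Because $W_\alpha^{\ast}/\alpha\to\psi^{\prime}(0^+)>0$ a.s.\ (or $\to\infty$ when $\psi^{\prime}(0^+)=\infty$), the product $w_0(\alpha)W_\alpha^{\ast}$ stays bounded away from zero a.s.\ for large $\alpha$, so each atom with $w_i\geq w_0(\alpha)$ is active with probability at least $1-\exp(-c_0\psi^{\prime}(0^+))$ uniformly in $\alpha$. Regular variation at $0$ gives $\alpha\overline{\rho}(c_0/\alpha)\sim c_0^{-\sigma}\ell(\alpha)\alpha^{1+\sigma}$, which under $\lim_{t\to\infty}\ell(t)>0$ is $\gtrsim c'\alpha^{1+\sigma}$. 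Poisson concentration on $K_\alpha(w_0(\alpha))$ (whose mean diverges polynomially), combined with the same Markov/Borel--Cantelli step applied at the shrinking threshold, then yields $N_\alpha\geq(1-o(1))K_\alpha(w_0(\alpha))=\Omega(\alpha^{1+\sigma})$ almost surely. The main obstacle in this final step is the uniform-in-$\alpha$ concentration: with $w_0(\alpha)\to 0$ one has to simultaneously control the deviations of $W_\alpha^{\ast}$ from $\alpha\psi^{\prime}(0^+)$ (so that $w_0(\alpha)W_\alpha^{\ast}$ remains bounded below almost surely) and the Poisson fluctuations of $K_\alpha(w_0(\alpha))$, on an integer grid fine enough to transfer the conclusion back to continuous $\alpha$ via monotonicity of $N_\alpha$.
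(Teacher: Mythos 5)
Your argument for the dichotomy $\Theta(\alpha)$ versus $\omega(\alpha)$ is correct and takes a genuinely different route from the paper's. You work directly with the activity indicators of the atoms of $W_\alpha$: fix a level $w_0>0$, use the conditional bound $\Pr(Y_i=0\mid W_\alpha)\le \exp(-w_0 W_\alpha^\ast)$ for atoms above the level, apply Markov conditionally on $W_\alpha$ so that the unconditional bound is exactly $\exp(-n\psi(w_0))/\epsilon$ (summable since $\psi(w_0)>0$ whenever $\rho\neq 0$), and conclude by Borel--Cantelli plus monotonicity of $N_\alpha$; letting $w_0\downarrow 0$ along a countable sequence gives $\liminf_\alpha N_\alpha/\alpha\ge \overline{\rho}(0^+)$, which covers both regimes at once (and in the finite-activity case even shows that asymptotically all atoms above any fixed level become active, slightly more than the paper's $\Theta(\alpha)$, whose finite-activity part goes through the compound-Poisson/exchangeable-array representation and the SLLN for V-statistics). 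The paper's infinite-activity argument instead decouples by partitioning $\mathbb{R}_+$ into alternating intervals $\mathcal{S}^1,\mathcal{S}^2$ and counting only nodes in $\mathcal{S}^1_\alpha$ with an edge into $\mathcal{S}^2_\alpha$: conditionally on $W(\mathcal{S}^2_\alpha)$ that count is exactly Poisson with rate $\lambda(\mathcal{S}^1_\alpha)\,\psi(W(\mathcal{S}^2_\alpha))$, so almost-sure growth follows from a single lemma on Poisson counts with diverging rates. The decoupling buys exact conditional Poisson structure; your fixed-threshold Markov/Borel--Cantelli computation successfully substitutes for it in this part.

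The gap is in the regularly varying refinement $N_\alpha=\Omega(\alpha^{\sigma+1})$. Once the threshold shrinks as $w_0(\alpha)=c_0/\alpha$, the quantity that made your Borel--Cantelli step work disappears: $\mathbb{E}[\exp(-w_0(n)W_n^\ast)]=\exp(-n\psi(c_0/n))\rightarrow \exp(-c_0\psi'(0^+))$, a positive constant, so ``the same Markov/Borel--Cantelli step'' yields a bad-event bound that does not decay in $n$ and is not summable. Markov on the conditional mean only shows that a constant fraction of the above-threshold atoms is active in expectation; to upgrade this to an almost sure statement you need concentration of the number of inactive above-threshold atoms about its conditional mean, and the indicators $Y_i$ are not conditionally independent given $W_\alpha$ --- pairs $i,j$ share the counts $n_{ij},n_{ji}$, and indeed $\Pr(Y_i=0,Y_j=0\mid W)=\Pr(Y_i=0\mid W)\Pr(Y_j=0\mid W)e^{2w_iw_j}$, so they are positively correlated. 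A second-moment/Chebyshev argument exploiting that the conditional mean of the inactive count grows polynomially (together with control of these covariances and of the event $\{W_n^\ast\ge mn\}$) would produce summable error probabilities and repair the step, but this is precisely the missing ingredient: your proposal locates the difficulty in the two easy parts (the LLN for $W_\alpha^\ast$ and the Poisson fluctuations of $K_\alpha(w_0(\alpha))$) rather than in this conditional dependence. Note also that your intermediate claim $N_\alpha\ge (1-o(1))K_\alpha(w_0(\alpha))$ is too strong: atoms with weight of order $c_0/\alpha$ stay inactive with probability bounded away from zero, so only a constant fraction in $(0,1)$ can be guaranteed --- enough for the $\Omega(\alpha^{1+\sigma})$ rate, but false as stated. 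The paper's partition trick sidesteps all of this: the same conditional-Poisson node count, combined with the Tauberian lemma giving $\psi(t)=\Omega(t^{\sigma})$ and $W(\mathcal{S}^2_\alpha)=\Omega(\alpha)$, immediately delivers a conditionally Poisson count with rate $\Omega(\alpha^{\sigma+1})$, from which the almost sure bound follows.
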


\subsection{Interactions between groups}

For any disjoint set of nodes $A,B\subset\mathbb{R}_+$, $A\cap B=\emptyset$, the
probability that there is at least one connection between a node in $A$ and a
node in $B$ is given by
\[
\Pr(Z(A\times B)>0|W)=1-\exp(-2W(A)W(B)).
\]
That is, the probability of a between-group edge depends on the sum of the
sociabilities in each group, $W(A)$ and $W(B)$, respectively.

\subsection{Simulation}
\label{sec:simulation}
To simulate an undirected graph, we harness the directed multigraph
representation. That is, we first sample a directed multigraph and then transform
it to an undirected graph as described in Section~\ref{sec:undirected}. One
might imagine simulating a directed network by first sampling $W_{\alpha}$ and
then sampling $D_{\alpha}$ given $W_{\alpha}$. However, recall that
$W_{\alpha}$ may have an infinite number of jumps.
One approximate approach to coping with this issue, which is possible for some L\'{e}vy
intensities $\rho$, is to resort to adaptive thinning
\cite{Lewis1979,Ogata1981,Favaro2013}. A related alternative approximate approach, but
applicable to any L\'{e}vy intensity $\rho$ satisfying
\eqref{eq:infiniteactivity}, is the inverse L\'{e}vy method. This method first
defines a threshold $\varepsilon$ and then samples the weights $\Omega
=\{w_{i}|w_{i}>\varepsilon\}$ using a Poisson measure on $[\varepsilon
,+\infty]$. One then simulates $D_\alpha$ using these truncated weights $\Omega$.

A naive application of this truncated method that considers sampling directed or undirected edges as in \eqref{eq:Zhierarchy} or \eqref{eq:zij}, respectively, can prove computationally problematic since a large number of possible edges must be considered (one Poisson/Bernoulli draw for each $(\theta_i,\theta_j)$ pair for the directed/undirected case).
Instead, we can harness the Cox process representation and resulting sampling procedure of \eqref{eq:condpoisson} to first sample the total number of directed edges and then their specific instantiations.
More specifically, to approximately simulate a point process on $[0,\alpha]^{2}$, we use the
inverse L\'{e}vy method to sample
\begin{align}
\Pi_{\alpha,\varepsilon}=\{(w,\theta)\in\Pi,0<\theta\leq\alpha,w>\varepsilon
\}.
\end{align}
Let $W_{\alpha,\varepsilon}=\sum_{i=1}^{K}w_{i}\delta_{\theta_{i}}$ be the
associated truncated CRM and $W_{\alpha,\varepsilon}^{\ast}=W_{\alpha
,\varepsilon}([0,\alpha])$ its total mass. We then sample $D_{\alpha,\varepsilon}^{\ast}$ and $U_{k,j}$ as in \eqref{eq:condpoisson}
and set $D_{\alpha,\varepsilon}=\sum_{k=1}^{D_{\alpha,\varepsilon}^{\ast
}}\delta_{(U_{k1},U_{k2})}$. The undirected graph measure $Z_{\alpha,\varepsilon}$ is set to the manipulation of
$D_{\alpha,\varepsilon}$ as in \eqref{eq:Zhierarchy}.

In the next section, we show that it is possible to sample a graph \emph{exactly} via an urn scheme
when considering the special case of generalized gamma processes, which includes the standard gamma process.

\section{Special cases}
\label{sec:specialcases}
In this section, we examine the properties of various models and their link to classical random graph models depending on the L\'{e}vy measure $\rho$.
We show that in generalized gamma process case, the resulting graph can be either dense or sparse, with the sparsity tuned by a single hyperparameter. We focus on the undirected graph case, but similar results can be obtained for directed multigraphs and bipartite graphs.

\subsection{Poisson process}
Consider a Poisson process with fixed increments $a$ and%
\[
\rho(dw)=\delta_{w_0}(dw),
\]
where $\delta_{w_0}$ is the dirac delta mass at $w_0>0$. Recalling the definition $\overline{\rho}(x)=\int_x^\infty \rho(dw)$, in this case, we have
\[
\overline{\rho}(x)=\left\{
\begin{array}
[c]{ll}%
1 & \text{if }x<w_0\\
0 & \text{otherwise}.%
\end{array}
\right.
\]
Ignoring self-edges, the graph construction can be described as follows.  To sample $W_\alpha \sim \mbox{PP}(\rho,\lambda_\alpha)$, we generate $n\sim \mbox{Poisson}(\alpha)$ and then sample $\theta_{i}%
\sim\mbox{Uniform}([0,\alpha])$ for $i=1,\ldots n$. We then sample edges according to \eqref{eq:zij}: For $0<i<j<n$, set $z_{ij}=z_{ji}=1$ with probability
$1-$exp$(-2w_0^{2})$ and 0 otherwise.\ The model is therefore equivalent to the
Erd\" os-R\' enyi random graph model $G(n,p)$ with $n\sim \mbox{Poisson}(\alpha)$ and
$p=1-\exp(-2w_0^{2})$. Therefore, this choice of $\rho$ leads to a dense graph
where the number of edges grows quadratically with the number of nodes $n$.

\subsection{Compound Poisson process}
A compound Poisson process is one where%
\[
\rho(dw)=h(w)dw
\]
and $h:\mathbb{R}_{+}\rightarrow\mathbb{R}_{+}$ is such that $\int_{0}%
^{\infty}h(w)dw=1.$ In this case, we have%
\[
\overline{\rho}(x)=1-H(x)
\]
where $H$ is the distribution function associated with $h$. Here, we arrive at a framework similar to the standard graphon. Leveraging the Kallenberg representation of \eqref{eq:KallenbergGraphon}, we first sample $n\sim\mbox{Poisson}(\alpha)$. Then, for $i=1,\ldots n$
we set $z_{ij}=z_{ji}=1$ with probability $M(U_{i},U_{j})$ where $U_{i}$ are uniform
variables and $M$ is defined by%
\[
M(U_{i},U_{j})=1-\exp(-2H^{-1}(U_{i})H^{-1}(U_{j})).
\]
This representation is the same as with the Aldous-Hoover theorem, where
the number of nodes is random and follows a Poisson distribution. As such, the resulting random graph is either trivially empty or dense.

\subsection{Generalized gamma process}
\label{sec:GGPspecialcase}

The generalized gamma process~\cite{Hougaard1986,Aalen1992,Lee1993,Brix1999} (GGP) is a flexible two-parameter CRM, with interpretable parameters and remarkable conjugacy properties~\cite{Lijoi2007,Caron2014}. The process is known as the Hougaard process~\cite{Hougaard1986} when $\lambda$ is the Lebesgue measure, as in this paper, but we will use the term GGP in the rest of this paper. The L\'evy intensity of the GGP is given by
\begin{align}
\rho(dw)=\frac{1}{\Gamma(1-\sigma)}w^{-1-\sigma}\exp(-\tau w)dw,
\label{eq:rhoGGP}
\end{align}
where the two parameters $(\sigma,\tau)$ verify
\begin{align}
(\sigma,\tau)\in(-\infty,0]\times (0,+\infty)\text{ or }(\sigma,\tau)\in(0,1)\times [0,+\infty).
\label{eq:condCRM}
\end{align}
The GGP has different properties if $\sigma\geq 0$ or $\sigma<0$. When $\sigma<0$, the GGP is a finite-activity CRM; more precisely, the number of jumps in $[0,\alpha]$ is finite w.p. 1 and drawn from a Poisson distribution with rate $-\frac{\alpha}{\sigma}\tau^\sigma$ while the jumps $w_i$ are i.i.d. $\Gam(-\sigma,\tau)$.

When $\sigma\geq 0$, the GGP has an infinite number of jumps over any interval $[s,t]$. It includes as special cases the gamma process ($\sigma=0$, $\tau>0$), the stable
process ($\sigma\in(0,1)$, $\tau=0$) and the inverse-Gaussian process ($\sigma=\frac{1}{2}$,$\tau>0$).

The tail L\'evy intensity of the GGP is given by
\[
\overline{\rho}(x)=\int_{x}^{\infty}\frac{1}{\Gamma(1-\sigma)}w^{-1-\sigma}\exp(-\tau
w)dw=\left\{
\begin{array}
[c]{cc}%
\frac{\tau^{\sigma}\Gamma(-\sigma,\tau x)}{\Gamma(1-\sigma)} & \text{if }%
\tau>0\\
\frac{x^{-\sigma}}{\Gamma(1-\sigma)\sigma} & \text{if }\tau=0,
\end{array}
\right.
\]
where $\Gamma(a,x)$ is the incomplete gamma function. Example realizations of the process for various values of $\sigma\geq 0$ are
displayed in Figure~\ref{fig:samples} alongside a realization of an Erd\" os-R\'{e}nyi graph.

\begin{figure}[ptb]
\begin{center}
\subfigure[$G(1000,0.05)$]{\includegraphics[width=0.24\textwidth]{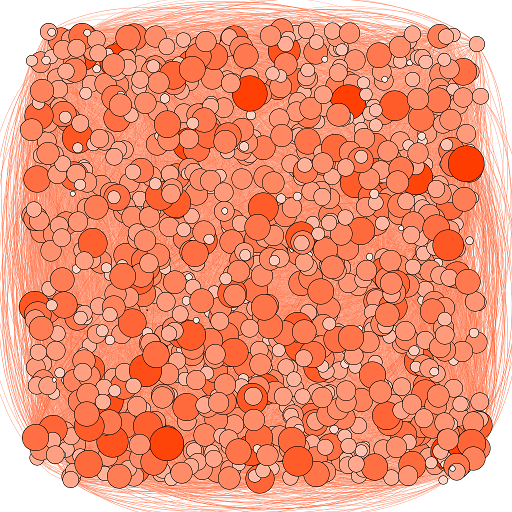}}
\subfigure[$GGP(100,2,0)$]{\includegraphics[width=0.24\textwidth]{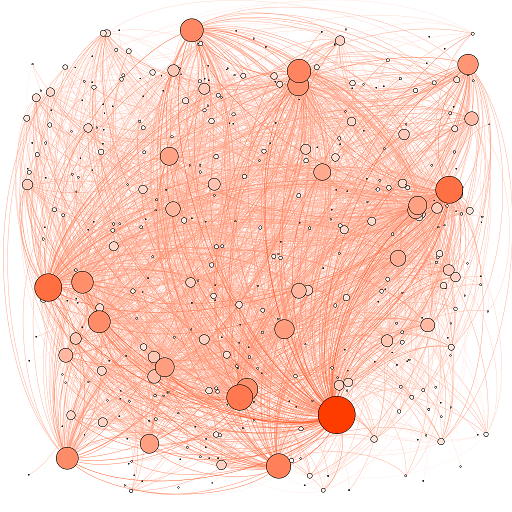}}
\subfigure[$GGP(100,2,0.5)$]{\includegraphics[width=0.24\textwidth]{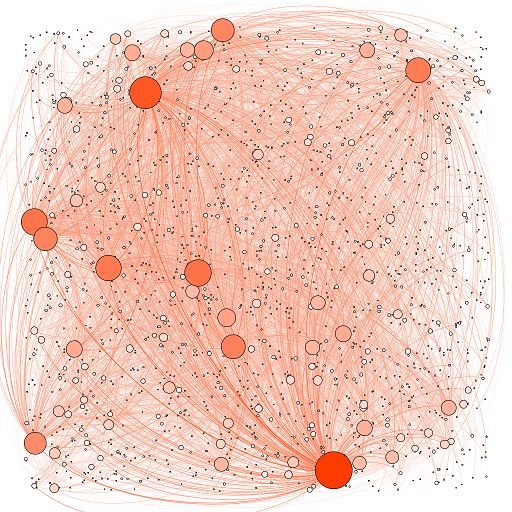}}
\subfigure[$GGP(100,2,0.8)$]{\includegraphics[width=0.24\textwidth]{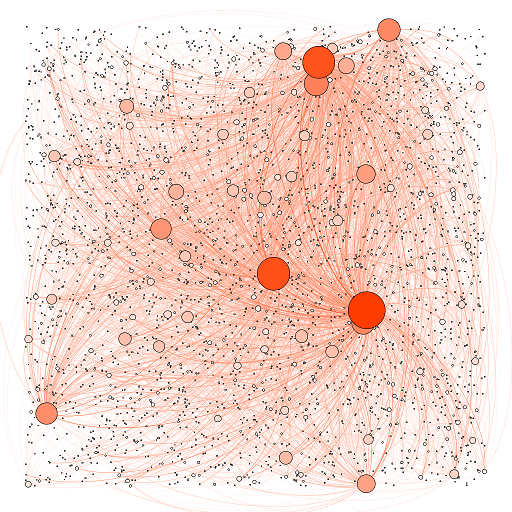}}
\end{center}
\caption{Sample graphs: (a) Erd\" os-R\'{e}nyi graph $G(n,p)$ with $n=1000$ and $p=0.05$ (b-c)
Generalized gamma process graph $GGP(\alpha,\tau,\sigma)$ with $\alpha=100$, $\tau=2$ and (b) $\sigma
=0$, (c) $\sigma=0.5$, (d) $\sigma=0.8$. The size of a node is proportional to
its degree. Graphs have been generated with the software Gephi.}%
\label{fig:samples}
\end{figure}

\paragraph{Exact sampling via an urn approach}
In the case $\sigma\geq 0$, $W_\alpha^\ast$ is an
exponentially tilted stable random variable, for which exact samplers exist~\cite{Devroye2009}. As shown by \citet{Pitman2003} (see also \cite{Lijoi2008}),
the EPPF conditional
on the total mass $W_\alpha^\ast=t$ only depends on the parameter $\sigma$ (and not
$\tau,\alpha$) and is given by
\begin{equation}
\Pi_{k}^{(n)}(m_{1},\ldots,m_{k}|t)=\frac{\sigma^{k}t^{-n}}{\Gamma
(n-k\sigma)g_{\sigma}(t)}\int_{0}^{t}s^{n-k\sigma-1}g_{\sigma}(t-s)ds\left(
\prod_{i=1}^{k}\frac{\Gamma(m_{i}-\sigma)}{\Gamma(1-\sigma)}\right),
\label{eq:EPPF}%
\end{equation}
where $g_{\sigma}$ is the pdf of the positive stable distribution. Plugging the EPPF of \eqref{eq:EPPF} in to \eqref{eq:generalurn} yields the urn process for sampling in the GGP case.  In particular, one can use the generative process \eqref{eq:gengeneralurn} in order to sample exactly from the model.

In the special case of the gamma process ($\sigma =0$), $W_\alpha^\ast$ is a Gamma$(\alpha,\tau)$ random variable and the resulting urn process is given by~\cite{blackwell1973,Pitman1996}:
\begin{align}
U^\prime_{n+1}|(W_\alpha^\ast,U^\prime_1,\ldots,U^\prime_{n}) &\sim \frac{1}{\alpha + n}\lambda_\alpha+\sum_{j=1}^k \frac{m_j}{\alpha +n} \delta_{\widetilde U^\prime_j}.
\label{eq:BlackwellMacQueen}
\end{align}
When $\sigma<0$, the GGP is a compound Poisson process and can thus be sampled exactly.

\paragraph{Expected number of nodes and edges}

In Theorem~\ref{th:inequalitynodes}, we consider bounds on the expected number of nodes in the gamma process case ($\sigma=0,\tau>0$), and the expected number of edges in the multigraph. The proof is in Appendix~\ref{sec:proofGGP}.

\begin{theorem}
\label{th:inequalitynodes}For any $\varepsilon\in(0,1)$,
\begin{multline}
\alpha\log\left(  1+\varepsilon\frac{2(\alpha+1)}{\tau^{2}}\right)  \left(
1-\frac{c_{1}(\alpha)}{1-\varepsilon^{2}}\right)  \leq\mathbb{E}[N_{\alpha
}]\\
\leq\alpha\log\left(  1+\frac{2(\alpha+1)}{\tau^{2}}\right)  +\frac
{2(\alpha+1)}{\tau^{2}+2(\alpha+1)},%
\end{multline}
where $c_{1}(\alpha)=\frac{Var(D_{\alpha}^{\ast})}{\mathbb{E}[D_{\alpha}%
^{\ast}]^{2}}=\frac{\tau^{2}}{\alpha(\alpha+1)}\left(  1+\frac{4\alpha+6}%
{\tau^{2}}\right)  $ is a decreasing function of $\alpha$ with $c(\alpha
)\rightarrow0$ as $\alpha\rightarrow\infty.$ Consequently,
\begin{equation}
\mathbb{E}[N_{\alpha}]=\Theta(\alpha\log\alpha).
\end{equation}
Let $D_\alpha^\ast$ be the number of edges in the directed multigraph. Then
\[
\mathbb{E}[D_\alpha^{\ast}]=\frac{\alpha(\alpha
+1)}{\tau^{2}}
\quad Var[D_\alpha^{\ast}]=\frac{\alpha(\alpha+1)}{\tau^{2}}\left(
1+\frac{4\alpha+6}{\tau^{2}}\right).
\]
\end{theorem}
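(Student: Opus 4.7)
The plan splits naturally into two parts: the computation of the first two moments of $D_\alpha^\ast$ (which is pure bookkeeping with the Gamma–Poisson hierarchy), and the two-sided bound on $\mathbb{E}[N_\alpha]$ (which combines the Blackwell–MacQueen urn with Jensen's inequality on the upper side and a Chebyshev-type truncation on the lower side).

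First I would handle the directed-edge count. Since $D_\alpha^\ast \mid W_\alpha^\ast \sim \mbox{Poisson}((W_\alpha^\ast)^2)$ and $W_\alpha^\ast \sim \Gam(\alpha,\tau)$, the identities
\[
\mathbb{E}[D_\alpha^\ast] = \mathbb{E}[(W_\alpha^\ast)^2], \qquad
\mathrm{Var}(D_\alpha^\ast) = \mathbb{E}[(W_\alpha^\ast)^2] + \mathrm{Var}((W_\alpha^\ast)^2),
\]
together with $\mathbb{E}[(W_\alpha^\ast)^k] = \Gamma(\alpha+k)/(\Gamma(\alpha)\tau^k)$ for $k=2,4$, reduce the problem to elementary algebra and yield the stated expressions for $\mathbb{E}[D_\alpha^\ast]$, $\mathrm{Var}(D_\alpha^\ast)$, and hence $c_1(\alpha)$ (with $c_1(\alpha)\to 0$ as $\alpha\to\infty$ being immediate from the formula).

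For $\mathbb{E}[N_\alpha]$, I would invoke the urn representation \eqref{eq:BlackwellMacQueen}. After marginalizing over $W_\alpha^\ast$, the $U'_i$ form a standard Blackwell–MacQueen sequence with concentration $\alpha$, and the probability that $U'_{i+1}$ is a new atom is $\alpha/(\alpha+i)$. Since $N_\alpha$ is exactly the number of distinct atoms among $U'_1,\dots,U'_{2D_\alpha^\ast}$, conditional on $D_\alpha^\ast=d$,
\[
\mathbb{E}[N_\alpha \mid D_\alpha^\ast = d] = \sum_{i=1}^{2d}\frac{\alpha}{\alpha+i-1}.
\]
Standard integral comparisons give the sandwich
\[
\alpha\log\!\left(1+\tfrac{2d}{\alpha}\right) \;\le\; \mathbb{E}[N_\alpha\mid D_\alpha^\ast = d] \;\le\; \alpha\log\!\left(1+\tfrac{2d}{\alpha}\right) + \tfrac{2d}{\alpha+2d}.
\]
Taking expectations, the upper bound follows from Jensen's inequality applied to the concave functions $x\mapsto \log(1+2x/\alpha)$ and $x\mapsto 2x/(\alpha+2x)$ at $x=\mathbb{E}[D_\alpha^\ast]=\alpha(\alpha+1)/\tau^2$; this gives the right-hand side of the theorem directly.

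For the lower bound I would restrict to the event $\{D_\alpha^\ast \ge \varepsilon\mathbb{E}[D_\alpha^\ast]\}$. On this event $\log(1+2D_\alpha^\ast/\alpha) \ge \log(1+\varepsilon\cdot 2(\alpha+1)/\tau^2)$, while on the complement the integrand is still non-negative, so
\[
\mathbb{E}[N_\alpha] \;\ge\; \alpha\log\!\left(1+\varepsilon\tfrac{2(\alpha+1)}{\tau^2}\right)\,\Pr\!\left(D_\alpha^\ast \ge \varepsilon\,\mathbb{E}[D_\alpha^\ast]\right).
\]
Bounding the probability from below via a Chebyshev/second-moment argument on $D_\alpha^\ast/\mathbb{E}[D_\alpha^\ast]$ yields a factor of the form $1 - c_1(\alpha)/(1-\varepsilon^2)$, completing the lower bound. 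The asymptotic $\mathbb{E}[N_\alpha] = \Theta(\alpha\log\alpha)$ then follows by fixing any $\varepsilon \in (0,1)$, noting that both $\log(1+\varepsilon\cdot 2(\alpha+1)/\tau^2)$ and $\log(1+2(\alpha+1)/\tau^2)$ are $\Theta(\log\alpha)$ as $\alpha\to\infty$, and that $c_1(\alpha)\to 0$.

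The main obstacle is the lower-bound probability estimate: getting the exact factor $c_1(\alpha)/(1-\varepsilon^2)$ (as opposed to the cruder $c_1(\alpha)/(1-\varepsilon)^2$ that falls out of a vanilla two-sided Chebyshev) requires a careful one-sided argument exploiting the symmetric structure of the normalized deviation. Once this probability bound is in hand, the remainder is routine manipulation of the Gamma moments and the harmonic-sum representation of $\mathbb{E}[N_\alpha\mid D_\alpha^\ast]$.
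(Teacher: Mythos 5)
Your proposal is correct and follows essentially the same route as the paper: the Gamma--Poisson moment computation for $D_\alpha^\ast$, the urn/new-atom representation giving $\mathbb{E}[N_\alpha\mid D_\alpha^\ast]=\alpha\sum_{i=1}^{2D_\alpha^\ast}(\alpha+i-1)^{-1}$, the integral sandwich, Jensen for the upper bound, and restriction to the event $\{D_\alpha^\ast\geq\varepsilon\,\mathbb{E}[D_\alpha^\ast]\}$ (the paper phrases this via a Markov inequality on $\log(1+2D_\alpha^\ast/\alpha)$, which is equivalent) for the lower bound. The one step you flag as the main obstacle---obtaining the factor $1-c_1(\alpha)/(1-\varepsilon^2)$ rather than the cruder $(1-\varepsilon)^2$ denominator---is handled in the paper simply by invoking its stated Chebyshev-type inequality \eqref{eq:chebyshev2}, so your argument coincides with the paper's modulo that quoted bound.
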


\paragraph{Power-law properties}
In Theorem~\ref{th:powerlaw}, we show that the GGP directed multigraph has a power-law degree distribution.  A corresponding theorem in the undirected graph case is challenging to show and beyond the scope of this paper, but our empirical results of Figure~\ref{fig:graphproperties} demonstrate that such a power-law property likely holds for the undirected case as well.
\begin{theorem}
Let $N_{\alpha,j}$, $j \geq 1$ be the number of nodes in the directed multigraph $D_\alpha$ with $j$ outgoing or incoming edges (a self edge counts twice for a given node). Then we have the following asymptotic results for the GGP:
\begin{equation}
\frac{N_{\alpha,j}}{N_\alpha}\, \overset{\alpha\uparrow\infty}{\longrightarrow} \,\,\,p_{\sigma,j}
=\frac{\sigma\Gamma(j-\sigma)}{\Gamma(1-\sigma)\Gamma(j+1)},
%=\frac {\sigma(1-\sigma)_{j-1}}{j!}=\frac{\sigma\Gamma(j-\sigma)}{\Gamma(1-\sigma)\Gamma(j+1)}.%
\end{equation}
almost surely, for fixed $j$. In particular, for large $j$, we have tail behavior
\begin{equation}
p_{\sigma,j}\overset{j\uparrow\infty}\sim\frac{\sigma}{\Gamma(1-\sigma)} j^{-1-\sigma}
\label{eq:tailGGP}
\end{equation}
corresponding to a power-law behavior.
\label{th:powerlaw}
\label{TH:POWERLAW}
\end{theorem}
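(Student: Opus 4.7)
My plan is to reduce the statement to a known convergence result for Gibbs-type exchangeable partitions induced by sampling from a normalized generalized gamma process, and then to derive the tail expansion by Stirling's formula.

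First, I would reformulate $N_{\alpha,j}$ as a species-counting quantity. From the Cox-process representation in \eqref{eq:condpoisson}, conditional on $W_\alpha$ the directed multigraph $D_\alpha$ is obtained by first sampling $D_\alpha^\ast\mid W_\alpha^\ast\sim\Poisson((W_\alpha^\ast)^2)$ and then drawing $2D_\alpha^\ast$ i.i.d.\ points $U'_1,\dots,U'_{2D_\alpha^\ast}$ from the normalized CRM $\mu_\alpha=W_\alpha/W_\alpha^\ast$. Each distinct value among these $2D_\alpha^\ast$ samples corresponds to a node of positive degree in $D_\alpha$, and the number of samples equal to $\theta_i$ is exactly the node's total incident count (outgoing + incoming, with a self-edge counted twice). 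Thus, writing $n=2D_\alpha^\ast$ and letting $K_n$ and $K_{n,j}$ denote, respectively, the total number of distinct values and the number of values of multiplicity exactly $j$ among $U'_1,\dots,U'_n$, we have the identifications $N_\alpha=K_n$ and $N_{\alpha,j}=K_{n,j}$.

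Next I would invoke the asymptotic theory of Gibbs-type partitions. For the GGP with $\sigma\in(0,1)$, the EPPF of the induced partition is the Gibbs-type EPPF in \eqref{eq:EPPF}, which places the partition in the class of Pitman--Gibbs partitions of index $\sigma$. Equivalently, the tail L\'evy intensity is regularly varying with index $-\sigma$ at zero, $\overline{\rho}(x)\sim x^{-\sigma}/[\sigma\Gamma(1-\sigma)]$, so Karlin-type regular variation applies. By the Gnedin--Pitman--Yor asymptotics (see Pitman, \emph{Combinatorial Stochastic Processes}, Thm.~3.8, and Gnedin--Pitman--Yor 2006), there exists an a.s.\ positive random variable $S_\sigma$ such that
\begin{equation*}
\frac{K_n}{n^\sigma}\xrightarrow{a.s.} S_\sigma,\qquad
\frac{K_{n,j}}{n^\sigma}\xrightarrow{a.s.}\frac{\sigma\,\Gamma(j-\sigma)}{\Gamma(1-\sigma)\,\Gamma(j+1)}\,S_\sigma
\end{equation*}
as $n\to\infty$, for every fixed $j\geq 1$. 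Taking the ratio of the two displays cancels $S_\sigma$ and yields $K_{n,j}/K_n\to p_{\sigma,j}$ almost surely along the integers.

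Then I would transfer the limit from $n$ to $\alpha$. By Theorem~\ref{th:edgesquad}, $D_\alpha^\ast=\Theta(\alpha^2)$ a.s., so $n=2D_\alpha^\ast\to\infty$ a.s.\ as $\alpha\to\infty$; the conditional convergence above is almost sure on the full probability space (with respect to the joint law of $W$ and the Poisson sampling), so composing with the a.s.\ divergent time change $\alpha\mapsto 2D_\alpha^\ast$ preserves the a.s.\ limit, giving $N_{\alpha,j}/N_\alpha\to p_{\sigma,j}$ a.s. Finally, the tail behavior \eqref{eq:tailGGP} is immediate from Stirling's formula $\Gamma(j-\sigma)/\Gamma(j+1)\sim j^{-1-\sigma}$ as $j\to\infty$.

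The main obstacle is invoking the correct form of the Gibbs-partition asymptotics: one needs an almost-sure (not merely in-probability) statement on the random probability space carrying $\mu_\alpha$, and one must verify that the $2D_\alpha^\ast$ samples in \eqref{eq:condpoisson} can indeed be treated as i.i.d.\ draws from $\mu_\alpha$ so that the species-sampling theory applies directly; the remainder is bookkeeping (Stirling, and the standard fact that $D_\alpha^\ast\to\infty$ a.s.).
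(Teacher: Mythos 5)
Your proposal is correct and follows essentially the same route as the paper's own proof: both use the Cox-process/urn construction \eqref{eq:condpoisson} to identify $N_\alpha$ and $N_{\alpha,j}$ with the number of distinct values and the number of size-$j$ clusters among the $2D_\alpha^\ast$ draws from the normalized GGP, and then invoke the known asymptotics for species sampling from a normalized generalized gamma process (the paper cites \citealp[Corollary 1]{Lijoi2007} and \citealp{Pitman2006}, which give the ratio limit $N_{\alpha,j}/N_\alpha\to p_{\sigma,j}$ directly, so your detour through the $n^{\sigma}$-normalized limits and cancellation of $S_\sigma$ is just a slightly longer version of the same step), finishing with Stirling for the tail.
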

The proof, which builds on the asymptotic properties of the normalized GGP~\cite{Lijoi2007}, is given in Appendix \ref{sec:proofGGP}.

\paragraph{Sparsity}
The following theorem states that the GGP parameter $\sigma$ tunes the sparsity of the graph. When $\sigma<0$, the graph is dense, whereas it is sparse when $\sigma\geq 0$.
\begin{theorem}
\label{th:sparsityGGP}
Let $N_\alpha$ be the number of nodes and $N_\alpha^{(e)}$ the number of edges in the undirected graph restriction, $Z_\alpha$. Then
\[
N_\alpha^{(e)}=\left \{
\begin{tabular}{ll}
  $\Theta\left (N_\alpha^{2}\right )$ & if $\sigma<0$ \\
  $o\left (N_\alpha^{2}\right )$ & if $\sigma\in [0,1),\tau>0$ \\
  $O\left (N_\alpha^{2/(1+\sigma)}\right )$ & if $\sigma\in (0,1),\tau>0$
\end{tabular}\right .
\]
almost surely as $\alpha\rightarrow\infty$. That is, the underlying graph is sparse if $\sigma\geq 0$ and dense otherwise.
\end{theorem}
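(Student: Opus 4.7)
The plan is to derive Theorem~\ref{th:sparsityGGP} as a direct specialisation of the general Theorem~\ref{th:sparsity} to the L\'evy intensity $\rho(dw)=\frac{w^{-1-\sigma}e^{-\tau w}}{\Gamma(1-\sigma)}dw$ of the generalized gamma process, by checking the relevant hypotheses in each of the three regimes of $(\sigma,\tau)$.

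In the first regime $\sigma<0$, the density $w^{-1-\sigma}e^{-\tau w}$ is integrable at the origin, so $\int_0^\infty\rho(dw)<\infty$ and the GGP is finite-activity (in fact compound Poisson, as recalled just after~\eqref{eq:condCRM}). The first part of Theorem~\ref{th:sparsity} then yields $N_\alpha^{(e)}=\Theta(N_\alpha^2)$ with no further work.

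In the second regime $\sigma\in[0,1)$, $\tau>0$, I would check the two hypotheses of the infinite-activity/finite-mean part of Theorem~\ref{th:sparsity}. Infinite activity is immediate because $w^{-1-\sigma}$ is not integrable at $0$ for $\sigma\geq 0$. For the first moment, a change of variables gives $\lim_{t\downarrow 0}\psi'(t)=\int_0^\infty w\,\rho(dw)=\tau^{\sigma-1}$, which is finite precisely when $\tau>0$. The second part of Theorem~\ref{th:sparsity} then yields $N_\alpha^{(e)}=o(N_\alpha^2)$.

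The third regime $\sigma\in(0,1)$, $\tau>0$ is the only genuinely analytic case: the task is to show that $\overline{\rho}$ is regularly varying at $0$ with index $-\sigma$ and a strictly positive slowly varying part. Starting from the closed form $\overline{\rho}(x)=\tau^{\sigma}\Gamma(-\sigma,\tau x)/\Gamma(1-\sigma)$ given in the excerpt, I would split
\[
\Gamma(-\sigma,y)=\int_y^1 t^{-\sigma-1}e^{-t}\,dt+\int_1^\infty t^{-\sigma-1}e^{-t}\,dt,
\]
decompose $e^{-t}=1+(e^{-t}-1)$ on $[y,1]$, and use $e^{-t}-1=O(t)$ together with $\sigma<1$ to show that the remainder integral stays bounded as $y\downarrow 0$. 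The explicit primitive $\int_y^1 t^{-\sigma-1}dt=(y^{-\sigma}-1)/\sigma$ then produces the leading term and gives
\[
\overline{\rho}(x)\overset{x\downarrow 0}{\sim}\frac{x^{-\sigma}}{\sigma\,\Gamma(1-\sigma)},
\]
so $\rho$ is regularly varying with constant slowly varying function $\ell\equiv(\sigma\,\Gamma(1-\sigma))^{-1}>0$. The third part of Theorem~\ref{th:sparsity} then delivers $N_\alpha^{(e)}=O(N_\alpha^{2/(1+\sigma)})$. The only mild obstacle is this asymptotic expansion of the upper incomplete gamma function with negative first argument near $0$; it is elementary but worth writing out carefully because most textbook asymptotics of $\Gamma(s,y)$ at $y=0$ assume $s>0$.
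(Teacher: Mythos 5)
Your proposal is correct and follows essentially the same route as the paper: all three regimes are obtained by specializing Theorem~\ref{th:sparsity} to the GGP L\'evy intensity, checking finite activity for $\sigma<0$, infinite activity plus $\psi'(0)<\infty$ for $\sigma\in[0,1),\tau>0$, and regular variation of $\overline{\rho}$ with index $-\sigma$ for $\sigma\in(0,1)$. The only difference is that you spell out the small-argument asymptotics $\Gamma(-\sigma,y)\sim y^{-\sigma}/\sigma$ (and the computation $\psi'(0)=\tau^{\sigma-1}$), which the paper simply asserts via the stated form of the tail L\'evy intensity.
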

\begin{proof}
For $\sigma<0$, the CRM is finite-activity and thus Theorem \ref{th:sparsity} implies that the graph is dense. When $\sigma\geq 0$ the CRM is infinite-activity; moreover, for $\tau>0$, $\mathbb E[W^*_\alpha]<\infty$, and thus Theorem \ref{th:sparsity} implies that the graph is sparse. More precisely, for $\sigma>0$, the tail L\'evy intensity has the asymptotic behavior
\[
\overline{\rho}(x)\overset{x\downarrow 0}{\sim} \frac{\alpha}{\sigma\Gamma(1-\sigma)}x^{-\sigma}
\]
and so Theorem \ref{th:sparsityGGP} follows directly from Theorem \ref{th:sparsity}.
\end{proof}

\begin{remark}
The proof technique requires a finite first moment for the total mass $W^*_\alpha$, and thus excludes the stable process $(\tau=0,\sigma\in(0,1))$, although we conjecture that the graph is also sparse in that case.
\end{remark}

\paragraph{Empirical analysis of graph properties}
\label{sec:empirical}
For the GGP-based formulation, we provide an empirical analysis of our network properties in
Figure~\ref{fig:graphproperties} by simulating undirected graphs using the approach described in Section~\ref{sec:simulation} for
various values of $\sigma,\tau$. We compare to an Erd\" os R\'{e}nyi random graph,
preferential attachment~\cite{Barabasi1999}, and the Bayesian nonparametric network model of~\cite{Lloyd2012}.
The particular features we explore are

\begin{itemize}
\item \textbf{Degree distribution} Figure~\ref{fig:graphproperties}(a)
demonstrates that the model can exhibit power-law behavior providing a
heavy-tailed degree distribution. As shown in Figure~\ref{fig:graphproperties}(b), the model can also handle an exponential cut-off in the tails of the degree distribution, which is an attractive property~\cite{Newman2009}.

\item \textbf{Number of degree 1 nodes} Figure~\ref{fig:graphproperties}(c)
examines the fraction of degree 1 nodes versus number of nodes.

\item \textbf{Sparsity} Figure~\ref{fig:graphproperties}(d) plots the number
of edges versus the number of nodes. The larger $\sigma$, the sparser the
graph. In particular, for the GGP random graph model, we have network growth
at a rate $O(n^{a})$ for $1<a<2$ whereas the Erd\" os R\'{e}nyi (dense) graph
grows as $\Theta(n^{2})$.
\end{itemize}

\begin{figure}[ptb]
\begin{center}%
\begin{tabular}[c]{cc}
\includegraphics[width=.4\textwidth]{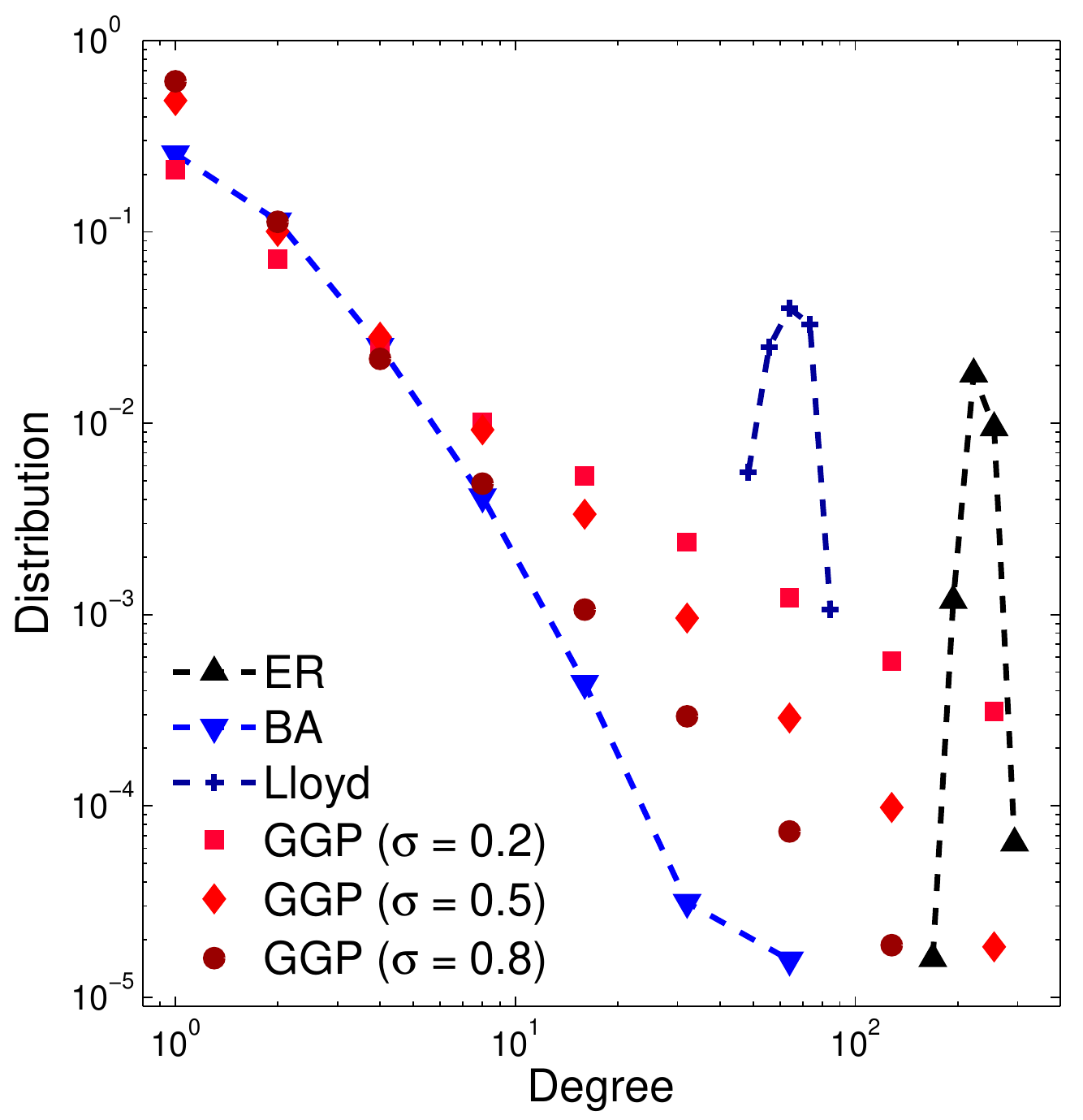} &
\includegraphics[width=.4\textwidth]{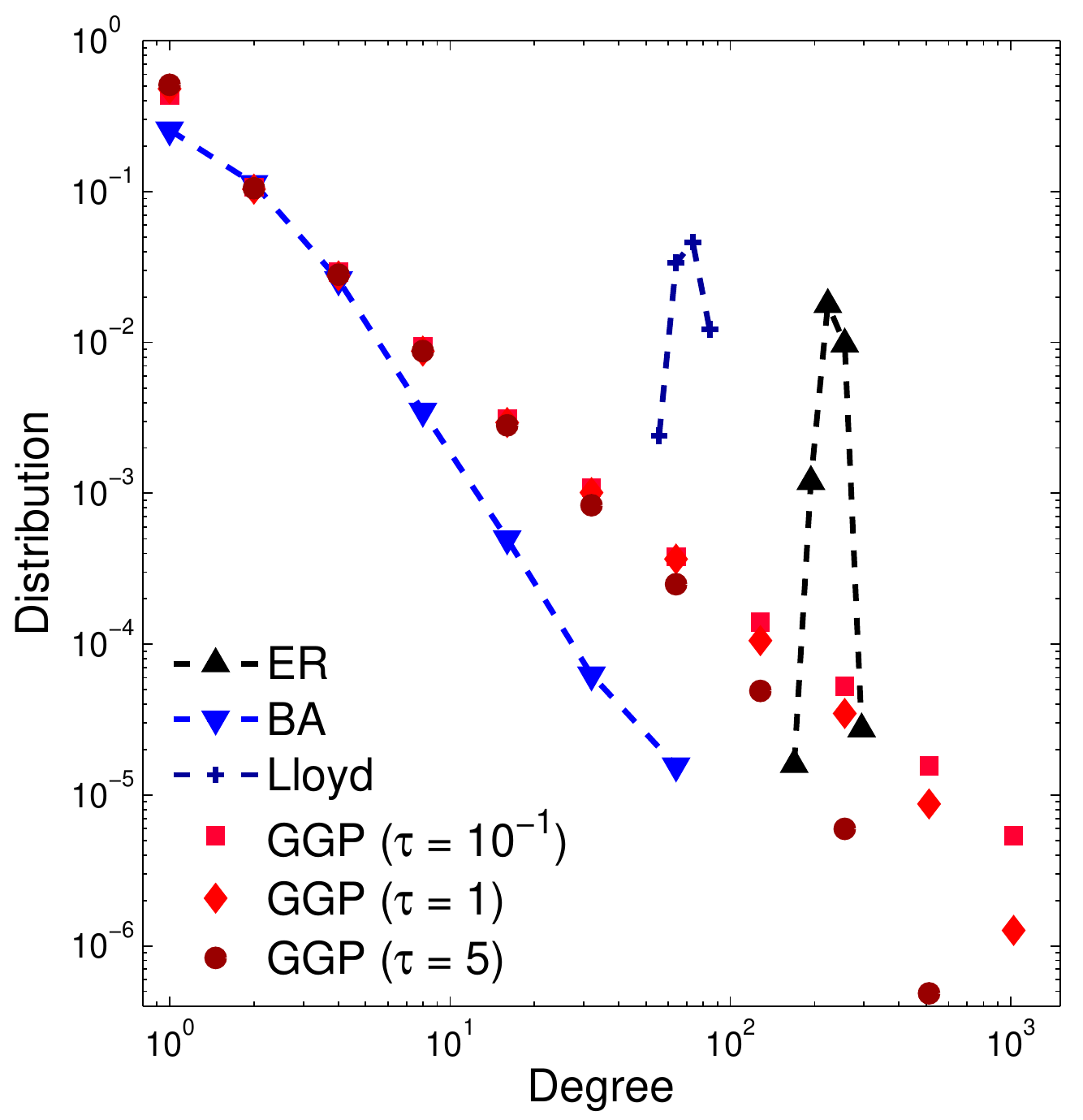}\\
(a) & (b) \\
\includegraphics[width=.4\textwidth]{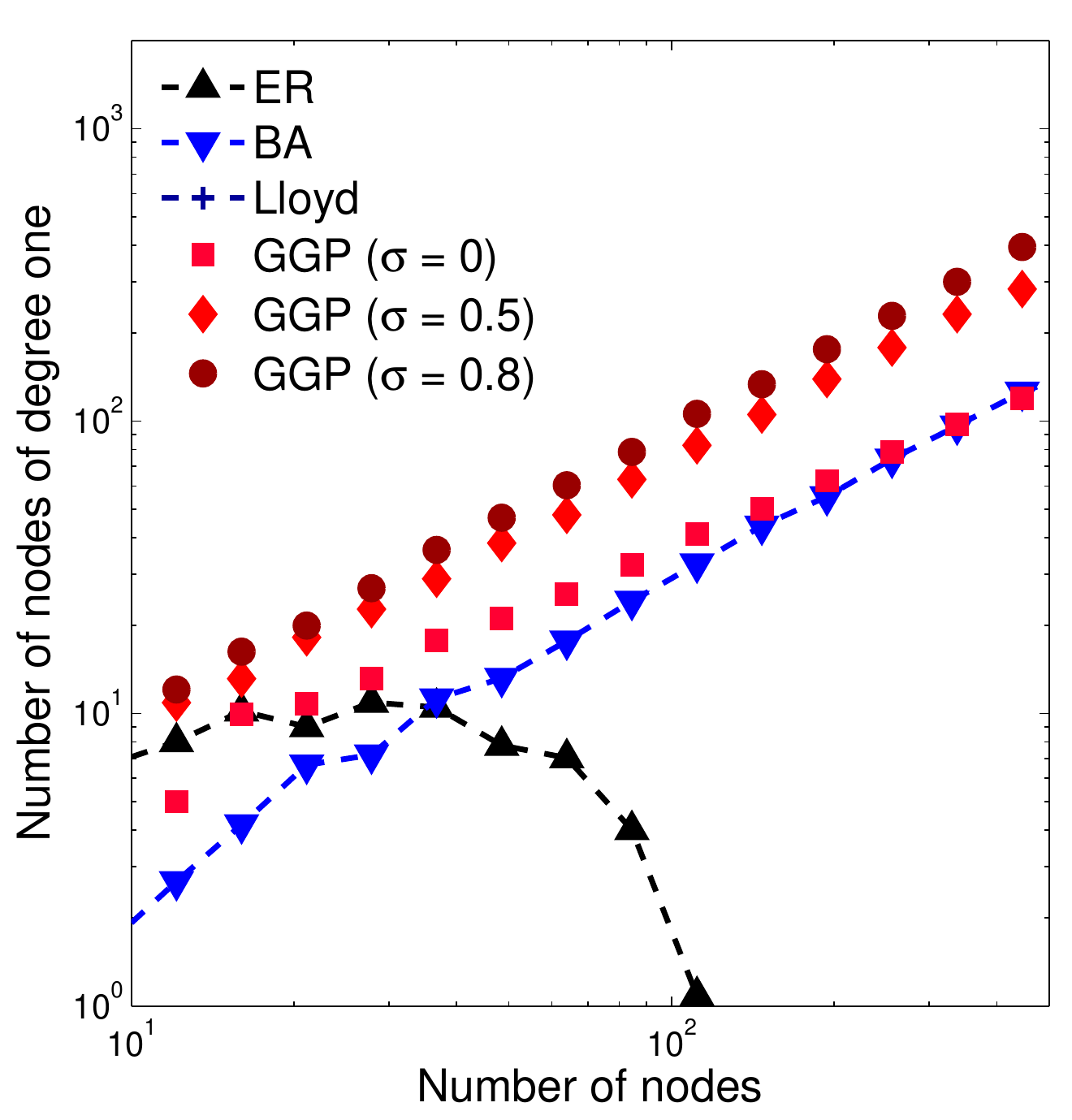} &
\includegraphics[width=.4\textwidth]{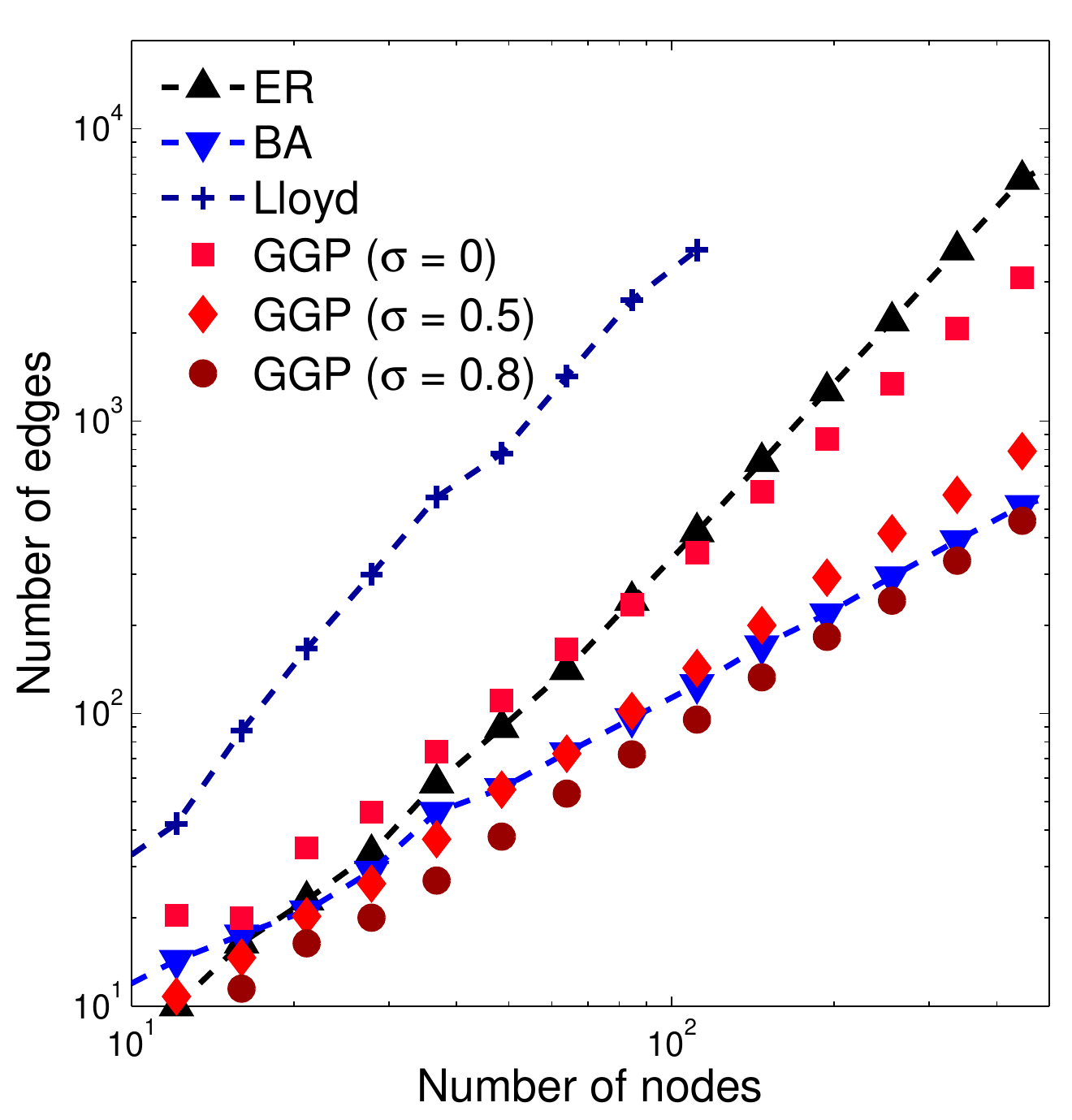}\\
(c) & (d)
\end{tabular}
\end{center}
%\precap
\caption{\small Examination of the GGP undirected network properties (averaging over graphs with various $\alpha$) in comparison to an Erd\" os R\'{e}nyi $G(n,p)$ model with
$p=0.05$ (ER), the preferential attachment model of~\cite{Barabasi1999} (BA), and the nonparametric formulation of~\cite{Lloyd2012} (Lloyd). (a-b) Degree distribution on a log-log scale for (a) various values of $\sigma$ ($\tau=10^{-2}$) and (b) various values of $\tau$ ($\sigma=0.5$) for the GGP. (c) Number of nodes with
degree one versus the number of nodes on a log-log scale. Note that the Lloyd method leads to dense graphs such that no node has only degree 1. (d) Number of
edges versus the number of nodes. In (d) we note growth at a rate $o(n^{2})$ for our GGP graph models, and $\Theta(n^2)$ for the Erd\" os R\'{e}nyi and Lloyd models (dense graphs).}%
\label{fig:graphproperties}%
%\postcap
\end{figure}

\paragraph{Interpretation of hyperparameters}
Based on the properties derived and explored empirically in this section, we see that our hyperparameters have the following interpretations:
\begin{itemize}
\item $\boldsymbol{\sigma}$--- From Figure~\ref{fig:graphproperties}(a) and (d), $\sigma$ relates to the slope of the degree distribution in its power-law regime and the overall network sparsity.  Increasing $\sigma$ leads to higher power-law exponent and sparser networks.

\item $\boldsymbol{\alpha}$--- From Theorem~\ref{th:inequalitynodes}, $\alpha$ provides an overall scale that affects the number of nodes and directed interactions, with larger $\alpha$ leading to larger networks.

\item $\boldsymbol{\tau}$--- From Figure~\ref{fig:graphproperties}(b), $\tau$ determines the exponential decay of the tails of the power-law degree distribution, with $\tau$ small looking like pure power-law. This is intuitive from the form of $\rho(dw)$ in \eqref{eq:rhoGGP}, where we see that $\tau$ affects large weights more than small ones.
\end{itemize}

\section{Posterior characterization and inference}
\label{sec:MCMC}

In this section we target inferring the posterior distribution of the sociability parameters, $w_i$, restriction value $\alpha$, and CRM hyperparameters.  In the special case of GGPs, our hyperparameters of interest are then the set $(\alpha,\sigma,\tau)$.

\subsection{Directed multigraph and undirected simple graph}
\label{sec:MCMCgraphs}

We first characterize the conditional distribution of the restricted CRM $W_{\alpha}$ given the
directed graph $D_{\alpha}$ (see \eqref{eq:condpoisson} and surrounding text). In what follows, we utilize the fact that the conditional
CRM\ $W_{\alpha}$ given $D_{\alpha}$ can be decomposed as a sum of (i) a measure with fixed locations
$\theta_{i}$ and random weights $w_{i}$, corresponding to nodes for which we
observed at least one connection, and (ii) a measure with random weights and random
atoms, corresponding to the remaining set of nodes. We denote the total mass of this remaining weight as $w_\ast$.

\begin{theorem}
\label{th:posteriorsimple}Let $(\theta_{1},\ldots,\theta_{N_{\alpha}})$, $N_{\alpha}\geq0,$ be
the set of support points of $D_{\alpha}$ such that $D_{\alpha}=\sum_{1\leq
i,j\leq N_{\alpha}}n_{ij}\delta_{(\theta_{i},\theta_{j})}$. Let $m_{i}=\sum_{j=1}%
^{N_{\alpha}}(n_{ij}+n_{ji})>0$ for $i=1,\ldots,N_{\alpha}$. The conditional distribution of
$W_{\alpha}$ given $D_{\alpha}$ is equivalent to the distribution of%
\begin{equation}
w_{\ast}\sum_{i=1}^{\infty}\widetilde{P}_{i}\delta_{\widetilde{\theta}_{i}%
}+\sum_{i=1}^{N_{\alpha}}w_{i}\delta_{\theta_{i}}%
\end{equation}
where $\widetilde{\theta}_{i}\sim\Unif ([0,\alpha])$, and the weights
$(\widetilde{P}_{i})_{i=1,2,\ldots},$ with $\widetilde{P}_{1}>\widetilde{P}%
_{2}>\ldots$ and $\sum_{i=1}^{\infty}\widetilde{P}_{i}=1,$ are distributed
from a Poisson-Kingman distribution~\cite[Definition 3 p.6]{Pitman2003} with L\'evy intensity $\rho$, conditional on $w_{\ast}$
\[
(\widetilde P_{i})|w_{\ast}\sim \PK(\rho|w_{\ast}).
\]
Finally, the weights $(w_{1},\ldots,w_{N_{\alpha}},w_{\ast})$ are jointly dependent
conditional on $D_{\alpha}$, with the following posterior distribution:%
\begin{equation}
p(w_{1},\ldots,w_{N_{\alpha}},w_{\ast}|D_{\alpha})\propto\left[  \prod_{i=1}^{N_{\alpha}}%
w_{i}^{m_{i}}\right]  e^{  -\left(  \sum_{i=1}^{N_{\alpha}}w_{i}+w_{\ast
}\right)  ^{2}}  \left[  \prod_{i=1}^{N_{\alpha}}\rho(w_{i})\right]  \times
g_{\alpha}^{\ast}(w_{\ast})\label{eq: condw}%
\end{equation}
where $g_{\alpha}^{\ast}$ is the probability density function of the random
variable $W^*_\alpha=W_{\alpha}([0,\alpha])$, with Laplace transform
\begin{equation}
\mathbb{E}[e^{-tW^*_{\alpha}}]=e^{  -\alpha\psi(t)}.
\end{equation}
\end{theorem}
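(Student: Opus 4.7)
My plan is to derive the posterior via Palm-Mecke/Campbell calculus for the Poisson process of atoms underlying the CRM $W_\alpha$, combined with the standard characterization of CRMs conditioned on their total mass as Poisson-Kingman distributions.

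First I would write the likelihood of $D_\alpha$ given $W_\alpha$. Since $D_\alpha \mid W_\alpha$ is a Poisson process on $[0,\alpha]^2$ with intensity $W_\alpha \times W_\alpha$, if $W_\alpha = \sum_k w_k \delta_{\vartheta_k}$ then the counts $n_{k\ell}$ of directed edges between $(\vartheta_k,\vartheta_\ell)$ are independent Poisson$(w_k w_\ell)$, so the likelihood factorizes as
\begin{equation*}
p(D_\alpha \mid W_\alpha) \propto \exp\!\Bigl(-\!\sum_{k,\ell} w_k w_\ell\Bigr) \prod_{k,\ell} (w_k w_\ell)^{n_{k\ell}} = \exp(-(W_\alpha^\ast)^2) \prod_{i=1}^{N_\alpha} w_i^{m_i},
\end{equation*}
where in the last equality the product collapses onto the $N_\alpha$ atoms with at least one observed edge, each appearing to the power $m_i = \sum_j(n_{ij}+n_{ji})$. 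Crucially, the likelihood depends on the unobserved atoms only through their total mass contribution to $W_\alpha^\ast$.

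Second, I would apply the Campbell-Mecke (Slivnyak) formula to the Poisson process of jumps $\{(\vartheta_k,w_k)\}$ on $[0,\alpha]\times\mathbb{R}_+$ with intensity $\lambda_\alpha(d\theta)\,\rho(dw)$. This lets us split $W_\alpha = \sum_{i=1}^{N_\alpha} w_i \delta_{\theta_i} + W_\alpha^{\mathrm{rest}}$, with $W_\alpha^{\mathrm{rest}}$ independent of the $N_\alpha$ marked atoms and distributed as a fresh $\CRM(\rho,\lambda_\alpha)$. The Palm version of the joint law of the observed atoms is the $N_\alpha$-fold product of $\rho(dw_i)\,d\theta_i$. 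Multiplying by the likelihood and integrating against an arbitrary test function in the $\theta_i$, one reads off
\begin{equation*}
p(w_1,\ldots,w_{N_\alpha} \mid D_\alpha, W_\alpha^{\mathrm{rest}}) \propto \Bigl[\prod_{i=1}^{N_\alpha} \rho(w_i)\, w_i^{m_i}\Bigr] \exp\!\Bigl(-\bigl(\textstyle\sum w_i + w_\ast\bigr)^2\Bigr),
\end{equation*}
where $w_\ast = W_\alpha^{\mathrm{rest}}([0,\alpha])$.

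Third, since the likelihood depends on $W_\alpha^{\mathrm{rest}}$ only through the scalar $w_\ast$, the conditional law of $W_\alpha^{\mathrm{rest}}$ given $(D_\alpha, w_\ast)$ equals the law of a $\CRM(\rho,\lambda_\alpha)$ conditioned on its total mass being $w_\ast$. By the Poisson-Kingman construction of \cite{Pitman2003}, the normalized weights $\widetilde P_i = w_i^{\mathrm{rest}}/w_\ast$ follow $\PK(\rho \mid w_\ast)$, and by homogeneity of $\lambda_\alpha$ the locations are i.i.d.\ $\mathrm{Unif}[0,\alpha]$, independent of the $\widetilde P_i$. Integrating the Palm expression against the marginal density $g_\alpha^\ast(w_\ast)$ of the total mass of $W_\alpha^{\mathrm{rest}}$ yields the stated joint posterior density \eqref{eq: condw}.

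The main obstacle will be writing the Palm/Campbell-Mecke step rigorously: the atoms of $W_\alpha$ are countably infinite, so there is no literal joint density for $W_\alpha$, and one must interpret the "density" $\prod_i \rho(w_i)$ of the observed atoms as coming from the reduced Palm intensity of the marked Poisson process. The cleanest route is to pair the likelihood with a test functional depending only on finitely many atoms, apply the $N_\alpha$-fold Mecke formula to swap the sum over configurations for an integral against $\bigl(\rho(dw)\,d\theta\bigr)^{\otimes N_\alpha}$, and then identify the resulting kernel as the claimed posterior.
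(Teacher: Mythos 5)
Your proposal is correct and follows essentially the same route as the paper: the paper's Appendix D.1 likewise writes the Poisson likelihood so that it depends on the unobserved atoms only through the total mass, applies an $N_\alpha$-fold Palm (Slivnyak--Mecke) formula for the Poisson random measure of jumps (the paper's Theorem 13, proved by induction from the one-point Palm formula) to trade the observed atoms for integrals against $\rho(dw_i)\,d\theta_i$, and identifies the conditional law of the remainder given its total mass $w_\ast$ as $\PK(\rho\mid w_\ast)$ with i.i.d.\ uniform locations by homogeneity. The only difference is presentational: the paper packages the computation as an evaluation of the conditional Laplace functional $\mathbb{E}[e^{-W_\alpha(f)}\mid D_\alpha]$, while you condition directly via the reduced Palm decomposition, which is the same argument and handles the same technical point you flag.
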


\begin{proof}
The proof builds on the Palm formula for Poisson random measures~\cite{Prunster2002,James2002,James2005,James2009} and is described in Appendix \ref{sec:proofposterior}.
\end{proof}\bigskip

Note that the normalized weights $(\widetilde P_{i})_{i=1,2,\ldots}$ and locations $(\widetilde\theta
_{i})_{i=1,2,\ldots}$ are not likelihood identifiable, as the likelihood only brings
information on the weights of the observed nodes, and on the total mass
$w_{\ast}$ of the remaining nodes. Additionally, note that the conditional distribution of $(w_{1},\ldots
,w_{N_{\alpha}},w_{\ast})$ given $D_{\alpha}$ does not depend on the locations
$(\theta_{1},\ldots,\theta_{N_{\alpha}})$ because we considered a homogeneous CRM. This fact is important since the
locations $(\theta_{1},\ldots,\theta_{N_{\alpha}})$ are typically not observed, and our algorithm outlined below will not consider these terms in the inference.

We now specialize to the special case of the GGP, for
which we derive an MCMC\ sampler for posterior inference. Let $\phi
=(\alpha,\sigma,\tau)$ be the set of hyperparameters that we also want to estimate.
We will assume improper priors on those parameters:%
\[
p(\alpha)\propto\frac{1}{\alpha}, \,\,\, p(\sigma)\propto\frac{1}{1-\sigma
}, \,\,\, p(\tau)\propto\frac{1}{\tau}.%
\]
To emphasize the dependence of the L\'{e}vy measure and pdf of the total mass $w_*$ on
the hyperparameters, we write $\rho(w|\sigma,\tau)$ and $g_{\alpha,\sigma
,\tau}^{\ast}(w_{\ast})$. We are interested in approximating the posterior $p(w_{1},\ldots,w_{N_{\alpha}},w_{\ast},\phi|(n_{ij})_{1\leq i,j\leq
N_{\alpha}})$ for a directed multigraph or $p(w_{1},\ldots,w_{N_{\alpha}},w_{\ast},\phi
|(z_{ij})_{1\leq i,j\leq N_{\alpha}})$ for a simple graph.

In the case of a simple graph, we will simply impute the missing directed
edges in the graph. For each $i\leq j$ such that $z_{ij}=1$, we introduce
latent variables $\overline{n}_{ij}=n_{ij}+n_{ji}$ with conditional
distribution%
\begin{equation}
\overline{n}_{ij}|z,w\sim\left\{
\begin{array}
[c]{ll}%
\delta_{0} & \text{if }z_{ij}=0\\
\text{tPoisson}(2w_{i}w_{j}) & \text{if }z_{ij}=1,\text{ }i\neq j\\
\text{tPoisson}(w_{i}^{2}) & \text{if }z_{ii}=1,\text{ }i=j,
\end{array}
\right.  \label{eq:condn}%
\end{equation}
where tPoisson($\lambda$) is the zero-truncated Poisson distribution with pdf%
\[
\frac{k^{\lambda}\exp(-\lambda)}{(1-\exp(-\lambda))k!}\text{, for
}k=1,2,\ldots
\]
By convention, we set $\overline{n}_{ij}=\overline{n}_{ji}$ for $j<i$ and
$m_{i}=\sum_{j=1}^{N_{\alpha}}\overline{n}_{ij}$.

For efficient exploration of the target posterior, we propose using a Hamiltonian Monte Carlo (HMC)
algorithm~\cite{Duane1987,Neal2011} within Gibbs to update the weights
$(w_{1},\ldots,w_{N_{\alpha}})$.  The HMC step requires computing the gradient of the log-posterior, which in our case, letting $\omega_i=\log w_{i}$, is given by
%The Jacobian  $\frac{\nabla \log p(\omega_{1:N_{\alpha}},w_{\ast}|D_{\alpha})}%
%{\partial\omega_{1:N_{\alpha}}}$ where $\omega_i=\log w_{i}$ is given by
\begin{equation}
\left[  \nabla_{\omega_{1:N_{\alpha}}} \log p(\omega_{1:N_{\alpha}},w_{\ast}|D_{\alpha})\right]_{i}=m_{i}-\sigma-w_{i}\left(  \tau+2\sum_{j=1}^{N_{\alpha}}%
w_{j}+2w_{\ast}\right).  \label{eq:jacobian}%
\end{equation}
For the update of the total mass $w_{\ast}$ and
hyperparameters $\phi$, we use a Metropolis-Hastings step. Note that, except
in some particular cases ($\sigma=0,\frac{1}{2}$), the density $g_{\alpha
,\sigma,\tau}^{\ast}(w_{\ast})$ does not admit any analytical expression. We
therefore use a specific proposal for $w_{\ast}$ based on exponential tilting
of $g_{\alpha,\sigma,\tau}^{\ast}$ that alleviates the need to evaluate this
pdf in the Metropolis-Hasting ratio (see details in Appendix \ref{sec:detailsMCMC}). To summarize,
the MCMC sampler is defined as follows:

\begin{enumerate}
\item Update the weights $(w_{1},\ldots,w_{N_{\alpha}})$ given the rest using an HMC update

\item Update the total mass $w_{\ast}$ and hyperparameters $\phi
=(\alpha,\sigma,\tau)$ given the rest using a Metropolis-Hastings update

\item {}[Undirected graph] Update the latent counts $(\overline{n}_{ij})$
given the rest using the conditional distribution \eqref{eq:condn} or a Metropolis-Hastings update
\end{enumerate}

Note that the computational bottlenecks lie in steps 1 and 3, which roughly
scale linearly in the number of nodes/edges, respectively, although one can parallelize
step 3 over edges. If $L$ is the number of leapfrog steps in the HMC algorithm, $n_{\text{iter}}$
the number of MCMC\ iterations, the overall complexity is in $O(n_{\text{iter}%
}(LN_{\alpha}+N_\alpha^{(e)}))$. We show in Section~\ref{sec:experiments} that the algorithm scales well to large networks with hundreds of thousands of nodes and edges. To efficiently scale HMC to even larger collections of nodes/edges, one can deploy the methods of \citet{ChenFoxGuestrin2014}.

\subsection{Bipartite graph}

For the bipartite graph case, the posterior characterization follows as proposed by \citet{Caron2012}.  However, our proposed data augmentation is different and leads to a simpler form for the sampler.

\begin{theorem}
\label{th:posteriorbipartite}Let $(\theta_{1},\ldots,\theta_{N_{\alpha}})$,
$(\theta_{1}^{\prime},\ldots,\theta_{N_{\alpha}^{\prime}}^{\prime})$ with $N_{\alpha},N_{\alpha}^{\prime
}\geq0,$ be the set of support points of $D_{\alpha}$ and thus $D_{\alpha
}=\sum_{1\leq i,j\leq N_{\alpha}}n_{ij}\delta_{(\theta_{i},\theta_{j}^{\prime})}$. Let
$m_{i}=\sum_{j=1}^{N_{\alpha}^{\prime}}n_{ij}$ and $m_{j}^{\prime}=\sum_{i=1}^{N_{\alpha}}%
n_{ij}$ The conditional distribution of $W_{\alpha}$ given $D_{\alpha
},W_{\alpha}^{\prime}$ is equivalent to the distribution of%
\begin{equation}
\widetilde{W}+\sum_{i=1}^{N_{\alpha}}w_{i}\delta_{\theta_{i}}%
\end{equation}
where $(w_{1},\ldots,w_{N_{\alpha}})$ are independent of $\widetilde{W}$ with
\begin{equation}
p(w_{i}|D_{\alpha},W_{\alpha}^{\prime})\propto w_{i}^{m_{i}}e^{
-w_{i}\left(  \sum_{j=1}^{N_{\alpha}^{\prime}}w_{j}^{\prime}+w_{\ast}^{\prime}\right)
}  \rho(w_{i})
\end{equation}
and $\widetilde{W}\sim \CRM(\widetilde{\rho},\lambda_{\alpha})$ is a CRM with
exponentially tilted L\'{e}vy intensity
\begin{equation}
\widetilde{\rho}(w)=\rho(w)e^{  -w\left(  \sum_{j=1}^{N_{\alpha}^{\prime}}%
w_{j}^{\prime}+w_{\ast}^{\prime}\right)}  .
\end{equation}
In particular, for the generalized gamma process, we have
\begin{equation}
w_{i}|D_{\alpha},W_{\alpha}^{\prime}\sim\Gam\left(  m_{i}-\sigma
,\tau+\sum_{j=1}^{N_{\alpha}^{\prime}}w_{j}^{\prime}+w_{\ast}^{\prime}\right)
\label{eq:wibipartite}%
\end{equation}
and the total mass $w_{\ast}$ of $\widetilde{W}$ is distributed from an
exponentially tilted stable distribution with pdf
\begin{equation}
p(w_{\ast}|\text{rest})=\frac{e^{ - w_{\ast}\left(  \sum_{j=1}%
^{N_{\alpha}^{\prime}}w_{j}^{\prime}+w_{\ast}^{\prime}\right)  }  g_{\alpha
}(w_{\ast})}{e^{  -\psi\left(  \sum_{j=1}^{N_{\alpha}^{\prime}}w_{j}^{\prime
}+w_{\ast}^{\prime}\right)  }  },\label{eq:wstarbipartite}%
\end{equation}
from which one can sample exactly~\cite{Devroye2009,Hofert2011}. Additionally, the
marginal likelihood is expressed as
\begin{equation}
\mathbb{E}_{W_{\alpha}}[p(D_{\alpha}|W_{\alpha}^{\prime})]=e^{
-\alpha\psi\left(  \sum_{j=1}^{N_{\alpha}^{\prime}}w_{j}^{\prime}+w_{\ast}^{\prime
}\right)  }  \alpha^{N_{\alpha}}\prod_{i=1}^{N_{\alpha}}\kappa\left(  m_{i},\sum
_{j=1}^{N_{\alpha}^{\prime}}w_{j}^{\prime}+w_{\ast}^{\prime}\right)d\theta_i,
\label{eq:marginaldist}%
\end{equation}
where $\kappa(n,z)=\int_{0}^{\infty}w^{n}\exp(-zw)\rho(w)dw.$
\end{theorem}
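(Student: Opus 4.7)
The plan is to apply the Palm (disintegration) formula for Poisson random measures to the prior CRM $W_\alpha$, in the same spirit as the proof of Theorem~\ref{th:posteriorsimple}, but the analysis is considerably simpler here because $W_\alpha'$ is held fixed, so the likelihood factorizes cleanly over atoms of $W_\alpha$ alone. Since $D_\alpha \mid W_\alpha, W_\alpha' \sim \text{PP}(W_\alpha \times W_\alpha')$ and both CRMs are purely atomic with probability one, the counts $n_{ij}$ are conditionally independent Poisson$(w_i w_j')$. Writing $S' := \sum_{j=1}^{N_\alpha'} w_j' + w_\ast'$, the total mass of $W_\alpha'$, summation over $j$ gives that for each atom $(w,\theta)$ of $W_\alpha$ the probability of emitting no edges at all is $e^{-wS'}$. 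This lets me write the conditional likelihood as a product over atoms of $W_\alpha$, each factor being $w^{m(\theta)} e^{-wS'}/\prod_j n_{\theta,j}!$, where $m(\theta)$ is the total outdegree assigned to location $\theta$.

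Next I would apply the Palm formula for Poisson random measures (following \citet{Prunster2002,James2002,James2005,James2009}) to $W_\alpha$, viewed as a Poisson random measure on $\mathbb{R}_+ \times [0,\alpha]$ with mean intensity $\rho(dw)\lambda(d\theta)$. Combined with the above likelihood, the formula shows that the posterior law of $W_\alpha$ given $D_\alpha$ and $W_\alpha'$ decomposes as an independent sum of (i) fixed atoms at the observed support points $\theta_i$ of $D_\alpha$ (the outgoing-edge side), with weights $w_i$ that are conditionally independent with density proportional to $w_i^{m_i} e^{-w_i S'} \rho(w_i)$, and (ii) a residual CRM $\widetilde W \sim \CRM(\widetilde \rho,\lambda_\alpha)$ over the remaining locations, whose L\'evy intensity is exponentially tilted to $\widetilde \rho(w) = \rho(w)\, e^{-wS'}$. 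The tilting is simply Poisson thinning: an atom survives as ``unseen'' iff none of its potential edges materialize, an event of probability $e^{-wS'}$.

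Specialization to the GGP is then a direct substitution: plugging $\rho(w) \propto w^{-1-\sigma} e^{-\tau w}$ into the conditional density of $w_i$ yields $w_i \sim \Gam(m_i-\sigma, \tau+S')$, giving \eqref{eq:wibipartite}; and $\widetilde\rho$ remains a GGP intensity with updated parameter $\tau \mapsto \tau+S'$, so the total mass $w_\ast$ is exponentially tilted stable as in \eqref{eq:wstarbipartite}, with the Laplace-exponent identity $\mathbb{E}[e^{-tW_\alpha^\ast}] = e^{-\alpha\psi(t)}$ providing the normalising constant. For the marginal likelihood \eqref{eq:marginaldist}, I would integrate $W_\alpha$ out of $p(D_\alpha, W_\alpha \mid W_\alpha')$ using this decomposition: the latent CRM $\widetilde W$ contributes $e^{-\alpha \psi(S')}$ via the Laplace transform; the observed atoms, whose locations are uniform on $[0,\alpha]$, contribute the factor $\alpha^{N_\alpha}\prod_i d\theta_i$; and each observed weight integrates by Fubini to $\kappa(m_i,S') = \int_0^\infty w^{m_i} e^{-wS'} \rho(w)\, dw$.

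The main technical obstacle is the careful invocation of the Palm/disintegration formula justifying the split into a fixed-atom part plus an independent tilted CRM, and in particular verifying that the tilting functional is exactly $e^{-wS'}$ (and not a more complicated Laplace-type object) — this is what makes the decomposition and the GGP specialization so clean. A secondary bookkeeping point is to track the combinatorial Poisson factorials $n_{ij}!$ so that they cancel against the Palm-formula weights and the marginal likelihood \eqref{eq:marginaldist} emerges without stray multinomial coefficients; this is essentially the same accounting that appears in the non-bipartite case but is easier here because the sums and symmetry factors (e.g., the $2w_iw_j$ vs.\ $w_i^2$ dichotomy in \eqref{eq:zij}) do not arise.
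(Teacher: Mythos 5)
Your proposal is correct and follows essentially the same route as the paper: the paper's proof likewise writes the conditional Laplace functional of $W_\alpha$ with the likelihood factor $e^{-W_\alpha(1)\left(\sum_j w_j'+w_\ast'\right)}$ and atom weights $w_i^{m_i}$, applies the Palm formula for the underlying Poisson random measure, and reads off the fixed-atom densities, the exponentially tilted residual CRM, and the marginal likelihood \eqref{eq:marginaldist} from the $f=0$ case. Your thinning interpretation of the tilting and the factorial bookkeeping are just explicit renderings of the same computation.
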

\begin{proof}
The proof is described by \citet{Caron2012} and in Appendix \ref{sec:proofposterior} for completeness.
\end{proof}

Let $\phi=(\alpha,\sigma,\tau)$ and $\phi^{\prime}=(\alpha^{\prime}%
,\sigma^{\prime})$ be, respectively, the parameters of the L\'{e}vy intensity of
$W$ and $W^{\prime}$. To preserve identifiability, we set the
parameter $\tau^{\prime}$ to $1$. The MCMC sampler for approximating
$p(w_{1:N_{\alpha}},w_{\ast},w_{1:N_{\alpha}^{\prime}}^{\prime},w_{\ast}^{\prime},\phi
,\phi^{\prime}|Z_{\alpha})$ iterates as follows:

\begin{enumerate}
\item Update $\alpha,\sigma,\tau$ given $w_{1:N_{\alpha}^{\prime}}^{\prime}$ using a
Metropolis-Hastings step with acceptance ratio calculated with
(\ref{eq:marginaldist})

\item Update $w_{1:N_{\alpha}}$ given $(w_{1:N_{\alpha}^{\prime}}^{\prime},w_{\ast}^{\prime
},\alpha,\sigma,\tau)$ using (\ref{eq:wibipartite})

\item Update $w_{\ast}$ given $(w_{1:N_{\alpha}^{\prime}}^{\prime},w_{\ast}^{\prime
},\alpha,\sigma,\tau)$ using (\ref{eq:wstarbipartite})

\item Update the latent $n_{ij}$ given $w_{1:N_{\alpha}^{\prime}}^{\prime},w_{1:N_{\alpha}}$ as
\[
n_{ij}|z,w,w^{\prime}\sim\left\{
\begin{array}
[c]{ll}%
\delta_{0} & \text{if }z_{ij}=0\\
\text{tPoisson}(w_{i}w_{j}^{\prime}) & \text{if }z_{ij}=1
\end{array}
\right.
\]

\end{enumerate}

The model is symmetric in $(w,w^{\prime})$, so the first three steps can be
repeated for updating $(\alpha^{\prime},\sigma^{\prime},\tau^{\prime
},w_{1:N_{\alpha}^{\prime}}^{\prime},w_{\ast}^{\prime})$. Full algorithmic details are given in Appendix~\ref{sec:detailsMCMC}.

\section{Experiments}
\label{sec:experiments}

\subsection{Simulated data}

We first study the convergence of the MCMC algorithm on simulated data where the graph is simulated from our model. We simulate a GGP undirected graph with parameters $\alpha=300,\sigma=0.5,\tau=1$. Note that we are in the sparse regime. The sampled graph has 13,995 nodes and 76,605 edges. We run 3 MCMC chains each with 40,000 iterations and with different initial values. $L=10$ leapfrog steps are used, and the stepsize of the leapfrog algorithm is adapted during the first 10,000 iterations so as to obtain an acceptance rate of 0.6. Standard deviations of the random walk Metropolis-Hastings for $\log \tau$ and $\log (1-\sigma)$ are set to 0.02. It takes 10 minutes with Matlab on a standard computer (CPU@3.10GHz, 4 cores) to run the 3 chains successively. Trace plots of the parameters $\alpha$, $\sigma$, $\tau$ and $w_*$ are given in Figure~\ref{fig:simupowerlawtrace}. The potential scale factor reduction~\cite{Brooks1998,Gelman2014} is computed for all 13,999 parameters $(w_{1:N_{\alpha}},w_*,\alpha,\sigma,\tau)$ and has a maximum value of 1.01, indicating convergence of the algorithm. This is rather remarkable as the MCMC sampler actually samples from a target distribution of dimension 13,995+76,605+4=90,604. Posterior credible intervals of the sociability parameters $w_i$ of the nodes with highest degrees and log-sociability parameters $\log w_i$ of the nodes with lowest degrees are displayed in Figure~\ref{fig:simupowerlaw_w}(a) and (b), respectively, showing the ability of the method to accurately recover sociability parameters of both low and high degree nodes.

\begin{figure}[ptb]
\begin{center}%
\subfigure[$\alpha$]{\includegraphics[width=.45\textwidth]{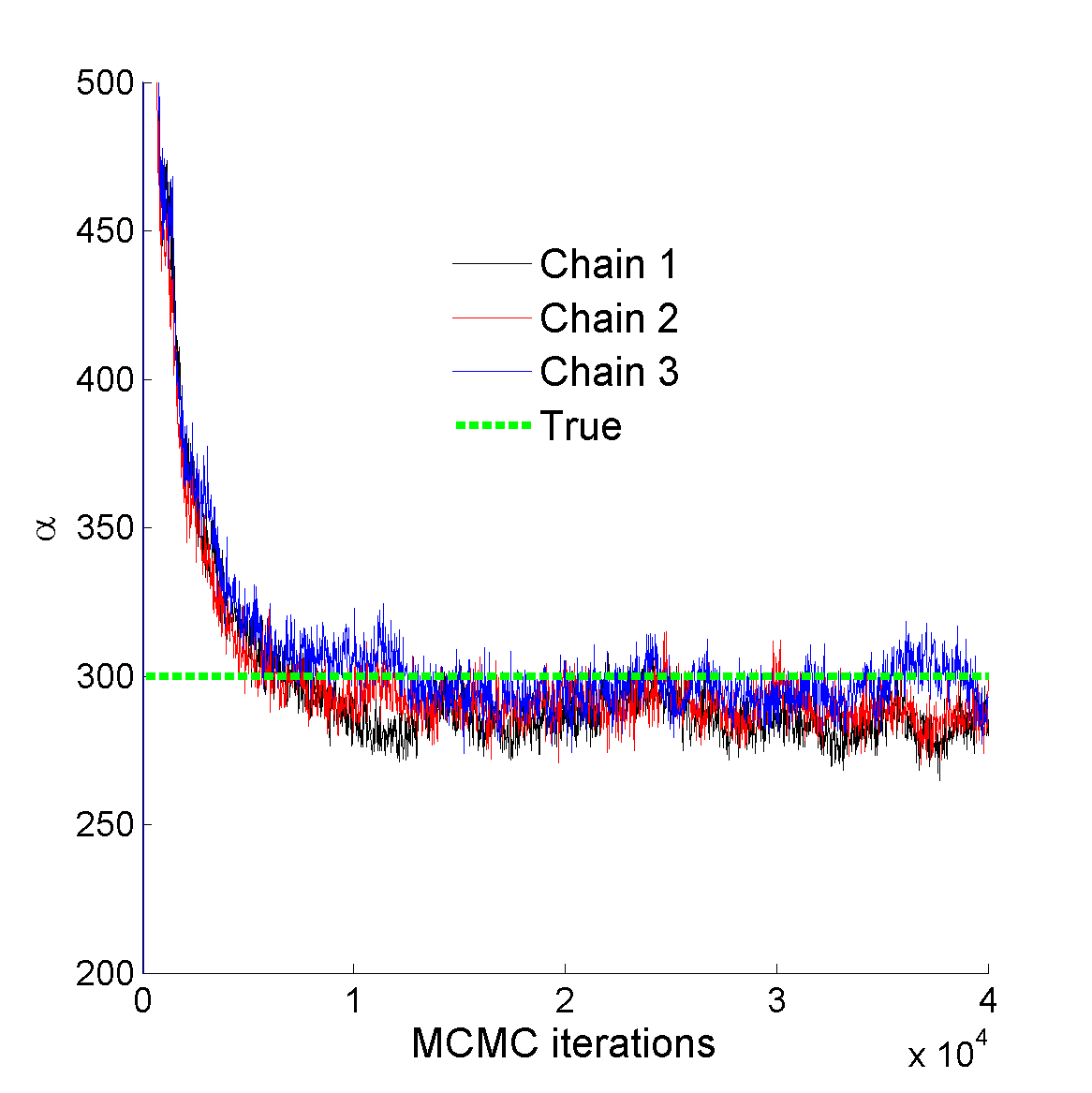}}
\subfigure[$\sigma$]{\includegraphics[width=.45\textwidth]{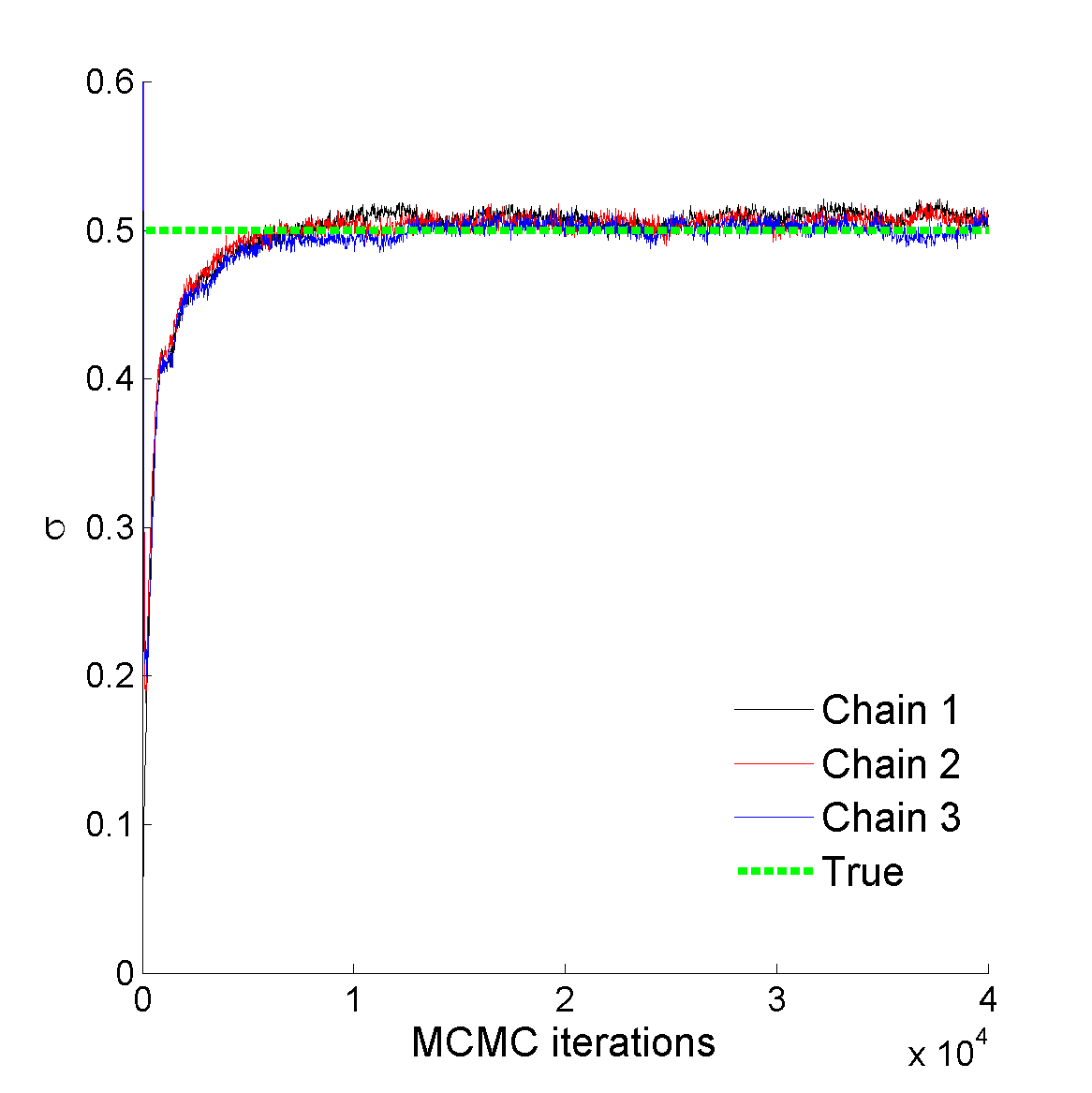}}
\subfigure[$\tau$]{\includegraphics[width=.45\textwidth]{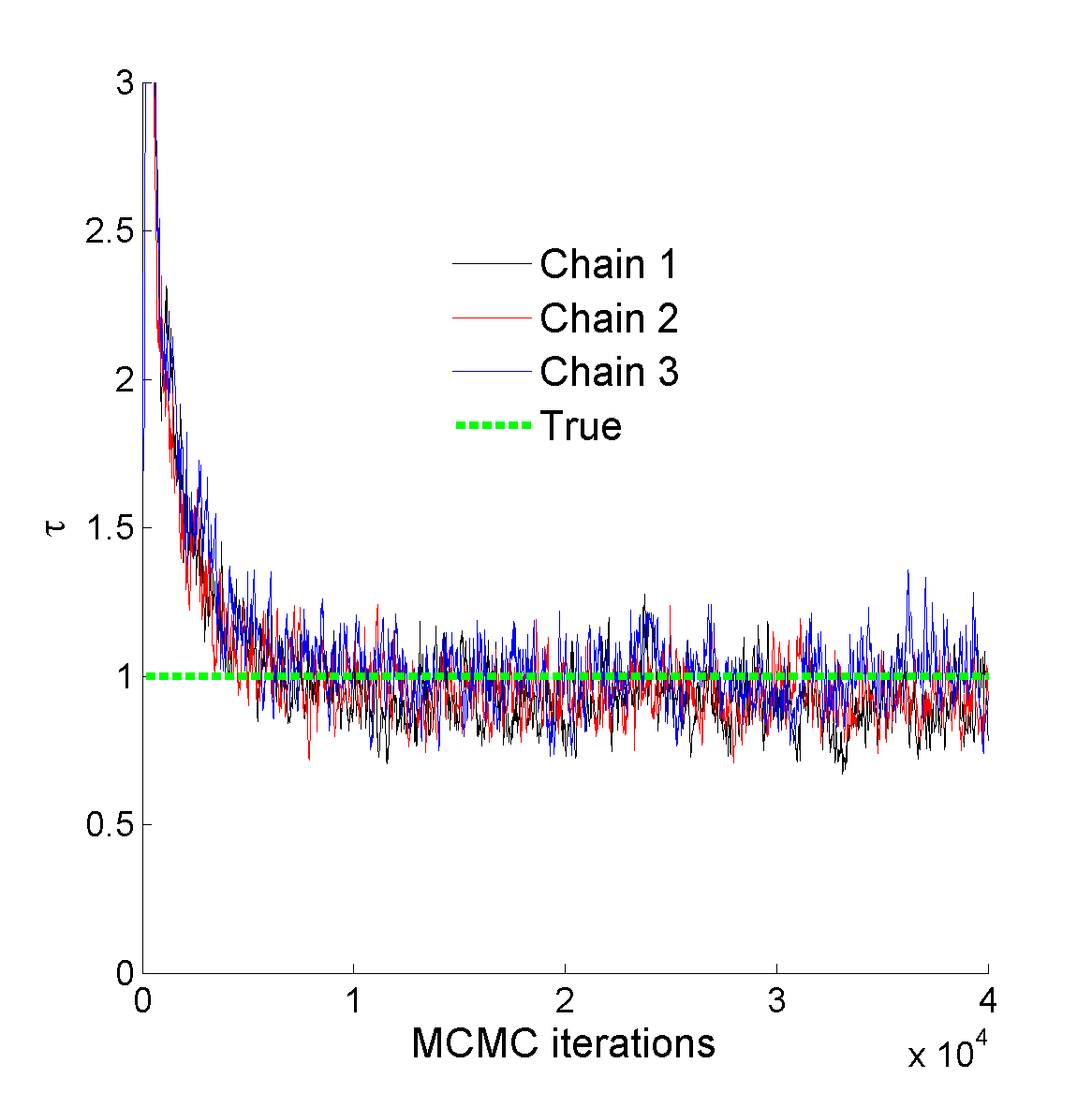}}
\subfigure[$w_*$]{\includegraphics[width=.45\textwidth]{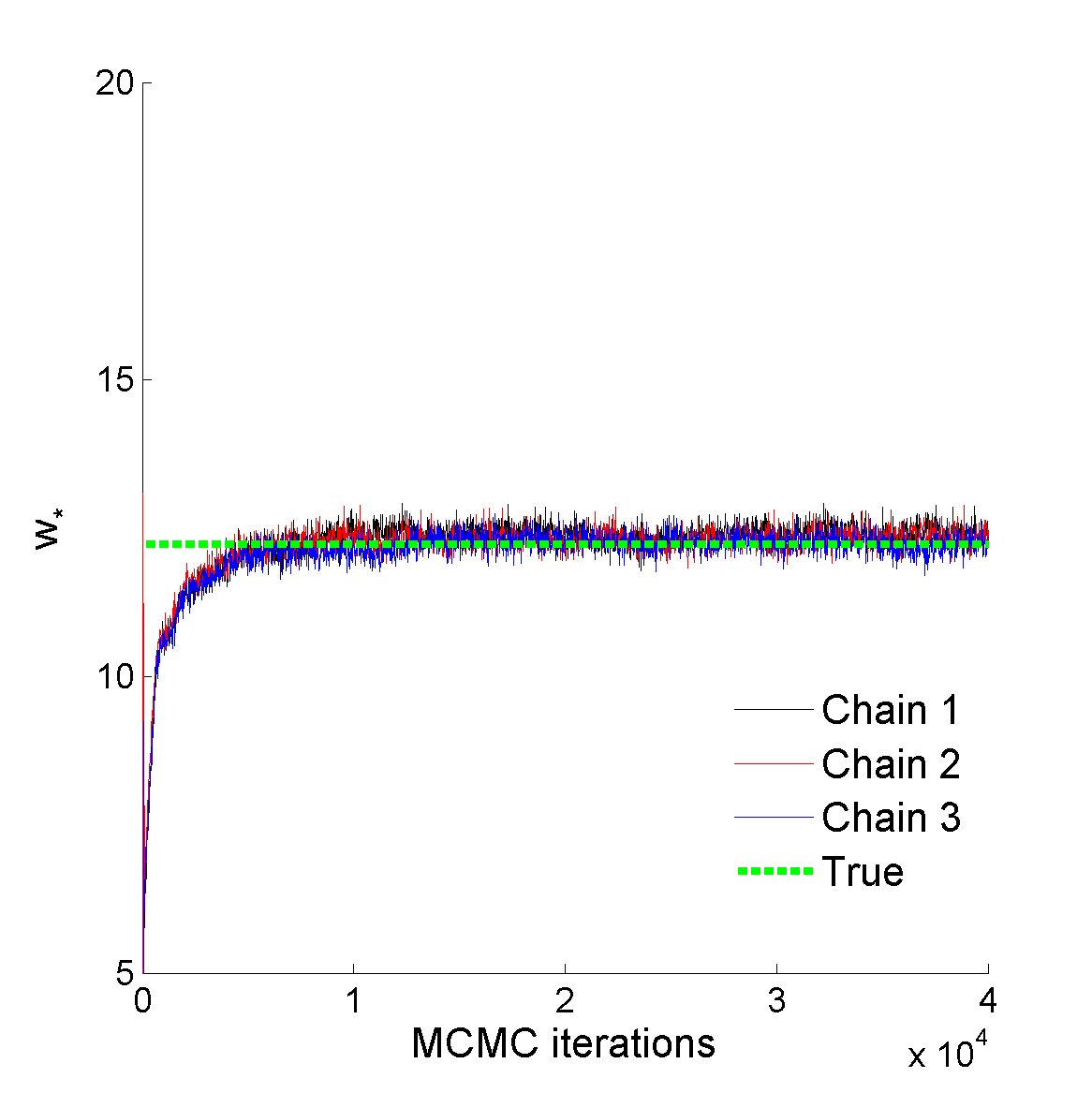}}
\end{center}
\caption{MCMC trace plots of parameters (a) $\alpha$ (b) $\sigma$, (c) $\tau$ and (d) $w_*$ for a graph generated from a GGP model with parameters $\alpha=300,\sigma=0.5,\tau=1$.}%
\label{fig:simupowerlawtrace}%
\end{figure}

\begin{figure}[ptb]
\begin{center}%
\subfigure[50 nodes with highest degree]{\includegraphics[width=.48\textwidth]{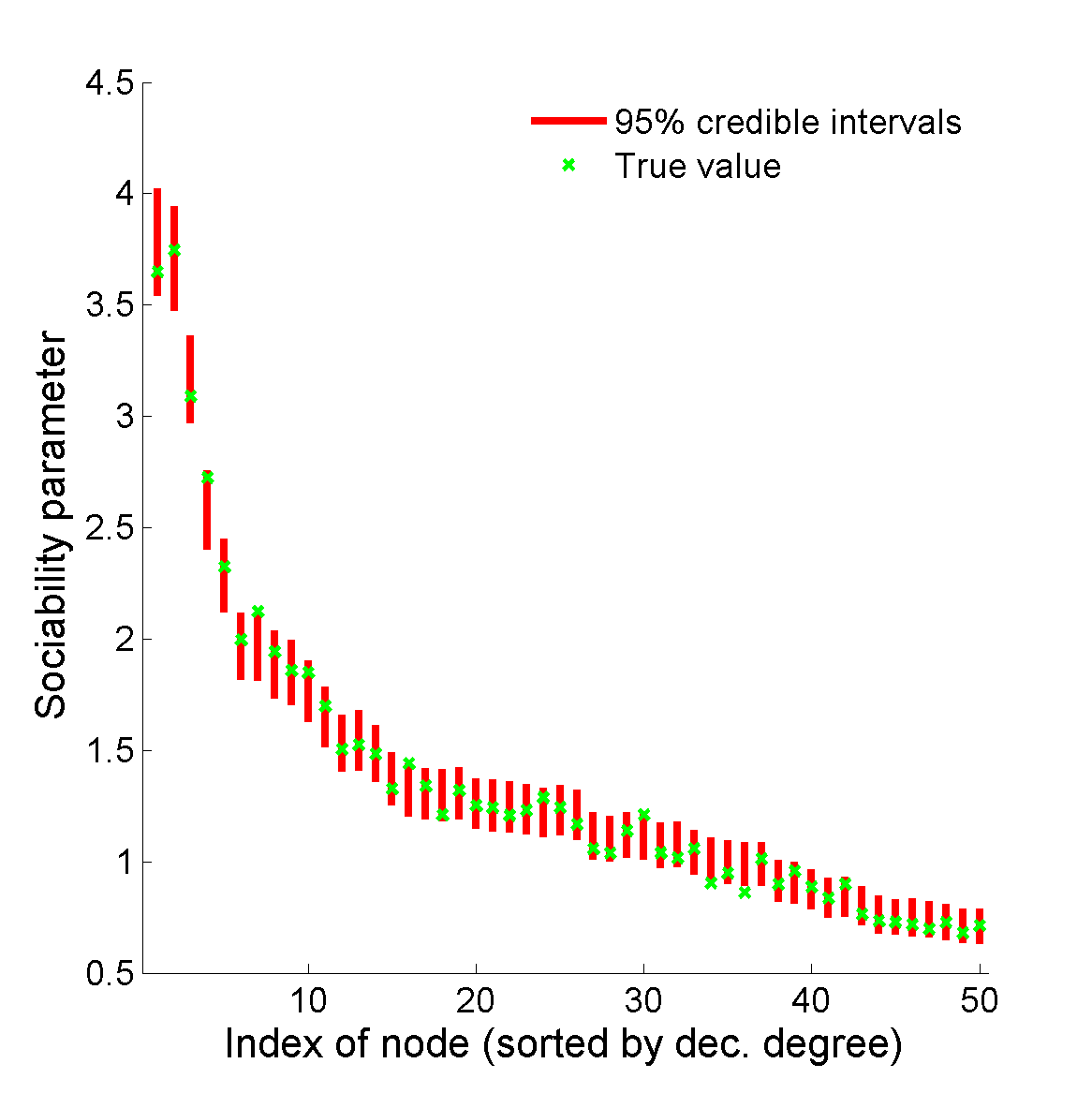}}
\subfigure[50 nodes with lowest degree]{\includegraphics[width=.48\textwidth]{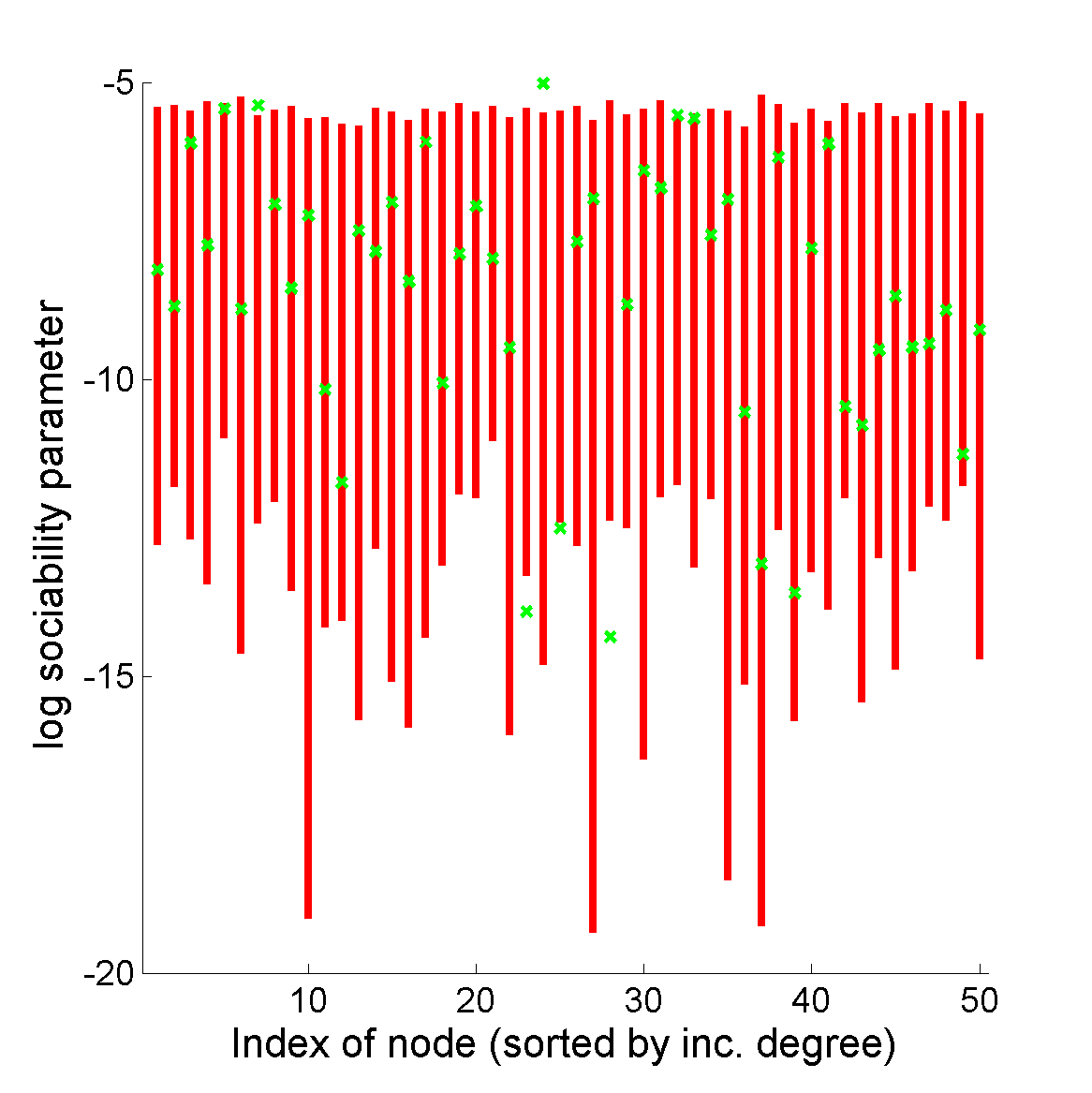}}
\end{center}
\caption{95 \% posterior intervals of (a) the  sociability parameters $w_i$ of the 50 nodes with highest degree and (b) the log-sociability parameter $\log w_i$ of the 50 nodes with lowest degree, for a graph generated from a GGP model with parameters $\alpha=300,\sigma=0.5,\tau=1$. True values are represented by a green star.}%
\label{fig:simupowerlaw_w}%
\end{figure}

To show the versatility of the GGP graph model, we now examine our approach when the observed graph is actually generated from an Erd\"os-R\'enyi model with $n=1,000$ and $p=0.01$. The generated graph has 1,000 nodes and 5,058 edges. We ran 3 MCMC chains with the same specifications as above. In this dense-graph regime, the following transformation of our parameters $\alpha$, $\sigma$ and $\tau$ is more informative: $\varsigma_1 = -\frac{\alpha}{\sigma}\tau^\sigma$, $\varsigma_2 = -\frac{\sigma}{\tau}$ and  $\varsigma_3 = -\frac{\sigma}{\tau^2}$. When $\sigma<0$, $\varsigma_1$ corresponds to the expected number of nodes, $\varsigma_2$ to the mean of the sociability parameters and $\varsigma_3$ to their variance (see Section~\ref{sec:GGPspecialcase}). In contrast, the parameters $\sigma$ and $\tau$ are only weakly identifiable in this case. The potential scale reduction factor is computed on $(w_{1:N_{\alpha}},w_*,\varsigma_1,\varsigma_2,\varsigma_3)$, and its maximum value is 1.01, indicating convergence. Trace plots are shown in Figure~\ref{fig:simuERtrace} for $\varsigma_1$, $\varsigma_2$, $\varsigma_3$ and $w_*$. The value of $\varsigma_1$ converges around the true number of nodes, $\varsigma_2$ to the true sociability parameter $\sqrt{-\frac{1}{2}\log(1-p)}$ (constant across nodes for the Erd\"os-R\'enyi model), while $\varsigma_3$ is close to zero as the variance over the sociability parameters is very small. The total mass is very close to zero, indicating that there are no nodes with degree zero.

\begin{figure}[ptb]
\begin{center}%
\subfigure[$\varsigma_1=-\frac{\alpha}{\sigma} \tau^\sigma$]{\includegraphics[width=.45\textwidth]{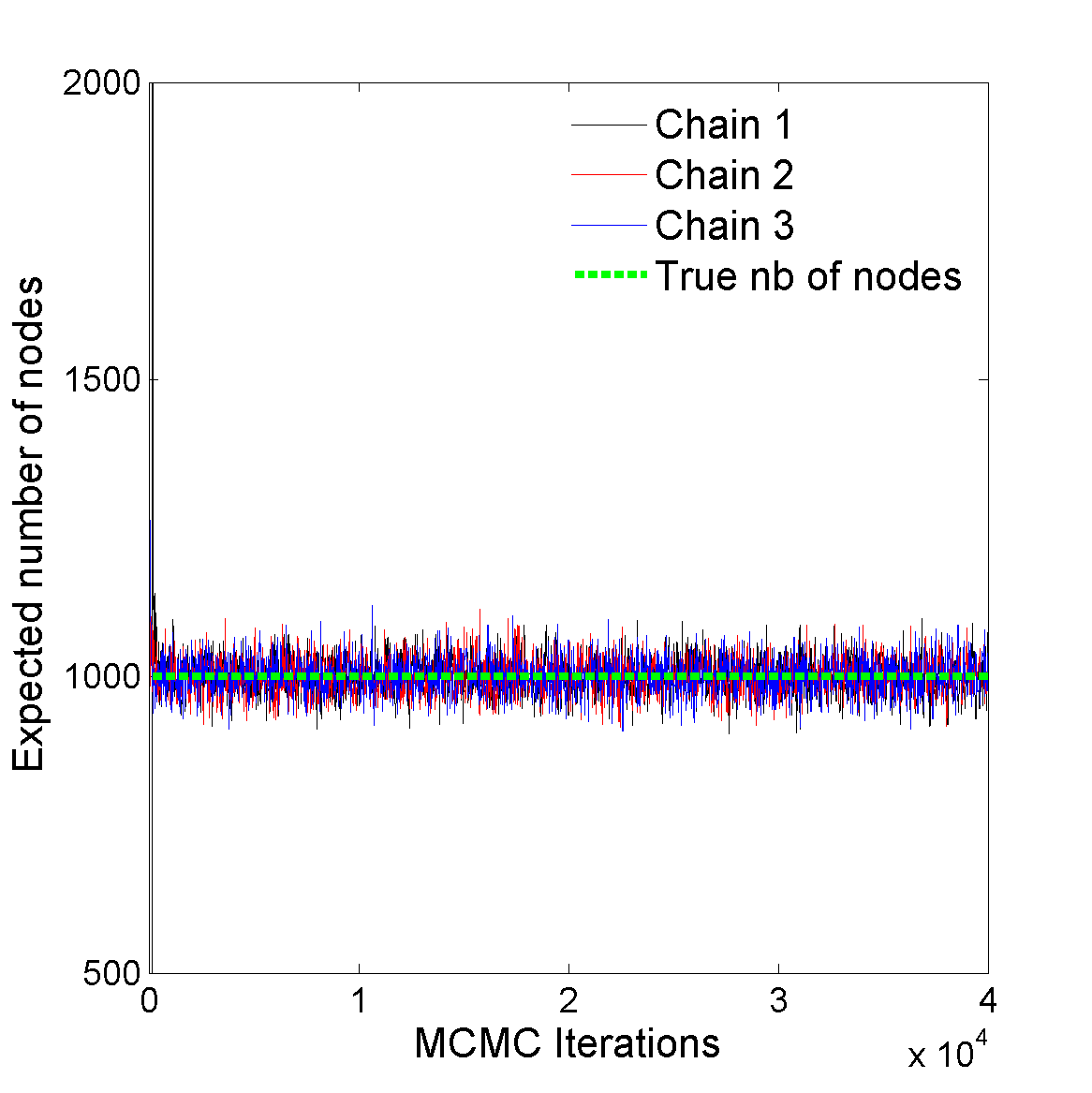}}
\subfigure[$\varsigma_2=-\frac{\sigma}{\tau}$]{\includegraphics[width=.45\textwidth]{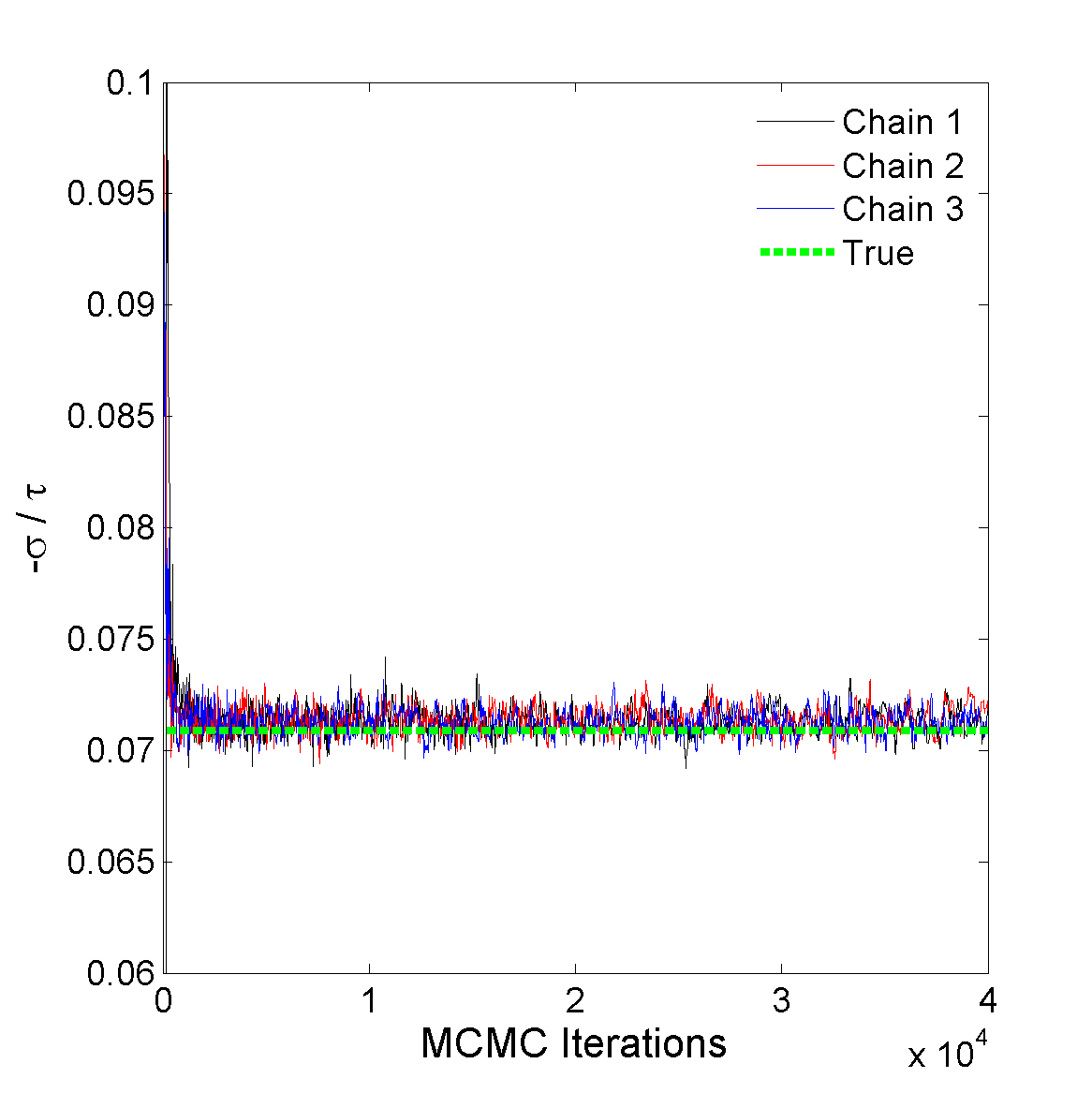}}
\subfigure[$\varsigma_3=-\frac{\sigma}{\tau^2}$]{\includegraphics[width=.45\textwidth]{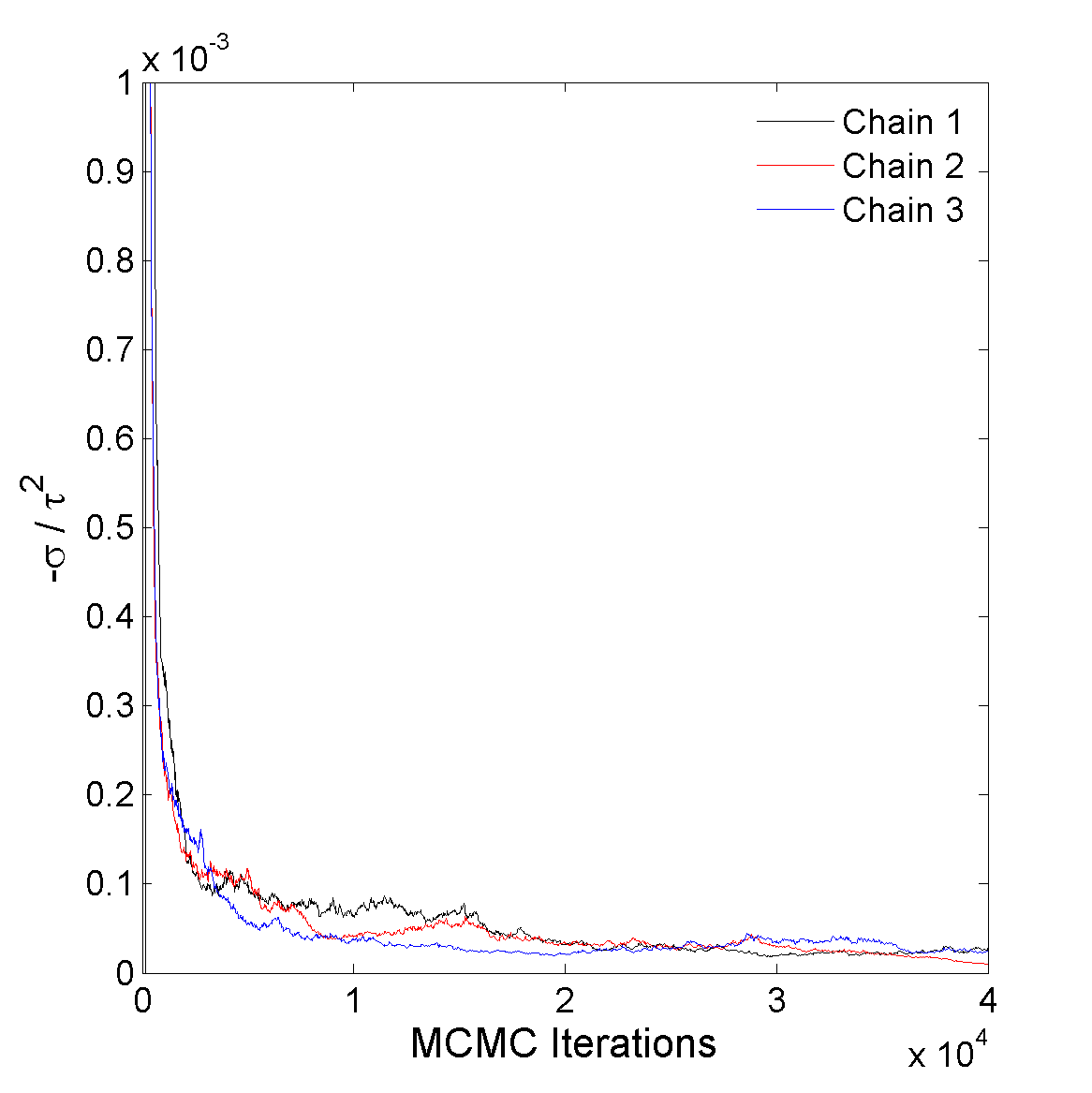}}
\subfigure[$w_*$]{\includegraphics[width=.45\textwidth]{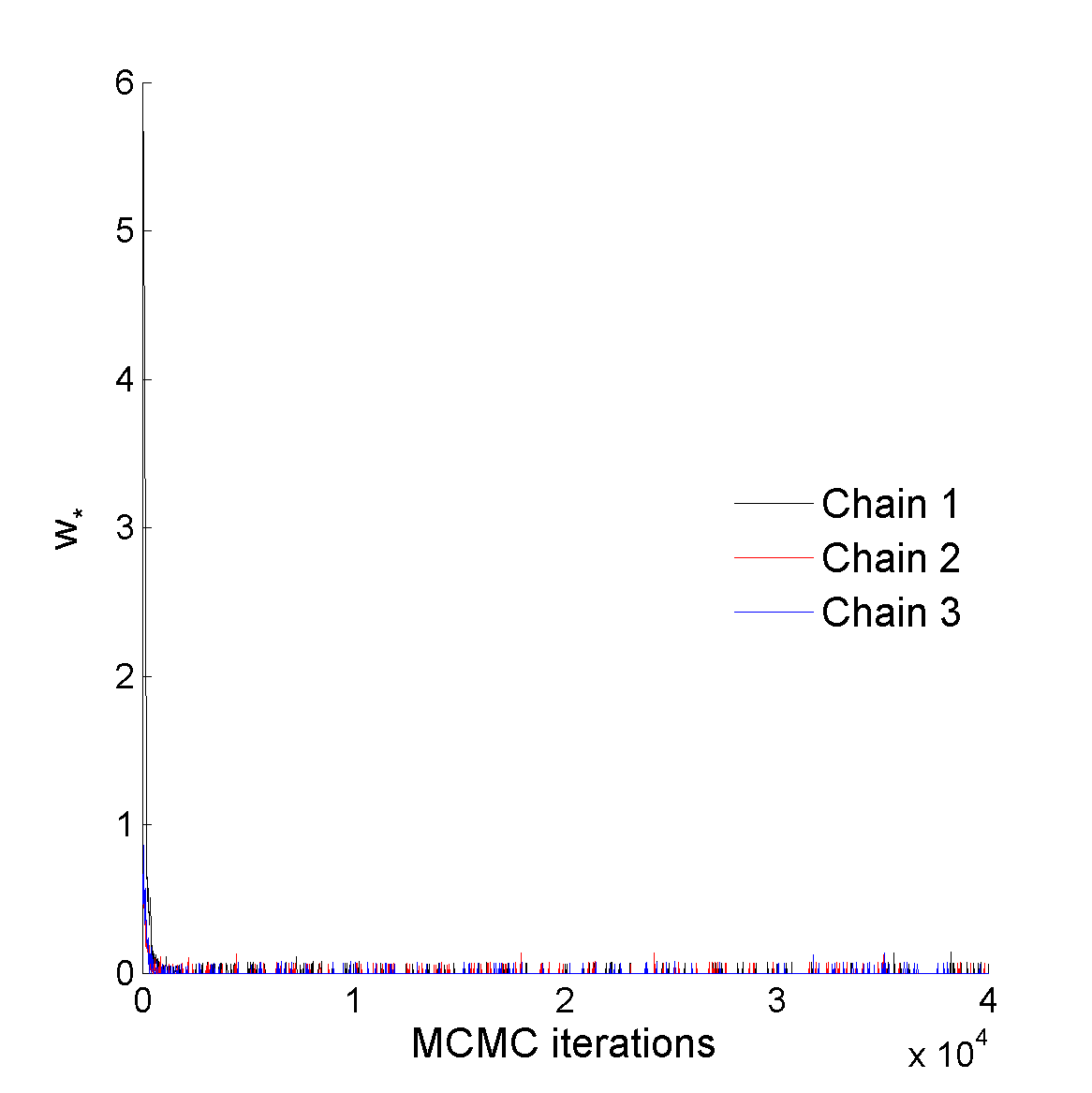}}
%\subfigure[$\sigma$]{\includegraphics[width=.32\textwidth]{simuER_trace_sigma}}
%\subfigure[$\tau$]{\includegraphics[width=.32\textwidth]{simuER_trace_tau}}
\end{center}
\caption{MCMC trace plots of parameters (a) $\varsigma_1$ (b) $\varsigma_2$, (c) $\varsigma_3$, (d) $w_*$ for a graph generated from an Erd\"os-R\'enyi model with parameters $n=1000,p=0.01$. }%
\label{fig:simuERtrace}%
\end{figure}

\begin{figure}[ptb]
\begin{center}%
\subfigure[Nodes with highest degree]{\includegraphics[width=.48\textwidth]{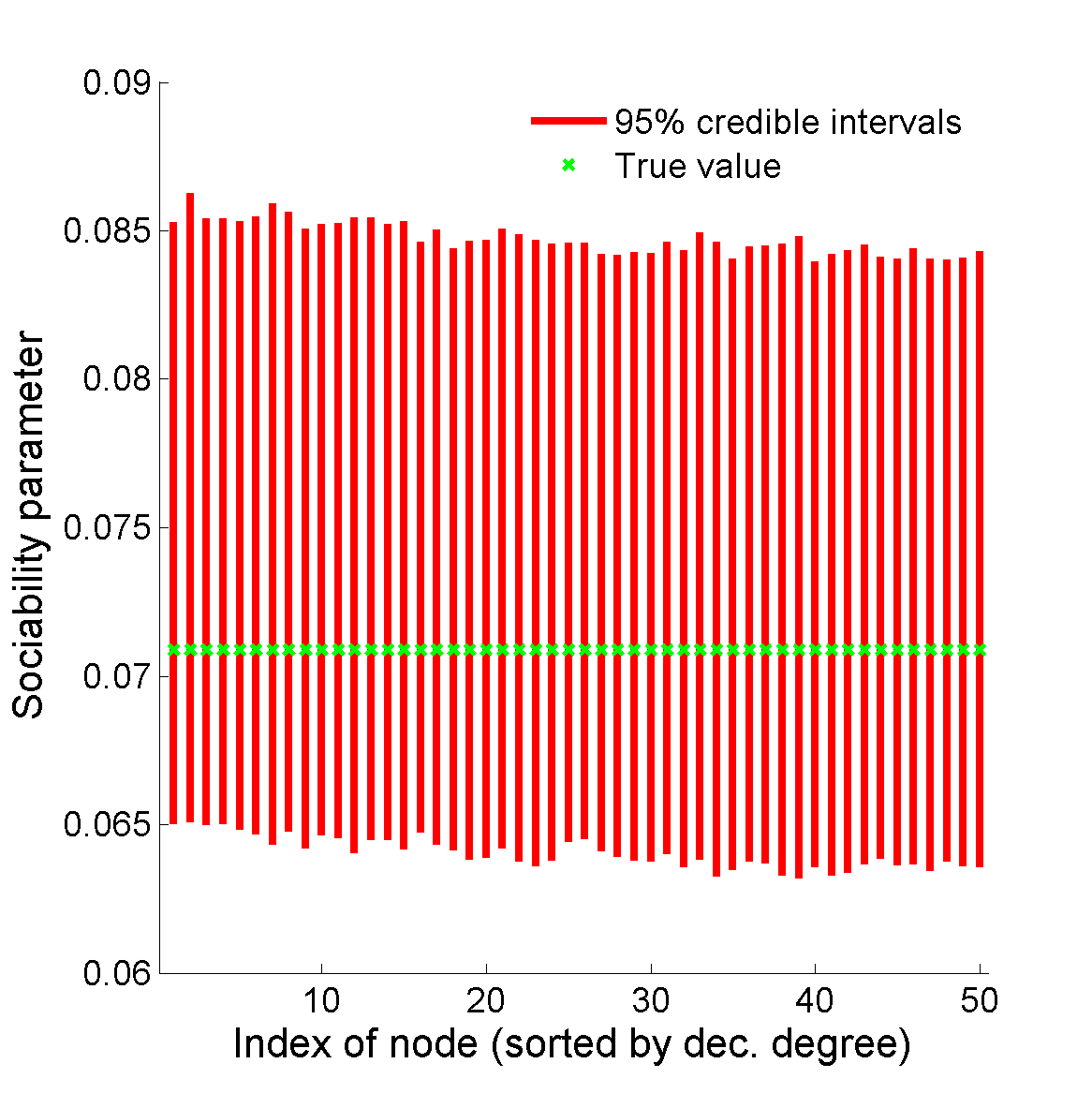}}
\subfigure[Nodes with lowest degree]{\includegraphics[width=.48\textwidth]{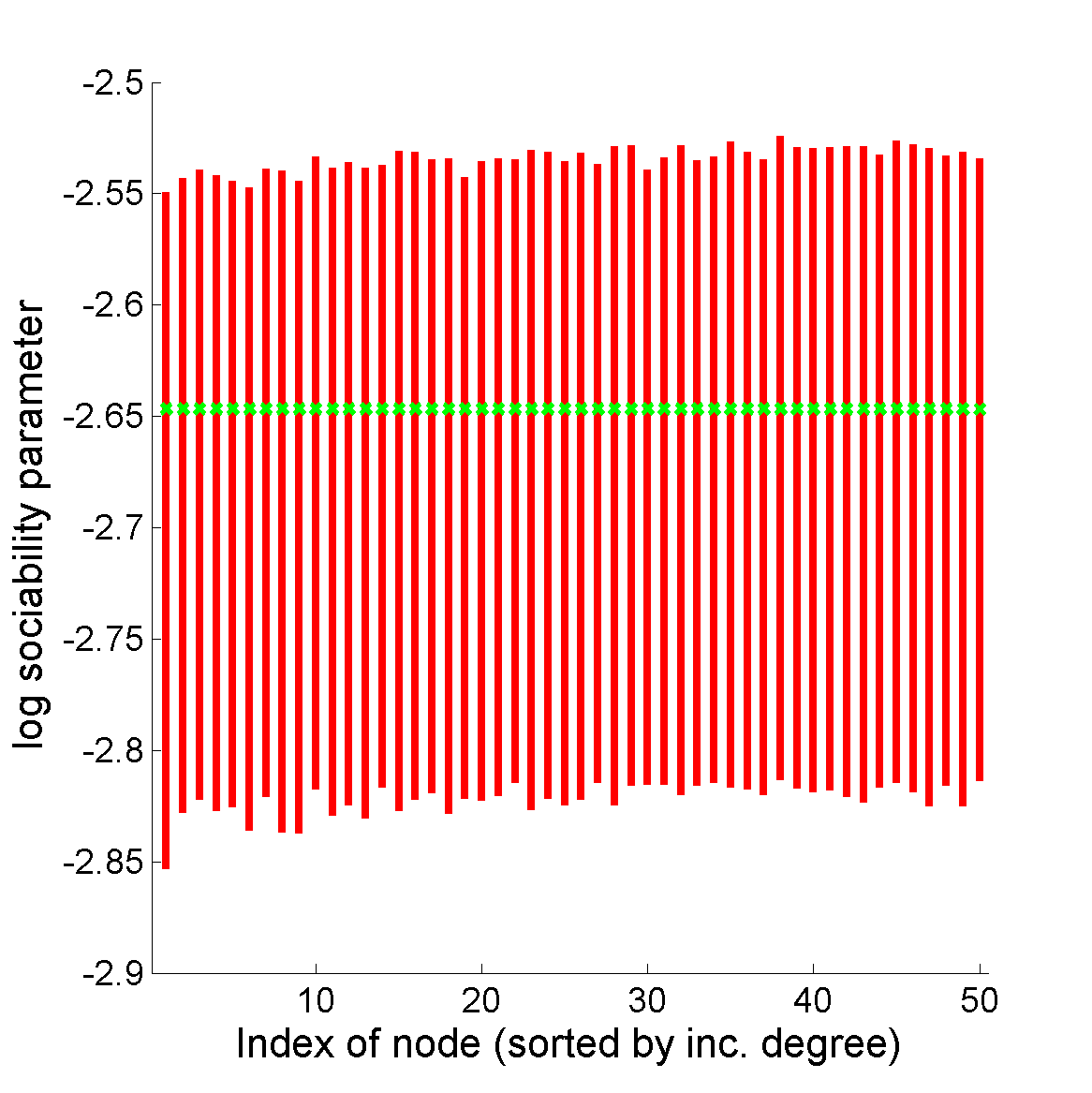}}
\end{center}
\vspace{-0.15in}
\caption{95\% posterior intervals of (a) sociability parameters $w_i$ of the 50 nodes with highest degree and (b) log-sociability parameters $\log w_i$ of the 50 nodes with lowest degree, for a graph generated from an Erd\"os-R\'enyi model with parameters $n=1000,p=0.01$. In this case, all nodes have the same true sociability parameter $\sqrt{-\frac{1}{2}\log(1-p)}$, represented by a green star.} \vspace{-0.15in}
\label{fig:simuER_w}%
\end{figure}

\subsection{Testing for sparsity of real-world graphs}

We now turn to using our methods to test whether a given graph is sparse or not. Such testing based on a single given graph is notoriously challenging as sparsity relates to the asymptotic behavior of the graph. Measures of sparsity from finite graphs exist, but can be costly to implement~\cite{Nesetril2012}. Based on our GGP-based formulation and associated theoretical results described in Section~\ref{sec:specialcases}, we propose the following test:
\[
H_0:\sigma<0 \,\,\, \mbox{ vs } \,\,\, H_1:\sigma\geq 0.
\]
In our experiments, we again consider a GGP-based graph model with improper priors on the unknown parameters $(\alpha,\sigma,\tau)$, as described in Section~\ref{sec:MCMC}. We aim at reporting $\Pr(H_1|z)=\Pr(\sigma>0|z)$ based on a set of observed connections $(z)$, which can be directly approximated from the MCMC output. We consider 12 different datasets:
\begin{itemize}
\item \textbf{facebook107}: Social circles from Facebook\footnote{\url{https://snap.stanford.edu/data/egonets-Facebook.html}}~\cite{Mcauley2012}
\item \textbf{polblogs}: Political blogosphere (Feb. 2005)\footnote{\url{http://www.cise.ufl.edu/research/sparse/matrices/Newman/polblogs}}~\cite{Adamic2005}
\item \textbf{USairport}: US airport connection network in 2010\footnote{\label{foot:tore}\url{http://toreopsahl.com/datasets/}}~\cite{Colizza2007}
\item \textbf{UCirvine}:  Social network of students at University of California, Irvine\footnoteref{foot:tore}~\cite{Opsahl2009}
\item \textbf{yeast}: Yeast protein interaction network\footnote{\url{http://www.cise.ufl.edu/research/sparse/matrices/Pajek/yeast.html}}~\cite{Bu2003}
\item \textbf{USpower}: Network of high-voltage power grid in the Western States of the United States of America\footnoteref{foot:tore}~\cite{Watts1998}
\item \textbf{IMDB}: Actor collaboration network based on acting in the same movie\footnote{\url{http://www.cise.ufl.edu/research/sparse/matrices/Pajek/IMDB.html}}
\item \textbf{cond-mat1}: Co-authorship network\footnoteref{foot:tore}~\cite{Newman2001a}, based on preprints posted to Condensed Matter of Arxiv between 1995 and 1999; obtained from the bipartite preprints/authors network using a one-mode projection
\item \textbf{cond-mat2}: As in cond-mat1, but using Newman's projection method
\item \textbf{Enron}: Enron collaboration network from multigraph email network\footnote{\url{https://snap.stanford.edu/data/email-Enron.html}}
\item  \textbf{internet}: Connectivity of internet routers\footnote{\url{http://www.cise.ufl.edu/research/sparse/matrices/Pajek/internet.html}}
\item  \textbf{www}: Linked www pages in the nd.edu domain\footnote{\url{http://lisgi1.engr.ccny.cuny.edu/~makse/soft_data.html}}
\end{itemize}
The sizes of the different datasets are given in Table~\ref{tab:real} and range from a few hundred nodes/edges to a million. The adjacency matrices for these networks are plotted in Figure~\ref{fig:realG} and empirical degree distributions in Figure~\ref{fig:realdegree} (red).
\begin{table}[h!]
\caption{Size of real-world datasets and posterior probability of sparsity.}
\label{tab:real}
\begin{tabular}{l|l|l|l|l|l}
  %\hline
  % after \\: \hline or \cline{col1-col2} \cline{col3-col4} ...
  Name & Nb nodes& Nb edges& Time& $\Pr(H_1|z)$& 99\% CI $\sigma$ \\
  & & &(min)&\\
  \hline
  facebook107 & 1,034 & 26,749 & 1 & 0.000 & $[-1.057,-0.819]$ \\
  polblogs & 1,224 & 16,715 & 1 & 0.000 & $[-0.348,-0.202]$ \\
  USairport & 1,574 & 17,215 & 1 & 1.000 & $[~0.099,~0.181]$\\
  UCirvine & 1,899 & 13,838 & 1 & 0.000 & $[-0.141,-0.017]$ \\
  yeast & 2,284 & 6,646 & 1 & 0.280 & $[-0.093,0.054]$ \\
  USpower & 4,941 & 6,594 & 1 & 0.000 & $[-4.837,-3.185]$\\
  IMDB & 14,752 & 38,369 & 2 & 0.000 & $[-0.244,-0.173]$ \\
  cond-mat1 & 16,264 & 47,594 & 2 & 0.000 & $[-0.945,-0.837]$\\
  cond-mat2 & 7,883 & 8,586 & 1 & 0.000 & $[-0.176,-0.022]$\\
  Enron & 36,692 & 183,831 & 7 & 1.000 & $[~0.201,~0.221]$ \\
  internet & 124,651 & 193,620 & 15 & 0.000 & $[-0.201,-0.171]$ \\
  www & 325,729& 1,090,108 & 132 & 1.000 & $[0.262,0.298]$ \\
  %\hline
\end{tabular}
\end{table}

We ran 3 MCMC chains for 40,000 iterations with the same specifications as above and report the estimate of  $\Pr(H_1|z)$ and 99\% posterior credible intervals of $\sigma$ in Table~\ref{tab:real}; we additionally provide runtimes. Figure~\ref{fig:realtracesigma} and Figure~\ref{fig:realhistsigma} show MCMC traces and posterior histograms, respectively, for the sparsity parameter $\sigma$ for the different datasets. {Many of the smaller networks fail to provide evidence of sparsity. These graphs may indeed be dense; for example, our \texttt{facebook107} dataset represents a small social circle that is likely highly interconnected and the \texttt{polblogs} dataset represents two tightly connected political parties.} Three of the datasets (\texttt{USairport}, \texttt{Enron}, \texttt{www}) are clearly inferred as sparse; note that two of these datasets are in the top three largest networks considered, where sparsity is more commonplace. In the remaining large, but inferred-dense network, \texttt{internet}, there is not enough evidence under our test that the network is not dense. This may be due to the presence of dense subgraphs or spots (e.g., spatially proximate routers may be highly interconnected, but sparsely connected outside the group) \cite{BorgsChayesCohnZhao2014}. This relates to the idea of \emph{community structure}, though not every node need be associated with a community.  As in many sparse network models that assume no dense spots~\cite{BollobasRiordan2009,WolfeOlhede2013}, our approach does not explicitly model such effects.  Capturing such structure remains a direction of future research likely feasible within our generative framework, though our current method has the benefit of simplicity with three hyperparameters tuning the network properties. Finally, we note in Table~\ref{tab:real} that our analyses finish in a remarkably short time despite the code base being implemented in Matlab on a standard desktop machine, without leveraging possible opportunities for parallelizing and otherwise scaling some components of the sampler (see Section~\ref{sec:MCMC} for a discussion.)

To assess our fit to the empirical degree distributions, we use the methods described in Section~\ref{sec:simulation} to simulate 5000 graphs from the posterior predictive and compare to the observed graph degrees in Figure~\ref{fig:realdegree}.  In all cases, we see a reasonably good fit.  For the largest networks, Figure~\ref{fig:realdegree}(j)-(l), we see a slight underestimate of the tail of the distribution; that is, we do not capture as many high-degree nodes as truly present.  This may be because these graphs exhibit a power-law behavior, but only after a certain cutoff~\cite{Clauset2009}, which is not an effect explicitly modeled by our framework. Likewise, this cutoff might be due to the presence of dense spots. In contrast, we capture power-law behavior with possible exponential cutoff in the tail. We see a similar trend for \texttt{cond-mat1}, but not \texttt{cond-mat2}.  Based on the bipartite articles-authors graph, \texttt{cond-mat1} uses the standard one-mode projection and sets a connection between two authors who have co-authored a paper; this projection clearly creates dense spots in the graph. On the contrary, \texttt{cond-mat2} uses Newman's projection method~\cite{Newman2001}. This method constructs a weighted undirected graph by counting the number of papers co-authored by two scientists, where each count is normalized by the number of authors on the paper. To construct the undirected graph, we set an edge if the weight is equal or greater than 1; \texttt{cond-mat1} and \texttt{cond-mat2} thus have a different number of edges and nodes, as only nodes with at least one connection are considered. It is interesting to note that the projection method used for the \texttt{cond-mat} dataset has a clear impact on the sparsity of the resulting graph, \texttt{cond-mat2} being less dense than \texttt{cond-mat1} (see Figure~\ref{fig:realdegree}(h)-(i)). The degree distribution for \texttt{cond-mat1} is similar to that of \texttt{internet}, thus inheriting the same issues previously discussed. Overall, it appears our model better captures homogeneous power-law behavior with possible exponential cutoff in the tails than it does a graph with perhaps structured dense spots or power-law-after-cutoff behavior.

\begin{figure}[ptb]
\begin{center}%
\subfigure[facebook107]{\includegraphics[width=.32\textwidth]{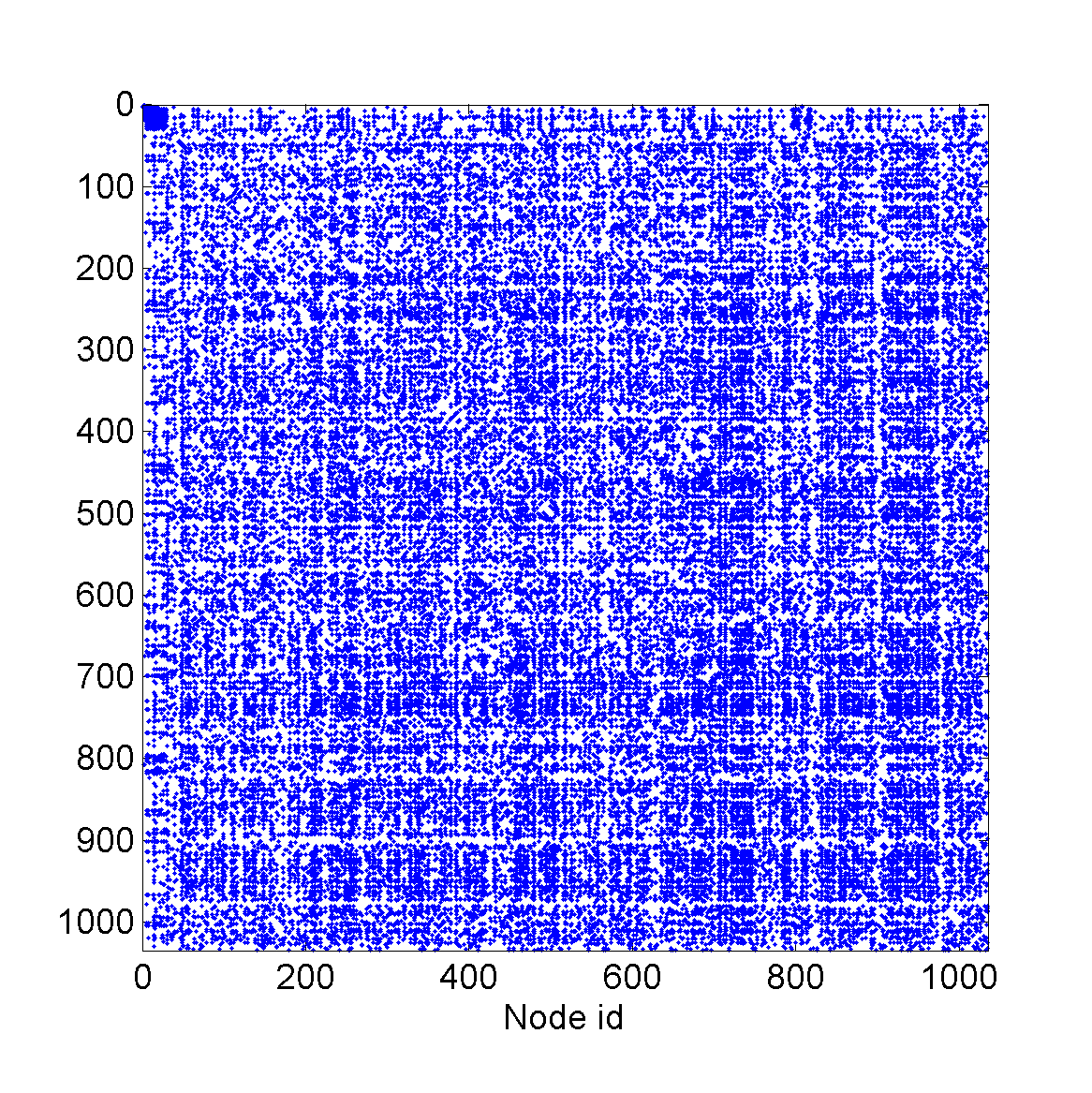}}
\subfigure[polblogs]{\includegraphics[width=.32\textwidth]{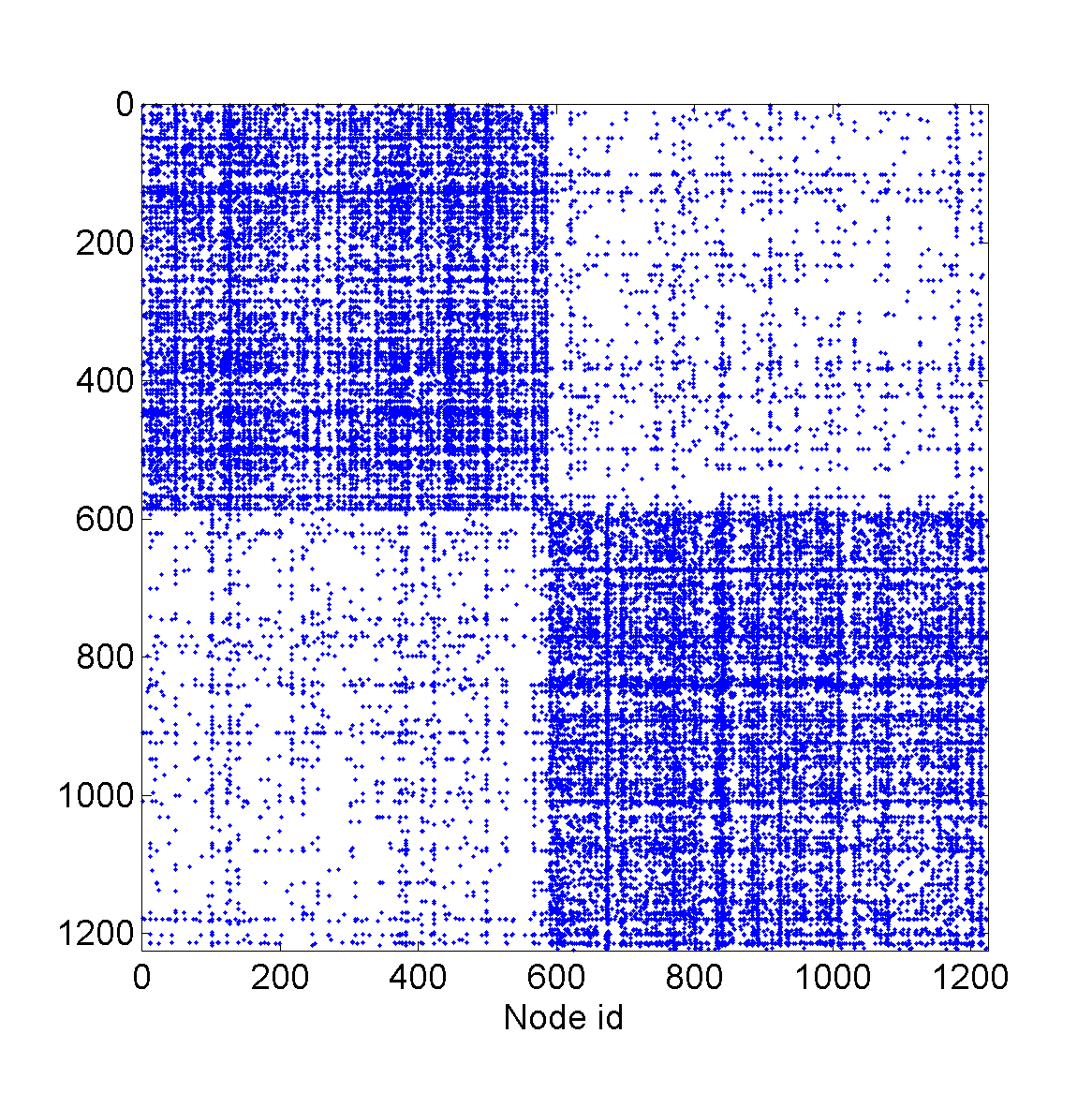}}
\subfigure[USairport]{\includegraphics[width=.32\textwidth]{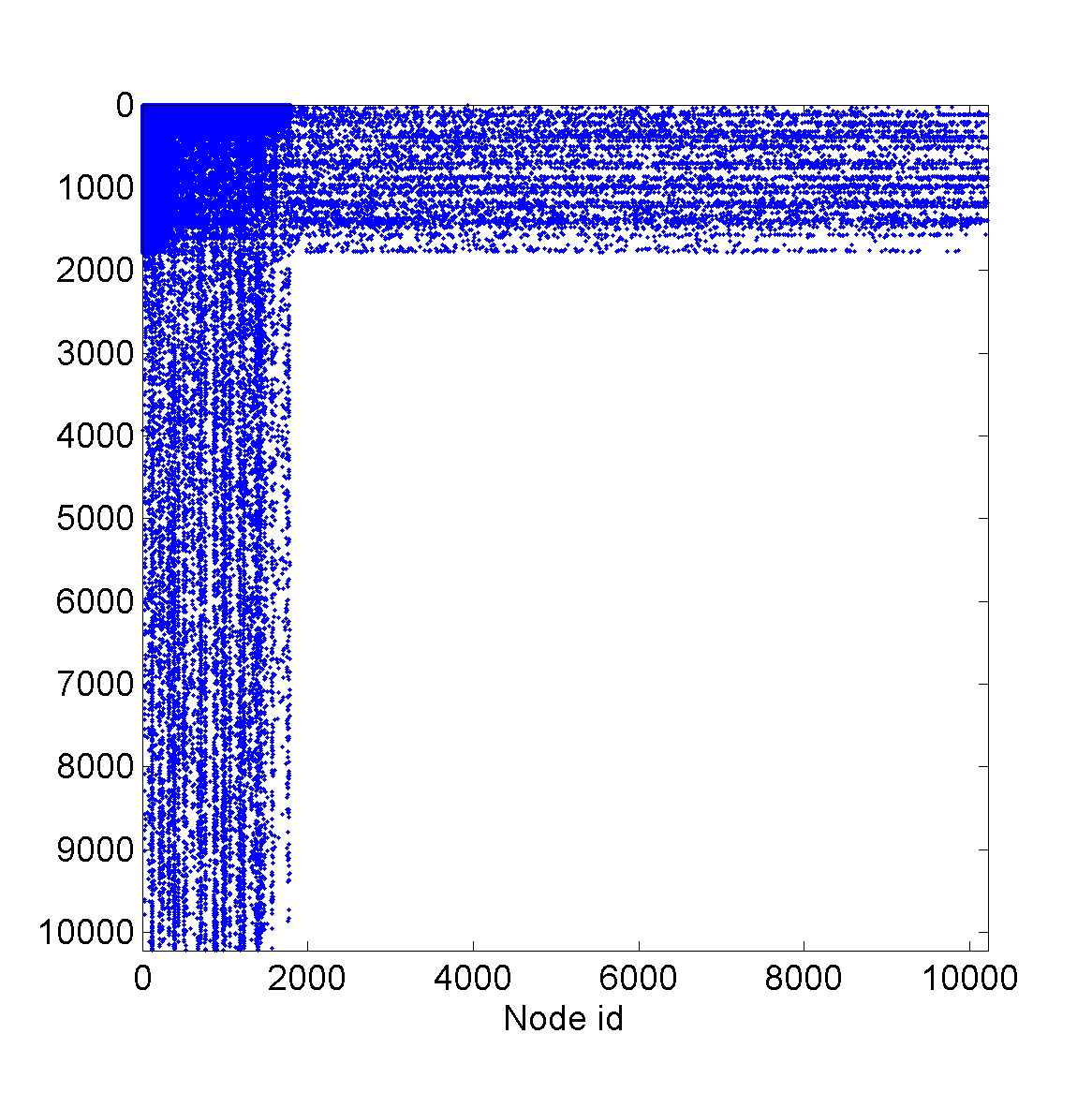}}
\subfigure[UCirvine]{\includegraphics[width=.32\textwidth]{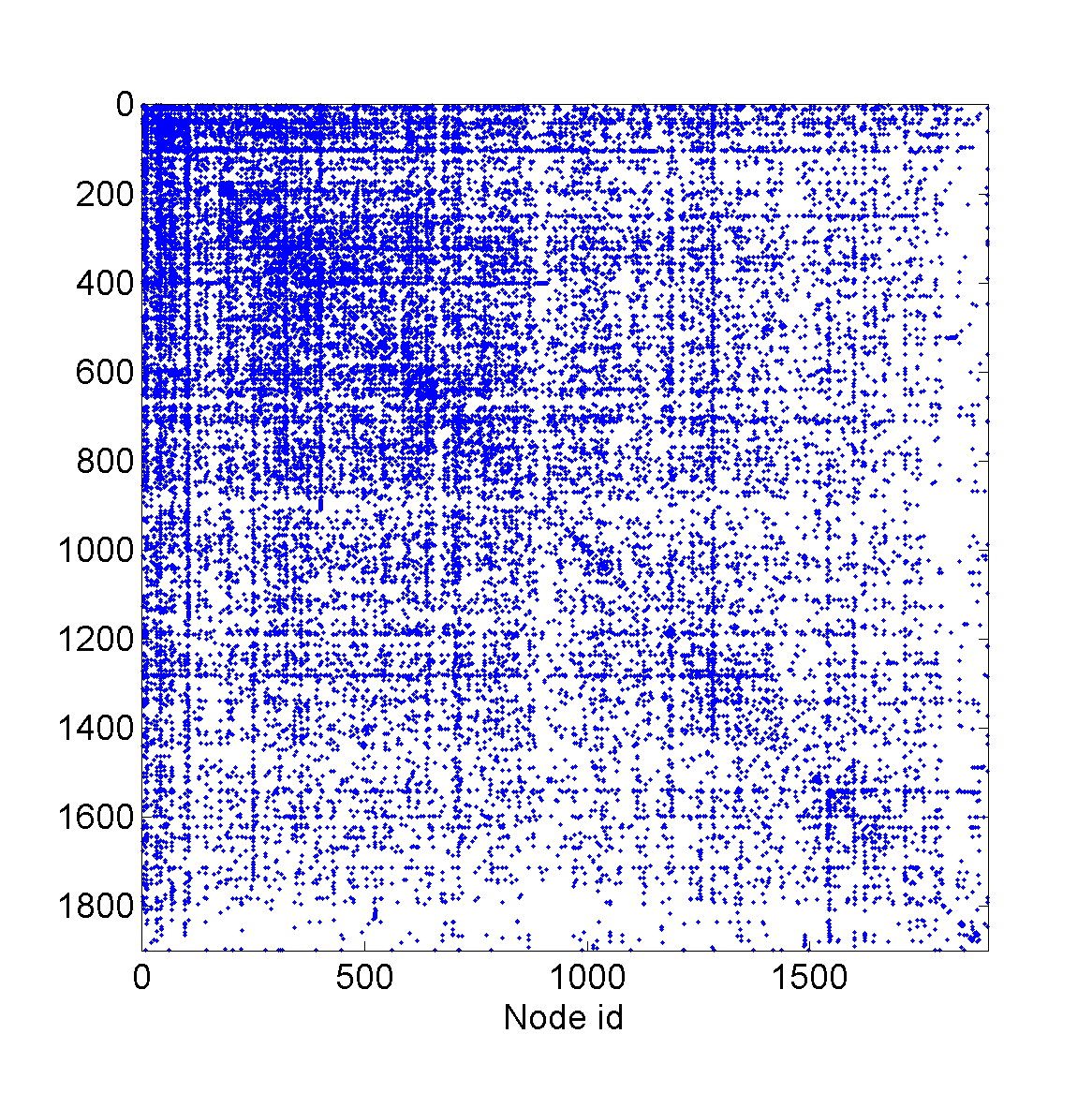}}
\subfigure[yeast]{\includegraphics[width=.32\textwidth]{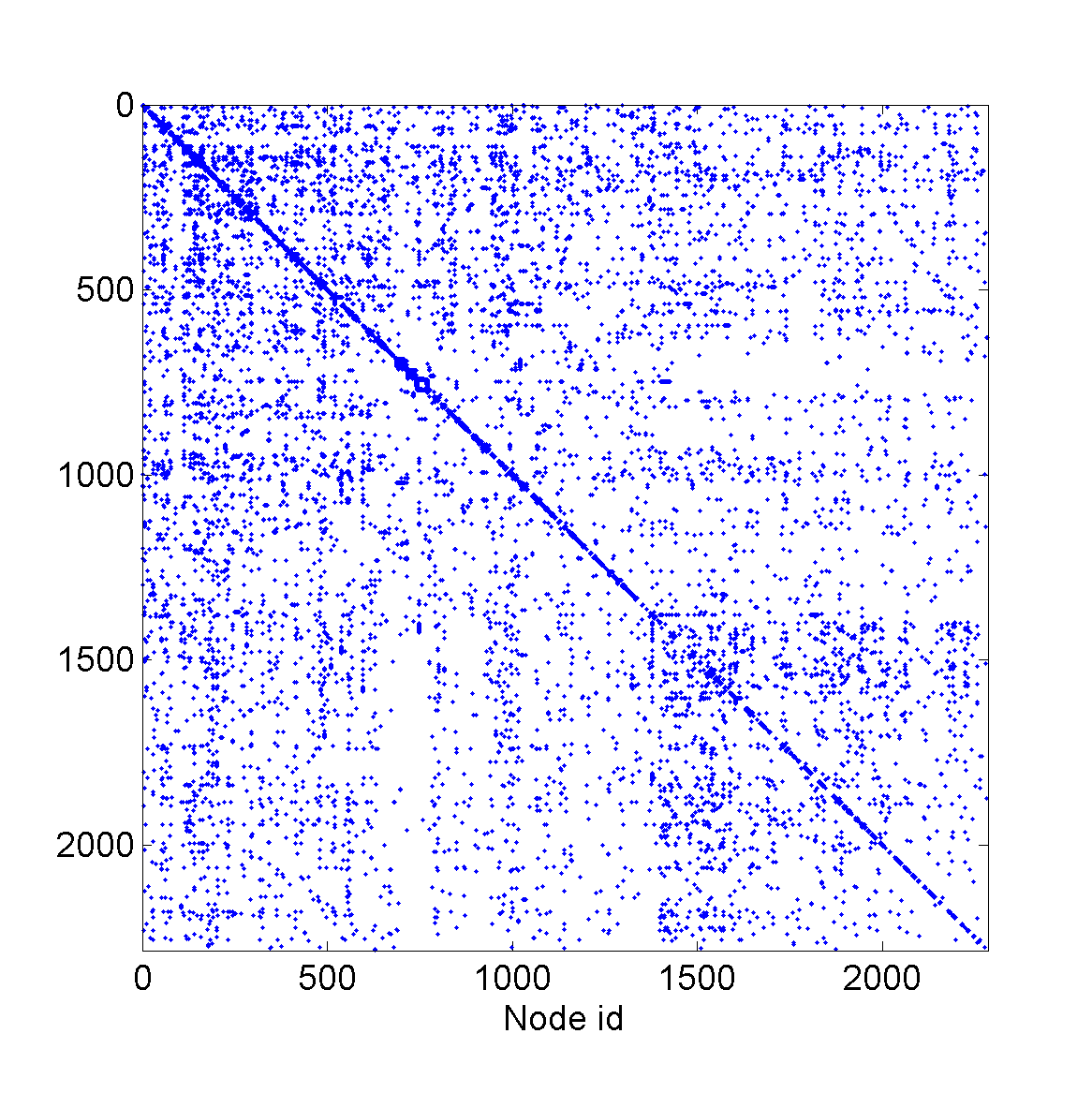}}
\subfigure[USpower]{\includegraphics[width=.32\textwidth]{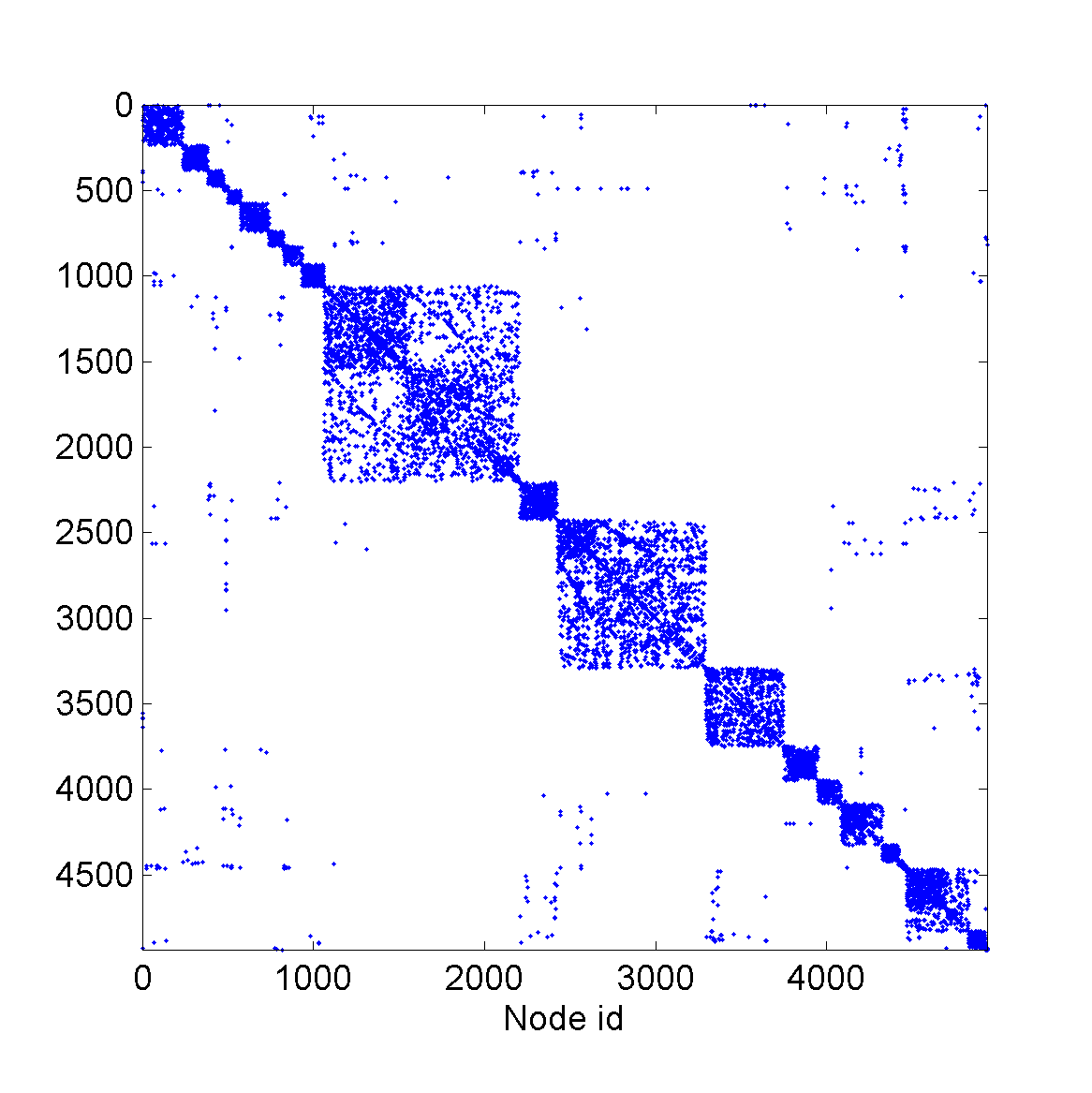}}
\subfigure[IMDB]{\includegraphics[width=.32\textwidth]{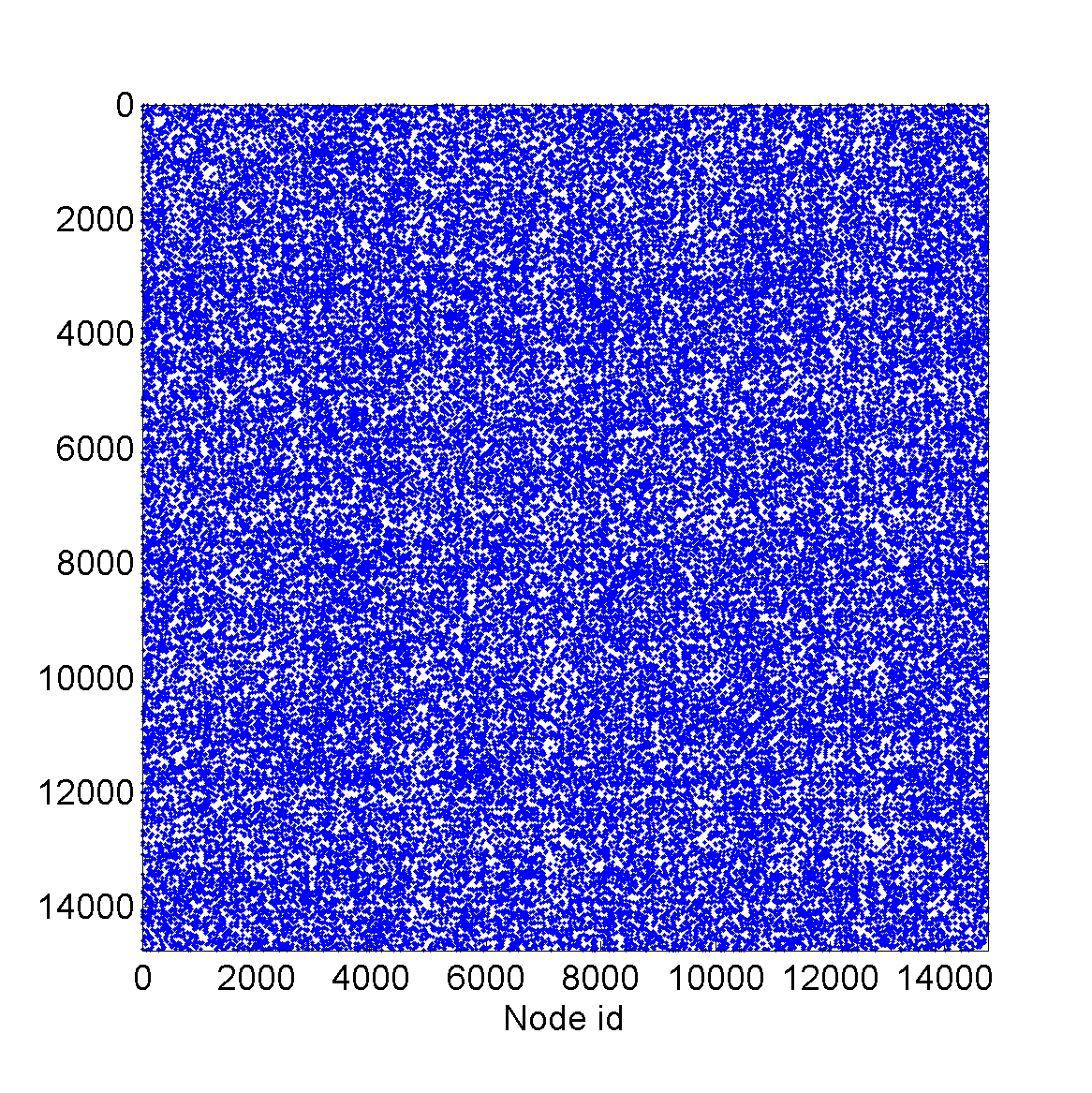}}
\subfigure[cond-mat1]{\includegraphics[width=.32\textwidth]{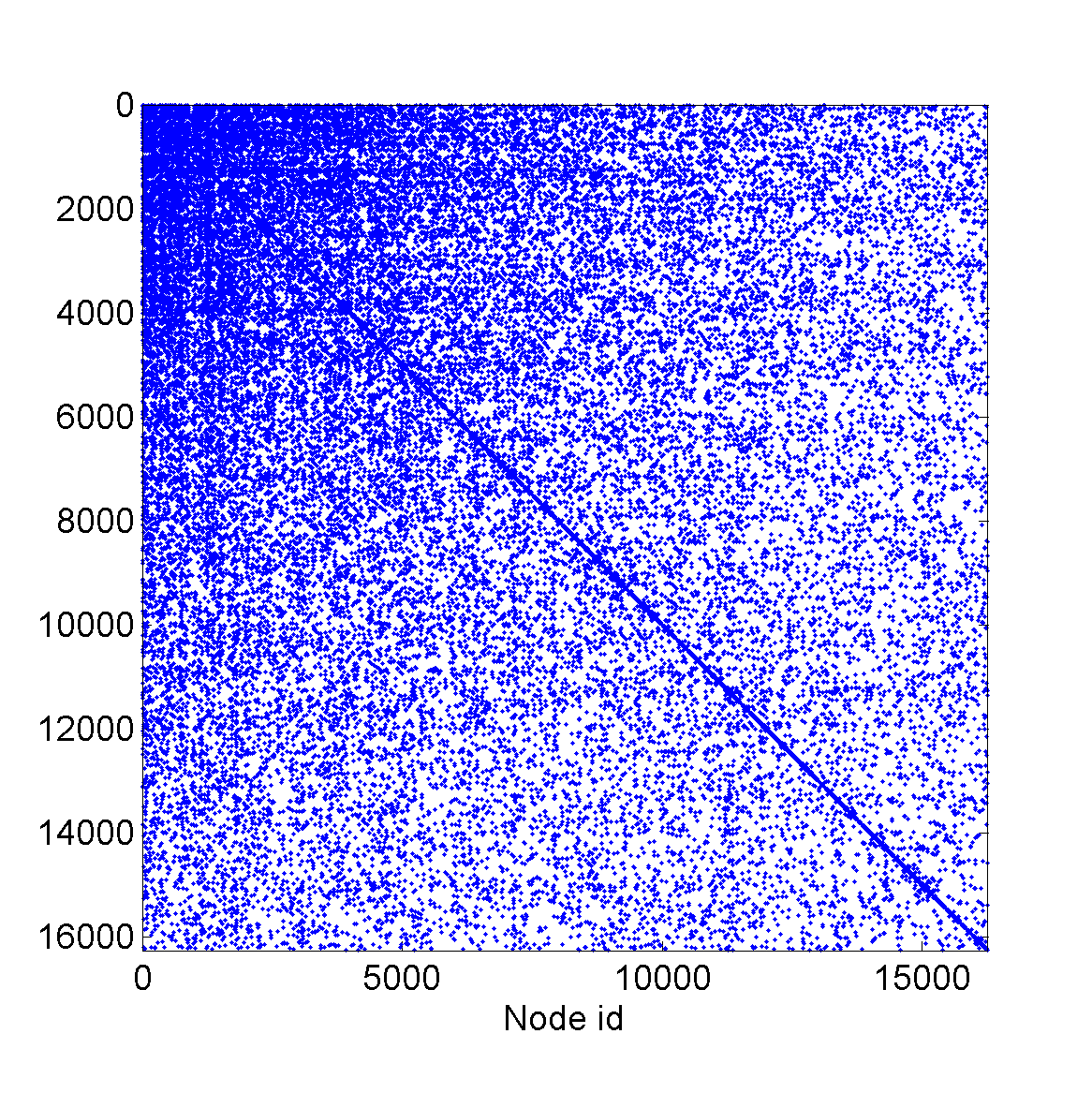}}
\subfigure[cond-mat2]{\includegraphics[width=.32\textwidth]{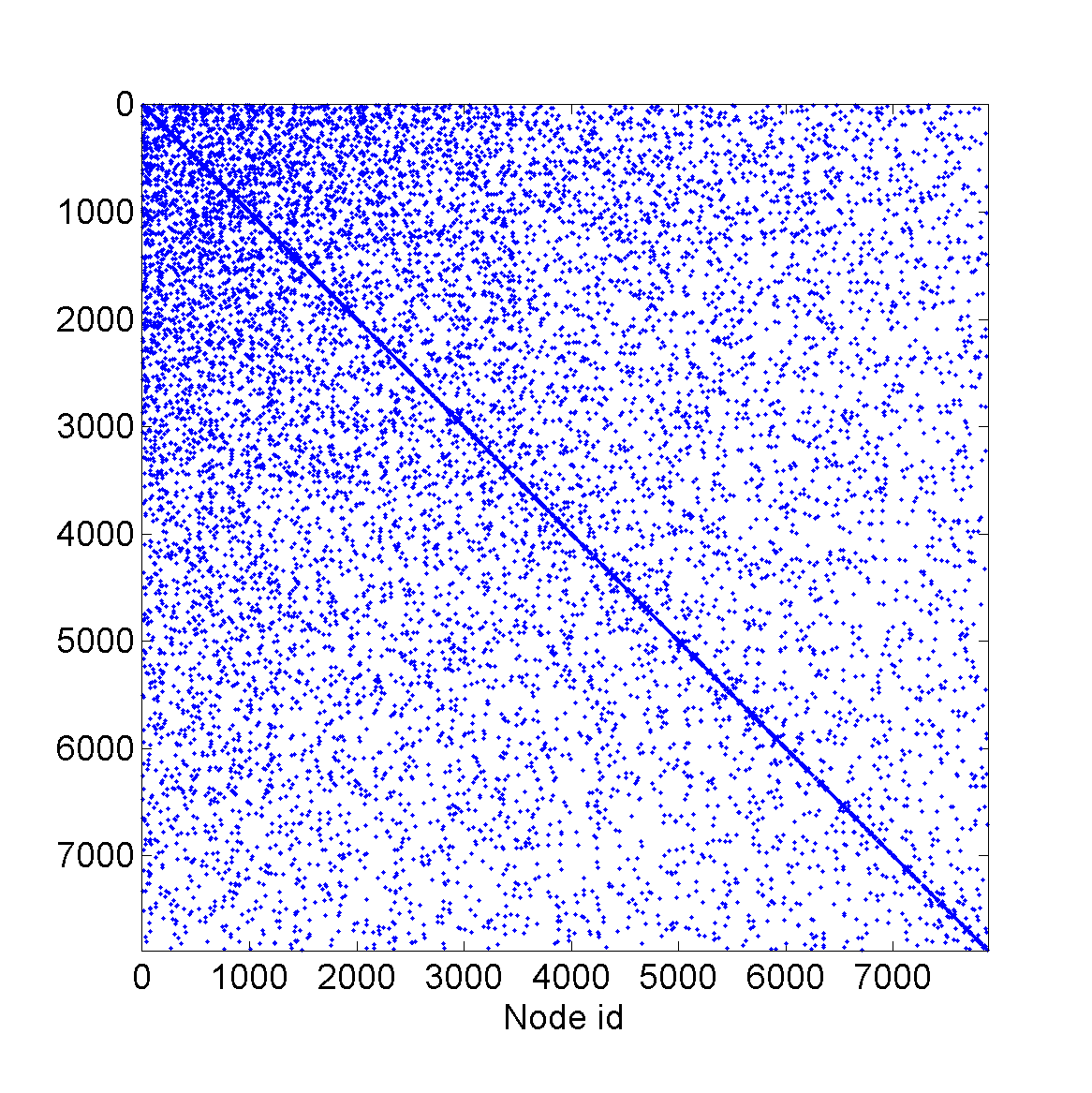}}
\subfigure[enron]{\includegraphics[width=.32\textwidth]{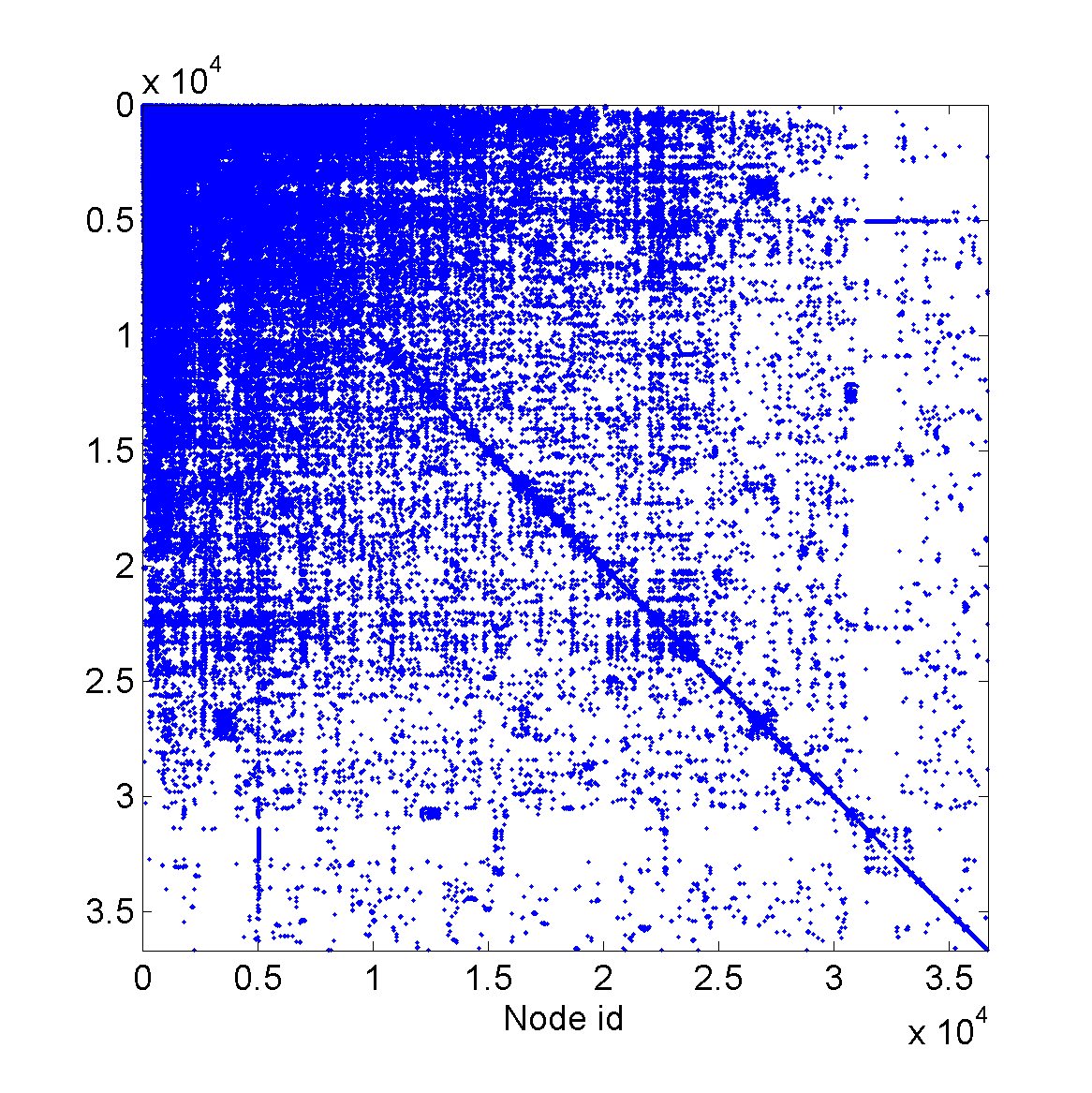}}
\subfigure[internet]{\includegraphics[width=.32\textwidth]{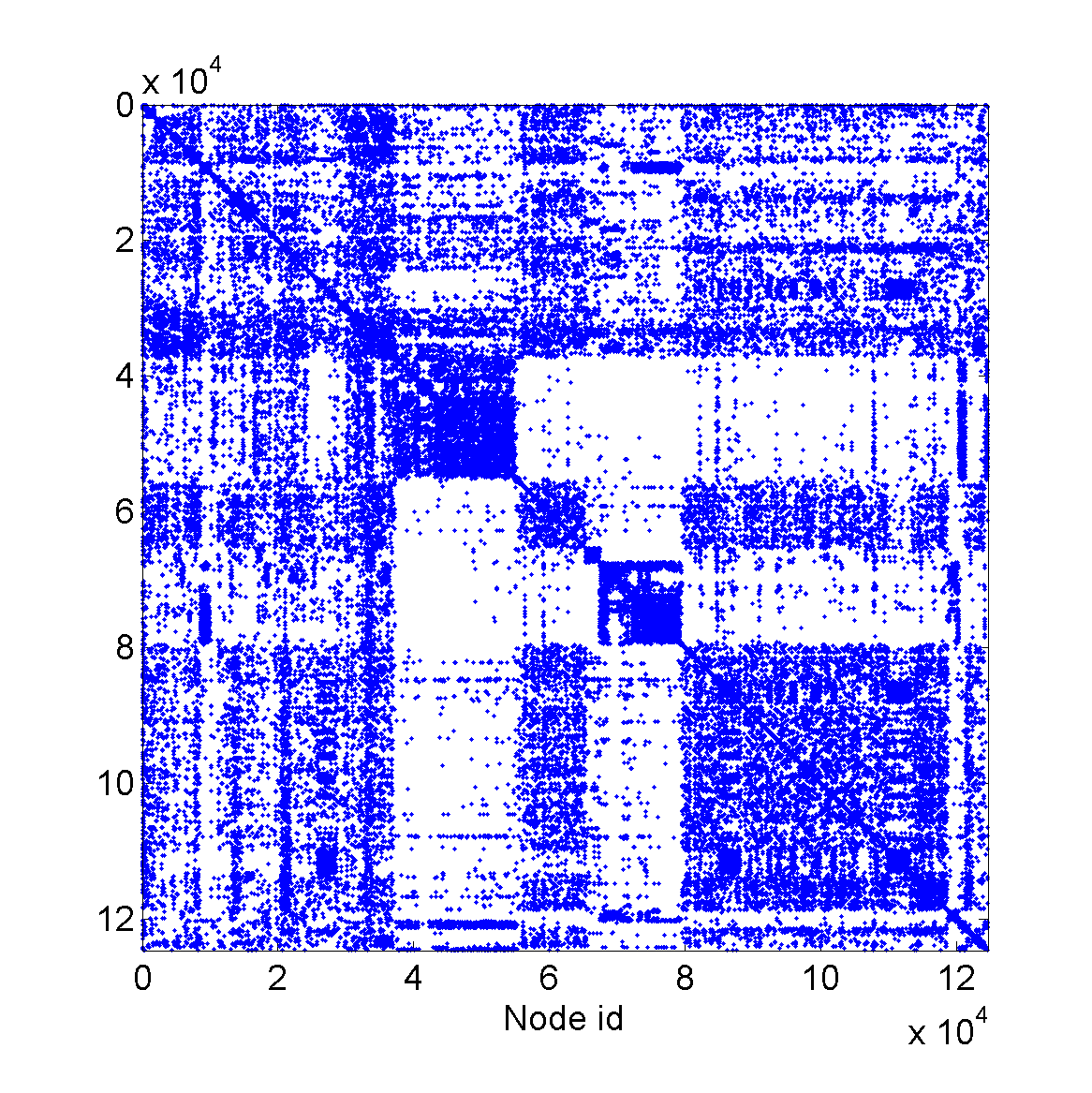}}
\subfigure[www]{\includegraphics[width=.32\textwidth]{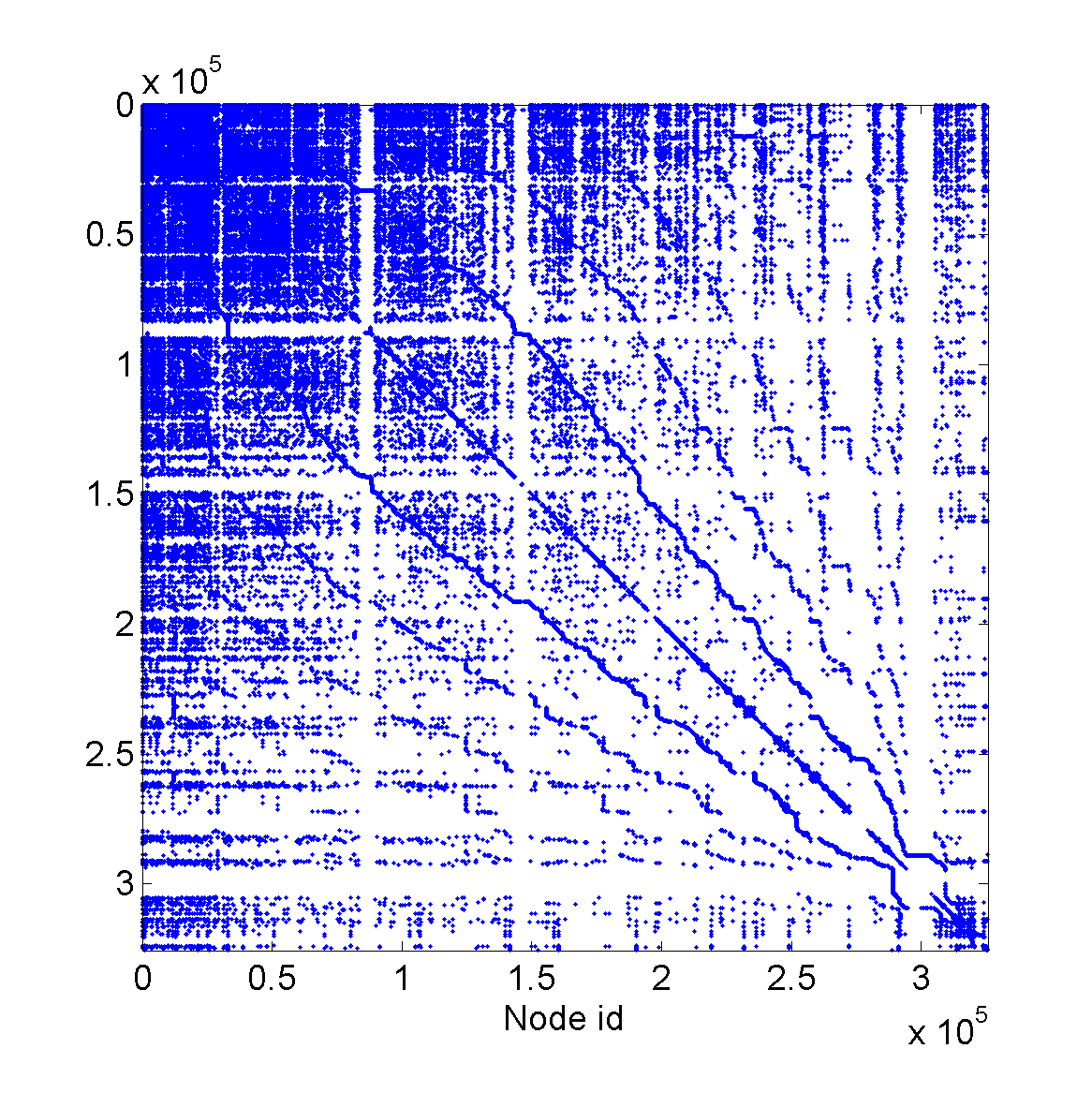}}
\end{center}
\caption{Adjacency matrices for various real-world networks.}%
\label{fig:realG}%
\end{figure}

\begin{figure}[ptb]
\begin{center}%
\subfigure[facebook107]{\includegraphics[width=.32\textwidth]{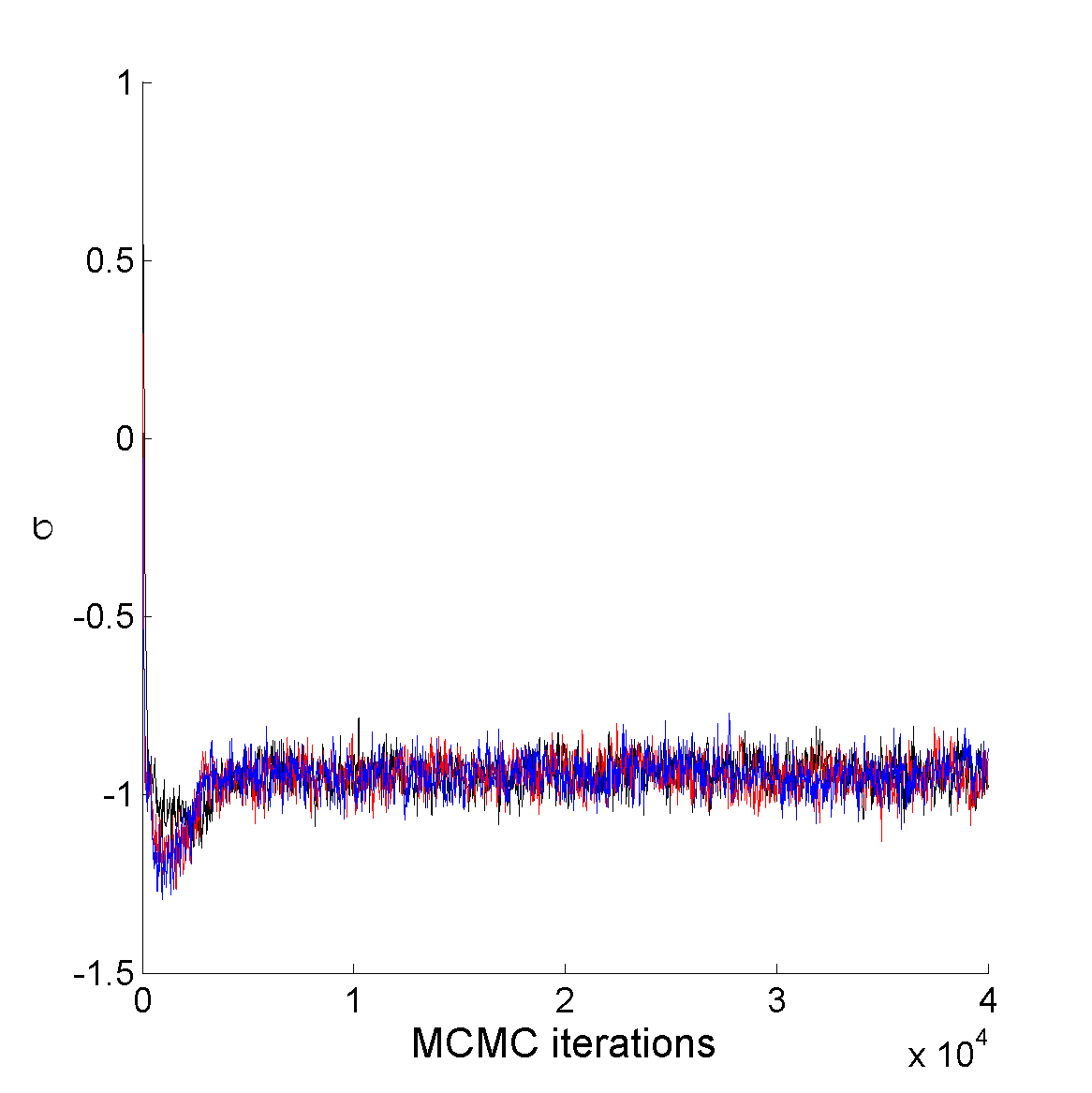}}
\subfigure[polblogs]{\includegraphics[width=.32\textwidth]{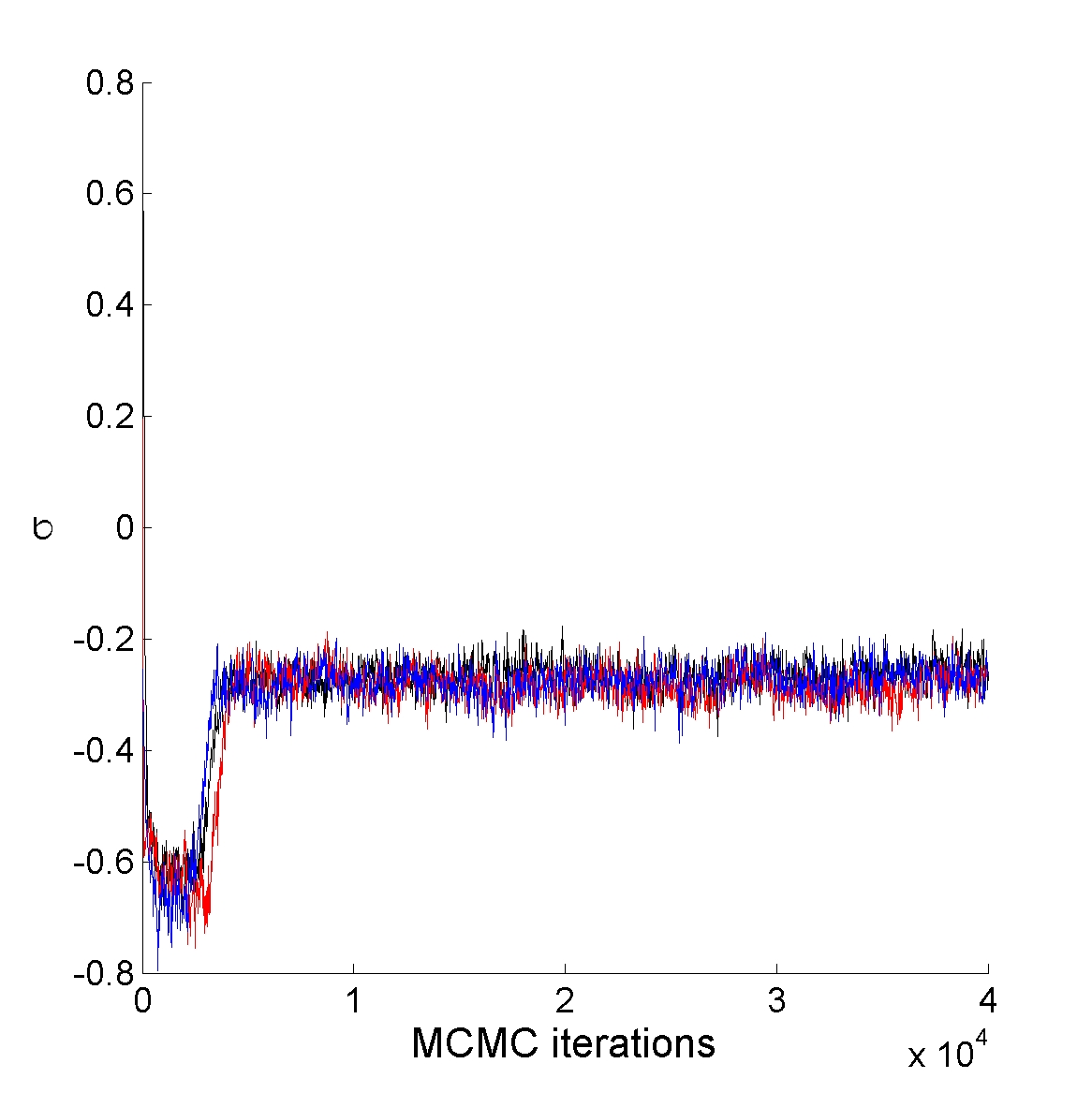}}
\subfigure[USairport]{\includegraphics[width=.32\textwidth]{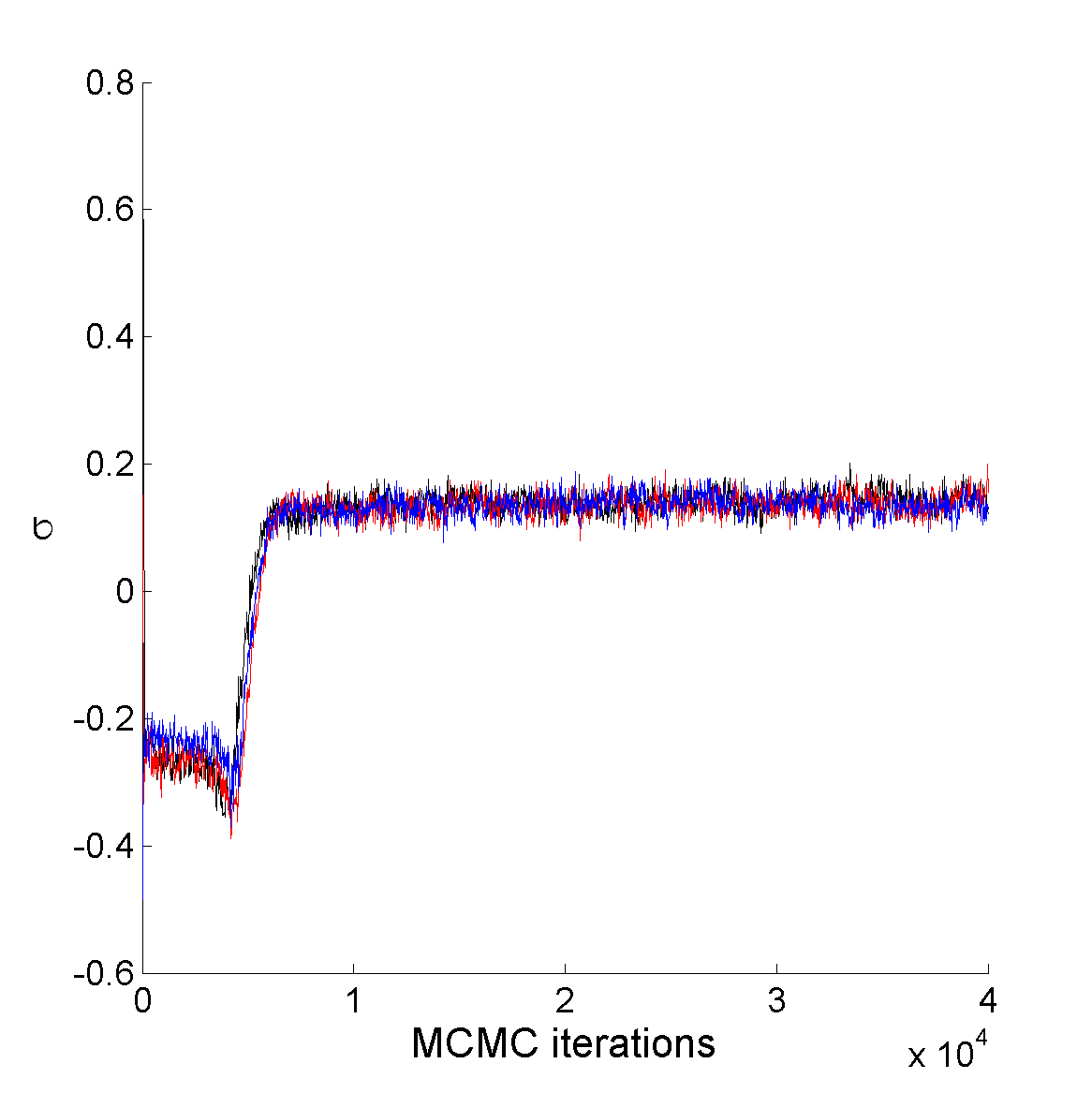}}
\subfigure[UCirvine]{\includegraphics[width=.32\textwidth]{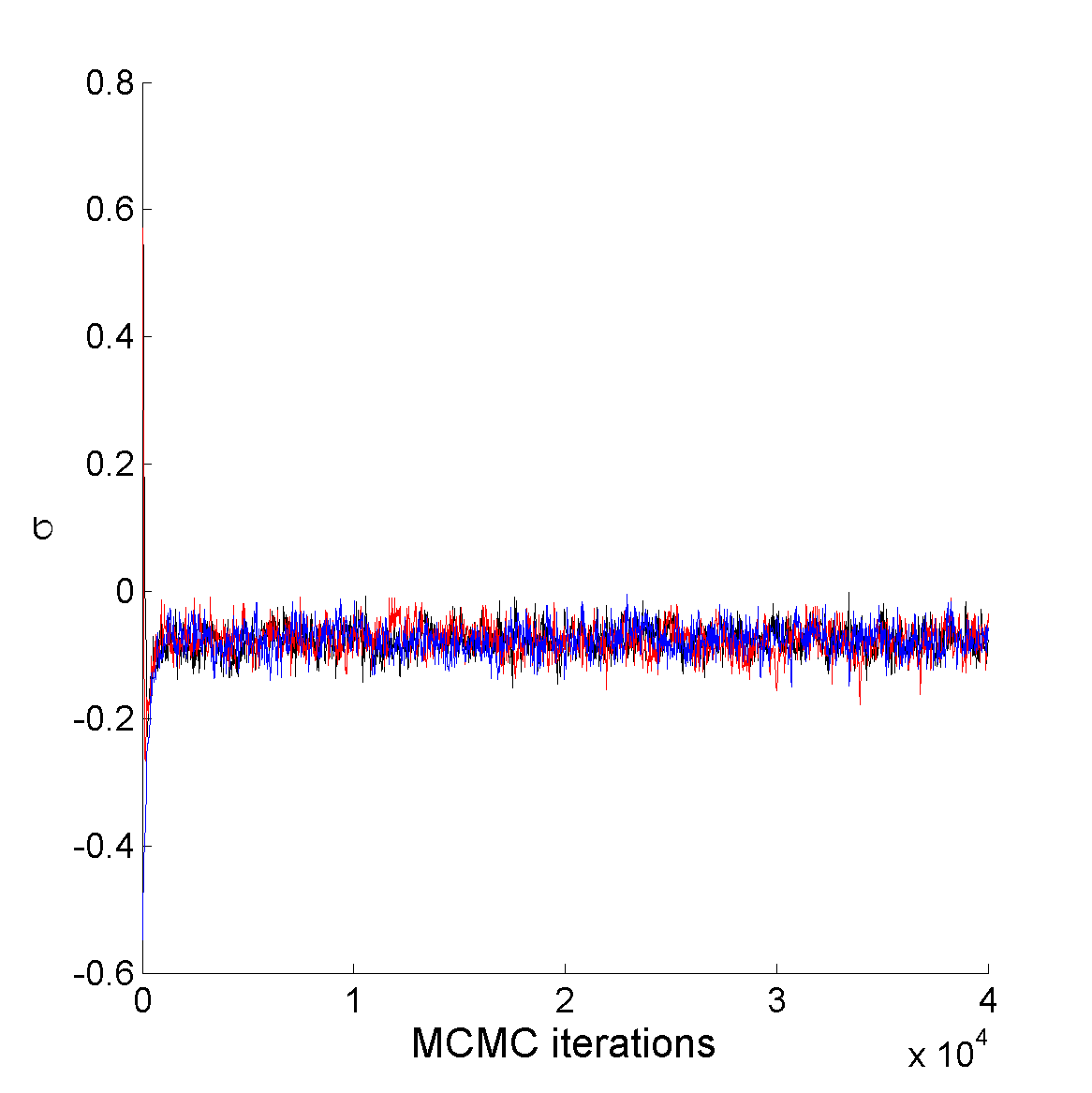}}
\subfigure[yeast]{\includegraphics[width=.32\textwidth]{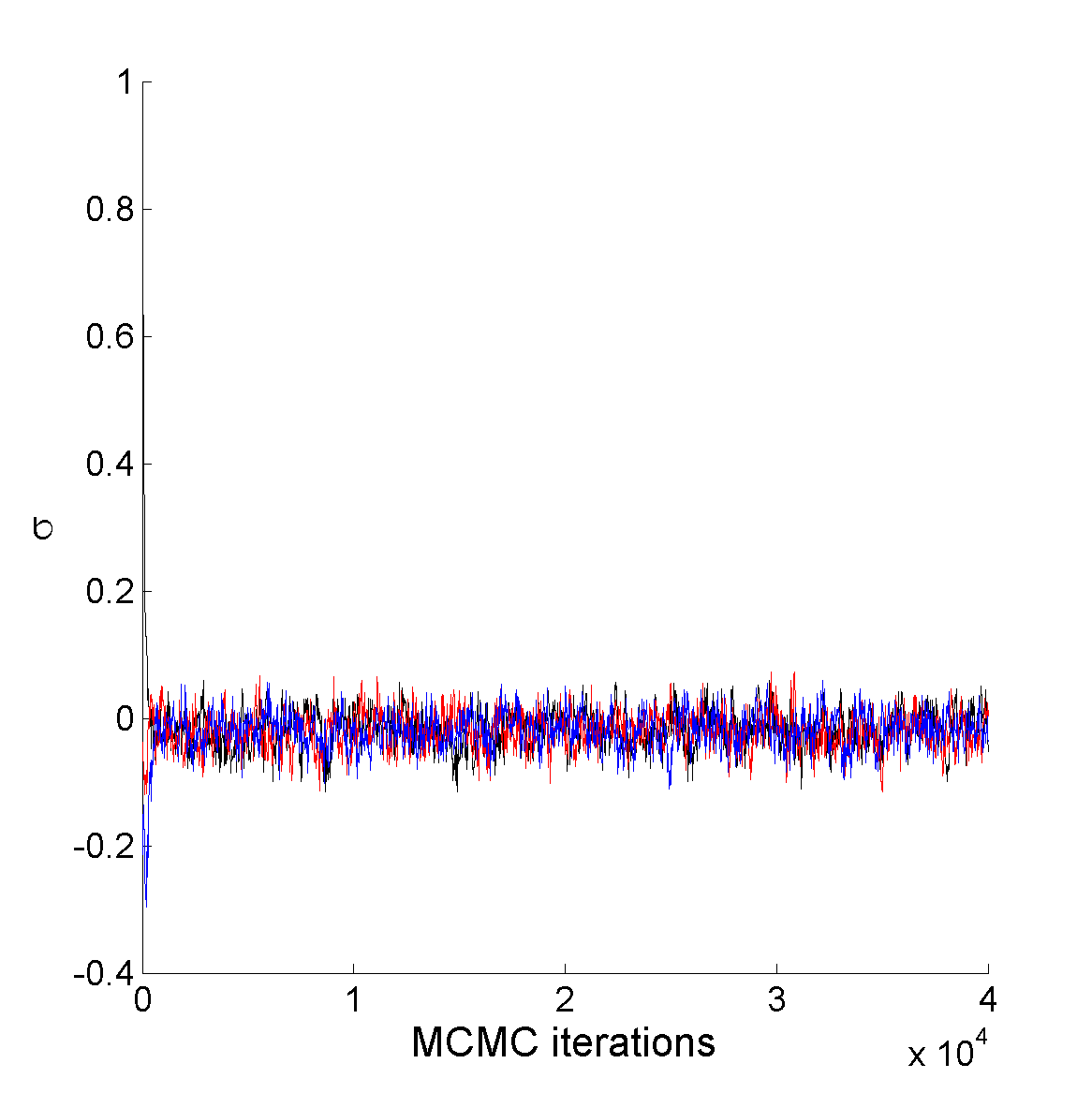}}
\subfigure[USpower]{\includegraphics[width=.32\textwidth]{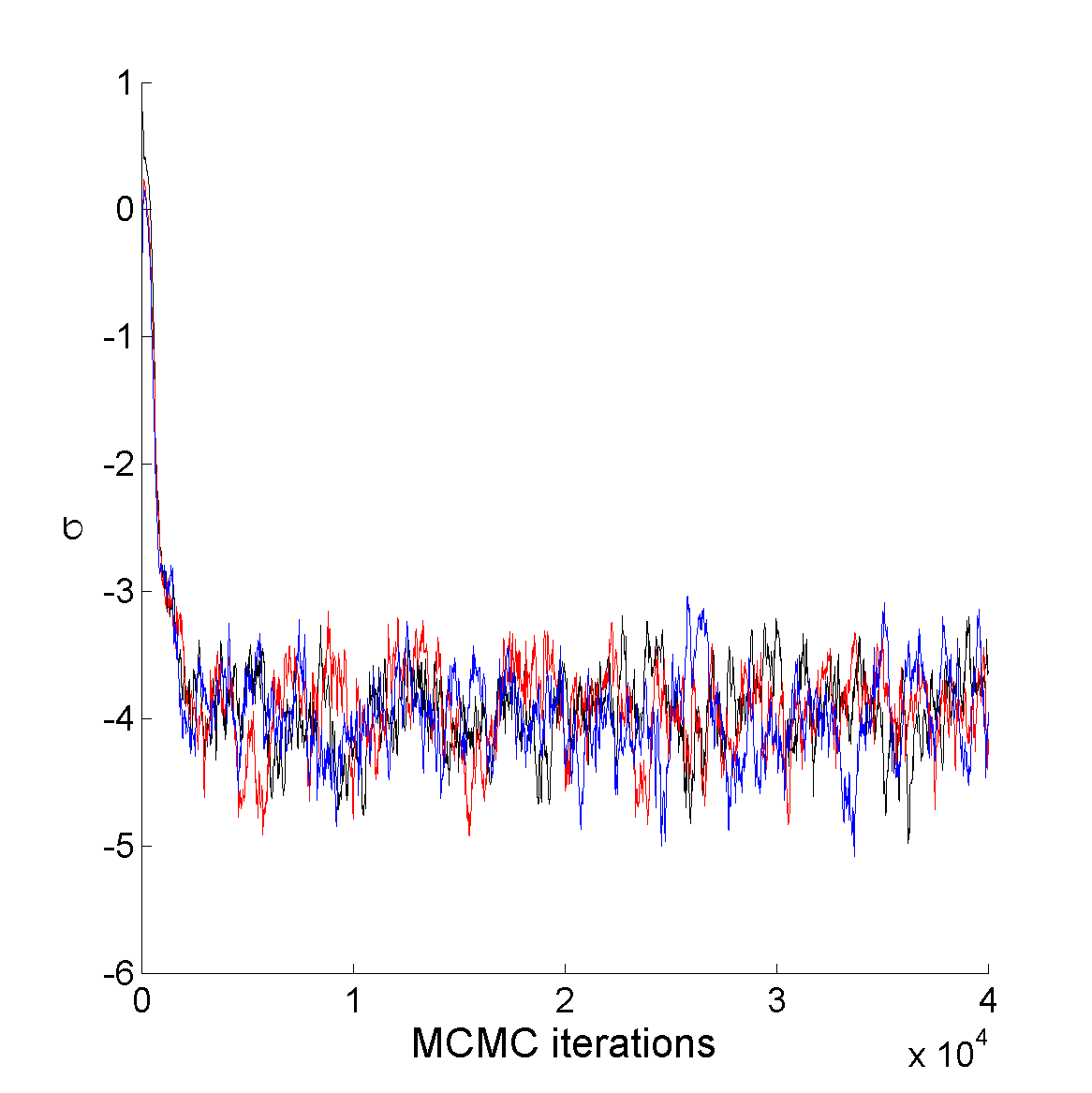}}
\subfigure[IMDB]{\includegraphics[width=.32\textwidth]{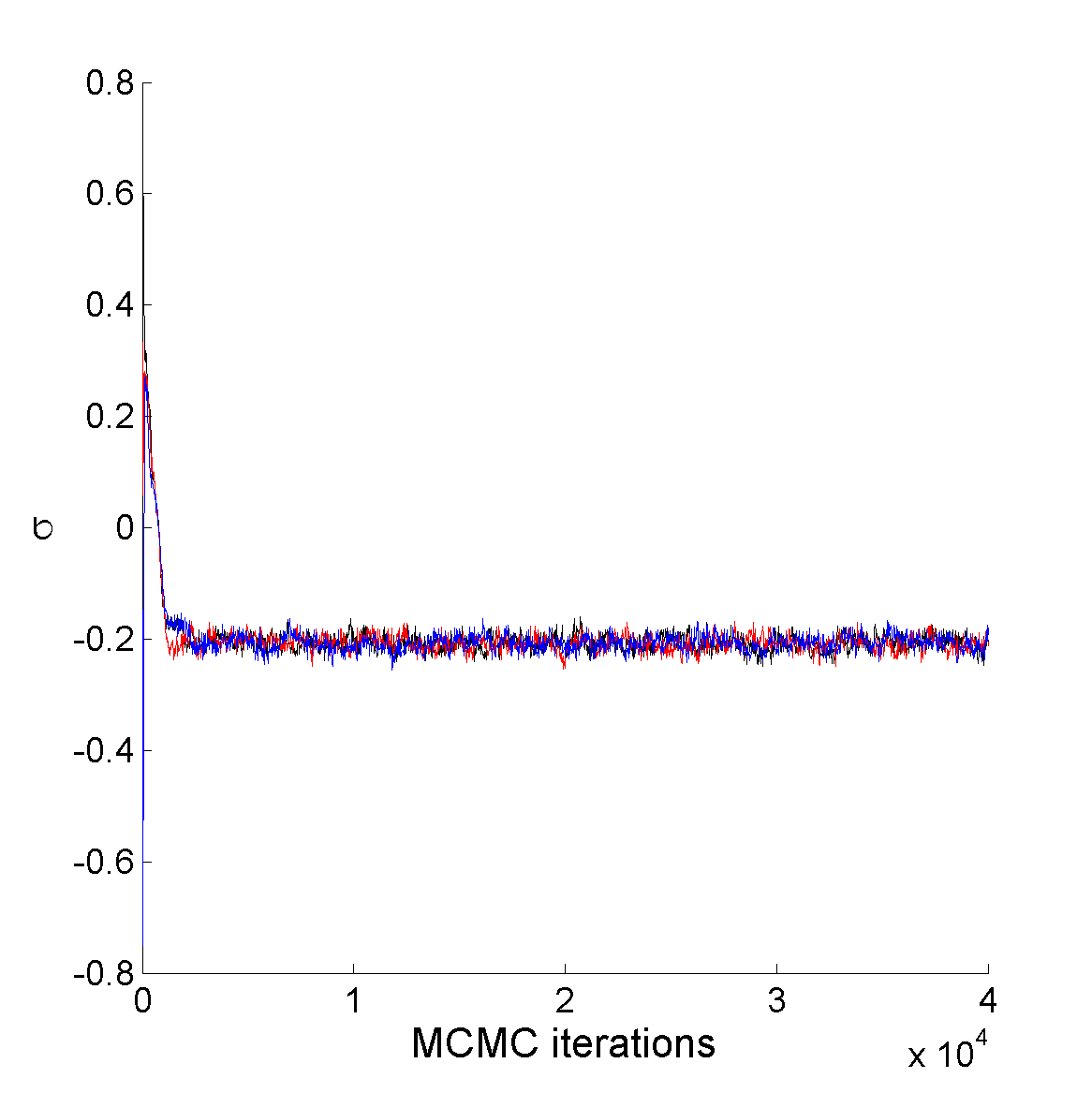}}
\subfigure[cond-mat1]{\includegraphics[width=.32\textwidth]{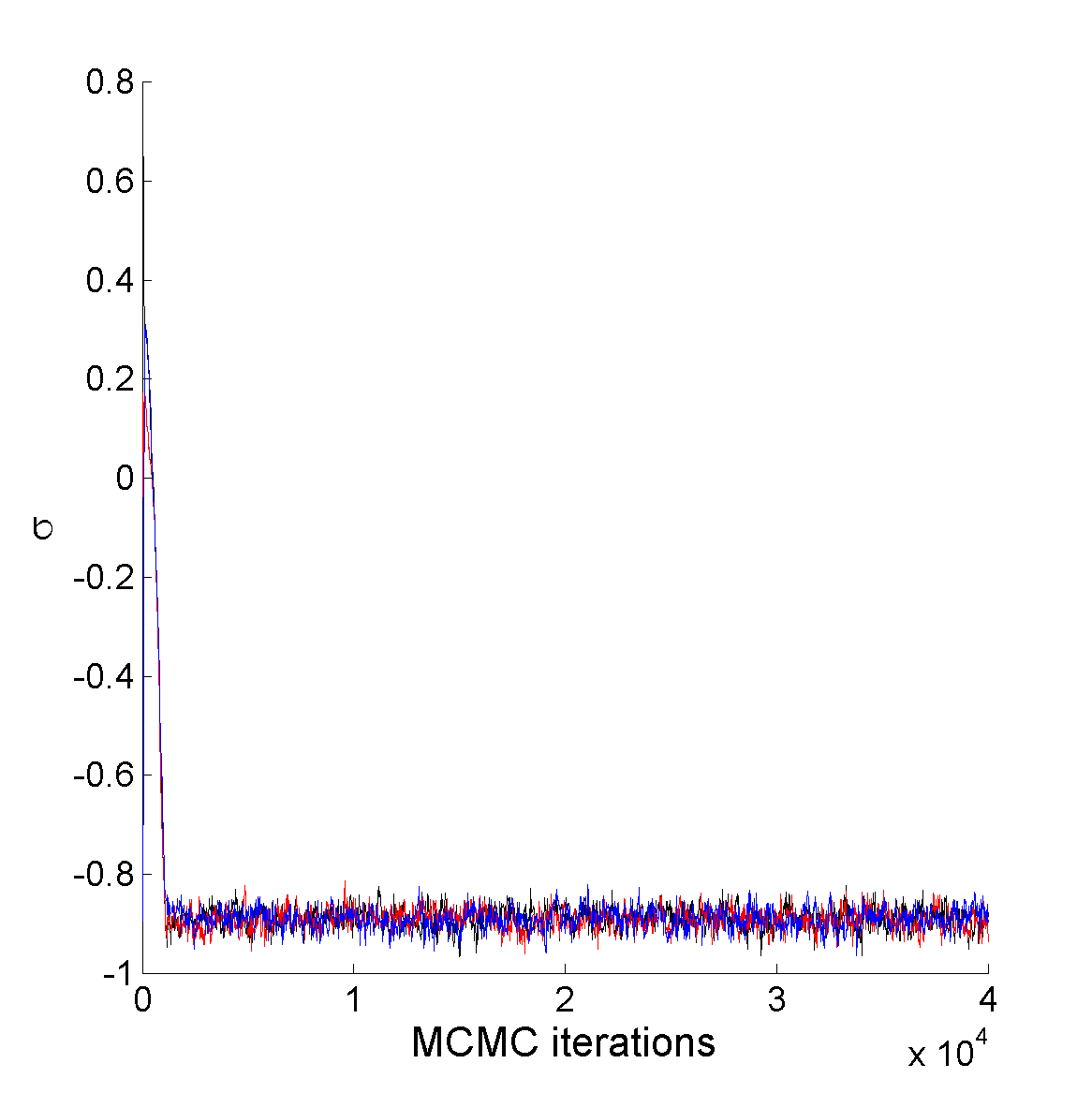}}
\subfigure[cond-mat2]{\includegraphics[width=.32\textwidth]{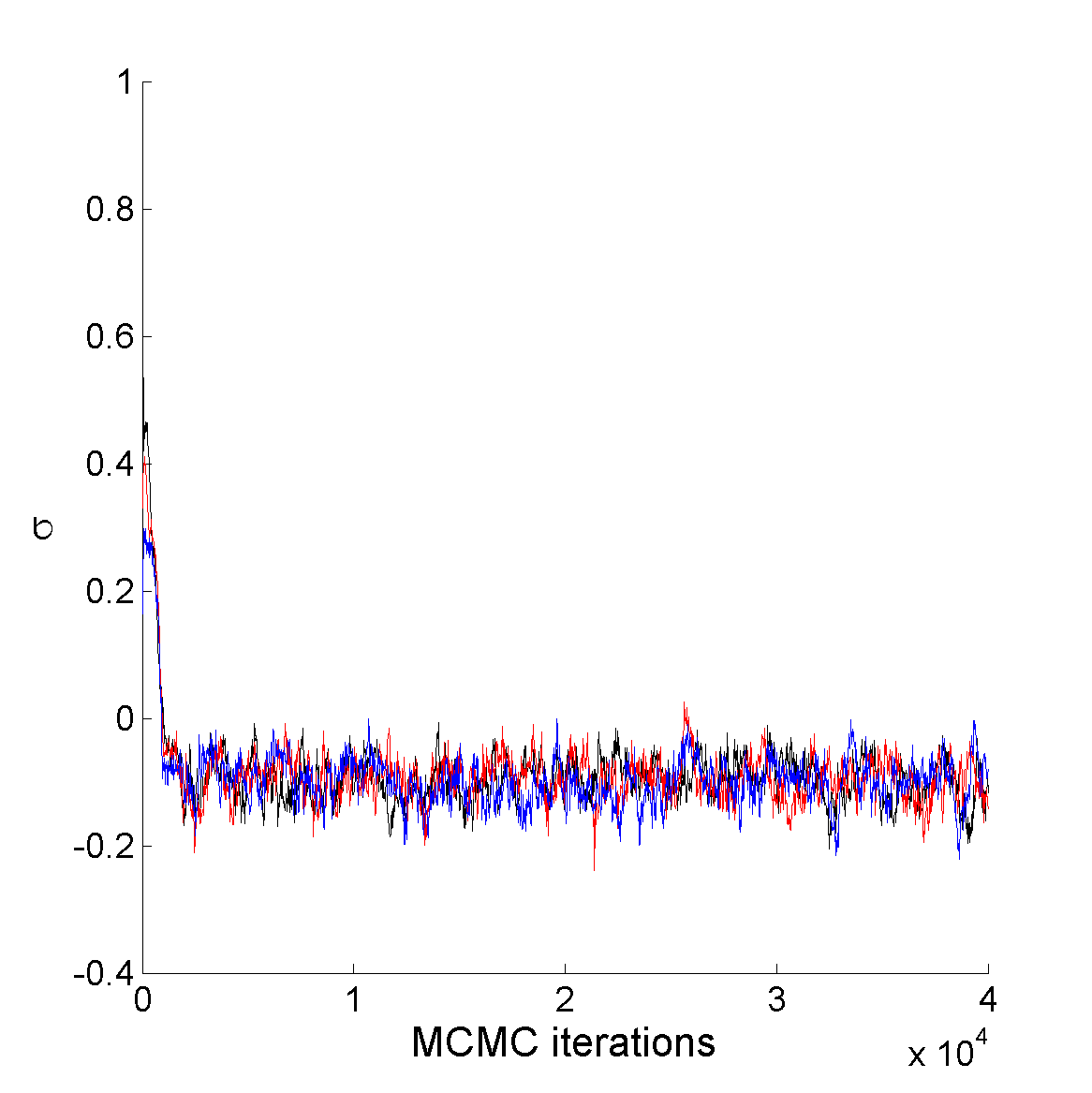}}
\subfigure[enron]{\includegraphics[width=.32\textwidth]{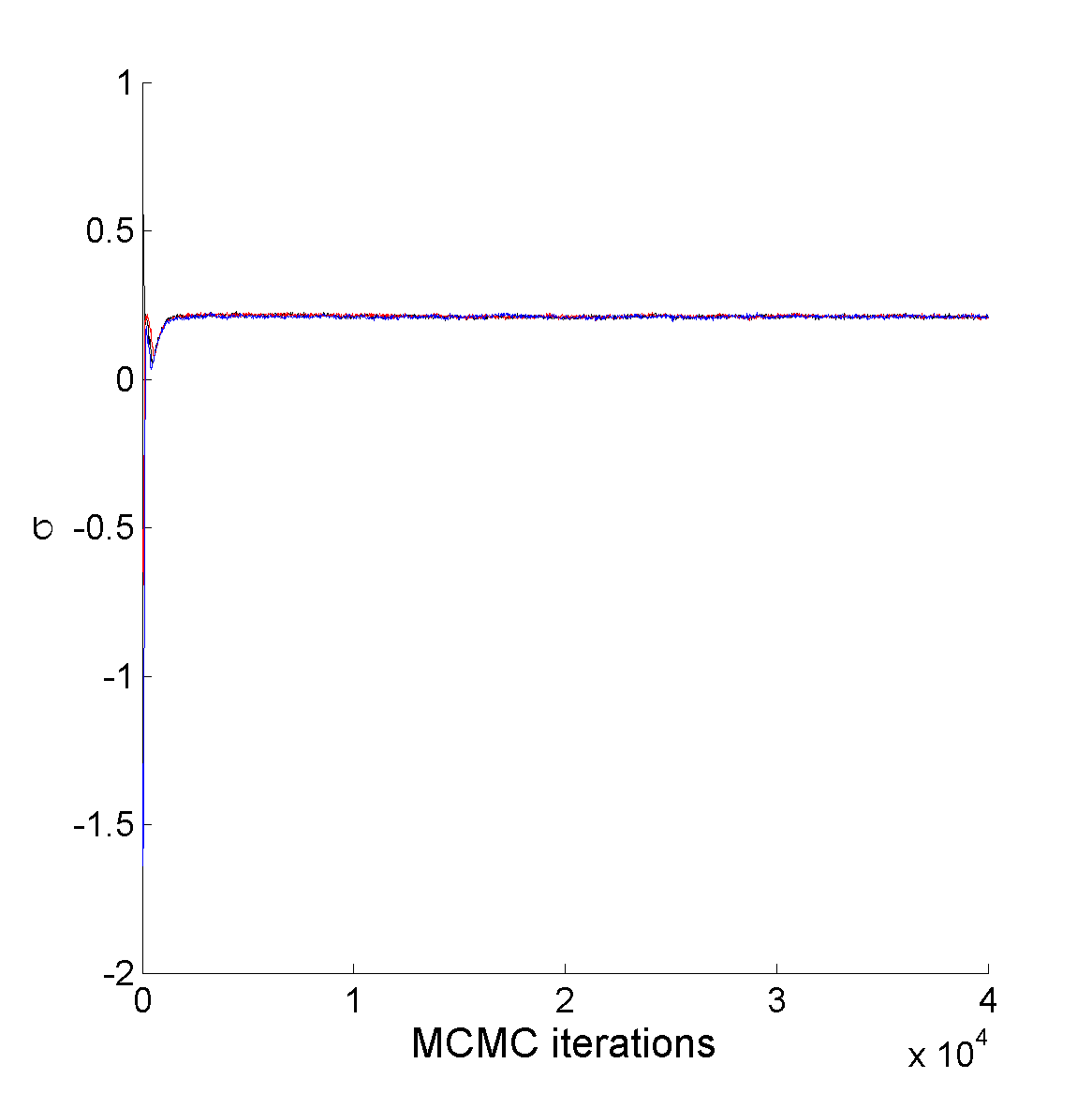}}
\subfigure[internet]{\includegraphics[width=.32\textwidth]{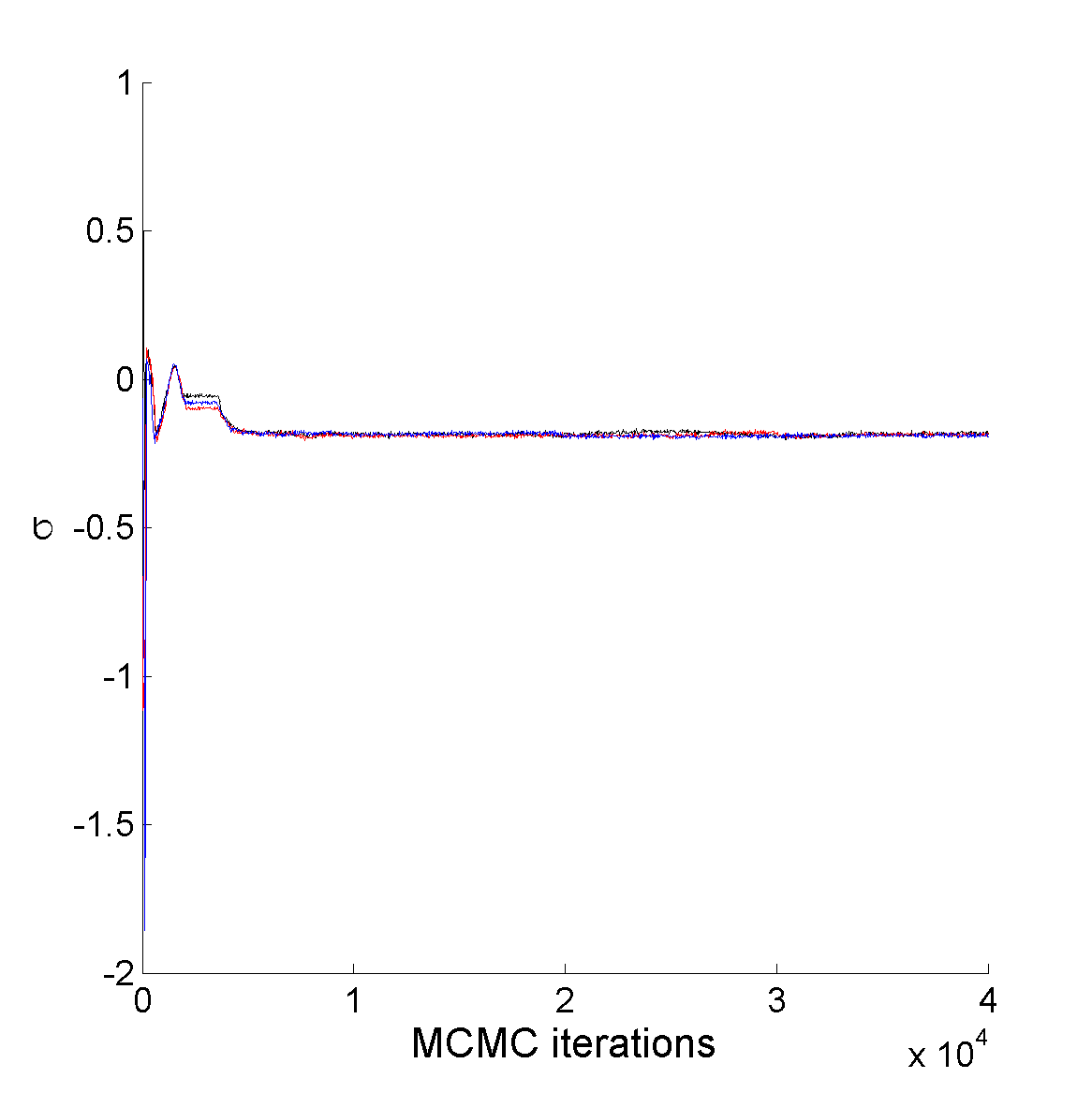}}
\subfigure[www]{\includegraphics[width=.32\textwidth]{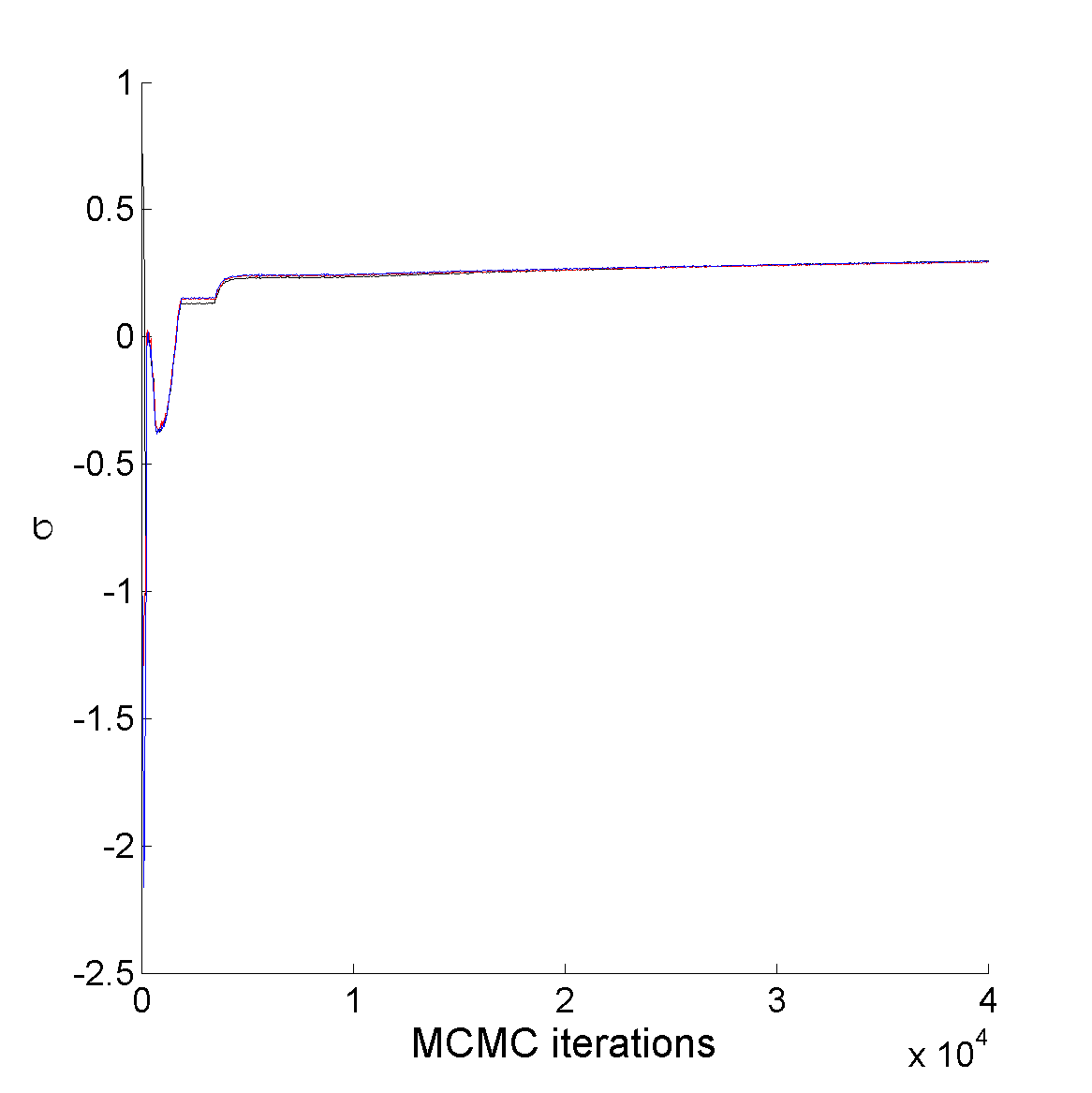}}
\end{center}
\caption{MCMC trace plot for the parameter $\sigma$ for various real-world networks.}%
\label{fig:realtracesigma}%
\end{figure}

\begin{figure}[ptb]
\begin{center}%
\subfigure[facebook107]{\includegraphics[width=.32\textwidth]{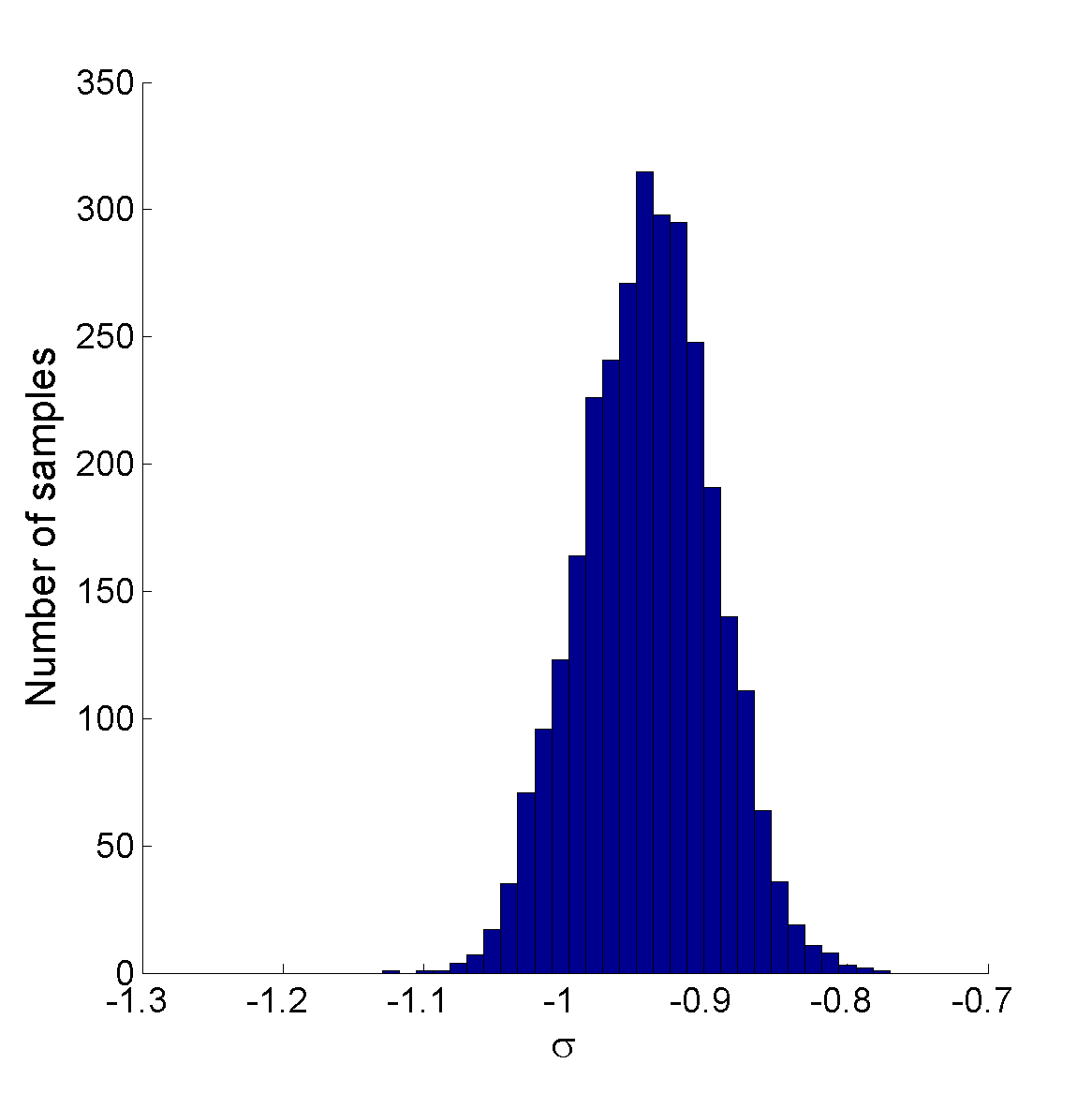}}
\subfigure[polblogs]{\includegraphics[width=.32\textwidth]{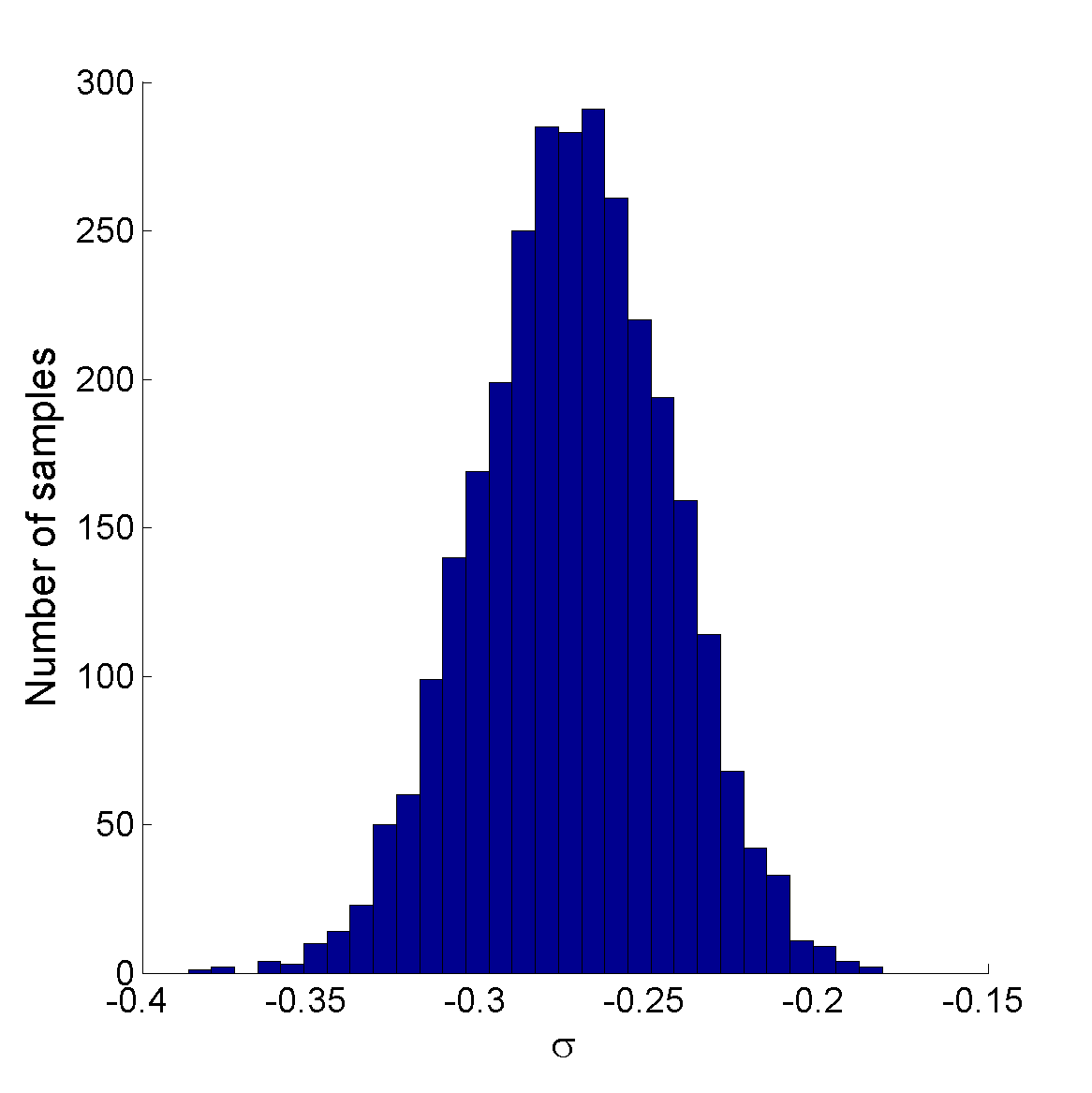}}
\subfigure[USairport]{\includegraphics[width=.32\textwidth]{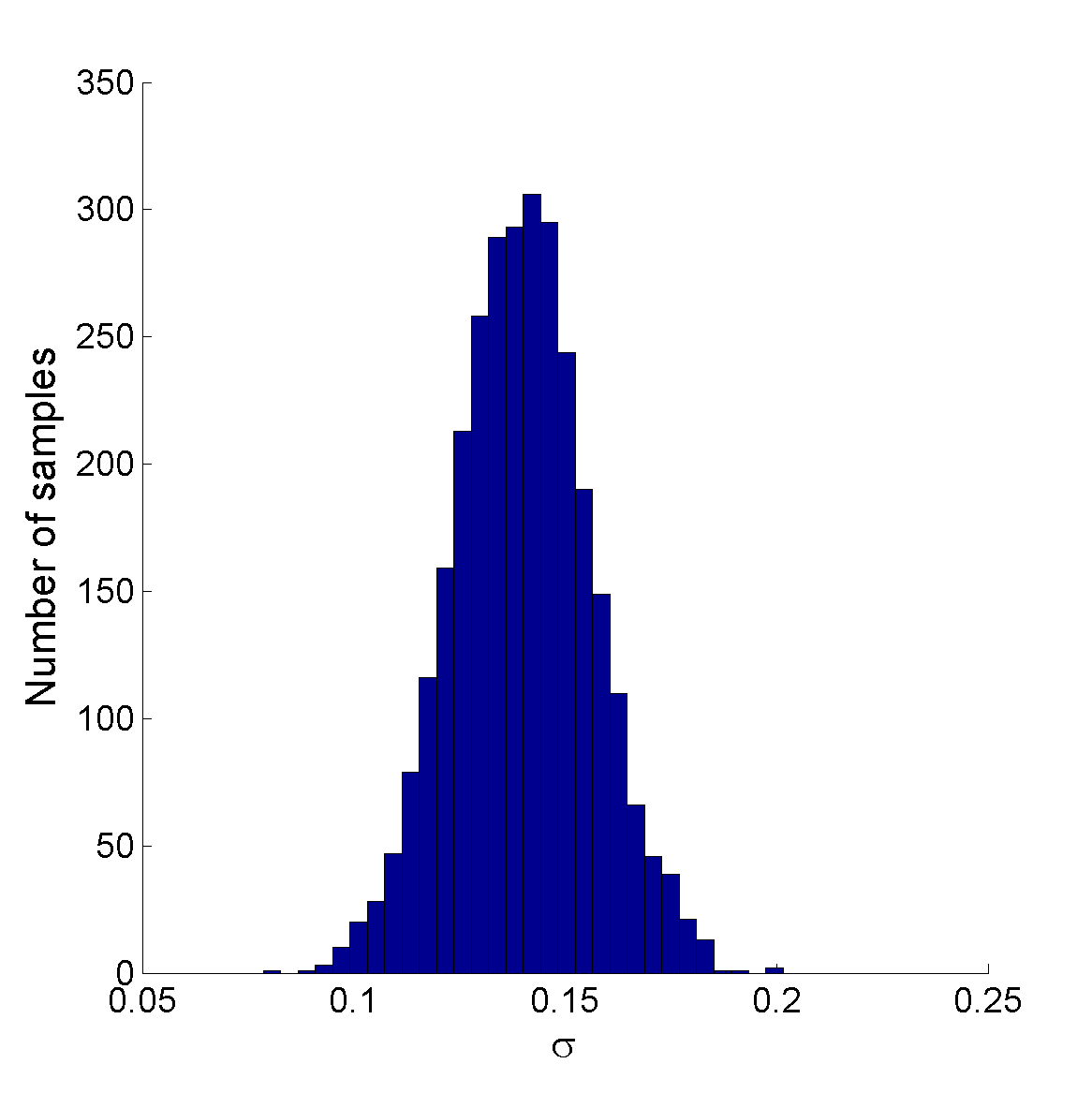}}
\subfigure[UCirvine]{\includegraphics[width=.32\textwidth]{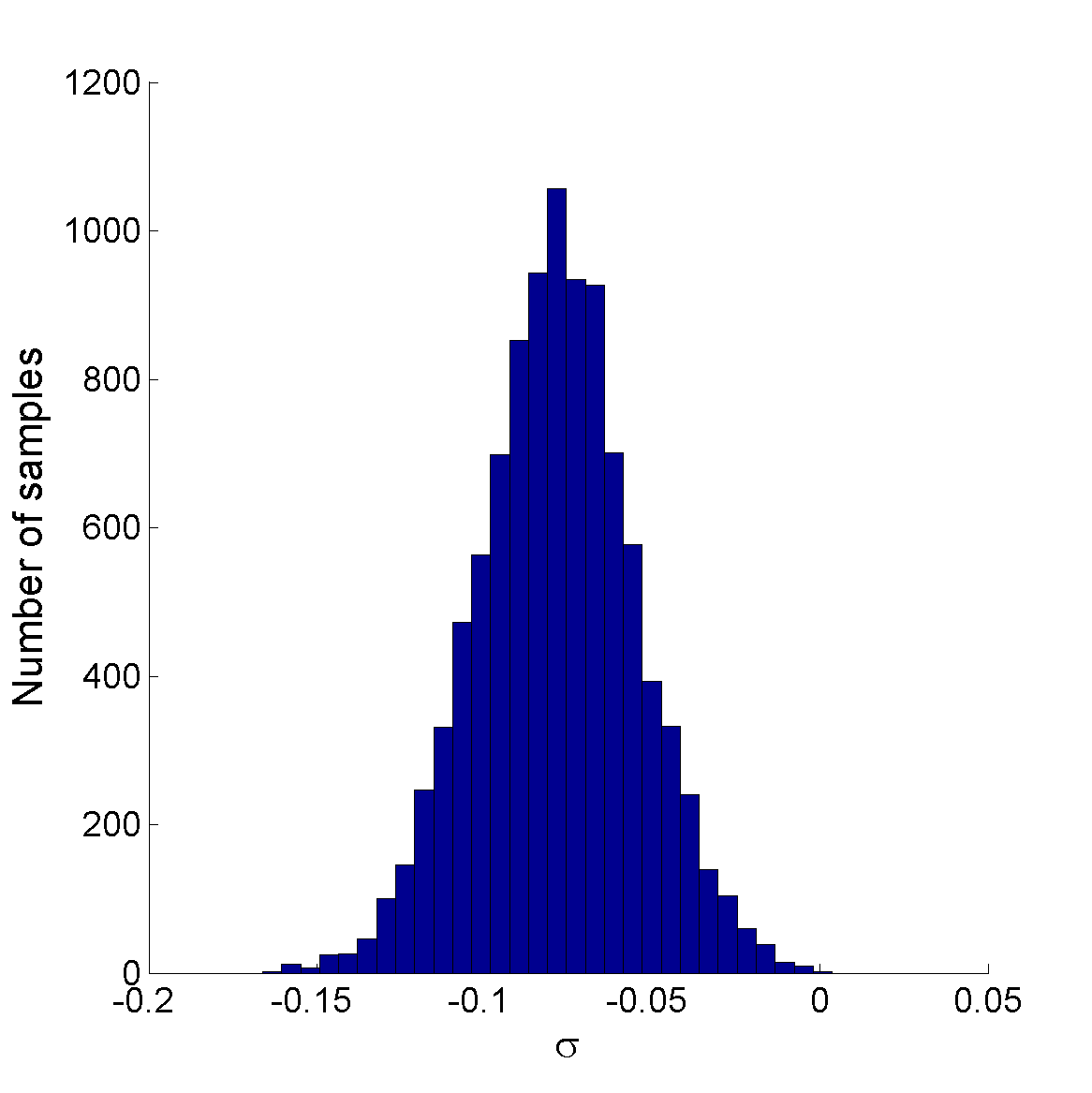}}
\subfigure[yeast]{\includegraphics[width=.32\textwidth]{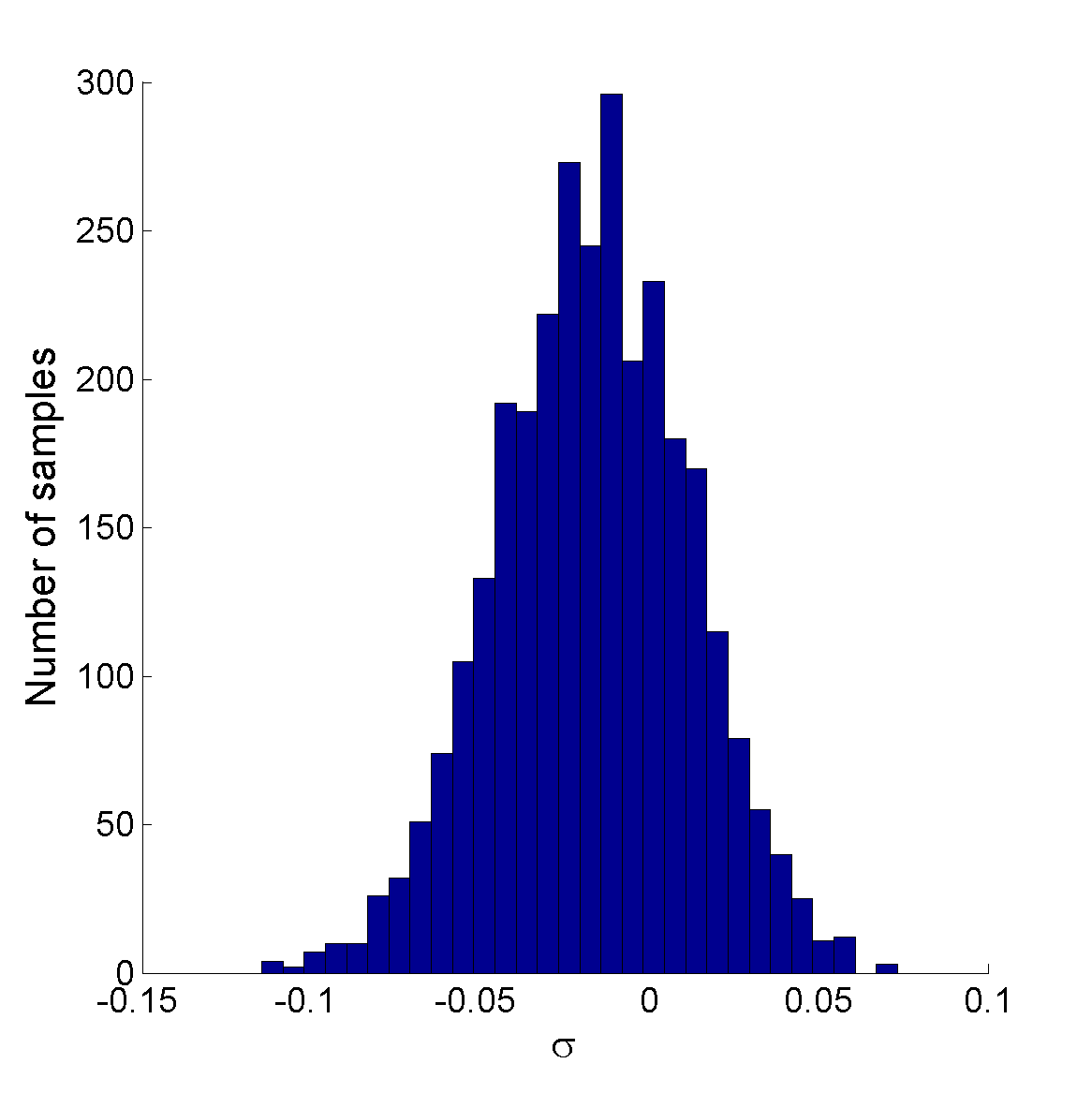}}
\subfigure[USpower]{\includegraphics[width=.32\textwidth]{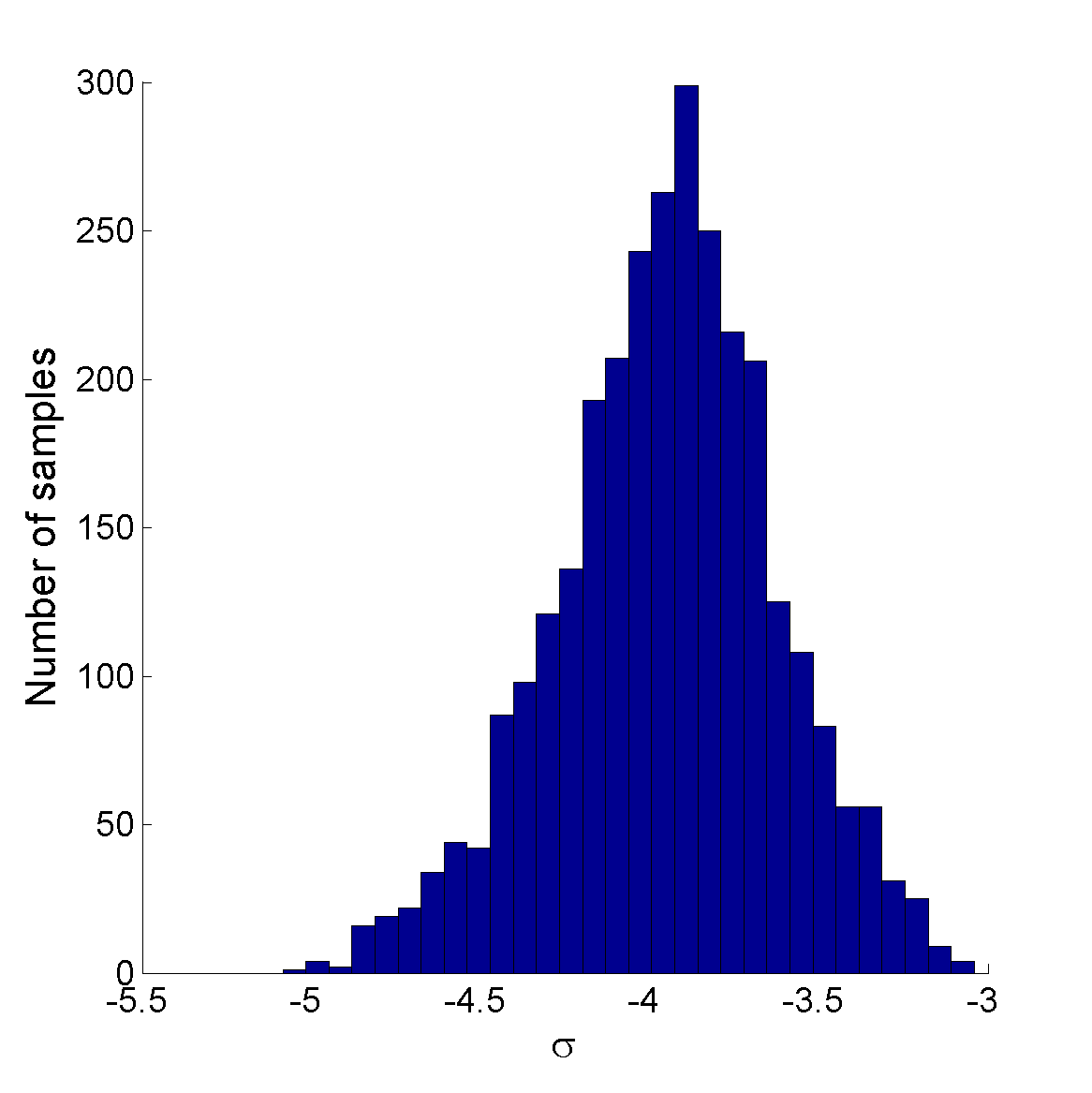}}
\subfigure[IMDB]{\includegraphics[width=.32\textwidth]{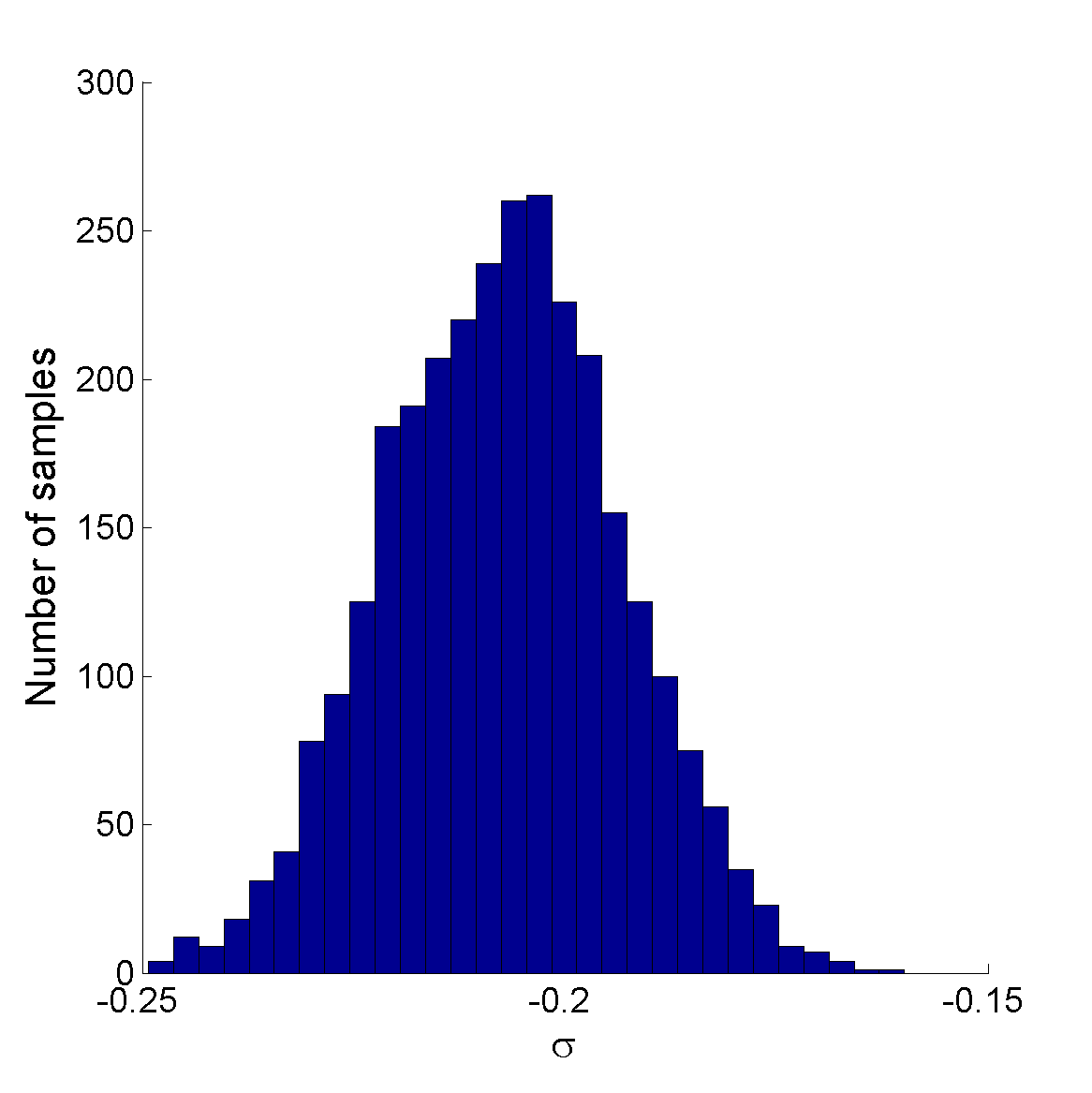}}
\subfigure[cond-mat1]{\includegraphics[width=.32\textwidth]{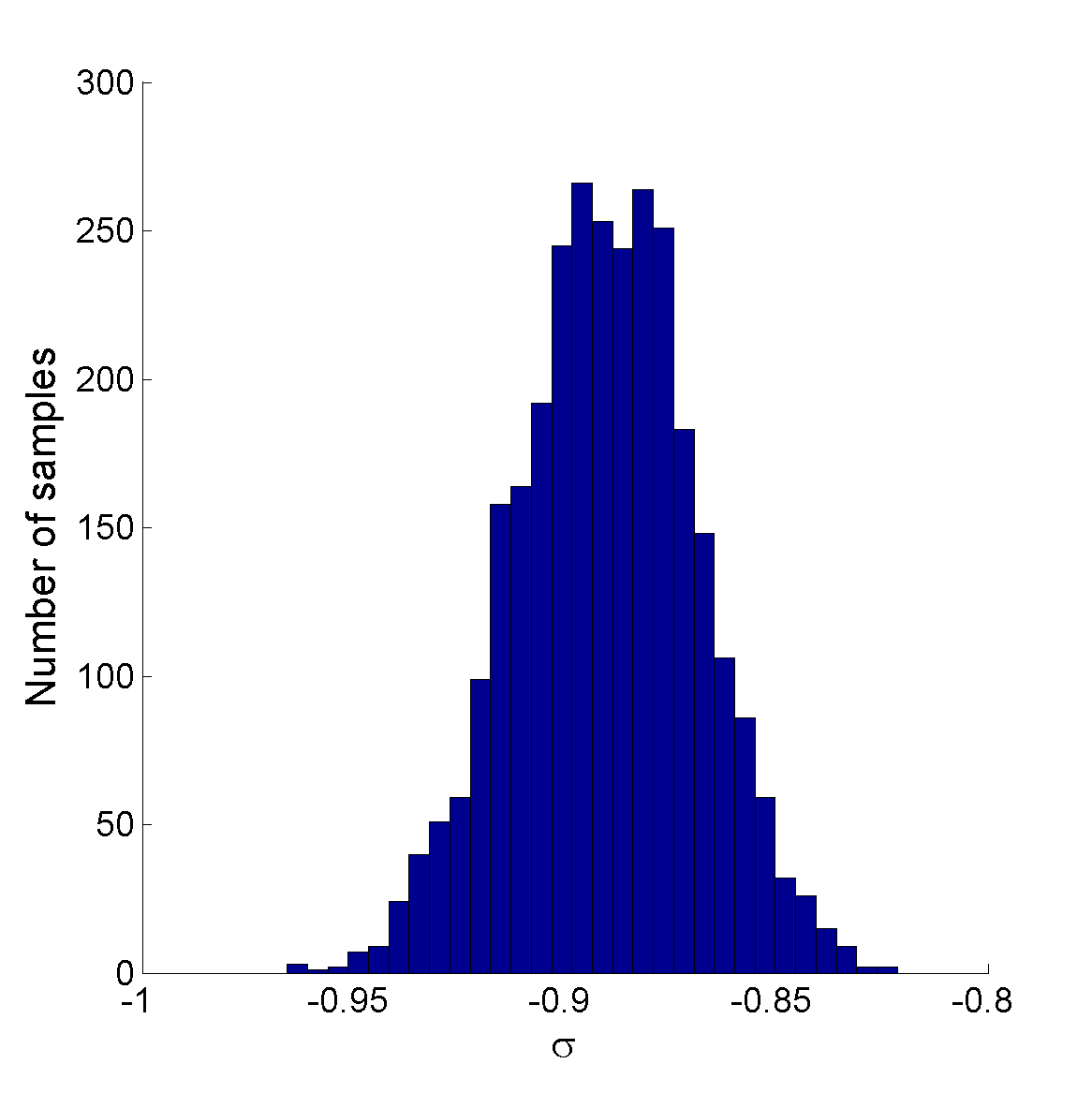}}
\subfigure[cond-mat2]{\includegraphics[width=.32\textwidth]{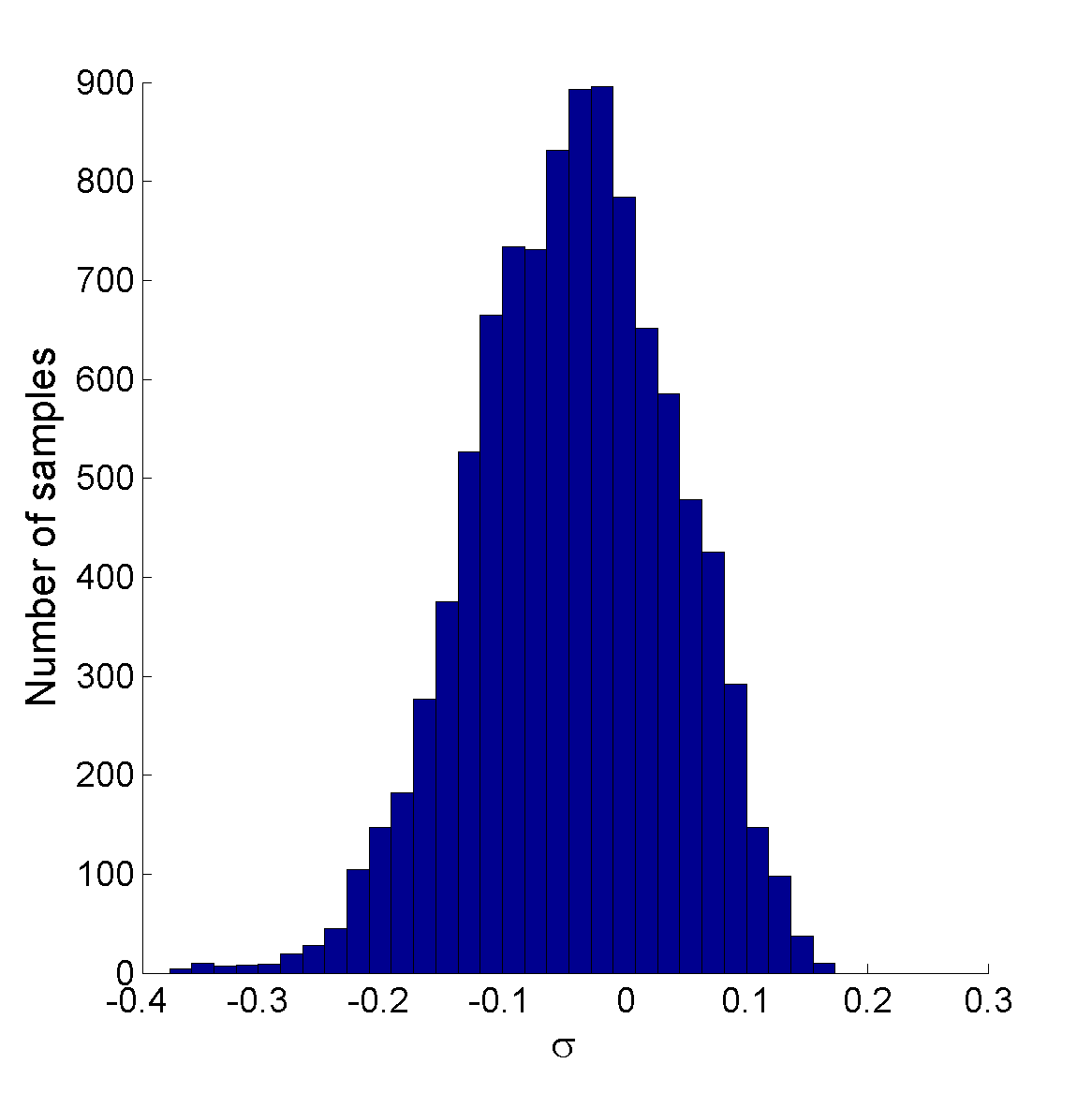}}
\subfigure[enron]{\includegraphics[width=.32\textwidth]{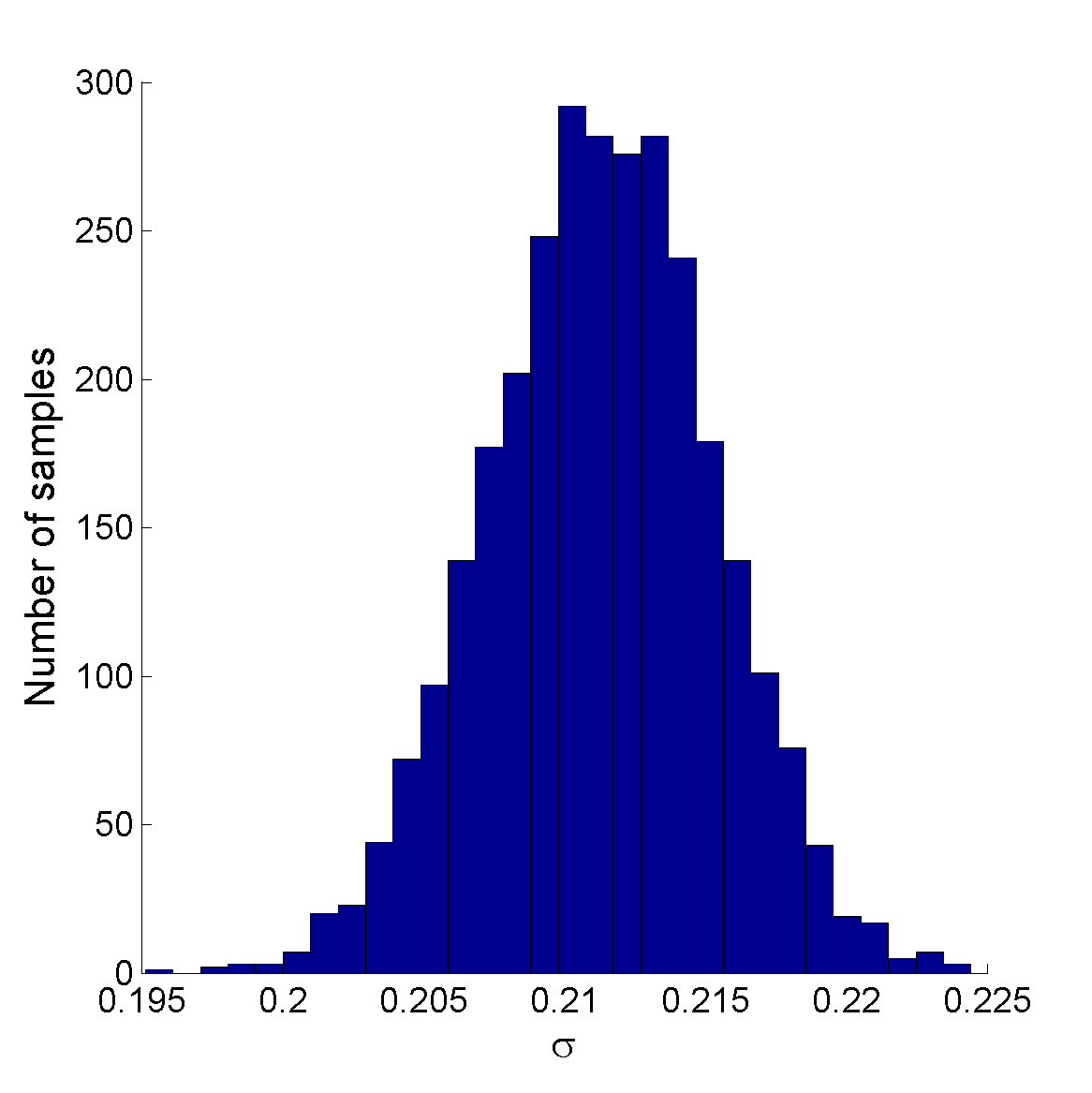}}
\subfigure[internet]{\includegraphics[width=.32\textwidth]{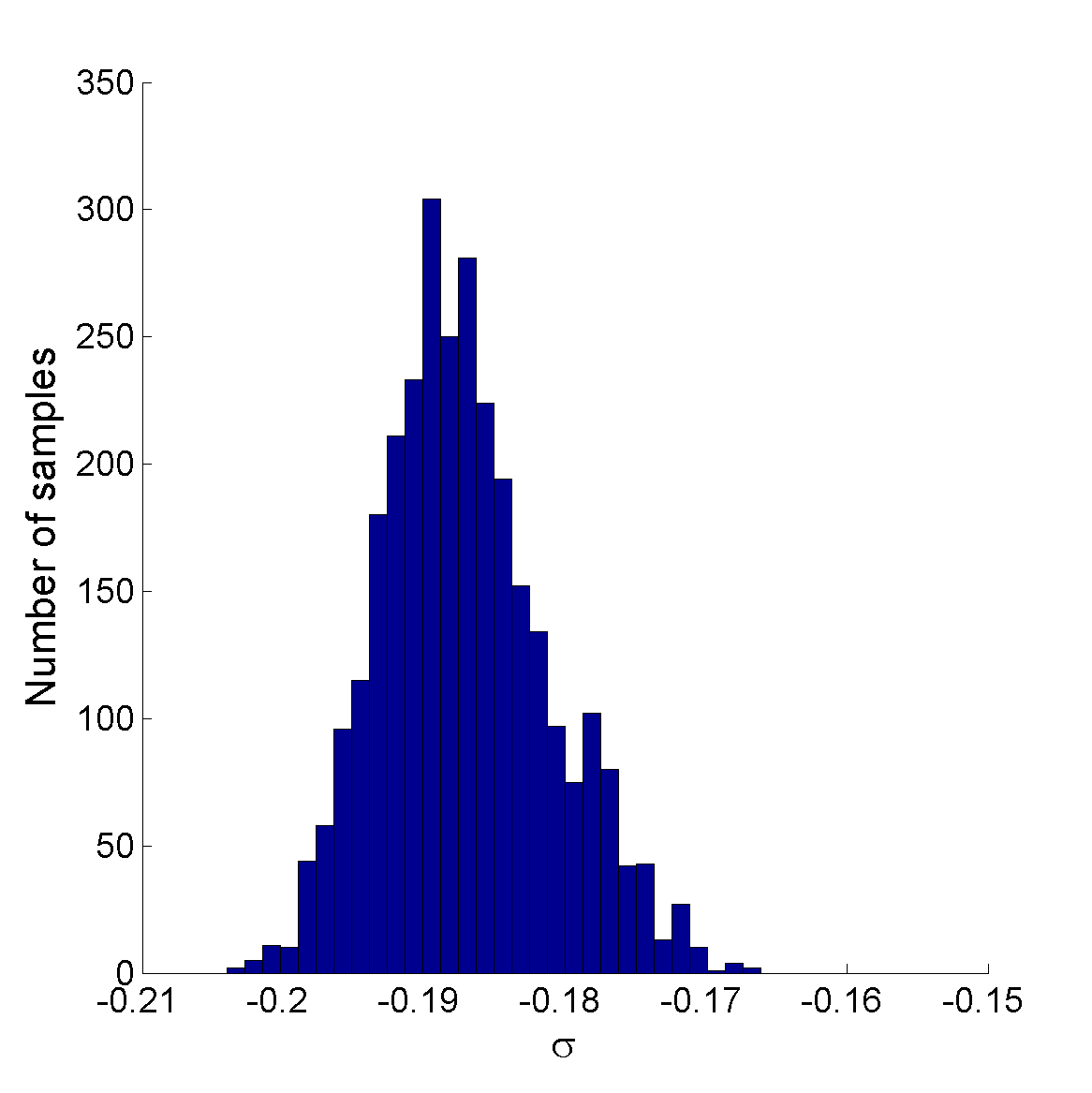}}
\subfigure[www]{\includegraphics[width=.32\textwidth]{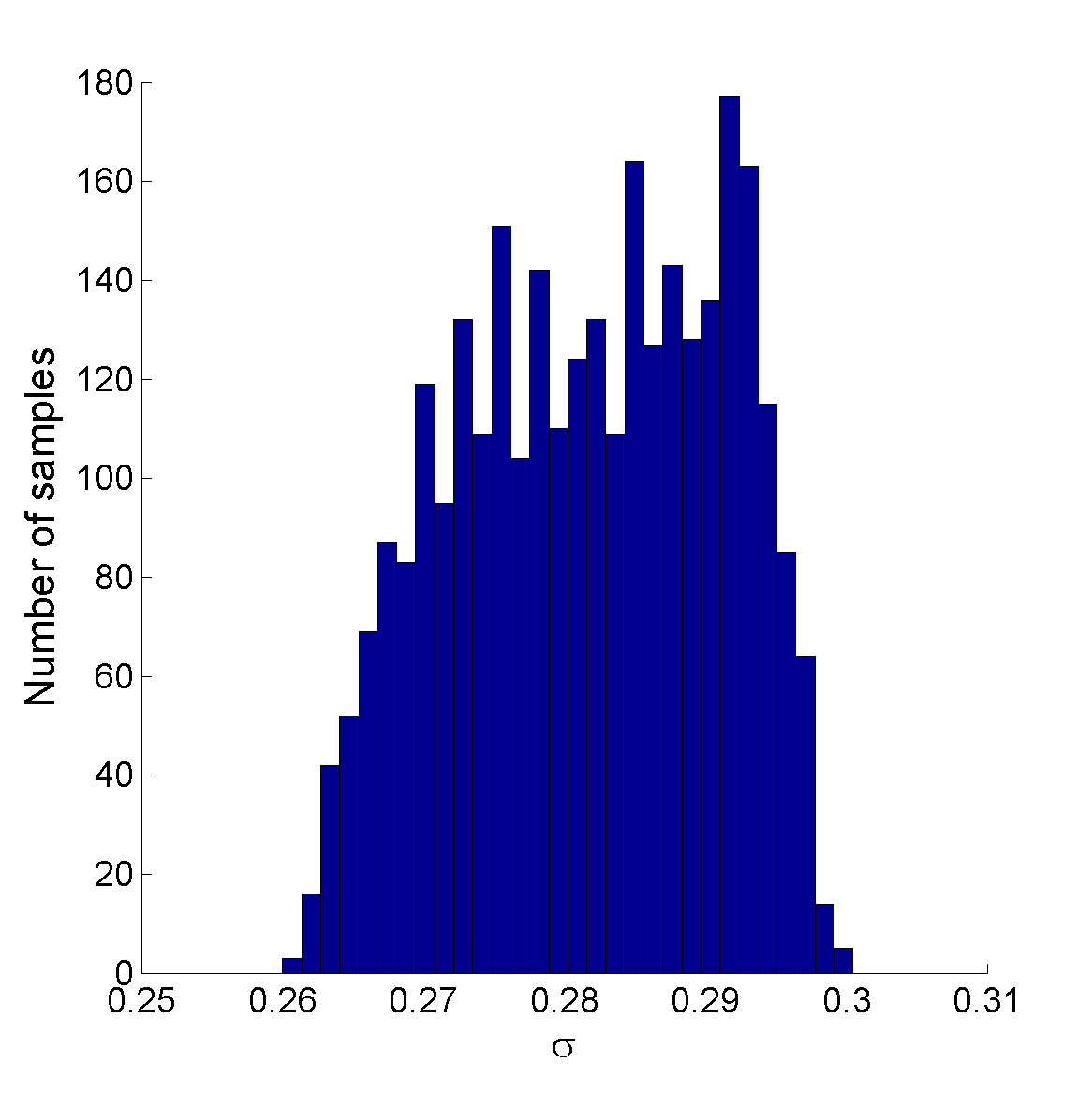}}
\end{center}
\caption{Histograms of MCMC samples of the parameter $\sigma$ for various real-world networks.}%
\label{fig:realhistsigma}%
\end{figure}

\begin{figure}[ptb]
\begin{center}%
\subfigure[facebook107]{\includegraphics[width=.32\textwidth]{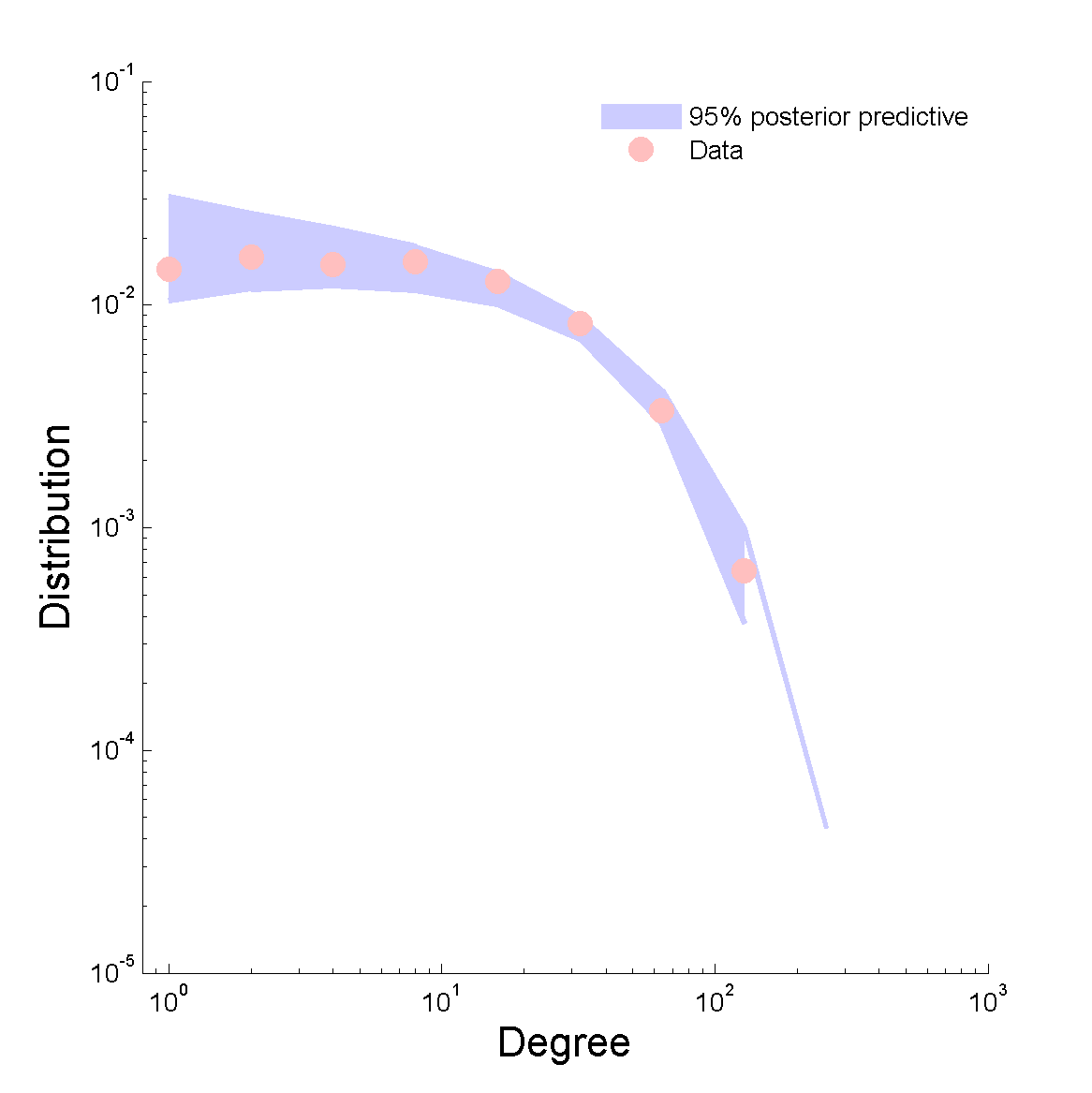}}
\subfigure[polblogs]{\includegraphics[width=.32\textwidth]{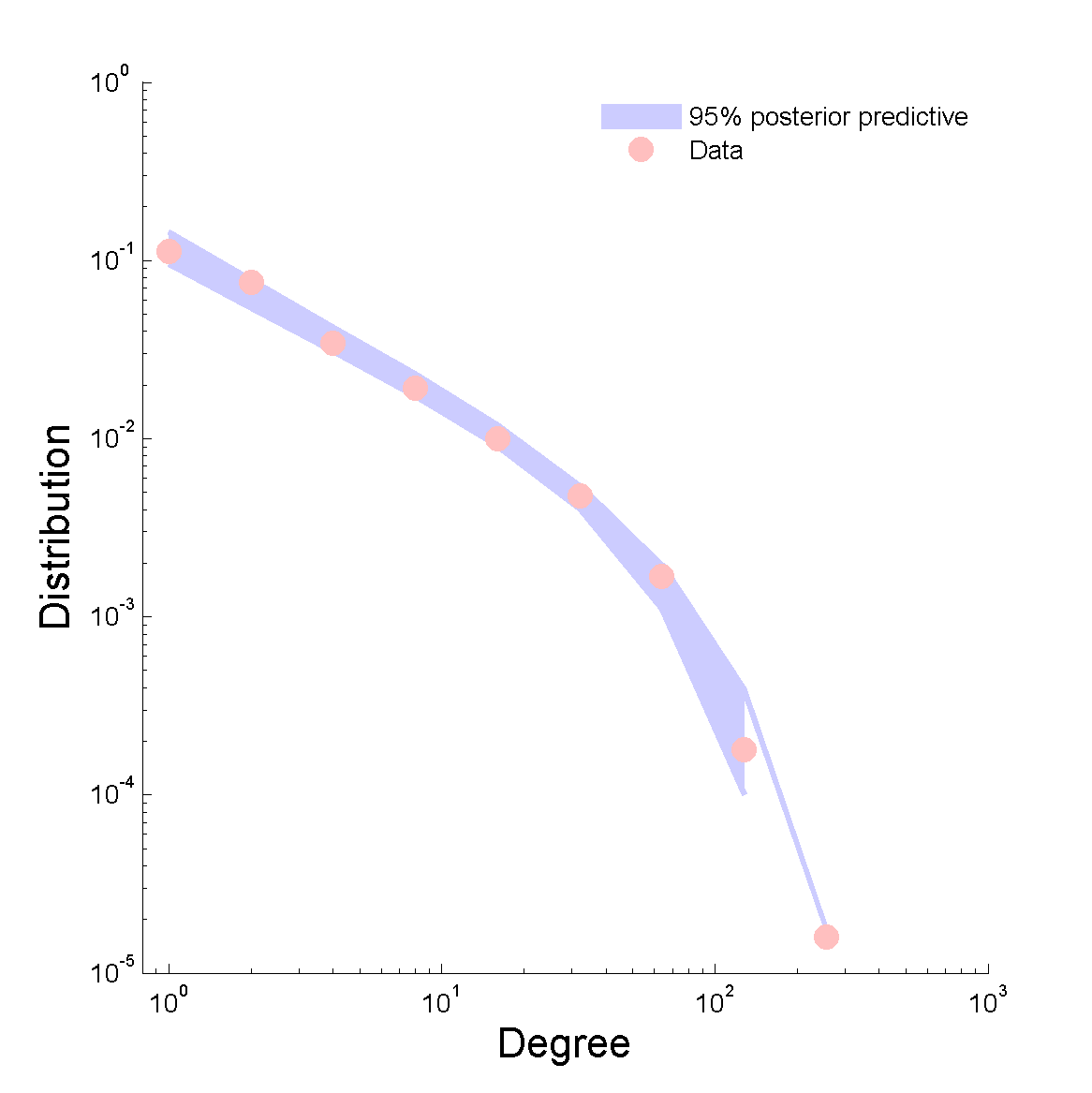}}
\subfigure[USairport]{\includegraphics[width=.32\textwidth]{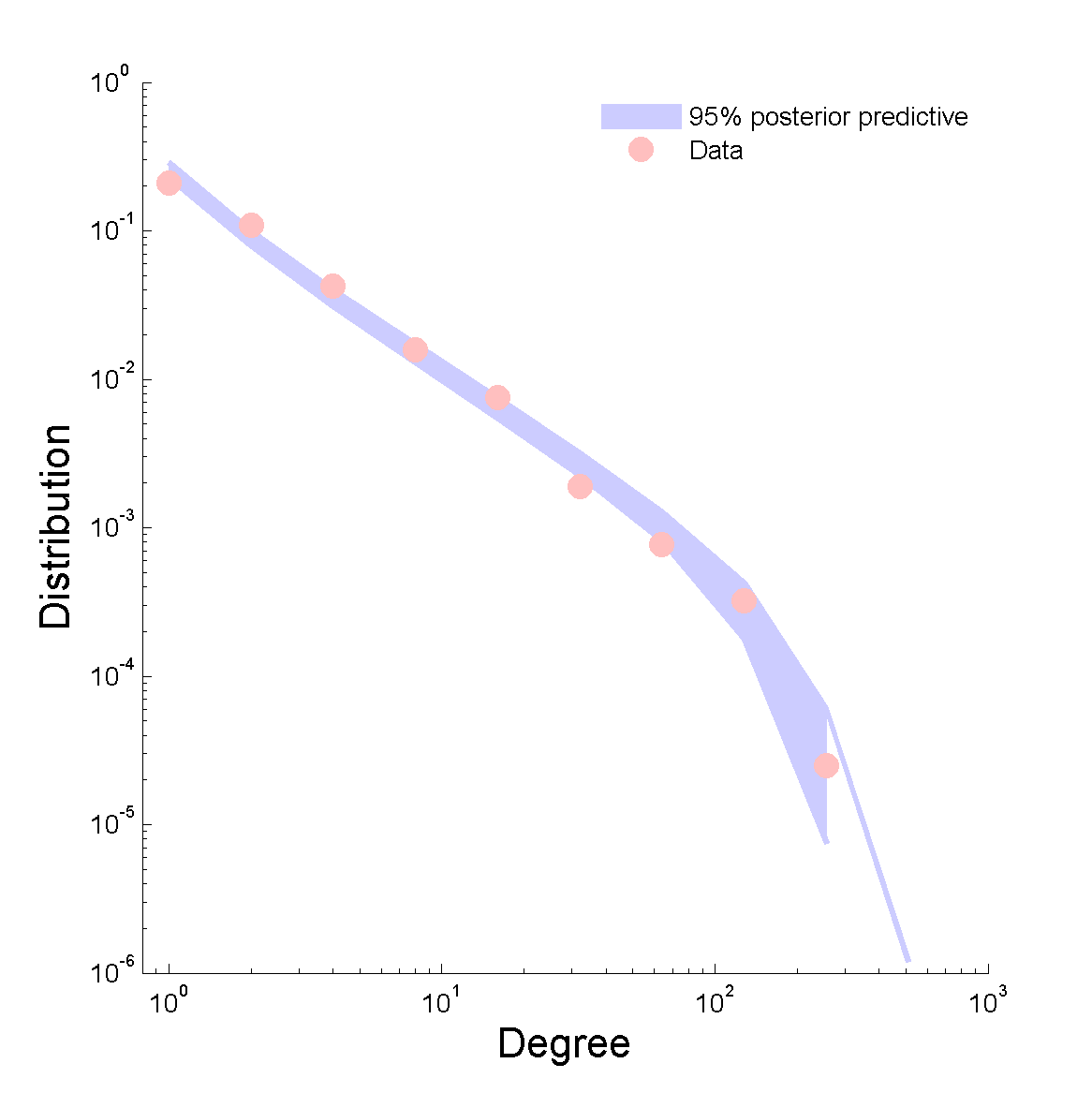}}
\subfigure[UCirvine]{\includegraphics[width=.32\textwidth]{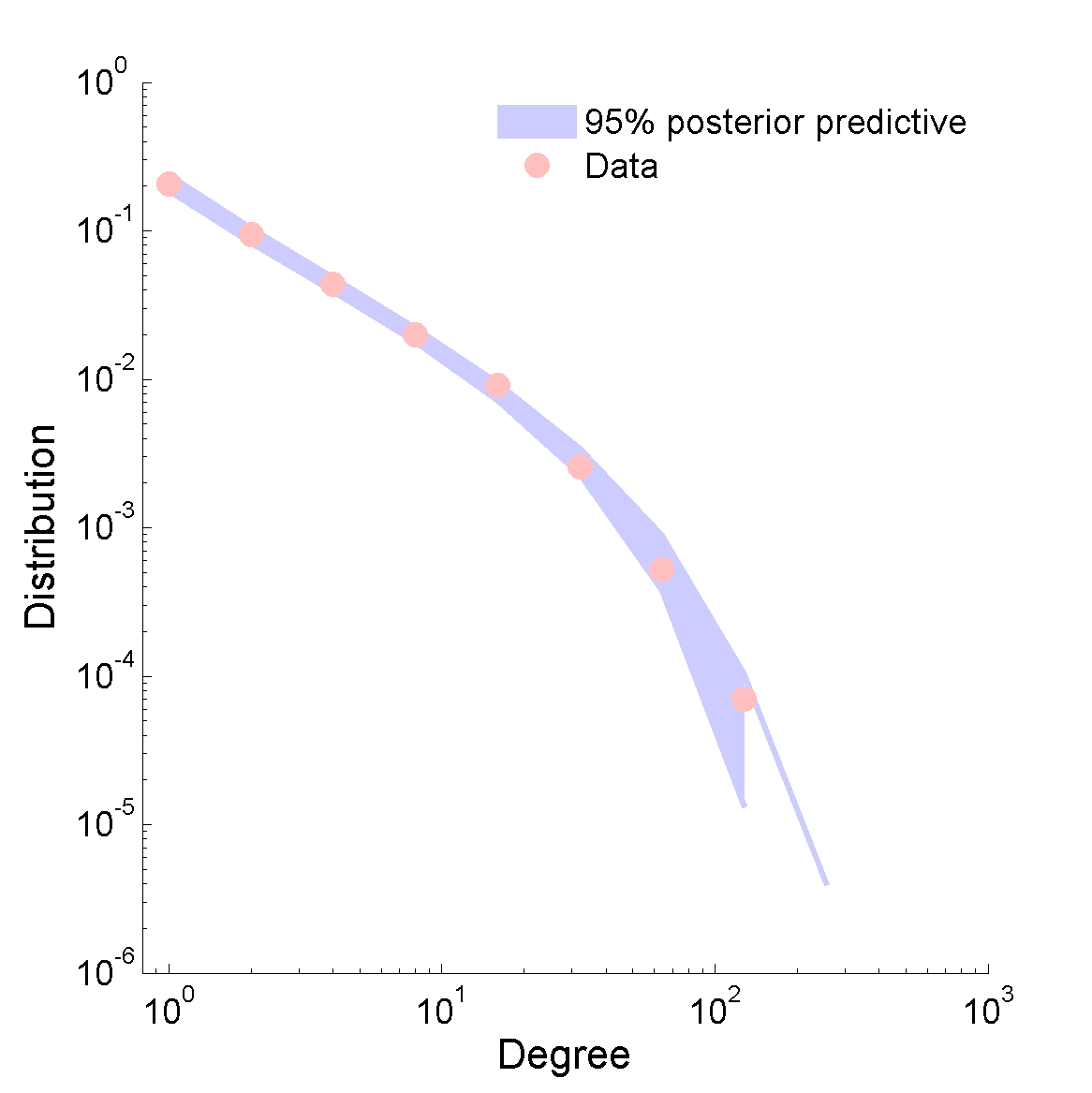}}
\subfigure[yeast]{\includegraphics[width=.32\textwidth]{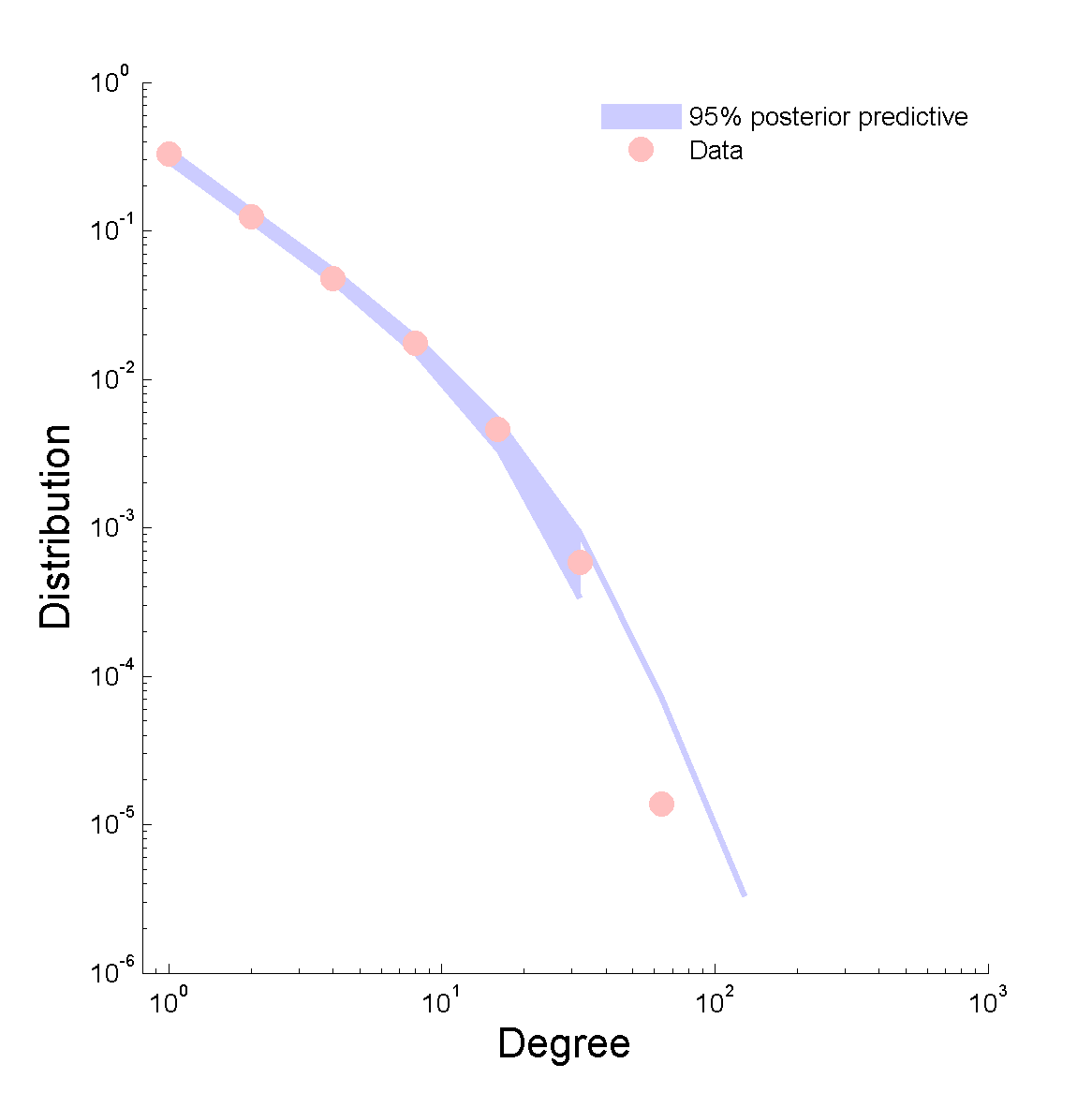}}
\subfigure[USpower]{\includegraphics[width=.32\textwidth]{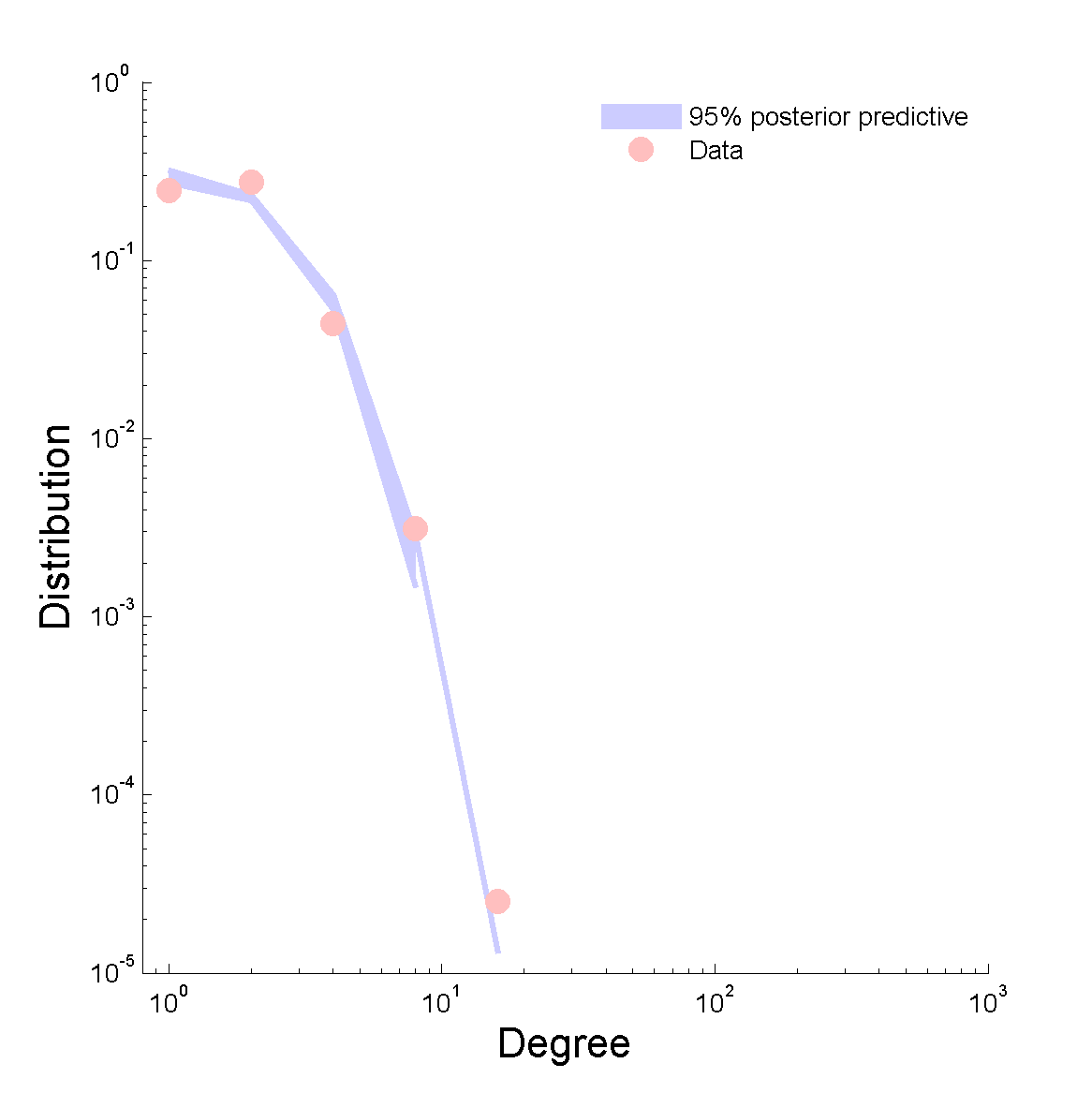}}
\subfigure[IMDB]{\includegraphics[width=.32\textwidth]{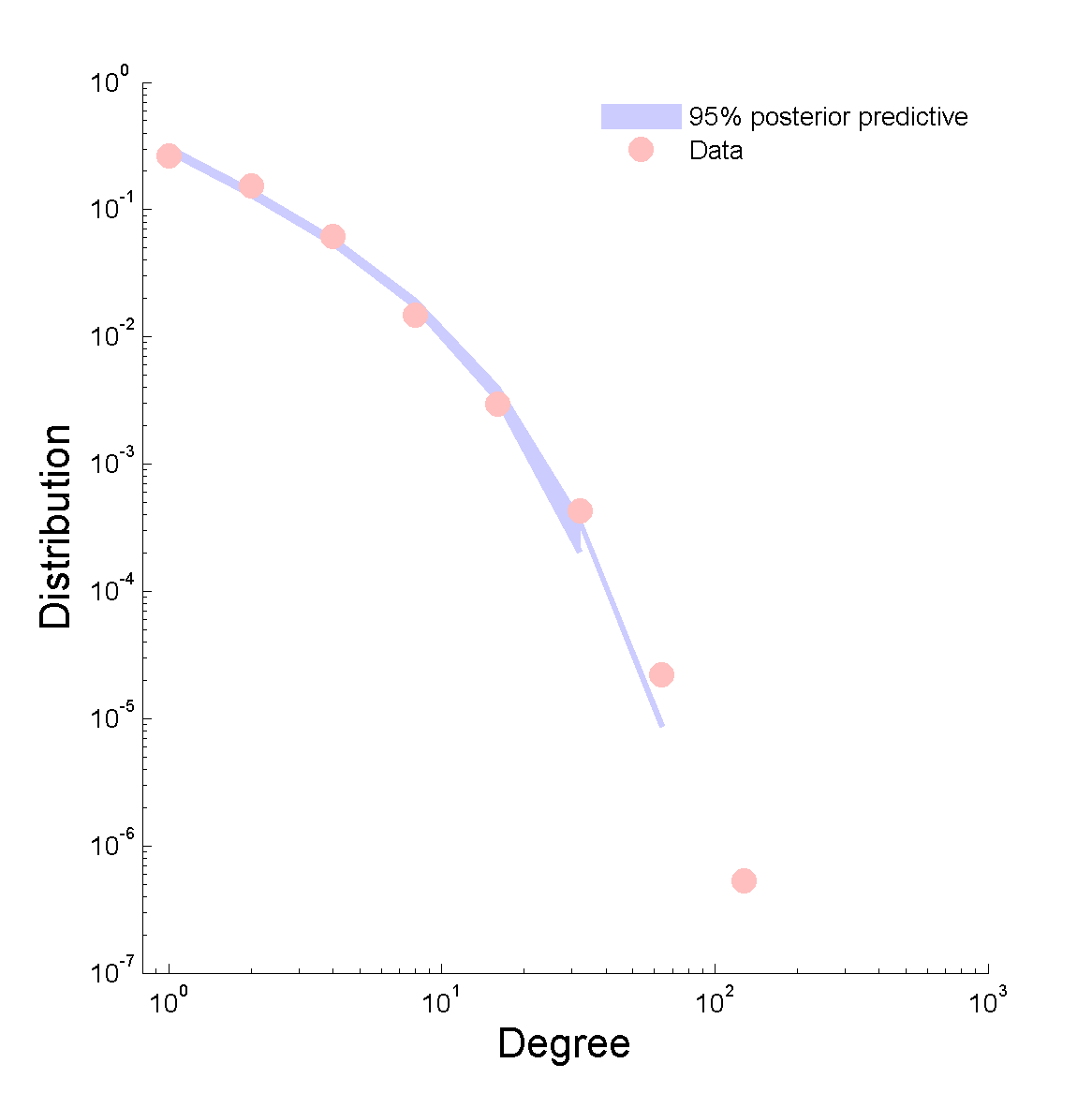}}
\subfigure[cond-mat1]{\includegraphics[width=.32\textwidth]{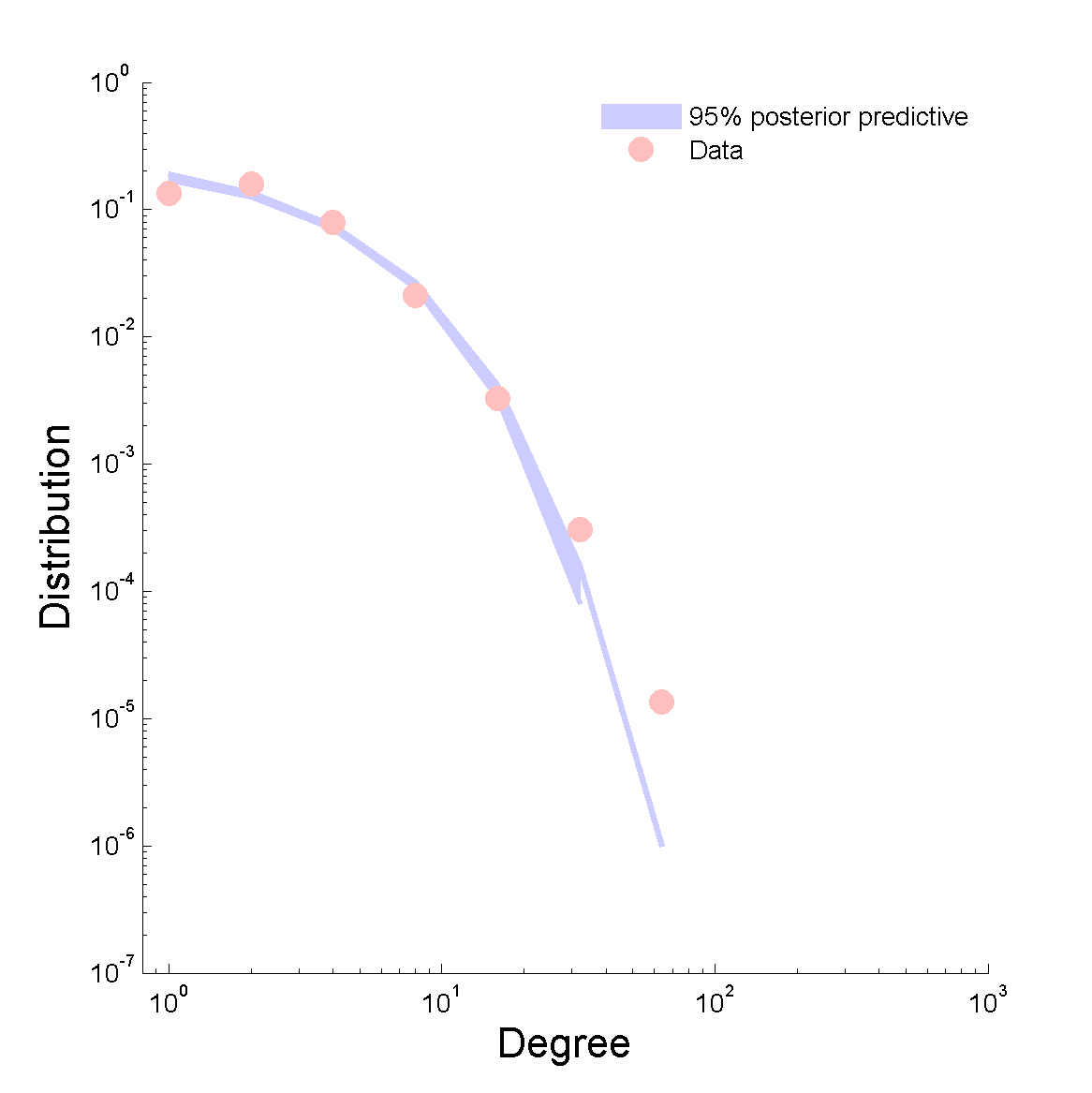}}
\subfigure[cond-mat2]{\includegraphics[width=.32\textwidth]{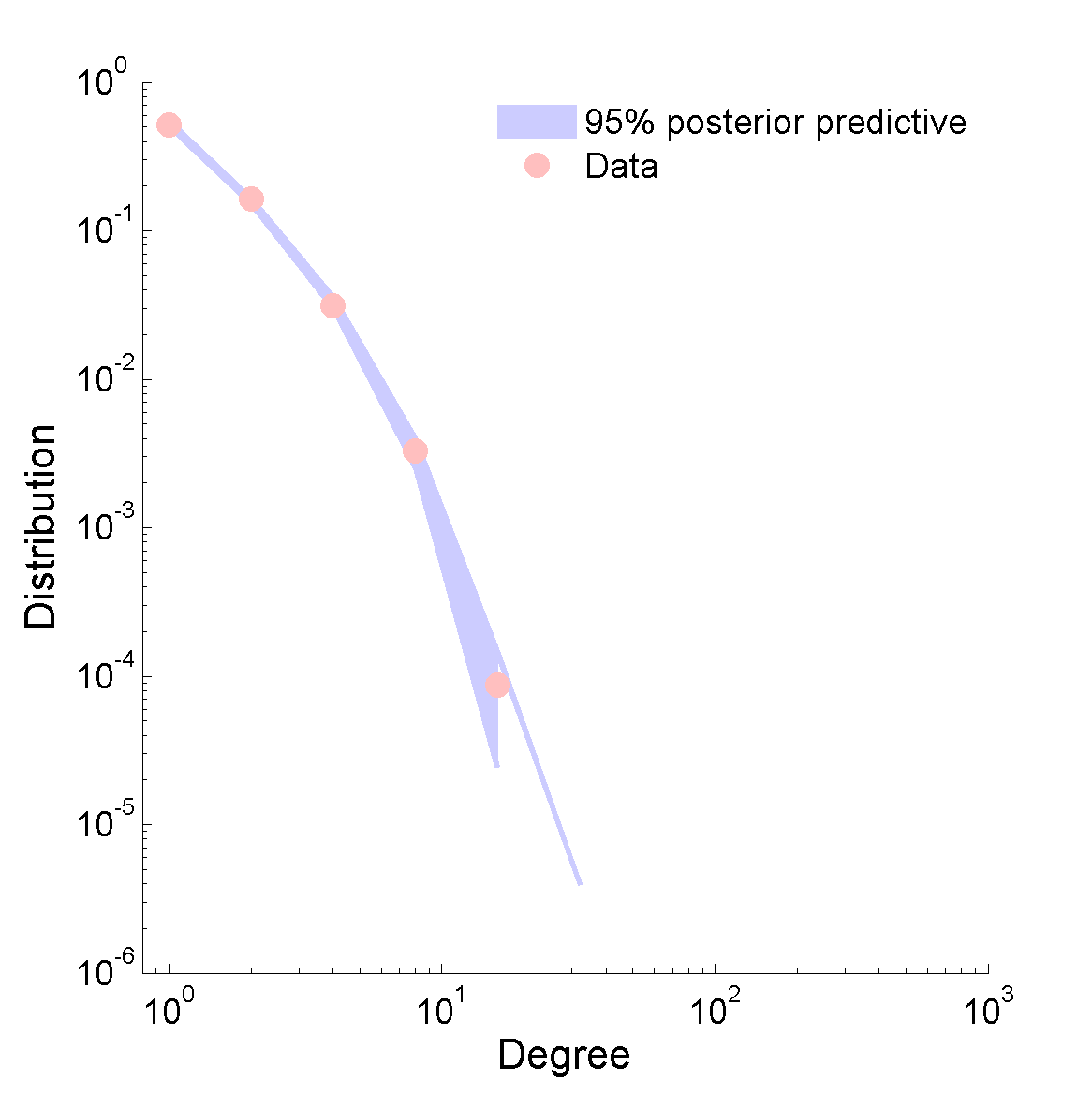}}
\subfigure[enron]{\includegraphics[width=.32\textwidth]{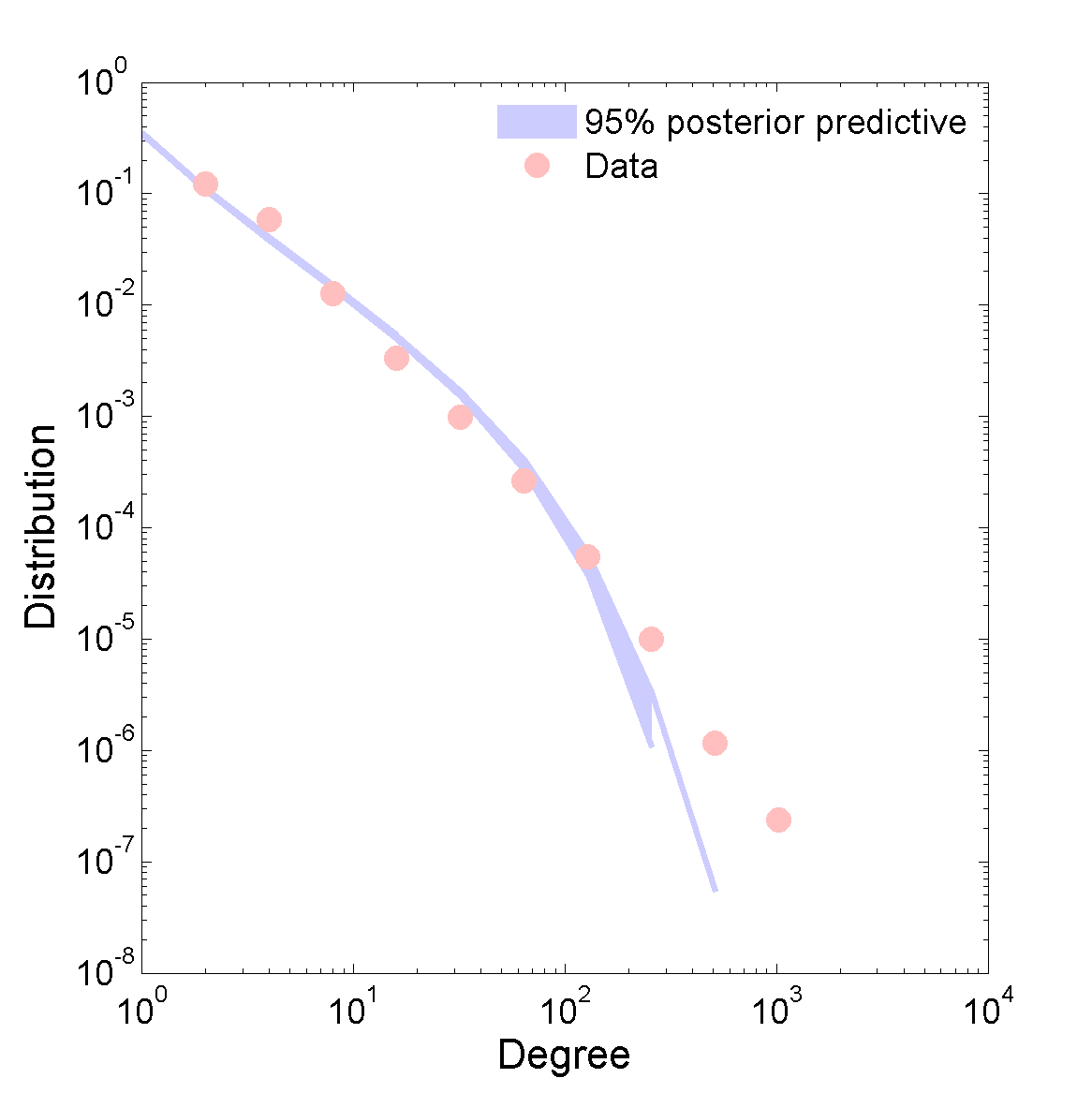}}
\subfigure[internet]{\includegraphics[width=.32\textwidth]{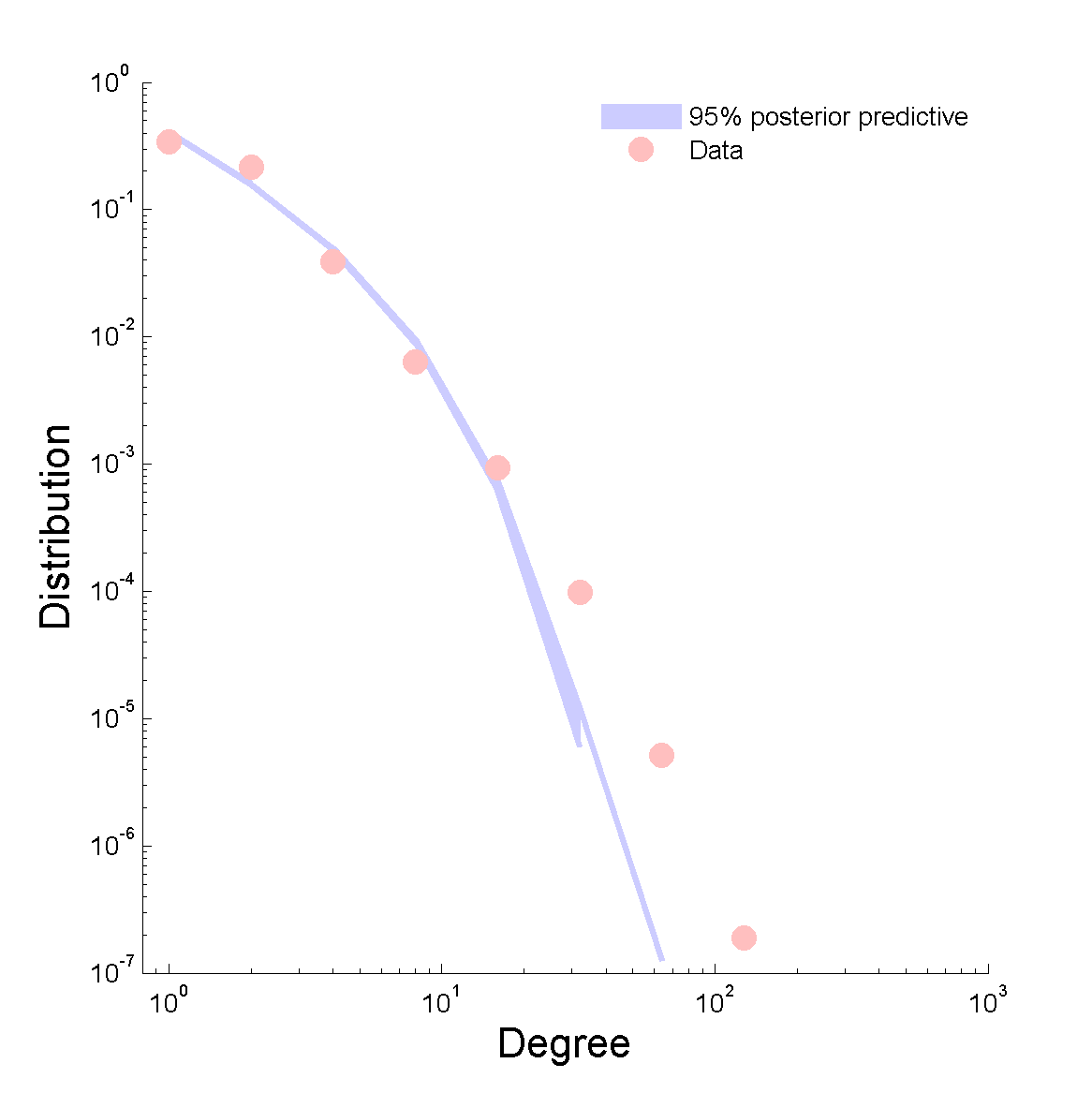}}
\subfigure[www]{\includegraphics[width=.32\textwidth]{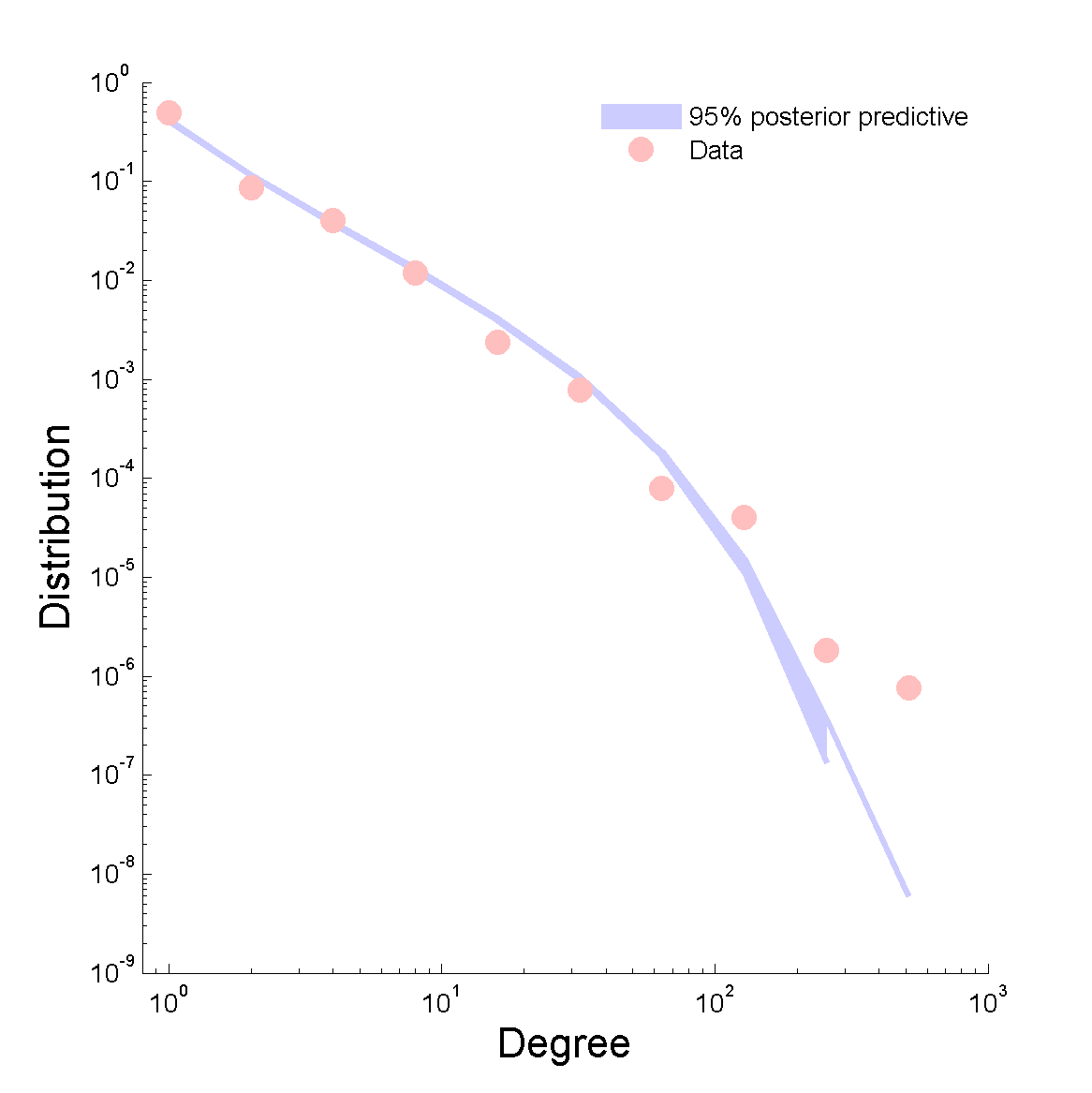}}
\end{center}
\caption{Empirical degree distribution (red) and posterior predictive (blue) for various real-world networks.}%
\label{fig:realdegree}%
\end{figure}

\section{Discussion}

There has been extensive work over the past years on flexible Bayesian nonparametric models for networks, allowing complex latent structures of unknown dimension to be uncovered from real-world networks~\cite{Kemp2006,Miller2009,Lloyd2012,Palla2012,Herlau2014}. However, as mentioned in the unifying overview of~\citet{Orbanz2015}, these methods all fit in the Aldous-Hoover framework and as such produce dense graphs. %who provide a unified overview of those approaches,

\citet{Norros2006} (see also~\cite{vanderHofstad2014} for a review and~\cite{Britton2006} for a similar model) proposed a conditionally Poissonian multigraph process with similarities to be drawn to our multigraph process. They consider that each node has a given sociability parameter, and the number of edges between two nodes $i$ and $j$ is drawn from a Poisson distribution with rate the product of the sociability parameters, normalized by the sum of the sociability parameters of all the nodes. The normalization makes this model similar to models based on rescaling of the graphon and, as such, does not define a projective model, as explained in Section~\ref{sec:intro}.

Another related model is the degree-corrected random graph model~\cite{Karrer2011}, where edges of the multigraph are drawn from a Poisson distribution whose rate is the product of node-specific sociability parameters and a parameter tuning the interaction between the latent communities to which these nodes belong. When the sociability parameters are assumed to be i.i.d. from some distribution, this model yields an exchangeable matrix and thus a dense graph.

Additionally, there are similarities to be drawn with the extensive literature on latent space modeling~\citep[cf.][]{Hoff2002,Penrose2003,Hoff2009}.  In such models, nodes are embedded in a low-dimensional, continuous latent space and the probability of an edge is determined by a distance or similarity metric of the node-specific latent factors.  In our case, the node position, $\theta_i$, is of no importance in forming edge probabilities.  It would, however, be possible to extend our approach to location-dependent connections by considering inhomogenous CRMs.

Finally, the urn construction described in Section~\ref{sec:undirected} highlights a connection with the configuration model~\citep{Bollobas1980,Newman2009}, a popular model for generating simple graphs with a given degree sequence. The configuration model proceeds as follows. First, the degree $k_i$ of each node $i=1,\ldots,n$ is specified such that the sum of $k_i$ is an odd number. Each node $i$ is given a total of $k_i$ stubs, or \emph{demi-edges}. Then, we repeatedly choose pairs of stubs uniformly at random, without replacement, and connect the selected pairs to form an edge. The simple graph is obtained either by discarding the multiple edges and self-loops (an \emph{erased} configuration model), or by repeating the above sampling until obtaining a simple graph.\smallskip

The connections to this past work nicely place our proposed Bayesian nonparametric network model within the context of existing literature.  Importantly, however, to the best of our knowledge this work represents the first fully generative and projective approach to sparse graph modeling, and with a notion of exchangeability essential for devising our scalable statistical estimation procedure.  For this, we devised a sampler that readily scales to large, real-world networks.  The foundational modeling tools and theoretical results presented herein represent an important building block for future developments, including incorporating notions of community structure, node attributes, etc.

\paragraph{Acknowledgements.}
The authors thank Bernard Bercu for help in deriving the proof of Theorem~\ref{theorem:limitratecondiid}, and Arnaud Doucet, Yee Whye Teh, Stefano Favaro and Dan Roy for helpful discussions and feedback on earlier versions of this paper.
\newpage

\appendix
\noindent

\section{Proofs of results on the sparsity}

\label{sec:proofsparsity}

\subsection{Probability asymptotics notation}
\label{sec:asymptnotation}
We first describe the asymptotic notation used in the remaining of this section, which follows the notation of~\citet{Janson2011}. All unspecified limits are as $\alpha\rightarrow\infty$.

Let $(X_\alpha)_{\alpha\geq 0}$ and $(Y_\alpha)_{\alpha\geq 0}$ be two $[0,\infty)$-valued stochastic processes defined on the same probability space and such that $\lim_{\alpha\rightarrow\infty} X_\alpha=\lim_{\alpha\rightarrow\infty} Y_\alpha=\infty$ a.s. We have
\begin{align*}
X_{\alpha}=O(Y_{\alpha})\text{ a.s.}&\Longleftrightarrow \underset{\alpha\rightarrow\infty}{\lim\sup}\ \frac{X_{\alpha}}{Y_\alpha}<\infty\text{ a.s.} & \\
X_{\alpha}=o(Y_{\alpha})\text{ a.s.}&\Longleftrightarrow  \lim_{\alpha\rightarrow\infty}\frac{X_{\alpha}}{Y_\alpha}=0\text{ a.s.}\\
X_{\alpha}=\Omega(Y_{\alpha})\text{ a.s.}&\Longleftrightarrow Y_{\alpha}=O(X_{\alpha})\text{ a.s.}\\
X_{\alpha}=\omega(Y_{\alpha})\text{ a.s.}&\Longleftrightarrow Y_{\alpha}=o(X_{\alpha})\text{ a.s.}\\
X_{\alpha}=\Theta(Y_{\alpha})\text{ a.s.}&\Longleftrightarrow X_{\alpha}=O(Y_{\alpha})\text{ and } X_{\alpha}=\Omega(Y_{\alpha})\text{ a.s.}
\end{align*}
The relations have the following interpretation
\begin{center}
\begin{tabular}{lll}
$X_{\alpha}=O(Y_{\alpha})$ &``$X_{\alpha}$ does not grow at a faster rate than $Y_\alpha$" & $[~\leq~]$ \\
$X_{\alpha}=o(Y_{\alpha})$ & ``$X_{\alpha}$ grows at a (strictly) slower rate than $Y_\alpha$" & $[~<~]$  \\
$X_{\alpha}=\Omega(Y_{\alpha})$ & ``$X_{\alpha}$ does not grow at a slower rate than $Y_\alpha$"  & $[~\geq~]$  \\
$X_{\alpha}=\omega(Y_{\alpha})$ & ``$X_{\alpha}$ grows at a (strictly) faster rate than $Y_\alpha$" & $[~>~]$  \\
$X_{\alpha}=\Theta(Y_{\alpha})$ & ``$X_{\alpha}$ and $Y_\alpha$ grow at the same rate" & $[~=~]$
\end{tabular}
\end{center}

\subsection{Proof of Theorems \ref{th:sparsity}, \ref{th:edgesquad} and
\ref{th:nodesgrowth} in the finite-activity case}

We first consider the case of a finite-activity CRM. Let
$T=\int_{0}^{\infty}\rho(w)dw<\infty$ and $H(t)=\frac{1}{T}\int_{0}^{t}%
\rho(w)dw$. The point process $Z$ can be equivalently defined as follows. Let
$\Pi=\{\theta_{1},\theta_{2},\ldots\}$ be a homogeneous Poisson process of
rate $T$. For each $1\leq i\leq j$, sample%
\begin{equation}
z_{ij}|U_{i},U_{j}\sim\text{Ber}(W(U_{i},U_{j}))
\label{eq:AHfinite}
\end{equation}
where $U_{1},U_{2},\ldots$ are uniform random variables and%
\[
W(u,v)=\left\{
\begin{array}
[c]{ll}%
1-\exp(-2H^{-1}(u)H^{-1}(v)) & u\neq v\\
1-\exp(-H^{-1}(u)^{2}) & u=v
\end{array}
\right.
\]

Let $J_{\alpha}=\Pi\cap\lbrack0,\alpha].$ As (i) $J_{\alpha}\rightarrow\infty$
almost surely as $\alpha\rightarrow\infty$ and (ii) $\int_{0}^{1}\int_{0}^{1}W(u,v)dudv<\infty$
and $\int_{0}^{1}\sqrt{W(u,u)}du<\infty$, the law of large numbers for $V$
statistics yields (cf Theorem \ref{theorem:SLLNUstats})%
\begin{equation}
\frac{2}{J_{\alpha}(J_{\alpha}+1)}\sum_{1\leq i\leq j\leq J_{\alpha}}%
W(U_{i},U_{j})\rightarrow\int_{0}^{1}\int_{0}^{1}W(u,v)dudv
\label{eq:Jalpha}
\end{equation}
almost surely as $\alpha\rightarrow\infty$. Additionally, applying
 Theorem \ref{theorem:limitratecondiid} to Equation~\eqref{eq:AHfinite}, gives
\[
\frac{N_\alpha^{(e)}}{\sum_{1\leq i\leq j\leq J_\alpha}W(U_i,U_j)}\rightarrow 1
\]
a.s. which, combined with Equation~\eqref{eq:Jalpha} yields
$\frac{N_{\alpha}^{(e)}}{J_{\alpha}^{2}}=\Theta(1)$ almost surely. As
$\frac{N_{\alpha}}{J_{\alpha}}=\Theta(1)$ almost surely, we determine that%
\begin{align*}
N_{\alpha}^{(e)}  &  =\Theta(N_{\alpha}^{2})\text{ a.s.}\\
N_{\alpha}  &  =\Theta(\alpha)\text{ a.s.}\\
N_{\alpha}^{(e)}  &  =\Theta(\alpha^{2})\text{ a.s.}%
\end{align*}

\subsection{Proof of Theorem \ref{th:edgesquad} in the infinite-activity case}

Consider now the infinite-activity case. Assume $\psi^{\prime}(0)=\mathbb{E}%
[W_{1}^{\ast}]<\infty.$ Let
\begin{equation}
\widetilde{Z}_{ij}=\left\{
\begin{array}
[c]{ll}%
1 & \text{if }Z([i-1,i],[j-1,j])>0\\
0 & \text{otherwise}%
\end{array}
\right.
\end{equation}
then, for any $k\in\mathbb{N}$,
\begin{equation}
\sum_{1\leq i<j\leq k}\widetilde{Z}_{ij}\leq N_{k}^{(e)}\leq D_{k}^{\ast}%
\end{equation}

As $Z$ is a jointly exchangeable point process, $(\widetilde{Z}_{ij}%
)_{i,j\in\mathbb{N}}$ is a jointly exchangeable binary matrix, and so by
Theorem \ref{theorem:excharraydense},
\begin{equation}
\sum_{1\leq i<j\leq k}\widetilde{Z}_{ij}=\Theta(k^{2})
\end{equation}
Moreover, we have
\begin{equation}
D_{k}^{\ast}|W_{k}\sim\text{Poisson}(W_{k}^{\ast2})
\end{equation}
where $W_{k}^{\ast}=\sum_{1\leq i\leq j\leq k}W([i-1,i])W([j-1,j])$, so
Theorems \ref{theorem:limitratecondiid} and \ref{theorem:SLLNUstats} imply (as
$\mathbb{E}[W_{1}^{\ast}]<\infty$)%
\begin{equation}
D_{k}^{\ast}=\Theta(k^{2})
\end{equation}
We therefore conclude that%
\begin{equation}
N_{k}^{(e)}=\Theta(k^{2})
\end{equation}
Finally, for any $k\leq \alpha \leq k+1$,
\[
\frac{k^2}{(k+1)^2}\frac{N_{k}^{(e)}}{k^2}\leq \frac{N_{\alpha}^{(e)}}{\alpha^2}\leq\frac{(k+1)^2}{k^2}\frac{N_{k+1}^{(e)}}{(k+1)^2}
\]
 and as $\frac{k+1}{k}\rightarrow1$ we conclude
\begin{equation}
N_{\alpha}^{(e)}=\Theta(\alpha^{2})\text{ a.s. as }\alpha\rightarrow\infty
\end{equation}

\subsection{Proof of Theorem \ref{th:nodesgrowth} in the infinite-activity
case}

Consider sets $\mathcal{S}^{1}=\cup_{k\in\mathbb{N}}[\frac{2k+1}{2}%
,\frac{2k+2}{2})$ and $\mathcal{S}^{2}=\cup_{k\in\mathbb{N}}[\frac{2k}%
{2},\frac{2k+1}{2})$. $\mathcal{S}^{1}$ and $\mathcal{S}^{2}$ define a
partition of $\mathbb{R}_{\mathbb{+}}$ as $\mathcal{S}^{1}\cup\mathcal{S}%
^{2}=\mathbb{R}_{\mathbb{+}}$ and $\mathcal{S}^{1}\cap\mathcal{S}%
^{2}=\emptyset$. Let $\mathcal{S}_{\alpha}^{1}=\mathcal{S}^{1}\cap
\lbrack0,\alpha]$ and $\mathcal{S}_{\alpha}^{2}=\mathcal{S}^{2}\cap
\lbrack0,\alpha]$. Note that for $\alpha$ integer, $\lambda(\mathcal{S}%
_{\alpha}^{1})=\lambda(\mathcal{S}_{\alpha}^{2})=\frac{\alpha}{2}$.\bigskip

Let $\widetilde{N}_{\alpha}$ be the number of nodes $\theta_{i}\in
\mathcal{S}_{\alpha}^{1}$ with at least one directed edge to a node
$\theta_{j}\in\mathcal{S}_{\alpha}^{2}$. Hence%
\[
\widetilde{N}_{\alpha}=\#\left\{  \left.  \theta_{i}\in\mathcal{S}_{\alpha
}^{1}\ \right\vert \ D\left(  \{\theta_{i}\}\times\mathcal{S}_{\alpha}%
^{2}\right)  >0\right\}
\]
Clearly, for all $\alpha\geq0$%
\begin{equation}
\widetilde{N}_{\alpha}\leq N_{\alpha} \label{eq:boundnumbernodes}%
\end{equation}

We have, for $\theta_{i}\in\mathcal{S}_{\alpha}^{1}$
\[
\Pr\left(  \left.  D\left(  \{\theta_{i}\}\times\mathcal{S}_{\alpha}%
^{2}\right)  >0\right\vert W\right)  =1-\exp\left[  -W(\{\theta_{i}\})\times
W\left(  \mathcal{S}_{\alpha}^{2}\right)  \right]
\]
Note the key fact that $W(\{\theta_{i}\})$ is independent of $W\left(
\mathcal{S}_{\alpha}^{2}\right)  $ as $\theta_{i}\notin\mathcal{S}_{\alpha
}^{2}$. Applying Campbell's theorem, we have%
\[
\mathbb{E}\left[  \widetilde{N}_{\alpha}|W\left(  \mathcal{S}_{\alpha}%
^{2}\right)  \right]  =\lambda(\mathcal{S}_{\alpha}^{1})\times\psi(W\left(
\mathcal{S}_{\alpha}^{2}\right)  )
\]
where $\psi(t)=\int_{0}^{\infty}(1-\exp(-wt))\rho(w)dw$ is the Laplace
exponent. And so, by complete randomness of the CRM over $S_{n}^{1}$,%
\begin{equation}
\widetilde{N}_{\alpha}|W\left(  \mathcal{S}_{\alpha}^{2}\right)
\sim\text{Poisson}\left(  \lambda(\mathcal{S}_{\alpha}^{1})\times\psi(W\left(
\mathcal{S}_{\alpha}^{2}\right)  )\right)  \label{eq:poissonNalpha}%
\end{equation}

We have $\lambda(\mathcal{S}_{\alpha}^{1})=\Theta(\alpha)$ and $\lambda
(\mathcal{S}_{\alpha}^{2})=\Theta(\alpha)$. Moreover, as we are in the
infinite-activity case $\int_{0}^{\infty}\rho(w)dw=\infty$, Lemma
\ref{lemma:laplaceexplimit} implies that%
\begin{equation}
\lim_{t\rightarrow\infty}\psi(t)=\infty.
\end{equation}
As $W\left(  \mathcal{S}_{\alpha}^{2}\right)  \rightarrow\infty$ almost surely, we
therefore have $\psi(W\left(  \mathcal{S}_{\alpha}^{2}\right)  )\rightarrow
\infty$ almost surely.  Thus,
\begin{equation}
\psi(W\left(  \mathcal{S}_{\alpha}^{2}\right)  )=\omega(1)\text{ a.s.}%
\end{equation}
and
\begin{equation}
\lambda(\mathcal{S}_{\alpha}^{1})\times\psi(W\left(  \mathcal{S}_{\alpha}%
^{2}\right)  )=\omega(\alpha)\text{ a.s.} \label{eq:asymptoticrate}%
\end{equation}
Combining (\ref{eq:asymptoticrate}) with Theorem
\ref{theorem:limitpoissonprocessrate} and (\ref{eq:boundnumbernodes})
yields%
\begin{align}
\widetilde{N}_{\alpha}  &  =\omega(\alpha)\text{ a.s.}\\
N_{\alpha}  &  =\omega(\alpha)\text{ a.s.}%
\end{align}
\bigskip

Consider now the case where $\overline{\rho}(x)\overset{x\downarrow 0}{\sim}\ell(1/x)x^{-\sigma}$ where $\ell(t)$ is a slowly varying function, i.e. a function
verifying $\frac{\ell(ct)}{\ell(t)}\rightarrow1$ for any $c>0$, and such that
$\lim_{t\rightarrow\infty}\ell(t)>0.$ Then Lemma \ref{th:asymptoticlevy}
implies that $\psi(t)=\Omega(t^{\sigma})$ as $t\rightarrow\infty$ and thus%
\[
\lambda(\mathcal{S}_{\alpha}^{1})\times\psi(W\left(  \mathcal{S}_{\alpha}%
^{2}\right)  )=\Omega(\alpha^{\sigma+1})\text{ a.s.}%
\]
which implies that
\[
N_{\alpha}=\Omega(\alpha^{\sigma+1})\text{ a.s.}%
\]

\section{Technical lemmas}

\label{sec:techlemmas}
\begin{theorem}
[Graphs constructed from exchangeable arrays are dense]%
\label{theorem:excharraydense}Let $(X_{ij})_{i,j\in\mathbb{N}},$ be an
infinitely exchangeable binary symmetric array. Let $N_{n}=\sum_{1\leq i<j\leq
n}X_{ij}$. If $\lim_{n\rightarrow\infty}N_{n}>0$ almost surely, then
\begin{equation}
N_{n}=\Theta(n^{2})\text{ almost surely and in }L_{1}%
\end{equation}

\begin{proof}
From the Aldous-Hoover theorem, there is a random
function\ $W:[0,1]\rightarrow\lbrack0,1]$ such that
\begin{equation}
X_{ij}|W,U_{i},U_{j}\sim\Ber(W(U_{i},U_{j})) \label{eq:aldoushoover}%
\end{equation}
where $(U_{i})_{i\in\mathbb{N}}$ are uniform random variables. Given
$W$, the law of large numbers for $U$ statistics (see Theorem
\ref{theorem:SLLNUstats}) yields%
\[
\frac{2}{n(n-1)}\sum_{1\leq i<j\leq n}W(U_{i},U_{j})\rightarrow\overline
{W}\text{ a.s.}%
\]
If $\lim_{n\rightarrow\infty}N_{n}>0$ almost surely, then $\overline{W}=\int_{0}%
^{1}\int_{0}^{1}W(u,v)dudv>0$ almost surely, thus
\begin{equation}
\sum_{1\leq i<j\leq n}W(U_{i},U_{j})=\Theta(n^{2})\text{ a.s.}
\label{eq:ratedense1}%
\end{equation}
Furthermore, note that if $X|Y\sim\Ber(Y)$, $\mathbb{V}(X|Y)=Y(1-Y)\leq
\mathbb{E}[X|Y\}$. Moreover, $\sum_{1\leq i<j\leq n}W(U_{i},U_{j}%
)\rightarrow\infty$ almost surely. Applying Theorem \ref{theorem:limitratecondiid} to
Eq. (\ref{eq:aldoushoover}) implies%
\begin{equation}
\frac{\sum_{1\leq i<j\leq n}X_{ij}}{\sum_{1\leq i<j\leq n}W(U_{i},U_{j}%
)}\rightarrow1\text{ a.s. as }n\rightarrow\infty\label{eq:ratedense2}%
\end{equation}
and thus, combining (\ref{eq:ratedense1}) with (\ref{eq:ratedense2}) yields%
\begin{equation}
N_{n}=\Theta(n^{2})\text{ a.s. as }n\rightarrow\infty
\end{equation}

As $X_{ij}\leq1$, the dominated convergence theorem implies $\mathbb{E}%
[N_{n}]=\Theta(n^{2})$.
\end{proof}

\end{theorem}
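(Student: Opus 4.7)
\textbf{Proof proposal for Theorem \ref{theorem:excharraydense}.}
My plan is to reduce the statement to a concentration argument on top of the Aldous--Hoover representation. By Theorem \ref{thm:AldousHoover}, jointly exchangeability yields a random measurable graphon $W:[0,1]^2\to[0,1]$ (symmetric) and iid uniforms $(U_i)_{i\in\mathbb{N}}$ such that, conditionally on $W$ and $(U_i)$, the $X_{ij}$ for $i<j$ are independent Bernoulli with parameter $W(U_i,U_j)$. Write $S_n=\sum_{1\le i<j\le n}W(U_i,U_j)$ and $\overline{W}=\int_{0}^{1}\!\!\int_{0}^{1}W(u,v)\,du\,dv$.

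The first step is to verify that the hypothesis $\lim_n N_n>0$ a.s.\ forces $\overline{W}>0$ a.s. Indeed, if $\overline{W}=0$ on an event of positive probability, then $W=0$ Lebesgue-a.e.\ on $[0,1]^2$ there, so $W(U_i,U_j)=0$ a.s.\ on that event and hence $X_{ij}=0$ for all $i<j$, contradicting $\lim_n N_n>0$. The second step invokes a strong law of large numbers for $U$-statistics (Theorem \ref{theorem:SLLNUstats}) applied conditionally on $W$: since $W$ is bounded, $\binom{n}{2}^{-1}S_n\to \overline{W}$ almost surely. Combined with the first step, this gives $S_n=\Theta(n^2)$ a.s.

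The third step is to upgrade this from $S_n$ to $N_n$. Conditionally on $\mathcal{F}=\sigma(W,(U_i))$, the random variable $N_n$ is a sum of independent Bernoullis with $\mathbb{E}[N_n\mid\mathcal{F}]=S_n$ and $\mathrm{Var}(N_n\mid\mathcal{F})\le S_n$. I will invoke Theorem \ref{theorem:limitratecondiid} (the law of large numbers for conditionally independent Bernoulli sums with diverging conditional mean) to conclude $N_n/S_n\to 1$ a.s.\ on the event $\{S_n\to\infty\}$, which has full probability by the previous paragraph. Chaining the two ratios gives $N_n=\Theta(n^2)$ almost surely. The $L_1$ convergence of $N_n/n^2$ follows from the pointwise a.s.\ convergence by the dominated convergence theorem, since $0\le N_n/\binom{n}{2}\le 1$.

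The main obstacle I anticipate is the third step: in going from $S_n$ to $N_n$, the conditional Bernoullis are not identically distributed, so I need a slightly nonstandard SLLN. The bound $\mathrm{Var}(N_n\mid\mathcal{F})\le\mathbb{E}[N_n\mid\mathcal{F}]$ together with $S_n\to\infty$ a.s.\ is exactly the setting of Theorem \ref{theorem:limitratecondiid}; provided that theorem is established for conditionally independent (not iid) summands, the rest is routine. If a direct appeal is not available, a Bernstein/Chebyshev estimate along a subsequence $n_k=\lfloor k^{1+\epsilon}\rfloor$ combined with monotonicity $N_n\ge N_{n_k}$ for $n\in[n_k,n_{k+1}]$ yields the almost sure ratio convergence.
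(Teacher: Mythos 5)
Your proposal is correct and follows essentially the same route as the paper's proof: Aldous--Hoover representation, conditional SLLN for U-statistics to get $\sum_{i<j}W(U_i,U_j)=\Theta(n^2)$ a.s., then Theorem \ref{theorem:limitratecondiid} (which is indeed stated for conditionally independent, non-identically distributed summands with $\mathbb{V}\le\mathbb{E}$) to transfer the rate to $N_n$, and dominated convergence for the $L_1$ statement. Your explicit justification that $\overline{W}>0$ a.s.\ and the subsequence fallback are fine additions but not departures from the paper's argument.
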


\begin{lemma}
[Strong law of large numbers for U and V statistics]\label{theorem:SLLNUstats}%
\cite{Arcones1992,Gine1992}. Let $(X_{i})$ be i.i.d. real-valued random
variables from $\mu$ and let $h(x_{1},x_{2})$ be a symmetric measurable
function. Consider the $U$ and $V$ statistics defined by%
\begin{align*}
U_{n}(h,\mu)  &  =\frac{2}{n(n-1)}\sum_{1\leq i<j\leq n}h(X_{i},X_{j})\\
V_{n}(h,\mu)  &  =\frac{1}{n^{2}}\sum_{i,j=1}^{n}h(X_{i},X_{j})
\end{align*}
If $\mathbb{E}[\left\vert h(X_{1},X_{2})\right\vert ]<\infty$,
then~\cite{Hoeffding1961}
\[
U_{n}(h,\mu)\rightarrow\mathbb{E}[h(X_{1},X_{2})]\text{ a.s. as }%
n\rightarrow\infty
\]

If $\mathbb{E}[\left\vert h(X_{1},X_{2})\right\vert ]<\infty$ and
$\mathbb{E}[\sqrt{\left\vert h(X_{1},X_{1})\right\vert }]<\infty$, then
\[
V_{n}(h,\mu)\rightarrow\mathbb{E}[h(X_{1},X_{2})]\text{ a.s. as }%
n\rightarrow\infty.
\]

\end{lemma}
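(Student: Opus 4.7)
The plan is to prove the two parts independently, handling the U-statistic claim first (the more classical and harder part) and then deriving the V-statistic claim by a simple decomposition.

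For the U-statistic $U_n$, my approach is the reverse (backward) martingale method. Define the decreasing filtration $\mathcal{F}_n = \sigma\bigl(U_n, U_{n+1}, \ldots\bigr)$, or equivalently the $\sigma$-algebra generated by all measurable functions of $(X_1, X_2, \ldots)$ that are symmetric in $(X_1, \ldots, X_n)$ together with the tail $\sigma(X_{n+1}, X_{n+2}, \ldots)$. A direct symmetry calculation shows $\mathbb{E}[h(X_1, X_2) \mid \mathcal{F}_n] = U_n$ for all $n \geq 2$, so $(U_n, \mathcal{F}_n)$ is a reverse martingale. Assuming $\mathbb{E}[|h(X_1, X_2)|] < \infty$, L\'evy's reverse martingale convergence theorem gives $U_n \to \mathbb{E}[h(X_1, X_2) \mid \mathcal{F}_\infty]$ almost surely and in $L^1$, where $\mathcal{F}_\infty = \bigcap_n \mathcal{F}_n$. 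The key identification is that $\mathcal{F}_\infty$ is contained in the exchangeable $\sigma$-algebra of the i.i.d. sequence $(X_i)$, which by the Hewitt--Savage 0--1 law is $P$-trivial. Hence the limiting conditional expectation reduces to $\mathbb{E}[h(X_1, X_2)]$.

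For the V-statistic, I split off the diagonal:
\[
V_n = \frac{n-1}{n}\, U_n + \frac{1}{n^2} \sum_{i=1}^n h(X_i, X_i).
\]
The first summand converges almost surely to $\mathbb{E}[h(X_1, X_2)]$ by the U-statistic result above, since the prefactor $(n-1)/n \to 1$. For the diagonal, set $Y_i = h(X_i, X_i)$; these are i.i.d. and the hypothesis $\mathbb{E}[\sqrt{|Y_1|}] < \infty$ is exactly the $L^{1/2}$-integrability needed for the Marcinkiewicz--Zygmund strong law at exponent $p = 1/2$, which yields $n^{-1/p} \sum_{i=1}^n Y_i = n^{-2}\sum_{i=1}^n Y_i \to 0$ almost surely (no centering is required since $p < 1$). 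Combining the two pieces gives $V_n \to \mathbb{E}[h(X_1, X_2)]$ almost surely.

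The main obstacle is the first part: carefully justifying the reverse martingale identity $\mathbb{E}[h(X_1, X_2) \mid \mathcal{F}_n] = U_n$ and then identifying $\mathcal{F}_\infty$ with a sub-$\sigma$-algebra of the exchangeable $\sigma$-algebra so that Hewitt--Savage can trivialize the limit. The symmetry identity itself is a short computation once the right filtration is chosen: conditioning on $\mathcal{F}_n$ makes $(X_1, \ldots, X_n)$ exchangeable given the data, so $\mathbb{E}[h(X_i, X_j) \mid \mathcal{F}_n]$ is the same for every pair $i < j \leq n$, and averaging over pairs recovers $U_n$. The Marcinkiewicz--Zygmund bound used for the V-statistic diagonal is essentially routine; it is the U-statistic reverse martingale step that carries the analytical content.
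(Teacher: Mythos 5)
Your proof is correct. Note, however, that the paper does not prove this lemma at all: it is stated purely as a citation of known results (Hoeffding 1961 for the $U$-statistic part; Arcones--Gin\'e and Gin\'e--Zinn 1992 for the $V$-statistic part), so there is no in-paper argument to compare against. What you have written is a faithful reconstruction of the standard proofs from that literature. For the $U$-statistic, the reverse-martingale identity $\mathbb{E}[h(X_1,X_2)\mid\mathcal{F}_n]=U_n$, L\'evy's backward convergence theorem, and the Hewitt--Savage trivialization of $\mathcal{F}_\infty$ is exactly the classical Hoeffding/Berk argument; your remark that $\mathcal{F}_n$ can be taken either as $\sigma(U_n,U_{n+1},\ldots)$ or as the larger $\sigma$-algebra of symmetric functions of $(X_1,\ldots,X_n)$ together with $\sigma(X_{n+1},X_{n+2},\ldots)$ is slightly loose (the two are not equal), but the tower property transfers the identity from the larger filtration to the smaller one, so nothing breaks. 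For the $V$-statistic, the decomposition $V_n=\frac{n-1}{n}U_n+n^{-2}\sum_{i=1}^n h(X_i,X_i)$ together with the Marcinkiewicz--Zygmund strong law at exponent $p=1/2$ (no centering needed since $p<1$) is precisely how the sufficiency of the hypothesis $\mathbb{E}[\sqrt{|h(X_1,X_1)|}]<\infty$ is established in Gin\'e--Zinn, and it matches the condition stated in the lemma. So your argument is a valid, self-contained substitute for the citation.
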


\begin{theorem}
\label{theorem:limitratecondiid}Let $Y=(Y_{1},Y_{2},\ldots)$ be a sequence of
positive random variables.\ Consider positive variables $(X_{1},X_{2},\ldots)$
such that%
\begin{align}
\mathbb{E}[X_{i}|X_{1:i-1},Y_{1:i}]  &  =\mathbb{E}[X_{i}|Y_{i}]=Y_{i}%
\label{eq:property1}\\
\mathbb{V}[X_{i}|X_{1:i-1},Y_{1:i}]  &  =\mathbb{V}[X_{i}|Y_{i}]\leq Y_{i}
\label{eq:property2}%
\end{align}
If
\[
\sum_{i=1}^{n}Y_{i}\rightarrow\infty\text{ almost surely as }n\rightarrow
\infty
\]
then
\[
\frac{\sum_{i=1}^{n}X_{i}}{\sum_{i=1}^{n}Y_{i}}\rightarrow1\text{ almost
surely as }n\rightarrow\infty
\]

\end{theorem}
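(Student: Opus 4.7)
My plan is to rephrase the claim as a martingale law of large numbers and then apply Kronecker's lemma. Let $B_n=\sum_{i=1}^n Y_i$ and $M_n=\sum_{i=1}^n(X_i-Y_i)$. Since $B_n\to\infty$ a.s., it suffices to show $M_n/B_n\to 0$ a.s. Define the filtration $\mathcal{H}_n=\sigma(X_{1:n},Y_{1:n+1})$, so that $Y_{n+1}$ (and hence $B_{n+1}$) is $\mathcal{H}_n$-measurable, i.e.\ predictable. Hypothesis \eqref{eq:property1} gives $\mathbb{E}[X_{n+1}-Y_{n+1}\mid\mathcal{H}_n]=0$, making $(M_n,\mathcal{H}_n)$ a martingale; hypothesis \eqref{eq:property2} gives $\mathbb{V}[X_i-Y_i\mid\mathcal{H}_{i-1}]\leq Y_i$.

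Next, exploit the predictability of $B_n$ to introduce the weighted martingale $\widetilde M_n=\sum_{i=1}^n (X_i-Y_i)/B_i$, whose conditional quadratic variation satisfies
\[
\langle\widetilde M\rangle_\infty\;\leq\;\sum_{i=1}^\infty \frac{Y_i}{B_i^2}.
\]
This sum is a.s.\ finite: since $Y_1>0$ a.s., the $i=1$ term equals $1/Y_1<\infty$ a.s., while for $i\geq 2$ the elementary bound $B_i\geq B_{i-1}>0$ gives
\[
\frac{Y_i}{B_i^2}\;=\;\frac{B_i-B_{i-1}}{B_i^2}\;\leq\;\frac{B_i-B_{i-1}}{B_i B_{i-1}}\;=\;\frac{1}{B_{i-1}}-\frac{1}{B_i},
\]
and the series telescopes to at most $1/B_1<\infty$ a.s.

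Having shown $\langle\widetilde M\rangle_\infty<\infty$ a.s., I invoke the local $L^2$-martingale convergence theorem (standard, e.g.\ via stopping at $T_k=\inf\{n:\langle\widetilde M\rangle_n>k\}$ to reduce to the $L^2$-bounded case and then letting $k\uparrow\infty$) to conclude that $\widetilde M_n$ converges a.s.\ to a finite limit. Kronecker's lemma (purely deterministic, applied pathwise) then implies
\[
\frac{M_n}{B_n}\;=\;\frac{1}{B_n}\sum_{i=1}^n B_i\cdot\frac{X_i-Y_i}{B_i}\;\longrightarrow\;0\quad\text{a.s.},
\]
which yields the desired ratio limit.

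The main obstacle is the passage from the a.s.\ finiteness of $\langle\widetilde M\rangle_\infty$ (which I can control only pathwise) to the a.s.\ convergence of $\widetilde M_n$: there is no global $L^2$-bound on $\widetilde M_n$ since we do not assume integrability of $1/Y_1$ or of $Y_i/B_i^2$. This forces the use of the local (stopped) form of the $L^2$-martingale convergence theorem rather than the more elementary Doob $L^2$ version. Everything else---the martingale property, the telescoping bound, and Kronecker's lemma---is routine once the filtration is chosen so that $Y_{n+1}$ is predictable.
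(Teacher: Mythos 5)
Your proof is correct and follows essentially the same route as the paper's: a martingale of weighted centered increments whose predictable bracket is bounded pathwise, almost-sure convergence of that martingale, and then Kronecker's lemma applied with the divergent weights. The only real difference is that the paper weights by $1+\sum_{j=1}^{i}Y_j$ instead of your $B_i$, which makes the martingale square-integrable ($\mathbb{E}[M_n^2]\le n$) so the standard convergence theorem for $L^2$ martingales applies directly, whereas your choice forces the localization step you describe (plus, strictly speaking, a check that the increments $(X_i-Y_i)/B_i$ are integrable, which follows since $\mathbb{E}\left[(X_i+Y_i)/B_i\mid\mathcal{H}_{i-1}\right]\le 2$).
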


\begin{proof}
We will use a martingale approach here. Note that it is also possible to use an alternative proof via Borel-Cantelli. Let $\mathbb{F}%
=(\mathcal{F}_{1},\mathcal{F}_{2},\ldots)$\ be a filtration with
$\mathcal{F}_{n}=\sigma(X_{1},\ldots,X_{n},Y_{1},\ldots,Y_{n+1})$. We have
\[
\mathbb{E}[X_{n+1}|\mathcal{F}_{n}]=Y_{n+1}%
\]
Let $(M_{n})$ be defined as\footnote{The $1$ at the denominator is used to
ensure that $M_{n}^{2}$ is integrable for all $n$.}%
\[
M_{n}=\sum_{i=1}^{n}\frac{X_{i}-Y_{i}}{1+S_{i}}%
\]
where
\[
S_{n}=\sum_{i=1}^{n}Y_{i}%
\]
$(M_{n})$ is a martingale with respect to the filtration $\mathbb{F}$ as
$M_{n}$ is $\mathcal{F}_{n}$-measurable,%
\begin{align*}
\mathbb{E}[M_{n+1}|\mathcal{F}_{n}]  &  =M_{n}+\frac{1}{1+S_{n}}\left(
\mathbb{E}[X_{n+1}|\mathcal{F}_{n}]-Y_{n+1}\right) \\
&  =M_{n}%
\end{align*}
and, using H\"{o}lder inequalities together with properties
(\ref{eq:property1}) and (\ref{eq:property2})%
\begin{align*}
\mathbb{E}[M_{n}^{2}]  &  \leq\sum_{i=1}^{n}\mathbb{E}\left[  \frac{\left(
X_{i}-Y_{i}\right)  ^{2}}{(1+S_{i})^{2}}\right] \\
&  \leq\sum_{i=1}^{n}\mathbb{E}\left[  \frac{\left(  X_{i}-Y_{i}\right)  ^{2}%
}{(1+Y_{i})^{2}}\right] \\
&  =\sum_{i=1}^{n}\mathbb{E}\left[  \mathbb{E}\left[  \left.  \frac{\left(
X_{i}-Y_{i}\right)  ^{2}}{(1+Y_{i})^{2}}\right\vert Y_{i}\right]  \right] \\
&  =\sum_{i=1}^{n}\mathbb{E}\left[  \frac{\mathbb{V}(X_{i}|Y_{i})}%
{(1+Y_{i})^{2}}\right] \\
&  \leq\sum_{i=1}^{n}\mathbb{E}\left[  \frac{Y_{i}}{(1+Y_{i})^{2}}\right] \\
&  \leq n<\infty
\end{align*}
Let
\begin{align*}
\mathbb{E}[(M_{n+1}-M_{n})^{2}|\mathcal{F}_{n}]  &  =\frac{1}{(1+S_{n+1})^{2}%
}\mathbb{E}[(X_{n+1}-Y_{n+1})^{2}|\mathcal{F}_{n}]\\
&  =\frac{1}{(1+S_{n+1})^{2}}\mathbb{V}[X_{n+1}|\mathcal{F}_{n}]\leq
\frac{Y_{n+1}}{(1+S_{n+1})^{2}}%
\end{align*}
It follows that
\begin{align*}
\left\langle M\right\rangle _{n}  &  =\sum_{i=1}^{n}\mathbb{E}[(M_{i}%
-M_{i-1})^{2}|\mathcal{F}_{i-1}]\\
&  \leq\sum_{i=1}^{n}\frac{Y_{i}}{(1+S_{i})^{2}}%
\end{align*}
By Rieman integration,
\[
\sum_{i=1}^{n}\frac{Y_{i}}{(1+\sum_{j=1}^{n}Y_{j})^{2}}\leq\int_{0}^{n}%
\frac{1}{(1+x)^{2}}dx<\infty
\]
Therefore%
\[
\left\langle M\right\rangle _{\infty}=\lim_{n\rightarrow\infty}\left\langle
M\right\rangle _{n}<\infty\text{ almost surely}%
\]
The law of large numbers for square-integrable martingales (see e..g. Theorem
5.4.9 page 217 of \cite{Durrett2010}) thus implies that%
\[
\lim_{n\rightarrow\infty}M_{n}\text{ exists and is a.s. finite}%
\]
therefore%
\[
\sum_{i=1}^{n}\frac{X_{i}-Y_{i}}{1+S_{i}}<\infty\text{ a.s.}%
\]
As $1+S_{i}\rightarrow\infty$ almost surely, the lemma of Kronecker (Theorem 2.5.5. in
\cite{Durrett2010}) implies
\[
\frac{1}{1+S_{n}}\sum_{i=1}^{n}\left(  X_{i}-Y_{i}\right)  \rightarrow0\text{
a.s.}%
\]
and thus
\[
\frac{\sum_{i=1}^{n}X_{i}}{\sum_{j=1}^{n}Y_{j}}\rightarrow1\text{ a.s.}%
\]

\end{proof}

\begin{theorem}
\label{theorem:limitpoissonprocessrate}Let $\mu$ be a random almost surely positive
measure on $\mathbb{R}_{+}$ and define the random integer valued measure $N$
as
\[
N|\mu\sim\Poi(\mu)
\]
Write $\overline{N}_{t}=N([0,t])$ and $\overline{\mu}_{t}=\mu([0,t])$. Then
\[
\overline{N}_{t}|\mu\sim\Poi(\overline{\mu}_{t})
\]
If $\overline{\mu}_{t}\rightarrow\infty$ almost surely and $\frac{\overline{\mu}_{t+1}%
}{\overline{\mu}_{t}}\rightarrow1$ almost surely, then%
\[
\frac{\overline{N}_{t}}{\overline{\mu}_{t}}\rightarrow1\text{ a.s.}%
\]

\end{theorem}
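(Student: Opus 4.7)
\textbf{Proof plan for Theorem \ref{theorem:limitpoissonprocessrate}.}

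The first assertion, $\overline{N}_t\mid \mu \sim \Poi(\overline{\mu}_t)$, is immediate from the definition of a Poisson random measure with mean $\mu$: evaluating $N$ on the set $[0,t]$ yields a Poisson-distributed count with parameter $\mu([0,t])=\overline{\mu}_t$. For the almost-sure ratio limit, the plan is to first establish the result along the integers and then pass to continuous $t$ by a sandwich argument.

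First, I would reduce to Theorem~\ref{theorem:limitratecondiid}. Set
\[
X_k = \overline{N}_k - \overline{N}_{k-1},\qquad Y_k = \overline{\mu}_k - \overline{\mu}_{k-1},\qquad k\geq 1.
\]
By the independent-increment property of Poisson random measures, conditional on $\mu$ the variables $(X_k)_{k\geq 1}$ are independent with $X_k\mid\mu\sim\Poi(Y_k)$. Since $Y_k$ is $\sigma(\mu)$-measurable, conditioning further on $\sigma(Y_{1:k},X_{1:k-1})$ (which is contained in $\sigma(\mu,X_{1:k-1})$) gives
\[
\mathbb{E}[X_k\mid X_{1:k-1},Y_{1:k}]=Y_k,\qquad \mathbb{V}[X_k\mid X_{1:k-1},Y_{1:k}]=Y_k,
\]
so the hypotheses $\mathbb{E}[X_k\mid\cdot]=Y_k$ and $\mathbb{V}[X_k\mid\cdot]\leq Y_k$ of Theorem~\ref{theorem:limitratecondiid} are satisfied. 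Telescoping yields $\sum_{k=1}^n X_k = \overline{N}_n$ and $\sum_{k=1}^n Y_k = \overline{\mu}_n$; the hypothesis $\overline{\mu}_t\to\infty$ a.s.\ forces $\sum_{k=1}^n Y_k \to\infty$ a.s. Applying Theorem~\ref{theorem:limitratecondiid} gives
\[
\frac{\overline{N}_n}{\overline{\mu}_n}\xrightarrow[n\to\infty]{\text{a.s.}} 1
\]
along integer $n$.

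To promote this to a limit for continuous $t$, I would use the monotonicity of $t\mapsto \overline{N}_t$ and $t\mapsto\overline{\mu}_t$. For any $t>0$, setting $n=\lfloor t\rfloor$ gives $\overline{N}_n\leq\overline{N}_t\leq \overline{N}_{n+1}$ and $\overline{\mu}_n\leq\overline{\mu}_t\leq \overline{\mu}_{n+1}$, so
\[
\frac{\overline{N}_n}{\overline{\mu}_n}\cdot\frac{\overline{\mu}_n}{\overline{\mu}_{n+1}}\;\leq\;\frac{\overline{N}_t}{\overline{\mu}_t}\;\leq\;\frac{\overline{N}_{n+1}}{\overline{\mu}_{n+1}}\cdot\frac{\overline{\mu}_{n+1}}{\overline{\mu}_n}.
\]
Both outer ratios $\overline{N}_n/\overline{\mu}_n$ and $\overline{N}_{n+1}/\overline{\mu}_{n+1}$ tend to $1$ a.s.\ by the previous step, while the hypothesis $\overline{\mu}_{t+1}/\overline{\mu}_t\to 1$ a.s.\ implies $\overline{\mu}_n/\overline{\mu}_{n+1}\to 1$ and $\overline{\mu}_{n+1}/\overline{\mu}_n\to 1$ a.s. Squeezing yields $\overline{N}_t/\overline{\mu}_t\to 1$ a.s., completing the proof.

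The only subtle point, and the one I would check most carefully, is the verification that the conditioning on $\sigma(X_{1:k-1},Y_{1:k})$ really preserves the Poisson conditional moments---this is where the independent-increments structure of the Cox process $N\mid\mu\sim\Poi(\mu)$ is essential. Everything else is a routine application of Theorem~\ref{theorem:limitratecondiid} together with the ratio hypothesis $\overline{\mu}_{t+1}/\overline{\mu}_t\to 1$ used solely to interpolate between integers.
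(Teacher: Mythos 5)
Your argument is correct and follows essentially the same route as the paper's proof: define the integer increments $X_k=\overline{N}_k-\overline{N}_{k-1}$, $Y_k=\overline{\mu}_k-\overline{\mu}_{k-1}$, invoke Theorem~\ref{theorem:limitratecondiid} to get the limit along integers, and then sandwich $\overline{N}_t/\overline{\mu}_t$ between the integer ratios using $\overline{\mu}_{n+1}/\overline{\mu}_n\to 1$. Your explicit verification of the conditional mean and variance hypotheses via the independent-increments property of the Cox process is a welcome detail the paper leaves implicit.
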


\begin{proof}
Let $k=1,2,\ldots.$, let $X_{k}=\overline{N}_{k}-\overline{N}_{k-1}$ and
$Y_{k}=\overline{\mu}_{k}-\overline{\mu}_{k-1}$. From Theorem
\ref{theorem:limitratecondiid}, we have
\[
\frac{\sum_{k=1}^{n}X_{k}}{\sum_{k=1}^{n}Y_{k}}\rightarrow1\text{ as
}n\rightarrow\infty
\]
and so%
\[
\frac{\overline{N}_{n}}{\overline{\mu}_{n}}\rightarrow1\text{ as }%
n\rightarrow\infty
\]
As, for $n\leq t\leq n+1$
\[
\frac{\overline{\mu}_{n}}{\overline{\mu}_{n+1}}\frac{\overline{N}_{n}%
}{\overline{\mu}_{n}}\leq\frac{\overline{N}_{t}}{\overline{\mu}_{t}}\leq
\frac{\overline{N}_{n+1}}{\overline{\mu}_{n+1}}\frac{\overline{\mu}_{n+1}%
}{\overline{\mu}_{n}}%
\]
as $\frac{\overline{\mu}_{n+1}}{\overline{\mu}_{n}}\rightarrow1$ almost surely, we
conclude that
\[
\frac{\overline{N}_{t}}{\overline{\mu}_{t}}\rightarrow1\text{ a.s.}%
\]

\end{proof}

\begin{lemma}
\label{lemma:laplaceexplimit}Let $Z_{t}$ be an almost surely increasing L\'{e}vy
process (or subordinator) without deterministic component and with L\'{e}vy
intensity $\rho(w)$. Let
\[
\psi(t)=\int_{0}^{\infty}(1-\exp(-wt))\rho(w)dw
\]
be its Laplace exponent. If
\[
\int_{0}^{\infty}\rho(w)dw=\infty
\]
then
\[
\lim_{t\rightarrow\infty}\psi(t)=\infty.
\]

\end{lemma}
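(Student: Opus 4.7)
The plan is to apply monotone convergence directly. For each fixed $w>0$, the map $t\mapsto 1-e^{-wt}$ is non-negative and non-decreasing in $t$, and converges to $1$ as $t\to\infty$. Since $\rho$ is a non-negative measure, the integrand $(1-e^{-wt})\rho(w)$ in the definition of $\psi(t)$ is non-negative and non-decreasing in $t$ pointwise on $(0,\infty)$. Hence the monotone convergence theorem applies.

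More concretely, I would first observe that $\psi$ is well-defined and finite for every $t>0$ by the standard subordinator hypothesis $\int_0^\infty(1-e^{-w})\rho(w)dw<\infty$ (combined with the bound $1-e^{-wt}\leq (1\vee t)(1-e^{-w})$ for large $w$, and $1-e^{-wt}\leq wt$ for small $w$, which is summable against any Lévy measure). Then I would note that $\psi$ is monotone non-decreasing, so $\lim_{t\to\infty}\psi(t)\in[0,\infty]$ exists. Applying monotone convergence,
\begin{equation*}
\lim_{t\to\infty}\psi(t)=\lim_{t\to\infty}\int_0^\infty(1-e^{-wt})\rho(w)dw=\int_0^\infty \lim_{t\to\infty}(1-e^{-wt})\,\rho(w)dw=\int_0^\infty\rho(w)dw.
\end{equation*}
By the infinite-activity hypothesis $\int_0^\infty\rho(w)dw=\infty$, the right-hand side is $+\infty$, and the claim follows.

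There is no substantive obstacle here: the only minor subtlety is making sure the integrand is indeed monotone in $t$ (which is immediate) and that $\rho$ places no atom at $0$ (which is part of the convention for Lévy intensities of subordinators, so the pointwise limit $1$ is achieved $\rho$-a.e.\ on the support). Everything else is routine bookkeeping.
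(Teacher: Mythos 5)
Your proof is correct and follows essentially the same route as the paper: both arguments apply the monotone convergence theorem to the increasing family $(1-e^{-wt})\rho(w)$ and then use monotonicity of $\psi$ to pass from a sequence of values of $t$ to the continuous limit (the paper does this via $\psi(t)\geq\psi(\lfloor t\rfloor)$, you via the observation that the limit of the non-decreasing function $\psi$ exists in $[0,\infty]$). The extra remarks on finiteness of $\psi(t)$ are harmless but not needed for the conclusion.
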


\begin{proof}
Consider sequence of functions $f_{k}(w)=(1-\exp(-wk))\rho(w)$, $k\in
\mathbb{N}.$ We have for%
\[
0\leq f_{k}(w)<f_{k+1}(w)\text{ for all }w>0
\]
and
\[
\lim_{k\rightarrow\infty}f_{k}(w)=\rho(w)
\]
Thus, by Lebesgue's monotone convergence theorem%
\begin{align*}
\lim_{k\rightarrow\infty}\psi(k)  &  =\int_{0}^{\infty}\rho(w)dw\\
&  =\infty
\end{align*}
as $\psi(t)\geq\psi(\left\lfloor t\right\rfloor )$,
\[
\lim_{t\rightarrow\infty}\psi(t)=\infty
\]

\end{proof}

\begin{lemma}
[Relating tail L\'{e}vy intensity and Laplace exponent]\label{th:asymptoticlevy}\cite[Proposition 17]{Gnedin2007}
%Equivalence between asymptotic behaviors of the
Let $\rho(w)$ be the L\'{e}vy intensity $\overline{\rho}(x)=\int_{x}^{\infty
}\rho(w)dw$ be the tail L\'{e}vy intensity, and  $\psi(t)=\int_{0}^{\infty
}(1-\exp(-wt))\rho(w)dw$ its Laplace exponent$.$ The following conditions are
equivalent:%
\begin{align}
\overline{\rho}(x)  &\overset{x\downarrow 0}{ \sim}\ell(1/x)x^{-\sigma}\\
\psi(t)  & \overset{t\uparrow\infty}{\sim}\Gamma(1-\sigma)t^{\sigma}\ell(t)
\end{align}
where $0<\sigma<1$ and $\ell$ is a function slowly varying at $\infty$ i.e.
satisfying $\ell(cy)/\ell(y)\rightarrow1$ as $y\rightarrow\infty$, for every
$c>0$.
\end{lemma}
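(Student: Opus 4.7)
The plan is to recognize this as Karamata's Tauberian theorem in disguise, applied to the (non-increasing) tail Lévy intensity $\overline{\rho}$. The key bridge between $\overline{\rho}$ and $\psi$ is the identity
\[
\psi(t)=\int_0^\infty (1-e^{-wt})\rho(dw)=\int_0^\infty\!\left[\int_0^w t e^{-xt}\,dx\right]\rho(dw)=t\int_0^\infty e^{-xt}\,\overline{\rho}(x)\,dx,
\]
obtained by Fubini after writing $1-e^{-wt}=\int_0^w t e^{-xt}dx$. Thus $\psi(t)/t$ is the Laplace transform of the monotone function $\overline{\rho}$, and the lemma reduces to a standard Abelian--Tauberian statement about Laplace transforms of regularly varying monotone functions.

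For the Abelian direction (first display implies the second), I would change variables $u=xt$ in the integral, giving
\[
\psi(t)=\int_0^\infty e^{-u}\,\overline{\rho}(u/t)\,du.
\]
The assumption $\overline{\rho}(x)\sim \ell(1/x)x^{-\sigma}$ as $x\downarrow 0$ yields, for every fixed $u>0$, $\overline{\rho}(u/t)\sim t^\sigma u^{-\sigma}\ell(t/u)$ as $t\to\infty$; since $\ell$ is slowly varying, $\ell(t/u)/\ell(t)\to 1$ pointwise. The main technical step is to move the asymptotic equivalence under the integral sign. I would use Potter's bounds (any slowly varying function is bounded above and below by $\varepsilon$-power functions on $(0,\infty)$, uniformly outside neighborhoods of $0$ and $\infty$), combined with the exponential factor $e^{-u}$, to produce an integrable dominating function, then apply dominated convergence to conclude
\[
\psi(t)\sim t^\sigma \ell(t)\int_0^\infty e^{-u}u^{-\sigma}\,du=\Gamma(1-\sigma)\,t^\sigma\ell(t).
\]

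For the Tauberian direction (second display implies the first), the monotonicity of $\overline{\rho}$ is crucial. The classical Karamata Tauberian theorem \cite[e.g.][Thm.~1.7.1]{Bingham1989} (which I would cite as the black box used) states that for a non-increasing function $U$ on $(0,\infty)$, $\int_0^\infty e^{-xt}U(x)dx \sim c\, t^{\sigma-1}\ell(t)$ as $t\to\infty$ for some $\sigma\in(0,1)$ is equivalent to $U(x)\sim c\,\ell(1/x)x^{-\sigma}/\Gamma(1-\sigma)$ as $x\downarrow 0$. Applying this to $U=\overline{\rho}$ with $c=\Gamma(1-\sigma)$ delivers the claim after dividing the relation $\psi(t)/t\sim\Gamma(1-\sigma)t^{\sigma-1}\ell(t)$ by $t$ and matching constants.

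The main obstacle is the Tauberian direction: without the monotonicity of $\overline{\rho}$, the reverse implication would fail, so the argument genuinely relies on invoking Karamata's theorem rather than a simple change-of-variables calculation. The Abelian direction, by contrast, is essentially an exercise in dominated convergence once Potter's bounds are in hand.
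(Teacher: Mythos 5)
Your proof is correct and follows essentially the same route as the paper: both reduce the statement to the identity $\psi(t)=t\int_0^\infty e^{-xt}\overline{\rho}(x)\,dx$ (the paper via integration by parts, you via Fubini) and then invoke the classical Karamata/Feller Abelian--Tauberian theorem for Laplace transforms of monotone functions. The only cosmetic difference is that you re-derive the Abelian direction by hand with Potter's bounds and dominated convergence, whereas the paper simply cites Feller's equivalence (its Proposition on Tauberian theorems) for both directions at once.
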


\begin{proof}
Applying integration by part, we have%
\begin{align*}
\psi(t)  & =\int_{0}^{\infty}(1-\exp(-wt))\rho(w)dw\\
& =t\int_{0}^{\infty}\exp(-wt)\overline{\rho}(w)dw
\end{align*}
As $\overline{\rho}(x)$ is positive monotonic, application of Proposition
\ref{th:tauberian} yields the following equivalence%
\begin{align*}
\overline{\rho}(x)  & \overset{x\downarrow 0}{\sim}\ell(1/x)x^{-\sigma}\\
\frac{\psi(t)}{t}  & \overset{t\uparrow\infty}{\sim}\Gamma(1-\sigma)t^{\sigma-1}\ell(t)
\end{align*}

\end{proof}

\begin{proposition}
[Tauberian theorem]\label{th:tauberian}\cite[Chapter XIII,
Section 5,\ Theorem 4 p. 446]{Feller1971} Let $U(dw)$ be a measure on
$(0,\infty)$ with ultimately monontone density $u$, i.e. monotone in some
interval $(x_{0},\infty)$. Assume that%
\[
\mathcal{L}(t)=\int_{0}^{\infty}e^{-tw}u(w)dw
\]
exists for $t>0$. If $\ell$ is slowly varying at infinity and $0\leq a<\infty
$, then the two relations are equivalent%
\begin{align}
\mathcal{L}(\tau)  & \overset{\tau\downarrow0}{\sim}\tau^{-a}\ell(1/\tau)\\
u(x)  & \overset{x\uparrow\infty}{\sim}\frac{1}{\Gamma(a)}x^{a-1}\ell(x)
\end{align}
\newline Additionally\ \cite[Chapter XIII, Section 5, Theorem 3]{Feller1971},
the result remains valid if we interchange the role of infinity and 0, hence
$x\rightarrow\infty$ and $\tau\rightarrow0$%
\begin{align}
\mathcal{L}(\tau)  & \overset{\tau\uparrow\infty}{\sim}\tau^{-a}\ell(\tau)\\
u(x)  & \overset{x\downarrow 0}{\sim}\frac{1}{\Gamma(a)}x^{a-1}\ell(1/x)
\end{align}

\end{proposition}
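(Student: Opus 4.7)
The plan is to establish the equivalence in two stages: first prove a version relating the cumulative measure $U(x) = \int_0^x u(w)dw$ to the Laplace transform, then upgrade from $U$ to the density $u$ using the ultimate monotonicity hypothesis. This is the classical Karamata--Feller strategy, and the structure makes clear why ultimate monotonicity is indispensable for the density-level statement.

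For the first stage, I would prove the equivalence $U(x) \sim \frac{x^a \ell(x)}{\Gamma(a+1)}$ as $x \to \infty$ if and only if $\mathcal{L}(\tau) \sim \tau^{-a}\ell(1/\tau)$ as $\tau \to 0$. The Abelian direction (measure to transform) is straightforward: after the substitution $w = x/\tau$, writing $\mathcal{L}(\tau) = \tau \int_0^\infty e^{-\tau w} U(w)\,dw$, one applies regular variation of $\ell$ and dominated convergence to recognise the limit as $\Gamma(a+1)\tau^{-a}\ell(1/\tau)/\Gamma(a+1) = \tau^{-a}\ell(1/\tau)$. The Tauberian direction is the hard one: from knowing $\int_0^\infty e^{-\tau w}dU(w)$ asymptotically, one has to recover $U(x)$ itself. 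I would use Karamata's polynomial approximation trick: the asymptotics of $\int e^{-\tau w}dU(w)$ imply the asymptotics of $\int P(e^{-\tau w})dU(w)$ for every polynomial $P$; by Stone--Weierstrass, polynomials in $e^{-w}$ are dense (on $[0,\infty]$) in the continuous functions vanishing at $\infty$, so one can approximate the indicator $\mathbf{1}_{[0,1]}(\tau w)$ on both sides to extract $U(1/\tau)$.

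For the second stage, passing from $U(x)$ to $u(x)$, suppose $u$ is ultimately decreasing (the increasing case is symmetric). For any $h \in (0,1)$ and $x$ large enough,
\begin{equation*}
hx\cdot u((1+h)x) \;\leq\; U((1+h)x) - U(x) \;\leq\; hx\cdot u(x).
\end{equation*}
Dividing by $hx$ and using $U(x) \sim x^a\ell(x)/\Gamma(a+1)$ together with the Uniform Convergence Theorem for slowly varying functions ($\ell((1+h)x)/\ell(x) \to 1$), one obtains
\begin{equation*}
\frac{U((1+h)x) - U(x)}{hx} \;\sim\; \frac{(1+h)^a - 1}{h}\cdot \frac{x^{a-1}\ell(x)}{\Gamma(a+1)}.
\end{equation*}
Taking $\limsup$ and $\liminf$ and then letting $h \downarrow 0$ sandwiches $u(x)/(x^{a-1}\ell(x))$ between quantities that both converge to $a/\Gamma(a+1) = 1/\Gamma(a)$, giving $u(x) \sim x^{a-1}\ell(x)/\Gamma(a)$. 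The converse (density asymptotics to transform asymptotics) is immediate from the Abelian half of the first stage applied to $U(x) = \int_0^x u(w)dw$, computed via Karamata's integration theorem for regularly varying functions.

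The main obstacle is unambiguously the Tauberian direction at the $U$-level: extracting pointwise information about a non-negative measure from its Laplace transform requires the polynomial-approximation argument, and making the uniformity in $\tau$ rigorous is the technical heart of the proof. By contrast, the density upgrade step is an elementary sandwich argument once monotonicity is in hand, and the second statement of the proposition (interchanging the roles of $0$ and $\infty$) follows by the symmetric argument with the substitution $w \mapsto 1/w$, or equivalently by applying the first version to the reflected measure.
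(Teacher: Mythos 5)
The paper does not prove this proposition: it is quoted directly from Feller (1971, Ch.~XIII.5), and your reconstruction follows exactly that classical route --- Karamata's Tauberian theorem at the level of $U(x)=\int_0^x u(w)\,dw$ (Abelian direction by dominated convergence, Tauberian direction by the polynomial/Stone--Weierstrass approximation of indicators), followed by the monotone-density sandwich $hx\,u((1+h)x)\leq U((1+h)x)-U(x)\leq hx\,u(x)$ with $h\downarrow 0$ --- so the outline is correct and matches the cited source. One small caveat: the $\tau\uparrow\infty$ / $x\downarrow 0$ version does not literally follow by ``applying the first version to the reflected measure,'' since the substitution $w\mapsto 1/w$ does not carry a Laplace transform to a Laplace transform; the correct justification is the one you also give, namely rerunning the same selection/approximation and sandwich arguments with the limits taken at the other endpoint, which is precisely Feller's remark for his Theorem~3.
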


\begin{proposition}[Chebyshev-type inequality]

Let $X$ be a random variable with $\mathbb{E}[X^{2}]<\infty$ and $\theta\in(0,1)$. Then
\begin{equation}
P(X\geq\theta\mathbb{E}[X])\geq1-\frac{Var(X)}{(1-\theta^{2})E[X]^{2}}
\label{eq:chebyshev2}%
\end{equation}
\end{proposition}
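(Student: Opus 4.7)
The plan is to prove this Chebyshev-type inequality by upper-bounding the complementary probability $P(X<\theta\mu)$, where $\mu=\mathbb{E}[X]$, via Markov's inequality applied to a nonnegative functional of $X$ that is forced away from zero on the target event. Working with the complementary event is natural because the stated inequality is equivalent to $P(X<\theta\mu)\leq Var(X)/((1-\theta^2)\mu^2)$, and the right-hand side already has the form of a variance bound.

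The starting point, which already yields the slightly weaker classical Chebyshev bound with denominator $(1-\theta)^2\mu^2$, is the observation that on $\{X<\theta\mu\}$ one has $\mu-X>(1-\theta)\mu$, and hence $(X-\mu)^2>(1-\theta)^2\mu^2$. Applying Markov's inequality to $(X-\mu)^2$ then produces
\[
(1-\theta)^2\mu^2\cdot P(X<\theta\mu)\leq \mathbb{E}[(X-\mu)^2\mathbf{1}_{X<\theta\mu}]\leq Var(X).
\]

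To sharpen the denominator to $(1-\theta^2)\mu^2=(1-\theta)(1+\theta)\mu^2$ as stated, the plan is to replace the functional $(X-\mu)^2$ by the quadratic $\mu^2-X^2$. For nonnegative $X$---the setting of the paper's applications to $D_\alpha^\ast$ and $N_\alpha$---the event $\{X<\theta\mu\}$ coincides with $\{X^2<\theta^2\mu^2\}$, equivalently $\{\mu^2-X^2>(1-\theta^2)\mu^2\}$. Markov's inequality applied to the positive part $(\mu^2-X^2)_+$ then gives
\[
P(X<\theta\mu)\leq\frac{\mathbb{E}[(\mu^2-X^2)_+]}{(1-\theta^2)\mu^2}.
\]

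The main obstacle will be bounding $\mathbb{E}[(\mu^2-X^2)_+]$ by $Var(X)$. Using the identity $\mathbb{E}[(X^2-\mu^2)_+]-\mathbb{E}[(\mu^2-X^2)_+]=\mathbb{E}[X^2-\mu^2]=Var(X)$, this reduces to a matching estimate for $\mathbb{E}[(X^2-\mu^2)_+]$. I expect this step to be the principal difficulty: a naive second-moment bound is not sharp enough, and one must exploit both the nonnegativity of $X$ and a Cauchy--Schwarz estimate applied to the factorization $\mu^2-X^2=(\mu-X)(\mu+X)$ in order to close the argument with the factor $(1-\theta^2)$ rather than the looser $(1-\theta)^2$.
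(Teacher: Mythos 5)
Your first step is fine and already yields the classical bound $P(X\geq\theta\mu)\geq 1-\mathrm{Var}(X)/\bigl((1-\theta)^{2}\mu^{2}\bigr)$, and the Markov step for nonnegative $X$, namely $P(X<\theta\mu)\leq\mathbb{E}\bigl[(\mu^{2}-X^{2})_{+}\bigr]/\bigl((1-\theta^{2})\mu^{2}\bigr)$, is also correct. The gap is precisely the step you defer to the end: the estimate $\mathbb{E}\bigl[(\mu^{2}-X^{2})_{+}\bigr]\leq\mathrm{Var}(X)$ is not a technical difficulty to be overcome by Cauchy--Schwarz, it is false. Take $X=1$ or $X=4$ with probability $1/2$ each and $\theta=1/2$: then $\mu=5/2$, $\mathrm{Var}(X)=9/4$, while $\mathbb{E}\bigl[(\mu^{2}-X^{2})_{+}\bigr]=\tfrac12(25/4-1)=21/8>9/4$. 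Worse, the same example violates the proposition itself: $P(X\geq\theta\mu)=P(X=4)=1/2$, whereas $1-\mathrm{Var}(X)/\bigl((1-\theta^{2})\mu^{2}\bigr)=1-\tfrac{9/4}{75/16}=\tfrac{13}{25}>\tfrac12$. So the inequality as stated (for an arbitrary, even nonnegative, random variable with finite second moment) does not hold, and no completion of your strategy---or of any other---is possible without additional hypotheses. Note also that your second step silently adds the assumption $X\geq0$, which the statement does not make, though this is moot given the counterexample.

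For context, the paper states this proposition without proof, so there is nothing to compare against; in its only use (the lower bound on $\mathbb{E}[N_{\alpha}]$ in the proof of Theorem~\ref{th:inequalitynodes}) all that is needed is that the probability bound tends to one as $\alpha\to\infty$, which is already delivered by the plain Chebyshev version with denominator $(1-\varepsilon)^{2}\mathbb{E}[D_{\alpha}^{\ast}]^{2}$, or, for $\mu>0$, by Cantelli's inequality $P(X\geq\theta\mu)\geq1-\mathrm{Var}(X)/\bigl(\mathrm{Var}(X)+(1-\theta)^{2}\mu^{2}\bigr)$, which is the correct one-sided sharpening. The Paley--Zygmund inequality $P(X>\theta\mu)\geq(1-\theta)^{2}\mu^{2}/\mathbb{E}[X^{2}]$ is the standard way to exploit nonnegativity, and none of these routes produces the factor $(1-\theta^{2})$ against $\mathrm{Var}(X)$; if you want a provable statement in the spirit of your attempt, prove one of those instead.
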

\section{Proofs of results on the properties of the GGP graph}
\label{sec:proofGGP}
\subsection{Proof of Theorem \ref{th:inequalitynodes}}

The hierarchical model for the number of nodes $N_{\alpha}$ in the gamma
process case is%
\begin{align}
N_{\alpha}  &  =\sum_{i=1}^{2D_{\alpha}^{\ast}}Y_{i}\label{eq:Nalpha}\\
Y_{i}  &  \overset{ind}{\sim}\Ber\left(  \frac{\alpha}{\alpha+i-1}\right)
\label{eq:Ybernoulli}\\
D_{\alpha}^{\ast}|W_{\alpha}^{\ast}  &  \sim \Poi\left(  W_{\alpha}%
^{\ast\ 2}\right) \\
W_{\alpha}^{\ast}  &  \sim \Gam(\alpha,\tau)
\end{align}
where $D_{\alpha}^{\ast}$ is the total number of directed edges in the
directed graph, and $W_{\alpha}^{\ast}$ is the total mass. We have
\begin{align*}
\mathbb{E}[D_{\alpha}^{\ast}]  &  =\mathbb{E}[W_{\alpha}^{\ast\ 2}%
]=\frac{\alpha(\alpha+1)}{\tau^{2}}\\
Var(D_{\alpha}^{\ast})  &  =\frac{\alpha(\alpha+1)}{\tau^{2}}\left(
1+\frac{4\alpha+6}{\tau^{2}}\right)
\end{align*}

From\ Equations (\ref{eq:Nalpha}) and (\ref{eq:Ybernoulli}), we have
\[
\mathbb{E}[N_{\alpha}|D_{\alpha}^{\ast}]=\alpha\sum_{i=1}^{2D_{\alpha}^{\ast}%
}\frac{1}{\alpha+i-1}%
\]
As the function $f:x\rightarrow\frac{1}{\alpha+x}$ is decreasing on
$[0,n],~n>0$, we have
\begin{align*}
\sum_{i=2}^{n+1}\frac{1}{\alpha+i-1}  &  \leq\int_{0}^{n}f(x)dx\leq\sum
_{i=1}^{n}\frac{1}{\alpha+i-1}\\
\left(  \sum_{i=1}^{n}\frac{1}{\alpha+i-1}\right)  +\frac{1}{\alpha+n}%
-\frac{1}{\alpha}  &  \leq\log\left(  1+\frac{n}{\alpha}\right)  \leq
\sum_{i=1}^{n}\frac{1}{\alpha+i-1}%
\end{align*}
hence%
\[
\alpha\log\left(  1+\frac{2D_{\alpha}^{\ast}}{\alpha}\right)  \leq\alpha
\sum_{i=1}^{2D_{\alpha}^{\ast}}\frac{1}{\alpha+i-1}\leq\alpha\log\left(
1+\frac{2D_{\alpha}^{\ast}}{\alpha}\right)  +1-\frac{\alpha}{\alpha
+2D_{\alpha}^{\ast}}%
\]
and so
\begin{equation}
\alpha\log\left(  1+\frac{2D_{\alpha}^{\ast}}{\alpha}\right)  \leq
\mathbb{E}[N_{\alpha}|D_{\alpha}^{\ast}]\leq\alpha\log\left(  1+\frac
{2D_{\alpha}^{\ast}}{\alpha}\right)  +1-\frac{\alpha}{\alpha+2D_{\alpha}%
^{\ast}} \label{eq:boundNgivenD}%
\end{equation}
Lets work on the upper bound of Eq. (\ref{eq:boundNgivenD}). The function
$x\rightarrow\alpha\log(1+\frac{2x}{\alpha})+1-\frac{\alpha}{\alpha
+2D_{\alpha}^{\ast}}$ is concave, so by Jensen's inequality%
\begin{align*}
\mathbb{E}\left[  \alpha\log\left(  1+\frac{2D_{\alpha}^{\ast}}{\alpha
}\right)  +1-\frac{\alpha}{\alpha+2D_{\alpha}^{\ast}}\right]   &  \leq
\alpha\log\left(  1+2\frac{\mathbb{E}[D_{\alpha}^{\ast}]}{\alpha}\right)
+1-\frac{\alpha}{\alpha+2\mathbb{E}[D_{\alpha}^{\ast}]}\\
&  =\alpha\log\left(  1+\frac{2(\alpha+1)}{\tau^{2}}\right)  +1-\frac{\tau
^{2}}{\tau^{2}+2(\alpha+1)}%
\end{align*}
Now lets work on the lower bound of Eq. (\ref{eq:boundNgivenD}). For
$\theta\geq0$, Markov inequality gives%
\[
\Pr\left[  \log\left(  1+\frac{2D_{\alpha}^{\ast}}{\alpha}\right)  \geq
\theta\right]  \leq\frac{\mathbb{E}\left[  \log\left(  1+\frac{2D_{\alpha
}^{\ast}}{\alpha}\right)  \right]  }{\theta}%
\]
Taking $\theta=\log\left(  1+\varepsilon\frac{2(\alpha+1)}{\tau^{2}}\right)
$, with $\varepsilon\in(0,1)$, we obtain%
\begin{align*}
&\log\left(  1+\varepsilon\frac{2(\alpha+1)}{\tau^{2}}\right)  \Pr\left[
\log\left(  1+\frac{2D_{\alpha}^{\ast}}{\alpha}\right)  \geq\log\left(
1+\varepsilon\frac{2(\alpha+1)}{\tau^{2}}\right)  \right] \\
& \leq\mathbb{E}%
\left[  \log\left(  1+\frac{2D_{\alpha}^{\ast}}{\alpha}\right)  \right]
\end{align*}
hence%
\begin{equation}
\log\left(  1+\varepsilon\frac{2(\alpha+1)}{\tau^{2}}\right)  \Pr\left(
D_{\alpha}^{\ast}\geq\varepsilon\frac{\alpha(\alpha+1)}{\tau^{2}}\right)
\leq\mathbb{E}\left[  \log\left(  1+\frac{2D_{\alpha}^{\ast}}{\alpha}\right)
\right]  \label{eq:bound1}%
\end{equation}
Using the Chebyshev-type inequality (\ref{eq:chebyshev2}) we obtain%
\begin{equation}
\Pr\left(  D_{\alpha}^{\ast}\geq\varepsilon\frac{\alpha(\alpha+1)}{\tau^{2}%
}\right)  \geq1-\frac{Var(D_{\alpha}^{\ast})}{(1-\varepsilon^{2}%
)\mathbb{E}[D_{\alpha}^{\ast}]^{2}} \label{eq:bound2}%
\end{equation}
Let $c_{1}(\alpha)=\frac{Var(D_{\alpha}^{\ast})}{\mathbb{E}[D_{\alpha}^{\ast
}]^{2}}=\frac{\tau^{2}}{\alpha(\alpha+1)}\left(  1+\frac{4\alpha+6}{\tau^{2}%
}\right)  $, which is a decreasing function of $\alpha.$ Combining
Inequalities (\ref{eq:bound1}) and (\ref{eq:bound2}) with
(\ref{eq:boundNgivenD}), we have the following inequalities, for any
$\varepsilon\in(0,1)$
\[
\alpha\log\left(  1+\varepsilon\frac{2(\alpha+1)}{\tau^{2}}\right)  \left(
1-\frac{c_{1}(\alpha)}{1-\varepsilon^{2}}\right)  \leq\mathbb{E}[N_{\alpha
}]\leq\alpha\log\left(  1+\frac{2(\alpha+1)}{\tau^{2}}\right)  +\frac
{2(\alpha+1)}{\tau^{2}+2(\alpha+1)}%
\]
where $c_{1}(\alpha)\rightarrow0$ as $\alpha\rightarrow\infty$, and so,
\[
\mathbb{E}[N_{\alpha}]=\Theta(\alpha\log\alpha),~~~ \alpha\rightarrow\infty
\]

\subsection{Proof of Theorem \ref{th:powerlaw}}

Consider the conditionally Poisson construction%
\begin{align*}
D^{\ast}_{\alpha}|W_{\alpha}^{\ast}  &  \sim \mbox{Poisson}(W_\alpha^{\ast~2})\\
(U^\prime_{1},\ldots,U^\prime_{2D^{\ast}_\alpha})|D^{\ast}_{\alpha},W_{\alpha}  &
\sim\frac{W_{\alpha}}{W_\alpha^{\ast}}.%
\end{align*}

The number of $U^\prime_j$ in any interval $[a,b]$ $a<b\leq\alpha$ is distributed from a Poisson distribution with rate $2W([a,b])W([0,\alpha])$ and therefore goes to infinity as $\alpha$ goes to infinity. We can therefore invoke asymptotic results on i.i.d. sampling from a normalized generalized gamma process, $\frac{W_{\alpha}}{W_\alpha^{\ast}}$.

Let $N_{\alpha,j}$ be
the number of clusters of size $j$ in $(U^\prime_{1},\ldots,U^\prime_{2D^{\ast}_\alpha})$. In the directed graph model, $N_{\alpha,j}$ corresponds to the number of nodes with $j$ incoming/outgoing edges (self-edges count twice for a given node).

As the $U^\prime_j$ are drawn from a normalized generalized gamma process of parameters $(\alpha,\sigma, \tau)$, we have the following asymptotic result~\cite[Corollary 1]{Pitman2006,Lijoi2007}
\[
\frac{N_{\alpha,j}}{N_\alpha}\, \overrightarrow{\alpha\to\infty} \,\,\, p_{\sigma,j}=\frac{\sigma\Gamma(j-\sigma)}{\Gamma
(1-\sigma)\Gamma(j+1)}.%
\]
almost surely, for $j=1,2,\ldots$.

\section{Proofs of results on posterior characterization}
\label{sec:proofposterior}

\subsection{Proof of Theorem \ref{th:posteriorsimple}}

We first state a general Palm formula for Poisson random measures. This result
is used by various authors in similar forms for characterization of conditionals in Bayesian
nonparametric models
\cite{Prunster2002,James2002,James2005,James2009,Caron2012,Caron2014,Zhou2014,James2014}.

\begin{theorem}
\label{th:Palmcalculus}Let $\Pi$ denote a Poisson random measure on a Polish
space $\mathcal{S}$ with non-atomic mean measure $\nu$. Let $\mathcal{M}$ be
the space of boundedly finite measures on $\mathcal{S}$, with sigma-field
$\mathcal{B}(\mathcal{M})$. Let $f_i$, $i=1,\ldots,K$ be functions from $\mathcal{S}$ to $\mathbb R_+$ such that $f_i(s)f_j(s)=0$ for all $i\neq j$. Let $s_{1:K}=(s_1,\ldots,s_K)\in\mathcal S^K$ and $G$ be a measurable function on $\mathcal{S}^{K}%
\times\mathcal{M}$.\ Then we have the following generalized Palm formula
\begin{align}
&\mathbb{E}_{\Pi}\left[  \int_{\mathcal{S}^{K}}G(s_{1:K},\Pi)\prod_{i=1}%
^{K}f_i(s_i)\Pi(ds_{i})\right]  \nonumber\\ &~~~~~~~~~~~~~~~~~~~=\int_{\mathcal{S}^{K}}\mathbb{E}_{\Pi}\left[
G\left (s_{1:K},\Pi+\sum_{i=1}^{K}\delta_{s_{i}}\right )\right]\prod_{i=1}^{K}f_i(s_i)\nu(ds_{i})
\label{eq:Palmrecursive}%
\end{align}

\end{theorem}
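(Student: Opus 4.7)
My plan is to prove the generalized Palm formula by induction on $K$, bootstrapping from the classical Slivnyak--Mecke identity ($K=1$), with the disjoint-support condition $f_i f_j \equiv 0$ used crucially to kill diagonal contributions that appear at each step of the induction.

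\textbf{Base case $K=1$.} I would establish the standard Mecke identity
\[
\mathbb{E}\left[\int_{\mathcal S} h(s,\Pi)\,\Pi(ds)\right] = \int_{\mathcal S} \mathbb{E}\bigl[h(s,\Pi+\delta_s)\bigr]\,\nu(ds)
\]
for any nonnegative measurable $h:\mathcal S\times\mathcal M\to\mathbb R_+$. The standard route is a monotone-class argument: first verify it for product indicators $h(s,\mu)=\mathbf 1_A(s)\mathbf 1_{\{\mu(B_1)=k_1,\ldots,\mu(B_m)=k_m\}}$ using the independence of $\Pi$ on disjoint sets and the explicit Poisson marginals (checking that conditional on a point lying in an infinitesimal region $ds\subset A$, the remaining process has distribution $\Pi+\delta_s$ by complete randomness); then extend to simple functions by linearity and to general measurable $h\geq 0$ by monotone convergence. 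Setting $h(s,\mu)=f_1(s)G(s,\mu)$ yields the $K=1$ case of the theorem.

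\textbf{Inductive step.} Suppose the identity holds for $K-1$ disjoint-support functions. Write
\[
\int_{\mathcal S^K} G(s_{1:K},\Pi)\prod_{i=1}^K f_i(s_i)\,\Pi(ds_i) = \int_{\mathcal S} f_K(s_K)\,H(s_K,\Pi)\,\Pi(ds_K),
\]
where $H(s_K,\mu)=\int_{\mathcal S^{K-1}} G(s_{1:K},\mu)\prod_{i=1}^{K-1} f_i(s_i)\mu(ds_i)$. Apply the $K=1$ identity to the outer integral:
\[
\mathbb{E}\!\left[\int_{\mathcal S} f_K(s_K)H(s_K,\Pi)\,\Pi(ds_K)\right] = \int_{\mathcal S} f_K(s_K)\,\mathbb{E}\bigl[H(s_K,\Pi+\delta_{s_K})\bigr]\,\nu(ds_K).
\]
Now expand $H(s_K,\Pi+\delta_{s_K})$ using $(\Pi+\delta_{s_K})(ds_i)=\Pi(ds_i)+\delta_{s_K}(ds_i)$. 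Multiplying out the $K-1$ factors produces $2^{K-1}$ terms, but every term containing at least one factor $\delta_{s_K}(ds_i)$ for some $i<K$ carries the coefficient $f_i(s_K)f_K(s_K)=0$ by the disjoint-support hypothesis. Hence only the ``pure $\Pi$'' term survives:
\[
\mathbb{E}\bigl[H(s_K,\Pi+\delta_{s_K})\bigr] = \mathbb{E}\!\left[\int_{\mathcal S^{K-1}} G(s_{1:K},\Pi+\delta_{s_K})\prod_{i=1}^{K-1} f_i(s_i)\,\Pi(ds_i)\right].
\]
Apply the induction hypothesis to the inner expectation (with $G_{s_K}(s_{1:K-1},\mu):=G(s_{1:K},\mu+\delta_{s_K})$) to obtain $\int_{\mathcal S^{K-1}} \mathbb{E}[G(s_{1:K},\Pi+\sum_{i=1}^K\delta_{s_i})]\prod_{i=1}^{K-1} f_i(s_i)\nu(ds_i)$. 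Inserting this back and invoking Fubini gives the claimed identity.

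\textbf{Main obstacle and technicalities.} The conceptual crux is the cancellation of diagonal terms via $f_i f_j\equiv 0$; without this hypothesis, one would need the full factorial-moment-measure identity involving $\nu^{\otimes K}$ on the off-diagonal plus boundary terms. The most delicate technical checkpoint is measurability and integrability: one must verify that $s_K\mapsto H(s_K,\Pi+\delta_{s_K})$ is jointly measurable (which follows by standard monotone-class considerations on $\mathcal S\times\mathcal M$) and that Fubini is applicable. This is handled by first restricting to nonnegative $G$, in which case all Fubini interchanges are permitted, and then extending to general measurable $G$ by the usual positive/negative-part decomposition under appropriate integrability. Non-atomicity of $\nu$ is not needed for the argument per se but ensures that the Poisson process is simple, so $\Pi+\sum_i\delta_{s_i}$ is almost surely a simple counting measure for $\nu^{\otimes K}$-a.e.\ $s_{1:K}$ with distinct coordinates, which is the regime picked out by the disjoint-support condition.
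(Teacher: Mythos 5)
Your proof is correct and follows essentially the same route as the paper: induction from the classical Palm (Mecke) identity, peeling off one coordinate at a time and using the disjoint-support condition $f_i f_j \equiv 0$ to annihilate the $\delta$-cross-terms arising from $(\Pi+\delta_{s})(ds_i)$. The only cosmetic differences are that you peel off $s_K$ rather than $s_1$ and you spell out the base case and the Fubini/measurability bookkeeping that the paper delegates to the cited references.
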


\begin{proof}
The proof is obtained by induction from the classical Palm formula
\cite{Bertoin2006,Daley2008}%
\begin{equation}
\mathbb{E}_{\Pi}\left[  \int_{\mathcal{S}}f(s)G(s,\Pi)\Pi(ds)\right]  =\int%
_{\mathcal{S}}\mathbb{E}_{\Pi}\left[  G(s,\Pi+\delta_{s})\right]f(s)\nu(ds).
\label{eq:Palm}%
\end{equation}
Let $G_{1}(s_{1},\Pi)=\int_{\mathcal{S}^{K-1}%
}G(s_{1:K},\Pi)\prod_{i=2}^{K}f_i(s_i)\Pi(ds_{i})$. Then%
\begin{align*}
&\mathbb{E}_{\Pi}\left[  \int_{\mathcal{S}^{K}}G(s_{1:K},\Pi)\prod_{i=1}%
^{K}f_i(s_i)\Pi(ds_{i})\right]     \\
&=\int_{\mathcal{S}}\mathbb{E}_{\Pi}\left[  G_{1}%
(s_{1},\Pi+\delta_{s_{1}})\right] f_1(s_1) \nu(ds_{1})\\
&  =\int_{\mathcal{S}}\mathbb{E}_{\Pi}\left[  \int_{\mathcal{S}^{K-1}}%
G(s_{1:K},\Pi+\delta_{s_{1}})\prod_{i=2}^{K}f_i(s_i)\left[  \Pi(ds_{i})+\delta_{s_{1}%
}(ds_{i})\right]  \right] f_1(s_1) \nu(ds_{1})\\
&  =\int_{\mathcal{S}}\mathbb{E}_{\Pi}\left[  \int_{\mathcal{S}^{K-1}}%
G(s_{1:K},\Pi+\delta_{s_{1}})\prod_{i=2}^{K}f_i(s_i)\Pi(ds_{i})\right] f_1(s_1) \nu(ds_{1})
\end{align*}
as $f_1(s_1)f_i(s_1)=0$ for all $i=2,\ldots,K$. Applying the same strategy recursively
gives (\ref{eq:Palmrecursive}).
\end{proof}

\bigskip

We now prove Theorem \ref{th:posteriorsimple}. The conditional Laplace
functional of $W_{\alpha}$ given $D_{\alpha}$ is $\mathbb{E}\left[
e^{-W_{\alpha}(f)}|D_{\alpha}\right]  $, for any nonnegative measurable
function $f$ such that $W_{\alpha}(f)=\sum_{i}w_{i}f(\vartheta_{i})1_{\vartheta_{i}%
\in\lbrack0,\alpha]}<\infty$. We have $W_{\alpha}(f)=\Pi(\widetilde{f})$ where
$\Pi=\sum_{i=1}^{\infty}\delta_{(w_{i},\vartheta_{i})}$ is a Poisson random measure
on $\mathcal{S}=(0,+\infty)\times\lbrack0,\alpha]$ with mean measure $\nu$ and
$\widetilde{f}(w,\vartheta)=wf(\vartheta)$. The Laplace functional can thus be
expressed in terms of the Poisson random measure $\Pi$
\begin{align}
\mathbb{E}\left[  e^{-W_{\alpha}(f)}|D_{\alpha}\right] & =\mathbb{E}%
_{\Pi}\left[  e^{-\Pi(\widetilde{f})}|D_{\alpha}\right] \nonumber \\
& =\frac{\mathbb{E}_{\Pi}\left[  \int_{\mathcal{S}^{N_{\alpha}}}e^{-\Pi(\widetilde{f}%
)} \exp(-\Pi(h)^{2})\prod_{i=1}%
^{N_{\alpha}}g_i(w_i,\vartheta_i)\Pi(dw_{i},d\vartheta_{i})\right]  }{\mathbb{E}_{\Pi}\left[  \int_{\mathcal{S}%
^{N_{\alpha}}}  \exp(-\Pi(h)^{2})\prod
_{i=1}^{N_{\alpha}}g_i(w_i,\vartheta_i)\Pi(dw_{i},d\vartheta_{i})\right]  }\label{eq:laplacefunctional}%
\end{align}
where $g_i(w,\vartheta)=w^{m_i}1_{d\theta_i}(\vartheta)$, $h(w,\vartheta)=w$, hence $\Pi(h)=\sum_{i=1}^{\infty}w_{i}=W_{\alpha}(1)$.  Applying Theorem
\ref{th:Palmcalculus} to the numerator yields%
\begin{align*}
&  \mathbb{E}_{\Pi}\left[  \int_{\mathcal{S}^{N_{\alpha}}}e^{-\Pi(\widetilde{f})}  e^{-\Pi(h)^{2}}\prod_{i=1}^{N_{\alpha}}%
g_i(w_i,\vartheta_i)\Pi(dw_{i},d\vartheta_{i})\right]  \\
&  =\int_{\mathcal{S}^{N_{\alpha}}}\mathbb{E}_{\Pi}\left[  e^{-\Pi(\widetilde{f}%
)-\sum_{i=1}^{N_{\alpha}}\widetilde{f}(w_{i},\vartheta_{i})}  e^{  -\left(  \Pi(h)+\sum_{i=1}^{N_{\alpha}}w_{i}\right)
^{2}}  \right]\left(\prod_{i=1}^{N_{\alpha}}g_i(w_i,\vartheta_i)\nu(dw_{i},d\vartheta_{i})\right)  \\
&  =\int_{\mathcal{S}^{N_{\alpha}}}\mathbb{E}_{W_{\alpha}}\bigg[  e^{-W_{\alpha}(f)-\sum
_{i=1}^{N_{\alpha}}w_{i}f(\vartheta_{i})}
e^{  -\left(  W_{\alpha}(1)+\sum_{i=1}^{N_{\alpha}}w_{i}\right)  ^{2}}  \bigg] \left(\prod
_{i=1}^{N_{\alpha}}g_i(w_i,\vartheta_i)\nu(dw_{i},d\vartheta_{i})\right)\\
&  =\int_{\mathcal{S}^{N_{\alpha}}}\mathbb{E}_{W_{\alpha}(1)}\bigg[\mathbb{E}_{W_{\alpha}}\left [  e^{-W_{\alpha}(f)} \vert W_{\alpha}(1)\right ]e^{-\sum
_{i=1}^{N_{\alpha}}w_{i}f(\vartheta_{i})}
e^{  -\left(  W_{\alpha}(1)+\sum_{i=1}^{N_{\alpha}}w_{i}\right)  ^{2}}\\
&\hspace{2.75in}\times \left(\prod
_{i=1}^{N_{\alpha}}g_i(w_i,\vartheta_i)\nu(dw_{i},d\vartheta_{i})\right)\bigg].
\end{align*}

The denominator
in (\ref{eq:laplacefunctional}) is obtained by taking $f=0$. Then, after
simplification, we obtain
\begin{align*}
&\mathbb{E}\left[  e^{-W_{\alpha}(f)}|D_{\alpha}\right] =\int_{\mathbb{R}%
_{+}^{N_{\alpha}+1}}\mathbb{E}_{W_{\alpha}}\left [  e^{-W_{\alpha}(f)} \vert W_{\alpha}(1)=w_\ast\right ]\\
&\hspace{1.6in}\times e^{-\sum_{i=1}^{N_{\alpha}}w_if(\theta_{i}%
)}p(w_{1},\ldots,w_{N_{\alpha}},w_\ast|D_{\alpha})dw_{1:N_\alpha}dw_\ast
\end{align*}
where%
\begin{align}
&p(w_{1},\ldots,w_{N_{\alpha}},w_{\ast}|D_{\alpha})\nonumber\\
&\hspace{.5in}=\frac{\left(  \prod_{i=1}^{N_{\alpha}}%
w_{i}^{m_{i}}\rho(w_{i})\right)  e^{  -\left(  w_\ast+\sum_{i=1}^{N_{\alpha}}%
w_{i}\right)  ^{2}}  g_{\alpha}^{\ast}(w_\ast)}{\int_{\mathbb{R}_{+}^{N_{\alpha}+1}}\left[
\prod_{i=1}^{N_{\alpha}}\widetilde{w}_{i}^{m_{i}}\rho(\widetilde{w}_{i})\right]
e^{  -\left(  \widetilde{w}_{\ast}+\sum_{i=1}^{N_{\alpha}}\widetilde{w}%
_{i}\right)  ^{2}}  g_{\alpha}^{\ast}(\widetilde{w}_{\ast})d\widetilde{w}%
_{1:N_{\alpha}}d\widetilde{w}_{\ast}}%
\end{align}

\subsection{Proof of Theorem \ref{th:posteriorbipartite}}

The proof follows the same lines as in \cite{Caron2012} and is included for
completeness. The Laplace functional is expressed as%
\begin{align}
&\mathbb{E}\left[  e^{-W_{\alpha}(f)}|D_{\alpha},W_{\alpha}^{\prime}\right]
  =\mathbb{E}_{\Pi}\left[  e^{-\Pi(\widetilde{f})}|D_{\alpha},W_{\alpha}^{\prime
}\right]  \nonumber\\
&  =\frac{\mathbb{E}_{\Pi}\left[  \int_{\mathcal{S}^{N_{\alpha}}}e^{-\Pi(\widetilde{f}%
)}  e^{-\Pi(h)\left(  \sum
_{j=1}^{N_{\alpha}^{\prime}}w_{j}^{\prime}+w_{\ast}^{\prime}\right)  }\prod_{i=1}%
^{N_{\alpha}}g_i(w_i,\vartheta_i)\Pi(dw_{i},d\vartheta_{i})\right]  }{\mathbb{E}_{\Pi}\left[  \int_{\mathcal{S}%
^{N_{\alpha}}}  e^{-\Pi(h)\left(  \sum
_{j=1}^{N_{\alpha}^{\prime}}w_{j}^{\prime}+w_{\ast}^{\prime}\right)  }\prod_{i=1}%
^{N_{\alpha}}g_i(w_i,\vartheta_i)\Pi(dw_{i},d\vartheta_{i})\right]  }%
\end{align}

where $g_i(w,\vartheta)=w_i^{m_i}1_{d\theta_i}(\vartheta)$, $h(w,\vartheta)=w$, hence $\Pi(h)=\sum_{i=1}^{\infty}w_{i}=W_{\alpha}(1)$. Applying Theorem
\ref{th:Palmcalculus} to the numerator yields%
\begin{align*}
&  \mathbb{E}_{\Pi}\left[  \int_{\mathcal{S}^{N_{\alpha}}}e^{-\Pi(\widetilde{f})}  e^{  -\Pi(h)\left(  \sum
_{j=1}^{N_{\alpha}^{\prime}}w_{j}^{\prime}+w_{\ast}^{\prime}\right) }
\prod_{i=1}^{N_{\alpha}}g_i(w_i,\vartheta_i)\Pi(dw_{i},d\vartheta_{i})\right] \\
&  =\int_{\mathcal{S}^{N_{\alpha}}}\mathbb{E}_{\Pi}\left[  e^{-\Pi(\widetilde{f}%
)-\sum_{i=1}^{N_{\alpha}}\widetilde{f}(w_{i},\vartheta_{i})}  e^{  -\left(  \Pi(h)+\sum_{i=1}^{N_{\alpha}}w_{i}\right)
\left(  \sum_{j=1}^{N_{\alpha}^{\prime}}w_{j}^{\prime}+w_{\ast}^{\prime}\right)
}\right] \\
 &\times\prod_{i=1}^{N_{\alpha}}g_i(w_i,\vartheta_i)\nu(dw_{i},d\vartheta_{i}) \\
&  =\int_{\mathcal{S}^{N_{\alpha}}}\mathbb{E}_{\Pi}\left[  e^{-\Pi(\widetilde{f}%
)-\sum_{i=1}^{N_{\alpha}}\widetilde{f}(w_{i},\vartheta_{i})} e^{  -\Pi(h)\left(
\sum_{j=1}^{N_{\alpha}^{\prime}}w_{j}^{\prime}+w_{\ast}^{\prime}\right)  }
\right]\\
&\times\prod_{i=1}^{N_{\alpha}}g_i(w_i,\vartheta_i)e^{  -w_{i}\left(  \sum_{j=1}^{N_{\alpha}^{\prime}}w_{j}^{\prime
}+w_{\ast}^{\prime}\right)  }\nu(dw_{i},d\vartheta_{i}) \\
&  =\mathbb{E}_{W_{\alpha}}\left[  e^{-W_{\alpha}(f)}e^{  -W_{\alpha}(1)\left(  \sum
_{j=1}^{N_{\alpha}^{\prime}}w_{j}^{\prime}+w_{\ast}^{\prime}\right)  }  \right]\\
&\times\prod_{i=1}^{N_{\alpha}}\int_{\mathcal{S}}\left[  e^{-w_{i}f(\vartheta_{i})}%
w_{i}^{m_{i}}1_{d\theta_i}(\vartheta_i)e^{  -w_{i}\left(  \sum_{j=1}^{N_{\alpha}^{\prime}}w_{j}^{\prime
}+w_{\ast}^{\prime}\right)  }  \nu(dw_{i},d\vartheta_{i})\right]
\end{align*}
The denominator in (\ref{eq:laplacefunctional}) is obtained by taking $f=0$:
\begin{align}
&  \mathbb{E}_{W_{\alpha}}\left[  e^{  -W_{\alpha}(1)\left(  \sum_{j=1}^{N_{\alpha}^{\prime}}%
w_{j}^{\prime}+w_{\ast}^{\prime}\right)  }  \right]  \prod
_{i=1}^{N_{\alpha}}\int_{\mathcal{S}}    w_{i}^{m_{i}}1_{d\theta_i}(\vartheta_i)%
e^{  -w_{i}\left(  \sum_{j=1}^{N_{\alpha}^{\prime}}w_{j}^{\prime}+w_{\ast
}^{\prime}\right)  }  \nu(dw_{i},d\vartheta_{i})
\nonumber\\
&  =e^{  -\alpha\psi\left(  \sum_{j=1}^{N_{\alpha}^{\prime}}w_{j}^{\prime
}+w_{\ast}^{\prime}\right)  }  \alpha^{N_{\alpha}}\prod_{i=1}^{N_{\alpha}}\kappa\left(
m_{i},\sum_{j=1}^{N_{\alpha}^{\prime}}w_{j}^{\prime}+w_{\ast}^{\prime}\right)d\theta_i
\label{eq:marginal}%
\end{align}
where
\[
\kappa(n,z)=\int_{0}^{\infty}w^{n}\exp(-zw)\rho(w)dw
\]

\section{Details on the MCMC\ algorithms}
\label{sec:detailsMCMC}
\subsection{Simple graph}

The undirected graph sampler outlined in Section~\ref{sec:MCMCgraphs} iterates as follows:

\begin{enumerate}
\item Update $w_{1:N_{\alpha}}$ given the rest with Hamiltonian Monte Carlo

\item Update $(\alpha,\sigma,\tau,w_{\ast})$ given the rest using
a Metropolis-Hastings step

\item Update the latent counts $\overline{n}_{ij}$ given the rest using either
the full conditional or a Metropolis-Hastings step
\end{enumerate}

\paragraph{Step 1: Update of $w_{1:N_{\alpha}}$}

We use an Hamiltonian Monte Carlo update for $w_{1:N_{\alpha}}$ via an augmented system with momentum
variables $p$. See \cite{Neal2011} for on overview. Let $L\geq1$ be the number of leapfrog steps and $\varepsilon>0$ the
stepsize. For conciseness, we write $$U^{\prime}(w_{1:N_{\alpha}},w_{\ast},\phi)=\left.
\nabla_{\omega_{1:N_{\alpha}}} \log p(\omega_{1:N_{\alpha}},w_{\ast},\phi|D_{\alpha})\right\vert _{w_{1:N_{\alpha}},w_{\ast},\phi}$$ the gradient of the log-posterior in (\ref{eq:jacobian}). The algorithm proceeds by first sampling momentum variables as
\begin{equation}
p\sim\mathcal{N}(0,I_{N_{\alpha}}).
\end{equation}
The Hamiltonian proposal $q(\widetilde{w}_{1:N_{\alpha}},\widetilde{p}|w_{1:N_{\alpha}},p)$ is
obtained by the following leap\-frog algorithm (for simplicity of exposure, we omit indices $1:N_{\alpha}$). Simulate $L$ steps of the discretized Hamiltonian via
\begin{align*}
\widetilde{p}^{(0)}  &  =p+\frac{\varepsilon}{2} U^{\prime}(w,w_{\ast},\phi)\\
\widetilde{w}^{(0)}  &  =w
\end{align*}
and for $\ell=1,\ldots,L-1$,%
\begin{align*}
\log\widetilde{w}^{(\ell)}  &  =\log\widetilde{w}^{(\ell-1)}+\varepsilon
\widetilde{p}^{(\ell-1)}\\
\widetilde{p}^{(\ell)}  &  =\widetilde{p}^{(\ell-1)}+\varepsilon U^{\prime
}(\widetilde{w}^{(\ell)},w_{\ast},\phi)
\end{align*}
and finally set%
\begin{align*}
\log\widetilde{w}  &  =\log\widetilde{w}^{(L-1)}+\varepsilon
\widetilde{p}^{(L-1)}\\
\widetilde{p}  &  =-\left[  \widetilde{p}^{(L-1)}+\frac{\varepsilon}%
{2}U^{\prime}(\widetilde{w},w_{\ast},\phi)\right] \\
\widetilde{w}  &  =\widetilde{w}^{(L)}.%
\end{align*}

Accept the proposal $(\widetilde{w},\widetilde{p})$ with probability
$\min(1,r)$ with
\begin{align*}
r= &  \frac{\left[  \prod_{i=1}^{N_{\alpha}}\widetilde{w}_{i}^{m_{i}}\right]
\exp\left(  -\left(  \sum_{i=1}^{N_{\alpha}}\widetilde{w}_{i}+w_{\ast}\right)
^{2}\right)  \prod_{i=1}^{N_{\alpha}}\widetilde{w}_{i}\rho(\widetilde{w}_{i})}{\left[  \prod_{i=1}%
^{N_{\alpha}}w_{i}^{m_{i}}\right]  \exp\left(  -\left(  \sum_{i=1}^{N_{\alpha}}w_{i}+w_{\ast
}\right)  ^{2}\right)  \prod_{i=1}^{N_{\alpha}}{w_{i}}{\rho(w_{i})}}e^{  -\frac{1}{2}%
\sum_{i=1}^{N_{\alpha}}\left(  \widetilde{p}_{i}^{2}-p_{i}^{2}\right) }    \\
= &  \left [\prod_{i=1}^{N_{\alpha}}\left(  \frac{\widetilde{w}_{i}}{w_{i}}\right)
^{m_{i}-\sigma}\right ]e^{  -\left(  \sum_{i=1}^{N_{\alpha}}\widetilde{w}_{i}+w_{\ast
}\right)  ^{2}+\left(  \sum_{i=1}^{N_{\alpha}}w_{i}+w_{\ast}\right)  ^{2}-\tau\left(
\sum_{i=1}^{N_{\alpha}}\widetilde{w}_{i}+\sum_{i=1}^{N_{\alpha}}w_{i}\right)  }  \\
&  \times e^{  -\frac{1}{2}\sum_{i=1}^{N_{\alpha}}\left(  \widetilde{p}_{i}%
^{2}-p_{i}^{2}\right)}
\end{align*}

\paragraph{Step 2: Update of $w_{\ast},\alpha,\sigma,\tau$}

For our Metropolis-Hasting step, we propose ($\widetilde{\alpha
},\widetilde{\sigma},\widetilde{\tau}$,$\widetilde{w}_{\ast}$) from
$q(\widetilde{\alpha},\widetilde{\sigma},\widetilde{\tau},\widetilde{w}_{\ast
}|\alpha,\sigma,\tau,w_{\ast})$ and accept with probability $\min(1,r)$ where%
\begin{multline}
r = \frac{e^{  -\left(  \sum_{i=1}^{N_{\alpha}}w_{i}+\widetilde{w}%
_{\ast}\right)  ^{2}}  }{e^{  -\left(  \sum_{i=1}^{N_{\alpha}}%
w_{i}+w_{\ast}\right)  ^{2}}  }\left[  \prod_{i=1}^{N_{\alpha}}\frac{\rho
(w_{i}|\widetilde{\sigma},\widetilde{\tau})}{\rho(w_{i}|\sigma,\tau)}\right]\\
\times\frac{g_{\widetilde{\alpha},\widetilde{\sigma},\widetilde{\tau}}^{\ast
}(\widetilde{w}_{\ast})}{g_{\alpha,\sigma,\tau}^{\ast}(w_{\ast})}\times
\frac{p(\widetilde{\alpha},\widetilde{\sigma},\widetilde{\tau})}%
{p(\alpha,\sigma,\tau)}\times\frac{q(\alpha,\sigma,\tau,w_{\ast}%
|\widetilde{\alpha},\widetilde{\sigma},\widetilde{\tau},\widetilde{w}_{\ast}%
)}{q(\widetilde{\alpha},\widetilde{\sigma},\widetilde{\tau},\widetilde{w}%
_{\ast}|\alpha,\sigma,\tau,w_{\ast})}
\end{multline}

We will use the following proposal%
\[
q(\widetilde{\alpha},\widetilde{\sigma},\widetilde{\tau},\widetilde{w}_{\ast
}|\alpha,\sigma,\tau,w_{\ast})=q(\widetilde{\tau}|\tau)q(\widetilde{\sigma
}|\sigma)q(\widetilde{\alpha}|\widetilde{\sigma},\widetilde{\tau},w_{\ast
})q(\widetilde{w}_{\ast}|\widetilde{\alpha},\widetilde{\sigma},\widetilde{\tau
},\widetilde{w}_{\ast})
\]
where
\begin{align*}
q(\widetilde{\tau}|\tau) &  =\text{lognormal}(\widetilde{\tau};\log
(\tau),\sigma_{\tau}^{2})\\
q(\widetilde{\sigma}|\sigma) &  =\text{lognormal}(1-\widetilde{\sigma}%
;\log(1-\sigma),\sigma_{\tau}^{2})\\
q(\widetilde{\alpha}|\widetilde{\sigma},\widetilde{\tau},w_{\ast}) &
=\Gam\left(  \widetilde{\alpha};N_{\alpha},\frac{(\widetilde{\tau}+2\sum
w_{i}+w_{\ast})^{\widetilde{\sigma}}-\tau^{\widetilde{\sigma}}}%
{\widetilde{\sigma}}\right)  \\
q(\widetilde{w}_{\ast}|\widetilde{\alpha},\widetilde{\sigma},\widetilde{\tau
},\widetilde{w}_{\ast}) &  =g_{\widetilde{\alpha},\widetilde{\sigma
},\widetilde{\tau}+2\sum w_{i}+w_{\ast}}^{\ast}(\widetilde{w}_{\ast})
\end{align*}
The choice of the proposal for $\widetilde{w}_{\ast}$ is motivated by the fact
that it can be written as an exponential tilting of the pdf
$g_{\widetilde{\alpha},\widetilde{\sigma},\widetilde{\tau}}^{\ast
}(\widetilde{w}_{\ast}):$
\[
g_{\widetilde{\alpha},\widetilde{\sigma},\widetilde{\tau}+2\sum w_{i}+w_{\ast
}}^{\ast}(\widetilde{w}_{\ast})=\frac{\exp(-2\sum w_{i}-w_{\ast}%
)g_{\widetilde{\alpha},\widetilde{\sigma},\widetilde{\tau}}^{\ast
}(\widetilde{w}_{\ast})}{\exp(-\psi_{\widetilde{\alpha},\widetilde{\sigma
},\widetilde{\tau}}(w_{\ast}))}%
\]
which will allow the terms involving the intractable pdf $g^{\ast}$ to cancel in the
Metropolis-Hastings ratio. The acceptance probability reduces to having%
\begin{align*}
&r = \frac{e^{  -\left(  \sum_{i=1}^{N_{\alpha}}w_{i}+\widetilde{w}_{\ast}\right)
^{2}}  }{e^{  -\left(  \sum_{i=1}^{N_{\alpha}}w_{i}+w_{\ast}\right)
^{2}}  }  \frac{\widetilde{\alpha}}{\alpha}\frac{\Gamma
(1-\sigma)^{N_{\alpha}}}{\Gamma(1-\widetilde{\sigma})^{N_{\alpha}}}e^{  -(\widetilde{\tau
}-\tau)\sum_{i=1}^{N_{\alpha}}w_{i}}  \left[  \prod_{i=1}^{N_{\alpha}}w_{i}\right]
^{-\widetilde{\sigma}+\sigma}\\
&\quad\quad\times\frac{p(\widetilde{\alpha
},\widetilde{\sigma},\widetilde{\tau})}{p(\alpha,\sigma,\tau)}
\times\frac{\frac{1}{\tau}\frac{1}{1-\sigma}\times\left[  \frac{1}{\sigma
}\left(  (\tau+2\sum w_{i}+\widetilde{w}_{\ast})^{\sigma}-\tau^{\sigma
}\right)  \right]  ^{N_{\alpha}}e^{  -w_{\ast}\left(  2\sum w_{i}+\widetilde{w}%
_{\ast}\right)  }  }{\frac{1}{\widetilde{\tau}}\frac{1}%
{1-\widetilde{\sigma}}\times\left[  \frac{1}{\widetilde{\sigma}}\left(
(\widetilde{\tau}+2\sum w_{i}+w_{\ast})^{\widetilde{\sigma}}-\widetilde{\tau
}^{\widetilde{\sigma}}\right)  \right]  ^{N_{\alpha}}e^{  -\widetilde{w}_{\ast
}\left(  2\sum w_{i}+w_{\ast}\right)  }  }.%
\end{align*}
Finally, if we assume improper priors on $\alpha,\sigma,\tau$%
\[
p(\alpha)\propto\frac{1}{\alpha},p(\sigma)\propto\frac{1}{1-\sigma}%
,p(\tau)\propto\frac{1}{\tau}, %
\]
then
\begin{multline*}
r=
e^{  -\left(  \sum_{i=1}^{N_{\alpha}}w_{i}+\widetilde{w}_{\ast}\right)
^{2}+\left(  \sum_{i=1}^{N_{\alpha}}w_{i}+w_{\ast}\right)  ^{2} }  e^{
-(\widetilde{\tau}-\tau+2w_{\ast}-2\widetilde{w}_{\ast})\sum_{i=1}^{N_{\alpha}}%
w_{i}}  \\
\times\left[  \prod_{i=1}^{N_{\alpha}}w_{i}\right]  ^{-\widetilde{\sigma}+\sigma
}\left[  \frac{\frac{\Gamma(1-\sigma)}{\sigma}\left(  (\tau+2\sum
w_{i}+\widetilde{w}_{\ast})^{\sigma}-\tau^{\sigma}\right)  }{\frac
{\Gamma(1-\widetilde{\sigma})}{\widetilde{\sigma}}\left(  (\widetilde{\tau
}+2\sum w_{i}+w_{\ast})^{\widetilde{\sigma}}-\widetilde{\tau}%
^{\widetilde{\sigma}}\right)  }\right]^{N_{\alpha}}.%
\end{multline*}

\paragraph{Step 3: Update of the latent variables $\overline{n}_{ij}$}

Concerning the latent $\overline{n}_{ij}$, the conditional distribution is a
truncated Poisson distribution (\ref{eq:condn}) from which we can sample
directly. An alternative strategy, which may be more efficient for a large
number of edges, is to use a Metropolis-Hastings proposal:%
\[
q(\widetilde{\overline{n}}_{ij}|\overline{n}_{ij})=\left\{
\begin{array}
[c]{ll}%
\frac{1}{2} & \text{if }\widetilde{n}_{ij}=n_{ij}+1\text{, }n_{ij}>1\\
\frac{1}{2} & \text{if }\widetilde{n}_{ij}=n_{ij}-1\text{, }n_{ij}>1\\
1 & \text{if }\widetilde{n}_{ij}=n_{ij}+1\text{, }n_{ij}=1\\
0 & \text{otherwise}%
\end{array}
\right.
\]
and accept the proposal with probability
\[
\min\left(  1,\frac{\overline{n}_{ij}!}{\widetilde{\overline{n}}_{ij}!}%
((1+\delta_{ij})w_{i}w_{j})^{\widetilde{\overline{n}}_{ij}-\overline{n}_{ij}}\frac
{q(\overline{n}_{ij}|\widetilde{\overline{n}}_{ij})}{q(\widetilde{\overline
{n}}_{ij}|\overline{n}_{ij})}\right).
\]

\subsection{Bipartite graph}

In the bipartite graph case, the sampler iterates as follows:

\begin{enumerate}
\item Propose $(\widetilde{\alpha},\widetilde{\sigma},\widetilde{\tau})\sim
q(\widetilde{\alpha},\widetilde{\sigma},\widetilde{\tau}|\alpha,\sigma,\tau)$
and accept with probability $\min(1,r)$ \ with%
\begin{align*}
r  & =\frac{\exp\left(  -\widetilde{\alpha}\psi_{\widetilde{\sigma
},\widetilde{\tau}}\left(  \sum_{j=1}^{N_{\alpha}^{\prime}}w_{j}^{\prime}+w_{\ast
}^{\prime}\right)  \right)  \widetilde{\alpha}^{N_{\alpha}}\prod_{i=1}^{N_{\alpha}}%
\kappa_{\widetilde{\sigma},\widetilde{\tau}}\left(  m_{i},\sum_{j=1}%
^{N_{\alpha}^{\prime}}w_{j}^{\prime}+w_{\ast}^{\prime}\right)  }{\exp\left(
-\alpha\psi_{\sigma,\tau}\left(  \sum_{j=1}^{N_{\alpha}^{\prime}}w_{j}^{\prime}%
+w_{\ast}^{\prime}\right)  \right)  \alpha^{N_{\alpha}}\prod_{i=1}^{N_{\alpha}}\kappa
_{\sigma,\tau}\left(  m_{i},\sum_{j=1}^{N_{\alpha}^{\prime}}w_{j}^{\prime}+w_{\ast
}^{\prime}\right)  }\\
& \times\frac{p(\widetilde{\alpha})p(\widetilde{\sigma})p(\widetilde{\tau}%
)}{p(\alpha)p(\sigma)p(\tau)}\times\frac{q(\alpha,\sigma,\tau
|\widetilde{\alpha},\widetilde{\sigma},\widetilde{\tau})}{q(\widetilde{\alpha
},\widetilde{\sigma},\widetilde{\tau}|\alpha,\sigma,\tau)}%
\end{align*}

\item For $i=1,\ldots,N_{\alpha}$, sample%
\[
w_{i}|\text{rest}\sim\text{Gamma}\left(  m_{i}-\sigma,\tau+\sum_{j=1}%
^{N_{\alpha}^{\prime}}w_{j}^{\prime}+w_{\ast}^{\prime}\right)
\]

\item Sample%
\[
w_{\ast}|\text{rest}\sim p(w_{\ast}|rest)=\frac{\exp\left(  -w_{\ast}\left(
\sum_{j=1}^{N_{\alpha}^{\prime}}w_{j}^{\prime}+w_{\ast}^{\prime}\right)  \right)
g_{\alpha}(w_{\ast})}{\exp\left[  -\psi\left(  \sum_{j=1}^{N_{\alpha}^{\prime}}%
w_{j}^{\prime}+w_{\ast}^{\prime}\right)  \right]  }%
\]
using the algorithm of \cite{Devroye2009}.

\item Update the latent $n_{ij}$ given $w_{1:N_{\alpha}^{\prime}}^{\prime},w_{1:N_{\alpha}}$
from a truncated Poisson distribution%
\[
n_{ij}|z,w,w^{\prime}\sim\left\{
\begin{array}
[c]{ll}%
\delta_{0} & \text{if }z_{ij}=0\\
\text{tPoisson}(w_{i}w_{j}^{\prime}) & \text{if }z_{ij}=1
\end{array}
\right.
\]

\item Propose $(\widetilde{\alpha}^{\prime},\widetilde{\sigma}^{\prime})\sim
q(\widetilde{\alpha}^{\prime},\widetilde{\sigma}^{\prime}|\alpha^{\prime
},\sigma^{\prime})$ and accept with probability $\min(1,r)$ \ with%
\begin{align*}
r  & =\frac{\exp\left(  -\widetilde{\alpha}^{\prime}\psi_{\widetilde{\sigma
}^{\prime},1}\left(  \sum_{i=1}^{N_{\alpha}}w_{i}+w_{\ast}\right)  \right)
\widetilde{\alpha}^{\prime\ N_{\alpha}}\prod_{j=1}^{N_{\alpha}}\kappa_{\widetilde{\sigma
}^{\prime},1}\left(  m_{j}^{\prime},\sum_{i=1}^{N_{\alpha}}w_{i}+w_{\ast}\right)
}{\exp\left(  -\alpha^{\prime}\psi_{\sigma^{\prime},1}\left(  \sum_{i=1}%
^{N_{\alpha}}w_{i}+w_{\ast}\right)  \right)  \alpha^{\prime\ N_{\alpha}}\prod_{j=1}^{N_{\alpha}}%
\kappa_{\sigma^{\prime},1}\left(  m_{j}^{\prime},\sum_{i=1}^{N_{\alpha}}w_{i}+w_{\ast
}\right)  }\\
& \times\frac{p(\widetilde{\alpha}^{\prime})p(\widetilde{\sigma}^{\prime}%
)}{p(\alpha^{\prime})p(\sigma^{\prime})}\times\frac{q(\alpha^{\prime}%
,\sigma^{\prime}|\widetilde{\alpha}^{\prime},\widetilde{\sigma}^{\prime}%
)}{q(\widetilde{\alpha}^{\prime},\widetilde{\sigma}^{\prime}|\alpha^{\prime
},\sigma^{\prime})}%
\end{align*}

\item For $j=1,\ldots,N_{\alpha}^{\prime}$, sample%
\[
w_{j}^{\prime}|\text{rest}\sim\text{Gamma}\left(  m_{j}^{\prime}-\sigma
,1+\sum_{i=1}^{N_{\alpha}}w_{i}+w_{\ast}\right)
\]

\item Sample%
\[
w_{\ast}^{\prime}|\text{rest}\sim p(w_{\ast}|rest)=\frac{\exp\left(  -w_{\ast
}^{\prime}\left(  \sum_{i=1}^{N_{\alpha}}w_{i}+w_{\ast}\right)  \right)  g_{\alpha
}(w_{\ast}^{\prime})}{\exp\left[  -\psi\left(  \sum_{i=1}^{N_{\alpha}}w_{i}+w_{\ast
}\right)  \right]  }%
\]
using the algorithm of \cite{Devroye2009}.
\end{enumerate}

\bibliographystyle{imsart-nameyear}
\bibliography{bnpnetwork}

\end{document}